\newcites{suppl}{References for Supplementary Material}
\newcommand*\diff{\mathop{}\!\mathrm{d}}
\DeclareMathOperator*\argmin{\arg\min}
\DeclareMathOperator*\plim{plim}
\DeclareMathOperator\supp{supp}
\DeclareMathOperator\E{\mathbb{E}}
\DeclareMathOperator\Var{Var}
\DeclareMathOperator\Pro{\mathbb{P}}
\DeclareMathOperator\sgn{sgn}
\DeclareMathOperator\FDR{FDR}
\DeclareMathOperator\dFDR{dFDR}
\DeclareMathOperator\dFDP{dFDP}
\DeclareMathOperator\Po{Power}
\DeclareMathOperator\dPower{dPower}
\DeclareMathOperator\FWER{FWER}
\DeclareMathOperator\dFWER{dFWER}
\def\F{\mathrm{F}}
\def\bA{\mathbf{A}}
\def\bB{\mathbf{B}}
\def\bI{\mathbf{I}}
\def\bJ{\mathbf{J}}
\def\bM{\mathbf{M}}
\def\bR{\mathbf{R}}
\def\bS{\mathbf{S}}
\def\bU{\mathbf{U}}
\def\bW{\mathbf{W}}
\def\bX{\mathbf{X}}
\def\bY{\mathbf{Y}}
\def\bZ{\mathbf{Z}}
\def\ba{\mathbf{a}}
\def\bb{\mathbf{b}}
\def\be{\mathbf{e}}
\def\br{\mathbf{r}}
\def\bu{\mathbf{u}}
\def\bv{\mathbf{v}}
\def\bx{\mathbf{x}}
\def\by{\mathbf{y}}
\def\bz{\mathbf{z}}
\def\cD{\mathcal{D}}
\def\cE{\mathcal{E}}
\def\cF{\mathcal{F}}
\def\cH{\mathcal{H}}
\def\cI{\mathcal{I}}
\def\cS{\mathcal{S}}
\def\cZ{\mathcal{Z}}
\def\bbR{\mathbb{R}}
\def\bbQ{\mathbb{Q}}
\def\bbZ{\mathbb{Z}}
\def\bbN{\mathbb{N}}
\def\bGamma{\boldsymbol{\Gamma}}
\def\bDelta{\boldsymbol{\Delta}}
\def\bdelta{\boldsymbol{\delta}}
\def\bTheta{\boldsymbol{\Theta}}
\def\bOmega{\boldsymbol{\Omega}}
\def\bomega{\boldsymbol{\omega}}
\def\bPhi{\boldsymbol{\Phi}}
\def\bphi{\boldsymbol{\phi}}
\def\bPsi{\boldsymbol{\Psi}}
\def\bpsi{\boldsymbol{\psi}}
\def\bSigma{\boldsymbol{\Sigma}}
\def\ep{\varepsilon}
\def\bve{\boldsymbol{\varepsilon}}
\def\bzero{\mathbf{0}}
\def\what{\widehat}
\def\tT{\texttt{T}}
\def\tt{\texttt{t}}
\newtheorem{thm}{Theorem}
\newtheorem{lem}{Lemma}
\newtheorem{con}{Condition}
\newtheorem{rem}{Remark}
\newtheorem{prop}{Proposition}
\newtheorem{proc}{Procedure}
\def\section{\@startsection {section}{1}{\z@}{-3.5ex plus -1ex minus-.2ex}{2.3ex plus .2ex}{\large\bf}}
\def\subsection{\@startsection {subsection}{1}{\z@}{-3.5ex plus -1ex minus-.2ex}{2.3ex plus .2ex}{\normalsize\bf}}
\title{\textbf{\Large
		Discovering the Network Granger Causality \\
		in Large Vector Autoregressive Models
}}
\author[$\,\!$]{\textsc{Yoshimasa Uematsu}\thanks{Correspondence: Yoshimasa Uematsu, Department of Social Data Science, Hitotsubashi University, 2-1 Naka, Kunitachi, Tokyo 186-8601, Japan (E-mail: yoshimasa.uematsu@r.hit-u.ac.jp). 
}}
\author[$\,\!$]{\textsc{Takashi Yamagata}$^\dagger$}
\affil[$*$]{\textit{Department of Social Data Science, Hitotsubashi University}}
\affil[$\dagger$]{\textit{Department of Economics and Related Studies, University of York}}
\affil[$\dagger$]{\textit{Institute of Social Economic Research, Osaka University}}
\begin{document}

\maketitle

\begin{abstract}
This paper proposes novel inferential procedures for discovering the network Granger causality in high-dimensional vector autoregressive models. In particular, we mainly offer two multiple testing procedures designed to control the false discovery rate (FDR). The first procedure is based on the limiting normal distribution of the $t$-statistics with the debiased lasso estimator. The second procedure is its bootstrap version. We also provide a robustification of the first procedure against any cross-sectional dependence using asymptotic e-variables. Their theoretical properties, including FDR control and power guarantee, are investigated. The finite sample evidence suggests that both procedures can successfully control the FDR while maintaining high power. Finally, the proposed methods are applied to discovering the network Granger causality in a large number of macroeconomic variables and regional house prices in the UK.  
\end{abstract}
\textbf{Keywords.} 
Multiple testing, 
FDR and power, 
Debiased lasso, 
Bootstrap, 
E-values.

\section{Introduction}
Revealing a dynamic interrelationship among variables is a critical challenge in economics, finance, neuroscience, genomics, and so forth. 
In particular, identifying the ability of a time series variable to predict the future values of other time series variables has been of great interest in various fields.
Following the vast literature after \cite{Granger1969}, when the past values of a time series $x:=\{x_t\}$ can predict the future values of another time series $y$, this will be expressed as ``$x$ is \textit{Granger-causal} for $y$'' in this article. 
This Granger causality has conventionally been discussed in a bivariate relationship, typically with (bivariate) vector autoregressive (VAR) models; see \cite{Sims1972} and \cite{Hosoya1977}, among many others.

\subsection{Network Granger causality}\label{subsec:NGC}

Suppose that an $N$-dimensional time series $\by$ follows the stationary VAR($K$) model:
\begin{align*}
\by_t=\bPhi_1 \by_{t-1}+ \cdots +\bPhi_K \by_{t-K} + \bu_t,
\end{align*}
where $\bPhi_k=(\phi_{ij,k})$ is the $k$th coefficient matrix and $\bu_t$ is the error vector. The predictability relationship among the $N$ time series manifests as the sparsity pattern of the coefficient matrices in the VAR($K$) model. The \textit{Granger-causal network} is defined as the graph $G=(V,E)$ with vertex set $V=[N]:=\{1,\dots,N\}$ and edge set $E$ such that for distinct $i,j\in V$, $i\to j \in E$ if and only if $\phi_{ji,k}\not=0$ for some $k\in[K]$. 
The same definition has been adopted by \cite{BasuShojaieMicha2015} and \cite{Eichler2007,Eichler2012}, for instance. See also \cite{Shojaie2021} for recent advances in the analysis of network Granger causality.
As an illustration, we consider the 4-dimensional VAR(2) model with the sparsity pattern of the coefficient matrices shown in Figure \ref{figure:gcn}(a), where the non-zero and zero elements are indicated as the gray and white cells, respectively. From this sparse structure, we readily obtain the associated Granger-causal network in Figure \ref{figure:gcn}(b).

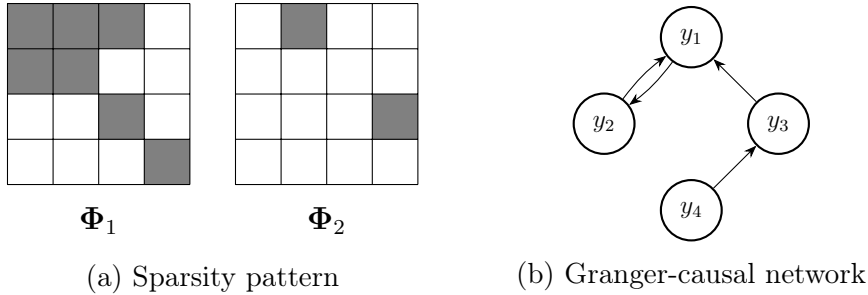
\begin{figure}[h!]
	\centering
	\begin{minipage}{0.4\textwidth}
		\centering
		\begin{tikzpicture}[scale=0.6]
		\fill[gray] (0,4) -- (1,4) -- (1,3) -- (0,3) -- cycle;
		\fill[gray] (1,3) -- (2,3) -- (2,2) -- (1,2) -- cycle;
		\fill[gray] (2,2) -- (3,2) -- (3,1) -- (2,1) -- cycle;
		\fill[gray] (3,0) -- (3,1) -- (4,1) -- (4,0) -- cycle;
		\fill[gray] (1,4) -- (2,4) -- (2,3) -- (1,3) -- cycle;
		\fill[gray] (2,4) -- (3,4) -- (3,3) -- (2,3) -- cycle;
		\fill[gray] (0,3) -- (1,3) -- (1,2) -- (0,2) -- cycle;
		\draw (0,0) grid (4,4);
		\draw (2, 0)+(0,-0.8cm) node {$\bPhi_1$};
		\fill[gray] (6,4) -- (7,4) -- (7,3) -- (6,3) -- cycle;
		\fill[gray] (8,2) -- (9,2) -- (9,1) -- (8,1) -- cycle;
		\draw (5,0) grid (9,4);
		\draw (7, 0)+(0,-0.8cm) node {$\bPhi_2$};
		\end{tikzpicture}
		\subcaption{Sparsity pattern}
	\end{minipage}
	\begin{minipage}{0.4\textwidth}
		\centering
		\begin{tikzpicture}[
		state/.style={draw,circle,inner xsep=0cm,minimum size=1.0cm,scale=0.8,thick}
		]
		\node[state] (1t) {$y_{1}$};
		\node[state] (2t) [below left=0.8cm of 1t] {$y_{2}$};
		\node[state] (3t) [below right=0.8cm of 1t] {$y_{3}$};
		\node[state] (4t) [below right=0.8cm of 2t] {$y_{4}$};
		\draw[-Stealth]
		(1t) edge[bend left=8] node {}(2t);
		\draw[-Stealth]
		(2t) edge[bend left=8] node {}(1t);
		\draw[-Stealth]
		(3t) -- (1t);
		\draw[-Stealth]
		(4t) -- (3t);
		\end{tikzpicture}
		\subcaption{Granger-causal network}
	\end{minipage}
	\caption{Sparse VAR$(2)$ coefficients and associated Granger-causal network}
	\label{figure:gcn}
\end{figure}


Although the network Granger causality has been simply defined as the \textit{direct} causal network as above, we have to remark the existence of further \textit{indirect} causalities over the lagged variables. For the aforementioned VAR(2) model, Figure \ref{figure:gcn2} depicts the causal chain that takes all the lags into account \citep[Sec.\ 3]{Eichler2012book}, where the chain continues to the infinite past and future due to the stationarity. In this figure, we can identify many indirect causalities; for instance, $y_{3,t-2}$ is indirectly causal to $y_{2t}$ via $y_{1,t-1}$. 
To refine the concept of such direct and indirect causalities, \cite{DufourRenault1998} introduced short- and long-run causality called the $h$-step ($h\geq 1$) causality. In this context, the $h$-step non-causality is characterized by zero restrictions to a part of the first $N$ rows of the first to $h$th powers of the companion coefficient matrices \citep[p.50]{Lutkephol2005}. Indeed, in Figure \ref{figure:gcn2}, it can be shown that the one-step (i.e., direct) causality is expressed by the thick arrows while the indirect effects will be included in $h$-step ($h\geq 2$) causalities. Apparently, this one-step causality corresponds to the network Granger causality we have defined. 




\graphicspath{ {./images/} }
\begin{figure}[h!]
	\centering
	\begin{tikzpicture}[every node/.style={draw,circle,inner xsep=0cm,minimum size=1.0cm,scale=0.8}]
	\clip (1.3,-4.0) rectangle (7.8,1.0);
	\node (1t) {$y_{1}^{t+2}$};
	\node[below=0.3cm of 1t] (2t) {$y_{2}^{t+2}$};
	\node[below=0.3cm of 2t] (3t) {$y_{3}^{t+2}$};
	\node[below=0.3cm of 3t] (4t) {$y_{4}^{t+2}$};
	\node[right=0.7cm of 1t] (1t-1) {$y_{1}^{t+1}$};
	\node[right=0.7cm of 2t] (2t-1) {$y_{2}^{t+1}$};
	\node[right=0.7cm of 3t] (3t-1) {$y_3^{t+1}$};
	\node[right=0.7cm of 4t] (4t-1) {$y_4^{t+1}$};
	\node[right=0.7cm of 1t-1] (1t-2) {$y_1^{t}$};
	\node[right=0.7cm of 2t-1] (2t-2) {$y_2^{t}$};
	\node[right=0.7cm of 3t-1] (3t-2) {$y_3^{t}$};
	\node[right=0.7cm of 4t-1] (4t-2) {$y_4^{t}$};
	\node[right=0.7cm of 1t-2] (1t-3) {$y_1^{t-1}$};
	\node[right=0.7cm of 2t-2] (2t-3) {$y_2^{t-1}$};
	\node[right=0.7cm of 3t-2] (3t-3) {$y_3^{t-1}$};
	\node[right=0.7cm of 4t-2] (4t-3) {$y_4^{t-1}$};
	\node[right=0.7cm of 1t-3] (1t-4) {$y_1^{t-2}$};
	\node[right=0.7cm of 2t-3] (2t-4) {$y_2^{t-2}$};
	\node[right=0.7cm of 3t-3] (3t-4) {$y_3^{t-2}$};
	\node[right=0.7cm of 4t-3] (4t-4) {$y_4^{t-2}$};
	\node[right=0.7cm of 1t-4] (1t-5) {$y_1^{t-3}$};
	\node[right=0.7cm of 2t-4] (2t-5) {$y_2^{t-3}$};
	\node[right=0.7cm of 3t-4] (3t-5) {$y_3^{t-3}$};
	\node[right=0.7cm of 4t-4] (4t-5) {$y_4^{t-3}$};
	\node[right=0.7cm of 1t-5] (1t-6) {$y_1^{t-4}$};
	\node[right=0.7cm of 2t-5] (2t-6) {$y_2^{t-4}$};
	\node[right=0.7cm of 3t-5] (3t-6) {$y_3^{t-4}$};
	\node[right=0.7cm of 4t-5] (4t-6) {$y_4^{t-4}$};
	
	\draw[-Stealth, very thick](1t-3) -- (1t-2);
	\draw[-Stealth, very thick]	(1t-3) -- (2t-2);
	\draw[-Stealth, very thick]	(2t-3) -- (1t-2);
	\draw[-Stealth, very thick]	(2t-3) -- (2t-2);
	\draw[-Stealth, very thick]	(3t-3) -- (1t-2);
	\draw[-Stealth, very thick]	(3t-3) -- (3t-2);
	\draw[-Stealth, very thick]	(4t-3) -- (4t-2);
	\draw[-Stealth, very thick]	(2t-4) -- (1t-2);
	\draw[-Stealth, very thick]	(4t-4) -- (3t-2);
%
\draw[-Stealth]
(1t-1) edge (1t)
(1t-1) edge (2t)
(2t-1) edge (1t)
(2t-1) edge (2t)
(3t-1) edge (1t)
(3t-1) edge (3t)
(4t-1) edge (4t)
(2t-2) edge (1t)
(4t-2) edge (3t)

(1t-2) edge (1t-1)
(1t-2) edge (2t-1)
(2t-2) edge (1t-1)
(2t-2) edge (2t-1)
(3t-2) edge (1t-1)
(3t-2) edge (3t-1)
(4t-2) edge (4t-1)
(2t-3) edge (1t-1)
(4t-3) edge (3t-1)

(1t-3) edge (1t-2)
(1t-3) edge (2t-2)
(2t-3) edge (1t-2)
(2t-3) edge (2t-2)
(3t-3) edge (1t-2)
(3t-3) edge (3t-2)
(4t-3) edge (4t-2)
(2t-4) edge (1t-2)
(4t-4) edge (3t-2)

(1t-4) edge (1t-3)
(1t-4) edge (2t-3)
(2t-4) edge (1t-3)
(2t-4) edge (2t-3)
(3t-4) edge (1t-3)
(3t-4) edge (3t-3)
(4t-4) edge (4t-3)
(2t-5) edge (1t-3)
(4t-5) edge (3t-3)

(1t-5) edge (1t-4)
(1t-5) edge (2t-4)
(2t-5) edge (1t-4)
(2t-5) edge (2t-4)
(3t-5) edge (1t-4)
(3t-5) edge (3t-4)
(4t-5) edge (4t-4)
(2t-6) edge (1t-4)
(4t-6) edge (3t-4)

(1t-6) edge (1t-5)
(1t-6) edge (2t-5)
(2t-6) edge (1t-5)
(2t-6) edge (2t-5)
(3t-6) edge (1t-5)
(3t-6) edge (3t-5)
(4t-6) edge (4t-5);
	\end{tikzpicture}
	\caption{Causal chain of the VAR(2)}
\label{figure:gcn2}
\end{figure}
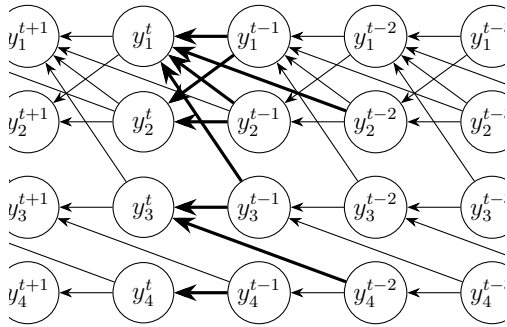

The Granger-causal network can be discovered just by estimating the sparse $\bPhi_k$'s in an appropriate manner. In early empirical studies, including \cite{FujitaEtAl2007} and \cite{LozanoEtal2009} among others, such estimation methods were already employed and a network was estimated. Recently the theoretical properties of these methods have begun to be investigated. \cite{BasuMichailidis2015}, \cite{KockCallot2015}, and \cite{BasuEtAl2019} study properties of the lasso \citep{Tibshirani1996} and its variants in high-dimensional VAR models. \cite{HanLuLiu2015} extends the Dantzig selector of \cite{CandesTao2007} to be applicable to (weakly) sparse VAR models. 
\cite{KockCallot2015} and \cite{BarigozziBrownlees2019} propose the adaptive lasso estimator. \cite{DavisEtAl2016} consider an alternative two-step estimator, which uses estimates of the partial spectral coherence.

\subsection{Classical Granger causality test}

Estimation-based network detection, such as the lasso selection, is appealing because of its simplicity. However, along with the difficulty of evaluating the type I error of the selection result, this may cause instability and a lack of reproducibility. 
Thus, we discusses the inference-based network detection. Indeed, statistical inference of Granger causality was the central issue, as Sir Clive Granger himself put it: ``the problem is how to devise a definition of causality and feedback and to test for their existence'' \citep[p.428]{Granger1969}.


Regardless of the model dimensionality, it is possible to test the null of Granger non-causality if an asymptotically normal estimator is available. For illustration, consider an $N$-dimensional VAR($1$) model. Given $\cH\subset[N]\times[N]$, we may test $H_0: \phi_{ij}=0$ for all $(i,j)\in\cH$ versus $H_1: \phi_{ij}\not=0$ for some $(i,j)\in\cH$ using the asymptotic Wald statistic via the \textit{debiased lasso} estimator  \citep{ZhengRaskutti2019,ZhuLiu2020,BabiiEtAl2021}. This is effective when a very small $\cH$ is of our interest. Unfortunately, however, this approach is not suitable for discovering the high-dimensional networks in $\cH$. This is because, when $\cH$ is large, the rejection of this null hypothesis tells us only that there may exist Granger causality between \textit{some} of the $N$ variables. This is not informative for our purpose.

\subsection{Discovering the network with directional FDR control}

The network discovering is characterized as the \textit{multiple testing} for the sequence of hypotheses, $H_0^{(i,j)}: \phi_{ij}=0$ versus $H_1^{(i,j)}: \phi_{ij}\not=0$ for each $(i,j)\in\cH$. Let $\cS=\{(i,j):H_1^{(i,j)}\text{ is true}\}$ and $\hat{\cS}=\{(i,j):H_0^{(i,j)}\text{ is rejected}\}$. The \textit{false discovery rate} (FDR), considered as a measure of type I error in multiple tests, is defined as $\FDR=\E[{|\hat\cS \cap\cS^c |}/{(|\hat{\cS}|\vee 1)}]$ \citep{benjamini1995controlling}. The FDR-controlled multiple testing tends to exhibit higher power ($\Po = \E[{|\hat{\cS}\cap \cS|}/{|\cS|}]$) than that controlling the \textit{family-wise error rate} ($\FWER=\Pro (|\hat{\cS}\cap\cS^c|\geq 1)$), typically by the so-called Bonferroni correction \citep{bonferroni1935calcolo, holm1979simple}, because the FDR is smaller than or equal to the FWER. FDR control for time series is essentially difficult, and thus there are very few studies. Only exceptionally, \cite{CMChi2021} develop a quite general knockoff framework for time series by extending \cite{candes2016panning}, but we do not pursue the direction.

In this paper, we propose two novel methods for discovering the network Granger causality in high-dimensional VAR models, based on asymptotic and bootstrap $t$-statistics, respectively. We will show that they can control the \textit{directional} FDR (dFDR), which takes into account not only whether each parameter is zero but also the differences in their directions (signs),  with a high power guarantee. This will lead to a more accurate and stable discovery. 
In case a strong cross-sectional dependency in the model is anticipated, we also provide a \textit{robustification} of Procedure 1 that gives a valid FDR control under any dependence structure. 
To the best of our knowledge, this is the first paper to propose such inference-based procedures that focus on discovering network Granger causalities with theoretical guarantees. 
The finite sample evidence suggests that the procedures can successfully control the FDR while maintaining the high power. The proposed methods are applied to a large dataset of macroeconomic time series and regional house prices in the UK.  


There are some studies that consider Granger causality between groups of variables, typically using group lasso; see \cite{BasuShojaieMicha2015}, \cite{LinMichailidis2017}, and \cite{BasuEtAl2019}. This approach requires the researcher to know the group members. \cite{Gudmund2021} develop methods to statistically identify groups among the variables.

\subsection{Organization and notation}

The paper is organized as follows. 
Section \ref{sec:model} defines the VAR model. 
Section \ref{sec:inf} proposes two methods and a robustification technique for discovering the networks based on a multiple test. 
Section \ref{sec:theory} explores the statistical theory for the FDR control and power guarantee of our methods. 
Section \ref{sec:MC} confirms the finite sample validity via Monte Carlo experiments. 
Section \ref{sec:ee} applies our methods to large datasets. 
Section \ref{sec:concl} concludes. 
All the proofs and additional analyses are collected in Supplementary Material.

\textbf{Notation.}
For any matrix $\bM=(m_{ti})\in\mathbb{R}^{T\times N}$, denote by 
$\|\bM\|_{\F}$, $\|\bM\|_2$, $\|\bM\|_1$, $\|\bM\|_{\max}$, and $\|\bM\|_\infty$ the Frobenius norm, induced $\ell_2$ (spectral) norm, entrywise $\ell_1$-norm, 
entrywise $\ell_\infty$-norm, and induced $\ell_\infty$-norm, respectively. Specifically, they are defined by 
$\|\bM\|_{\F}=(\sum_{t,i}m_{ti}^2)^{1/2}$, $\|\bM\|_2=\lambda_{1}^{1/2}(\bM'\bM)$, $\|\bM\|_1=\sum_{t,i}|m_{ti}|$,  $\|\bM\|_{\max}=\max_{t,i}|m_{ti}|$, and $\|\bM\|_\infty=\max_t\sum_{i}|m_{ti}|$, where $\lambda_{i}(\bS)$ refers to the $i$th largest eigenvalue of a symmetric matrix $\bS$. 
Denote by $\bI_N$ and $\bzero_{T\times N}$ the $N\times N$ identity matrix and $T\times N$ matrix with all the entries being zero, respectively. We use $\lesssim$ ($\gtrsim$) to represent $\leq$ ($\geq$) up to a positive constant factor. For any positive sequence $a_n$ and $b_n$,  we write $a_n \asymp b_n$ if $a_n \lesssim b_n$ and $a_n \gtrsim b_n$. 
For any positive values $a$ and $b$, $a\vee b$ and $a\wedge b$ stand for $\max(a,b)$ and $\min(a,b)$, respectively. The indicator function is denoted by $1\{\cdot\}$. 
The sign function is defined as $\sgn(x)=x/|x|$ for $x\not=0$ and $\sgn(0)=0$.
The ceiling function is denoted as $\lceil x \rceil = \min\{n\in\bbZ: n\geq x\}$.

\section{Model}\label{sec:model}

Suppose that the $N$-dimensional vector of stationary time series, $\by_t=(y_{1t},\dots,y_{Nt})'$, is generated from the VAR($K$) model:
\begin{align}
\by_t &= \bPhi_1 \by_{t-1} + \dots + \bPhi_K \by_{t-K} + \bu_t
= \bPhi \bx_t + \bu_t,\label{VAR(K)}
\end{align}
where $\bPhi_k \in \bbR^{N\times N}$ are the \textit{sparse} coefficient matrices with $\bPhi_k=(\phi_{ij,k})$ and $\bPhi = \left(\bPhi_1, \dots, \bPhi_K \right)$, $\bx_t = (\by_{t-1}', \dots, \by_{t-K}')'\in\bbR^{KN}$ is the vector of lagged variables, and $\bu_t=(u_{1t},\dots,u_{Nt})'$ is an error vector with mean zero and finite positive definite covariance matrix $\bSigma_u=(\sigma_{ij})=\E\bu_t\bu_t'$. Let $\bSigma_x=\E \bx_t\bx_t'$ 
and $\bGamma_y(k-1)=\E\by_{t-1}\by_{t-k}'$ with $\bGamma_y(k-1)=\bGamma_y(-k+1)'$ for $k\in[K]$. Then $\bSigma_x$ is composed of submatrices $\bGamma_y(k-1)$. Moreover, define the precision matrix of $\bx_t$ as $\bOmega=(\omega_{ij})=\bSigma_x^{-1}$. 
Denote by $\bomega_{j}$ the $j$th column vector of $\bOmega$ for $j\in [KN]$. We follow the convention, $\sigma_i^2=\sigma_{ii}$ and $\omega_{i}^2=\omega_{ii}$. 
Throughout the paper, we suppose that $N\wedge T\to \infty$ with allowing $N$ to be as large as or possibly larger than $T$ while $K=o(N\wedge T)$. We also assume $K$ to be known for the sake of a concise presentation; see e.g. \cite{NicholsonEtAl2020} for a review and new lag selection method.

Stacking the $T$ observations in columns such that $\bY=(\by_1,\dots,\by_T)\in\mathbb{R}^{N\times T}$, $\bX=(\bx_1,\dots,\bx_{T})\in\mathbb{R}^{KN\times T}$, and $\bU=(\bu_1,\dots,\bu_T)\in\mathbb{R}^{N\times T}$, we have the matrix form, $\bY = \bPhi \bX + \bU$. 
Denote by $\ba_{\cdot i}$ the $i$th row vector of matrix $\bA$. Then, by extracting the $i$th row of the matrix form, the model is also written as $\by_{i\cdot} = \bphi_{i\cdot}\bX  + \bu_{i\cdot}$. 

To describe the sparsity pattern of $\bPhi$, define $\cS=\{(i,j)\in [N]\times[KN]: \phi_{ij}\not=0\}$ with cardinality $s=|\cS|$. Similarly, the sparsity pattern of  $\bphi_{i\cdot}$ is described as 
$\cS_i=\{j\in[KN]: \phi_{ij}\not=0\}$ with $s_i=|\cS_i|$ and $\bar{s}=\max_{i\in[N]}s_i$. There is a one-to-one correspondence between $\cS$ and the Granger-causal network as mentioned in Section \ref{subsec:NGC}. Our goal is to discover the network wtih controlling the FDR. 


\section{Inferential Methodology}\label{sec:inf}

We first propose two multiple testing procedures that can control the FDR of discovering the Granger-causal networks. They are based on asymptotic and bootstrap $t$-statistics, respectively. Then, we introduce robustified version of the first procedure that is valid under arbitrary cross-sectional dependence structure in the model. The associated statistical theory is developed in Section \ref{sec:theory}.

\subsection{Debiased lasso estimator}\label{ssec:dlasso}

We start with constructing the row-wise lasso estimator,  $\hat\bphi_{i\cdot}^{\normalfont{\textsf{L}}}$, defined as 
\begin{align}
\hat\bphi_{i\cdot}^{\normalfont{\textsf{L}}} = \argmin_{\bphi_{i\cdot}\in\bbR^{1\times KN}}~(2T)^{-1}\| \by_{i\cdot} - \bphi_{i\cdot} \bX \|_{2}^2 + \lambda\|\bphi_{i\cdot}\|_1,   \label{syslasso} 
\end{align}
where $ \lambda >0$ is a regularization parameter. It is well-known that the lasso estimator has a bias caused by this regularization. Following \cite{JavanmardMontanari2014}, we remove the bias, which leads to the asymptotic normality of each element; see also \cite{van2014asymptotically} and \cite{ZhangZhang14} for a related discussion. Define the debiased lasso estimator as
\begin{align}
\hat{\bphi}_{i\cdot}
= \hat\bphi_{i\cdot}^{\normalfont{\textsf{L}}} + ( \by_{i\cdot} - \hat\bphi_{i\cdot}^{\normalfont{\textsf{L}}} \bX ) \bX'\hat\bOmega/T, \label{dlasso}
\end{align}
where $\hat{\bOmega}$ is a consistent precision matrix estimator. By a simple calculation, \eqref{dlasso} is written as $\sqrt{T} (\hat\bphi_{i\cdot} - \bphi_{i\cdot} ) 
= \bz_{i\cdot} + \br_{i\cdot}$ for some $\bz_{i\cdot}$ and $\br_{i\cdot}$ such that 
for each $i\in[N]$, each element of $\bz_{i\cdot}$ can be asymptotically normal while $\|\br_{i\cdot}\|_{\max}$ becomes negligible under regularity conditions. This point is formally investigated in Section \ref{subsec:deb}.

\subsection{Multiple test}\label{ssec:ginference}
For any $\cH\subset [N]\times [KN]$, consider discovering the Granger-causal network in $\cH$. This problem is understood as the multiple test for the sequence of hypotheses: 
\begin{align}
H_0^{(i,j)}: \phi_{ij}=0 ~~~ \mbox{versus} ~~~ H_1^{(i,j)}: \phi_{ij}\not=0 
~~\text{ for each }~~ (i,j)\in\cH. \label{hypo}
\end{align}
As observed in Section \ref{ssec:dlasso}, under regularity conditions, the debiased lasso estimator will be expressed as $\sqrt{T}\hat{\phi}_{ij}=z_{ij}+o_p(1)$ under $H_0^{(i,j)}$, where $z_{ij}$ is expected to be asymptotically normal with the variance, 
$\Var(z_{ij})=\sigma_i^2\bomega_{j}'\bSigma_x{\bomega}_{j}=\sigma_i^2\omega_{j}^2$. 
Thus, the $t$-test for each pair of hypotheses in \eqref{hypo} is performed with the $t$-statistic, either
\begin{align}\label{Tstat}
\tT_{ij} = \frac{ \sqrt{T} \hat{\phi}_{ij}}{\hat{\sigma}_i\sqrt{\hat{\bomega}'_{j} \hat{\bSigma}_x\hat{\bomega}_{j}}}\text{~~~or~~~}\frac{ \sqrt{T} \hat{\phi}_{ij}}{\hat{\sigma}_i\hat{\omega}_j}, 
\end{align}
where $ \hat{\bSigma}_{x} = \bX\bX'/ T $. Hereafter, denote by $\hat{m}_{ij}$ either $\hat{\sigma}_i\sqrt{\hat{\bomega}'_{j} \hat{\bSigma}_x\hat{\bomega}_{j}}$ or $\hat{\sigma}_i\hat{\omega}_j$. 

Repeating the $t$-test over $(i,j)\in\cH$ with a critical value $\tt$ leads to a set of discoveries, $\hat{\cS}(\tt):=\{(i,j)\in\cH:|\tT_{ij}|\geq \tt\}$. Here, the choice of $\tt$ is critical; we propose two procedures to determine $\tt$ that achieves controlling the  \textit{directional FDR} of $\hat\cS(\tt)$, 
\begin{align*}
\dFDR = \E\left[ \dFDP \right],~~~
\dFDP = \frac{|\{(i,j)\in\hat{\cS}(\tt):\sgn(\hat{\phi}_{ij})\not=\sgn(\phi_{ij})\}|}{|\hat{\cS}(\tt)|\vee 1},
\end{align*}
below a target level. The $\dFWER$ and $\dPower$ are also defined in a similar manner. 

There are several ways to construct consistent estimators of the nuisance parameters, $\sigma_i^2$ and $\bomega_{j}$. In this paper, we use $\hat{\sigma}_i^2 = \sum_{t=1}\hat{u}_{it}^2/(T-d_i)$ 
with $\hat{u}_{it}$ the $(i,t)$th element of the lasso residual matrix $\hat\bU$ and the CLIME estimator \citep{CaiEtAl2011} $\hat{\bOmega}=(\hat{\bomega}_{1},\dots,\hat{\bomega}_{KN})$, respectively. Here, $d_i=o(T)$ is a positive number for degrees of freedom adjustment. A typical choice is $d_i=\hat{s}_i$, where $\hat{s}_i=|\hat{\cS}_i^{\normalfont{\textsf{L}}}|$
with $\hat{\cS}_i^{\normalfont{\textsf{L}}}=\supp(\hat\bphi_{i\cdot}^{\normalfont{\textsf{L}}})$. 

\subsubsection{First procedure: Limiting normal distribution}

We construct the set of discoveries as follows. 
\begin{proc}\label{proc:asy}\normalfont
\begin{enumerate}
	\item Set $\bar{\tt}=\sqrt{2\log(|\cH|)- a \log\log(|\cH|)}$ for given $\cH\subset[N]\times[KN]$, where $a>3$ is an arbitrary fixed constant. 
	\item For any level $q\in[0,1]$, compute
	\begin{align}
	\tt_0 = \inf\left\{\tt\in[0,\bar{\tt}]: \frac{2|\cH|Q(\tt)}{|\hat{\cS}(\tt)|\vee 1} \leq q \right\}, \label{proc:t0}
	\end{align}
	where $|\hat{\cS}(\tt)|=\sum_{(i,j)\in\cH}1\{|\tT_{ij}|\geq \tt\}$ is the total number of discoveries in $\cH$ with critical value $\tt$, and $Q(\tt)=\Pro(\cZ>\tt)$ with $\cZ$ a standard normal random variable. 
	If \eqref{proc:t0} does not exist, set 
	\begin{align}
	\tt_0=\sqrt{2\log(|\cH|)}. \label{proc:t01}
	\end{align}
	\item Obtain the set of discoveries, $\hat{\cS}(\tt_0)=\left\{(i,j)\in\cH: |\tT_{ij}|\geq \tt_0 \right\}$.
\end{enumerate}
\end{proc}
This procedure is designed to asymptotically control the dFDR of rejected nulls, $\hat{\cS}(\tt_0)$, to be less than or equal to $q$. A similar procedure can be found in \cite{Liu2013}, \cite{JJ2019}, and \cite{UY2021}, for example. The $\dFDR$ control always implies the $\FDR$ control since $\FDR\leq \dFDR$. Threshold \eqref{proc:t01} will even control the dFWER (and hence dFDR), which occurs if it is large enough not to exist in $[0,\bar{\tt}]$. 
A theoretical justification for the $\dFDR$ control and $\dPower$ guarantee is given in Section \ref{subsec:multes}. 

\begin{rem}\normalfont
\begin{enumerate}
\item[(a)] The constant $a>3$ is required for technical reasons. In practice, we can choose an arbitrary value that is slightly larger than three, like $a=3.001$. This is not sensitive to the selection result as long as it is sufficiently small.
\item[(b)] A choice of $d_i$ effects the FDR control to some extent, but $d_i=\hat{s}_i$ seems to bring good results by simulation studies in Section \ref{sec:MC}. For other choices of $d_i$ and constructions of consistent estimator of $\sigma_i^2$, see \cite{Reid2016}.
\end{enumerate}
\end{rem}

\subsubsection{Second procedure: Bootstrapped distribution}

Even in a low-dimensional setting, \cite{Bruggemann2016} point out that the finite sample properties of asymptotic VAR inference can be rather poor, and the use of bootstrap methods is often advocated. See also \cite{Kilian1999}, \cite{GoncalvesKilian2004}, \cite{HafnerHerwartz2009}. 
To improve the performance, we propose a second method based on the \textit{fixed-design wild bootstrap} (FWB). This is different from a conventional bootstrapped $t$-test for a single hypothesis. Roughly speaking, it attempts to replicate a bootstrapped distribution under the null by the FWB, which is then substituted for the limit normal distribution $Q$ in Procedure \ref{proc:asy}.

Let $\tilde{\cS}\subset [N]\times[KN]$ denote an index set of discoveries that is expected to satisfy $\Pro(\tilde{\cS}\supset \cS)\to 1$. A typical choice of  $\tilde\cS$ is the lasso-selected variables, $\hat{\cS}_{\normalfont{\textsf{L}}}$. We may also use $\hat{\cS}(\tt_0)$ obtained by Procedure \ref{proc:asy}. 

\begin{proc}\label{proc:boot}\normalfont
\begin{enumerate} 
\item Obtain a bootstrap version of the $t$-statistics, $\{\tT_{ij}^{*(b)}:(i,j)\in\tilde{\cS}^c\cap \cH\}_{b=1}^B$,  by repeating (a)--(f) $B$ times: 
\begin{enumerate}
\item Generate $\{\zeta_t\}$, a sequence of i.i.d.\ random variables with mean zero and variance one.
\item Obtain $\{\bu_{t}^*\}_{t=1}^T$ by the FWB; i.e., $\bu_{t}^*=\hat{\bu}_{t} \zeta_t$, where $\hat{\bu}_{t}=\by_{t}-\hat{\bPhi}^{\normalfont{\textsf{L}}}\bx_t$. 
\item Generate $\{\by_{t}^{*}\}_{t=1}^T$ by $\by_{t}^{*}=\hat{\bPhi}^{\normalfont{\textsf{L}}}\bx_t+\bu_{t}^*$, and set $\bY^*=(\by_{1}^*,\dots,\by_{T}^*)$. 
\item Compute a bootstrap version of the lasso estimate $\hat{\bPhi}^{\normalfont{\textsf{L}}*}$ by $\bY^*$ and $\bX$. 
\item Construct a bootstrap version of the debiased lasso estimate, 
$\hat{\bPhi}^*=\hat{\bPhi}^{\normalfont{\textsf{L}}*} + ( \bY^* - \hat\bPhi^{\normalfont{\textsf{L}}*} \bX )\bX'\hat\bOmega/T$.
\item Construct a bootstrap version of the $t$-statistics, either
\begin{align}\label{Tstat*}
\tT_{ij}^* 
= \frac{ \sqrt{T} \hat{\phi}_{ij}^*}{\hat{\sigma}^{*}_{i}\sqrt{\hat{\bomega}'_{j} \hat{\bSigma}_x\hat{\bomega}_{j}}} \text{~~or~~} 
\frac{ \sqrt{T} \hat{\phi}_{ij}^*}{\hat{\sigma}^{*}_{i}\hat{\omega}_{j} }~~~\text{for}~~~ (i,j)\in\tilde{\cS}^c\cap \cH, 
\end{align}
where $\hat{\sigma}^{*2}_{i}$ is given by 
$\hat{\sigma}_i^{*2} = \sum_{t=1}^T\hat{u}_{it}^{*2}/(T-\hat{s}_i)$
with $\hat{\bU}^*=\bU^*-(\hat{\bPhi}^{\normalfont{\textsf{L}}*}-\hat{\bPhi}^{\normalfont{\textsf{L}}})\bX$ and $\bU^*=(\bu_{1}^*,\dots,\bu_{T}^*)$. 
\end{enumerate}
\item Compute the empirical distribution,
\begin{align*}
\bbQ_B^{*}(\tt) &= \frac{1}{|\tilde\cS^c\cap\cH|}\sum_{(i,j)\in \tilde{\cS}^c\cap \cH} \left[ \frac{1}{B}\sum_{b=1}^B1\left\{\tT_{ij}^{*(b)}>\tt\right\} \right].
\end{align*}
\item Run Procedure \ref{proc:asy} with \eqref{proc2:t0} replacing \eqref{proc:t0}:
\begin{align}
\tt_0 = \inf\left\{\tt\in[0,\bar{\tt}]: \frac{|\cH|\left\{\bbQ_B^{*}(\tt)+1-\bbQ_B^{*}(-\tt)\right\}}{|\hat{\cS}(\tt)|\vee 1} \leq q \right\}. \label{proc2:t0}
\end{align}
\end{enumerate}
\end{proc}

\begin{rem}\normalfont
\begin{enumerate}
\item[(a)] We may adopt several distributions for $\zeta_t$. For example, \cite{Mammen1993} suggests using $\zeta_t$ that takes values $\mp(\sqrt{5}\mp1)/2$ with probability $(\sqrt{5}\pm1)/(2\sqrt{5})$, respectively, while \cite{DavidsonFlachaire2008} propose using the Rademacher random variable, $\zeta_t = \pm1$ with probability $1/2$. 
\item[(b)] Unlike the construction of $\hat{\sigma}_i^2$ in Procedure \ref{proc:asy}, the degree of freedom adjustment in $\hat{\sigma}_i^{*2}$ is not sensitive to the results of FDR control. 
\end{enumerate}
\end{rem}


\subsubsection{Robustifying transformation}\label{ssec:robust}

The (directional) FDR control by the proposed two procedures requires a condition to restrict the correlation structure of $z_{ij}$'s; see Condition \ref{ass:corr} in the next section. Such a condition that is somewhat restrictive and difficult to verify if actually met in real data often tends to be avoided. As a remedy, we propose \textit{robustification} of Procedure \ref{proc:asy}, which allows any cross-sectional dependency in $z_{ij}$'s. As a trade-off for the robustness, it exhibits conservative selection in general. Thus, the procedure is recommended when a cautious analysis is needed. 

The key is to transform the $t$-statistics to an \textit{$e$-variables}, the definition of which is a non-negative random variable $E$ with $\E [E] \leq 1$ under the null \citep{VovkWang2021AoS}. The significant factor is that $1/E$ becomes a $p$-variable since $\Pro(1/E\leq q)\leq q$ for any level $q\geq 0$ by the Markov inequality, and is valid under any dependence structure. Then a robust FDR-controlled selection is achieved via the standard BH procedure of \cite{benjamini1995controlling} with $1/E$, which is called the e-BH \citep{WangRamdas2022JRSSB}. 

To introduce the e-BH procedure, we convert the double index $(i,j)\in\cH$ to an ordered single index $h\in[|\cH|]$ using some converter $h(i,j)$. For example, if $\cH=[N]\times[KN]$, we set $h=h(i,j)$ with $h(i,j)=i+N(j-1)$ for given $(i,j)\in\cH$; conversely, $(i,j)$ is recovered by setting $j=\lceil h/N \rceil$ and $i=h-N(j-1)$ for given $h\in[|\cH|]$. Suppose that $|\cH|$ $e$-variables $E_h = E_{h(i,j)}$ corresponding to $H_0^{(i,j)}$ are available. 
\begin{proc}[e-BH]\label{proc:eBH}\normalfont
\begin{enumerate}
\item Make order statistics, $E_{(1)}\geq \dots \geq E_{(|\cH|)}$. 
\item Compute $h^* = \max \left\{h\in[|\cH|]:1/E_{(h)}\leq q h/|\cH| \right\}$.
\item Obtain the set of discoveries, $\hat{\cS}_R(h^*)=\{(i,j)\in\cH: h(i,j)=h,j=\lceil h/N \rceil,E_{h}\geq E_{(h^*)}\}$. 
\end{enumerate}
\end{proc}
A critical aspect is how to construct a ``good'' $e$-variable as there are countless ways to make it. Since little is known about the optimal construction, we focus on a simple formulation of e-variables transformed from the $t$-statistics \eqref{Tstat}, which we call the \textit{robustifying transformation}. Precisely, we adopt the following: 
\begin{align}\label{e-var}
E_{h(i,j)} = \frac{f(\tT_{ij})}{\E [f(\cZ)]},~~~~~\cZ\sim N(0,1), 
\end{align}
where $f:\mathbb{R}\to \mathbb{R}_+$ is a continuous strictly increasing function. 
We may compute $\E [f(\cZ)]$ once $f$ is specified. This construction yields asymptotic $e$-variables since $\E [E_{h(i,j)}]\to 1$ is expected under $H_{0}^{(i,j)}$; whereas, under $H_{1}^{(i,j)}$, we can obtain $\E [E_{h(i,j)}] \asymp f(\sqrt{T})\to \infty$  thanks to the fact of $\tT_{ij}\approx \sqrt{T}$ and increasing $f$. In practice, we may use $f:x\mapsto |x|^p$ for some $p>0$ with $\E|\cZ|^p=\sqrt{2^p/\pi}\Gamma((p+1)/2)$ or $f:x\mapsto \exp(c|x|^\alpha)$ for some $c,\alpha>0$ with $\E [\exp(c|\cZ|)]=2\exp(c^2/2)\Phi(c)$ if \
$\alpha=1$. 
A theory is provided in Section \ref{subsubsec:robust} and Section \ref{ssec:eval} of Supplementary Material.




\section{Statistical Theory}\label{sec:theory}

We develop a formal statistical theory for the inferential methodology proposed in Section \ref{sec:inf}. Throughout this section, we set $\cH= [N]\times [KN]$, $d_i=0$, and $\tilde\cS=\hat{\cS}_{\normalfont{\textsf{L}}}$ to alleviate unnecessary technical complications. We suppose that $N,T\to \infty$, $N=O(T^d)$ for some $d>0$, and $K=o(N\vee T)$.

\begin{con}\label{ass:subG}\normalfont
	The error term, $\{\bu_t\}$, is a sequence of i.i.d.\ sub-Gaussian random vectors with mean zero and covariance matrix $\bSigma_u$; for every $N>0$, there exists some constant $c_u>0$ such that for all $x>0$,
	$\max_{i\in[N]}\Pro\left( |u_{it}| > x \right) \leq 2\exp(-x^2/c_u)$.
\end{con}
 
\begin{con}\label{ass:stab}\normalfont
All the eigenvalues of the companion matrix of $(\bPhi_1,\dots,\bPhi_K)$ are strictly less than one in modulus uniformly in $K$ and $N$. 
\end{con}

\begin{con}\label{ass:mineig}\normalfont
For all $K$ and $N$, there exists some constant $\gamma>0$ such that 
$\gamma \leq \lambda_{\min}(\bSigma_x)\leq \lambda_{\max}(\bSigma_x)\leq 1/\gamma$.
\end{con}

Conditions \ref{ass:subG}--\ref{ass:mineig} are commonly used in the literature.  
Condition \ref{ass:stab} guarantees that the VAR($K$) model is stable and is inverted to the VMA($\infty$) model:
\begin{align}\label{cond:sum0}
\by_t = \sum_{\ell=0}^\infty \bB_\ell \bu_{t-\ell}, ~~~~~b:=\sum_{\ell=0}^\infty \|\bB_\ell\|_{\infty} < \infty,
\end{align}
where $\bB_0=\bI_N$ and $\bB_\ell=\bJ'\bA^\ell\bJ$ with $\bJ'=(\bI_N, \bzero_{N\times(KN-N)})$ for $\ell=1,2,\dots$ and $\bA$ the companion matrix of $(\bPhi_1,\dots,\bPhi_K)$; see Lemma \ref{lem:VMA} in Supplementary Material. 
Throughout this section, we set the lasso regularization parameter in \eqref{syslasso} to be
\begin{align*}
\lambda=8bc_{uu}\sqrt{2(\nu+7)^3T^{-1}\log^3(N\vee T)},
\end{align*}
where $\nu>0$ is a fixed constant.

\subsection{Theory for the debiased lasso estimator}\label{subsec:deb}

The first proposition derives the nonasymptotic error bounds of the lasso estimator. 

\begin{prop}[Nonasymptotic error bounds for the lasso]\label{thm:errbound}
If Conditions \ref{ass:subG}--\ref{ass:mineig} hold, then the lasso estimator defined in \eqref{syslasso}  satisfies the following inequalities with probability at least $1-O((N\vee T)^{-\nu})$: 
\begin{align*}
(a)~&~\left\| \hat{\bphi}_{i\cdot}^{\normalfont{\textsf{L}}}-\bphi_{i\cdot}\right\|_{2} \leq \frac{12\sqrt{s_i} \lambda}{\gamma-8bs_i\lambda},\\
(b)~&~\left\| \hat{\bphi}_{i\cdot}^{\normalfont{\textsf{L}}}-\bphi_{i\cdot}\right\|_{1} \leq \frac{48s_i \lambda}{\gamma-8bs_i\lambda},\\
(c)~&~\frac{1}{T}\left\|(\hat{\bphi}_{i\cdot}^{\normalfont{\textsf{L}}}-\bphi_{i\cdot})\bX  \right\|_{2}^2 \leq \frac{144s_i \lambda^2}{\gamma-8bs_i\lambda}
\end{align*}
for all $i\in[N]$ such that $8bs_i\lambda<\gamma$.
\end{prop}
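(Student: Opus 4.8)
\textbf{Proof strategy for Proposition~\ref{thm:errbound}.}
The plan is to run the classical primal argument for $\ell_1$-penalized least squares (in the spirit of Bickel--Ritov--Tsybakov and B\"uhlmann--van de Geer), feeding in two probabilistic inputs tailored to the dependent, sub-Gaussian regression \eqref{VARi}: a sup-norm bound on the score $\bu_{i\cdot}\bX'/T$ and a deviation bound on the empirical Gram matrix $\hat\bSigma_x:=\bX\bX'/T$ around $\bSigma_x$. Write $\hat\bdelta:=\hat\bphi_{i\cdot}^{\normalfont{\textsf{L}}}-\bphi_{i\cdot}$, and let $\cE$ be the event on which
\begin{align*}
\big\| T^{-1}\bu_{i\cdot}\bX' \big\|_{\max} \le \frac{\lambda}{2}~~(\forall\, i\in[N])
\qquad\text{and}\qquad
\| \hat\bSigma_x - \bSigma_x \|_{\max} \le \frac{b\lambda}{2}
\end{align*}
both hold. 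Granting for the moment that $\Pro(\cE)\ge 1-O((N\vee T)^{-\nu})$ with $\lambda$ as fixed in Section~\ref{sec:theory}, parts (a)--(c) follow deterministically on $\cE$, uniformly over those $i$ with $8bs_i\lambda<\gamma$.

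\emph{Deterministic step.} On $\cE$, optimality of $\hat\bphi_{i\cdot}^{\normalfont{\textsf{L}}}$ in \eqref{syslasso} together with \eqref{VARi}, H\"older's inequality, and the triangle inequality on $\|\cdot\|_1$ decomposed over the support $\cS_i$ (using that $\bphi_{i\cdot}$ vanishes off $\cS_i$) gives the basic inequality
\begin{align*}
\frac{1}{2T}\big\| \hat\bdelta\bX \big\|_2^2
&\le \big\| T^{-1}\bu_{i\cdot}\bX' \big\|_{\max}\|\hat\bdelta\|_1 + \lambda\big( \|\hat\bdelta_{\cS_i}\|_1 - \|\hat\bdelta_{\cS_i^c}\|_1 \big)\\
&\le \frac{3\lambda}{2}\|\hat\bdelta_{\cS_i}\|_1 - \frac{\lambda}{2}\|\hat\bdelta_{\cS_i^c}\|_1 .
\end{align*}
Since the left-hand side is nonnegative, this yields the cone condition $\|\hat\bdelta_{\cS_i^c}\|_1\le 3\|\hat\bdelta_{\cS_i}\|_1$, hence $\|\hat\bdelta\|_1 \le 4\|\hat\bdelta_{\cS_i}\|_1 \le 4\sqrt{s_i}\,\|\hat\bdelta\|_2$ and $\|\hat\bdelta\|_1^2\le 16 s_i\|\hat\bdelta\|_2^2$, and also $T^{-1}\|\hat\bdelta\bX\|_2^2 \le 3\lambda\|\hat\bdelta\|_1$.

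\emph{Restricted eigenvalue and assembly.} Condition~\ref{ass:mineig} gives $\bdelta'\bSigma_x\bdelta\ge\gamma\|\bdelta\|_2^2$, and on $\cE$, for any $\bdelta$ in the above cone,
\begin{align*}
\frac{1}{T}\|\bdelta\bX\|_2^2 = \bdelta'\hat\bSigma_x\bdelta
\ge \gamma\|\bdelta\|_2^2 - \|\hat\bSigma_x-\bSigma_x\|_{\max}\|\bdelta\|_1^2
\ge \gamma\|\bdelta\|_2^2 - \frac{b\lambda}{2}\cdot 16 s_i\|\bdelta\|_2^2
= (\gamma - 8bs_i\lambda)\|\bdelta\|_2^2,
\end{align*}
which is positive exactly under the restriction $8bs_i\lambda<\gamma$; this is where the denominator in (a)--(c) comes from. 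Chaining $(\gamma-8bs_i\lambda)\|\hat\bdelta\|_2^2 \le T^{-1}\|\hat\bdelta\bX\|_2^2 \le 3\lambda\|\hat\bdelta\|_1 \le 12\lambda\sqrt{s_i}\,\|\hat\bdelta\|_2$ gives (a); substituting (a) into $\|\hat\bdelta\|_1\le 4\sqrt{s_i}\|\hat\bdelta\|_2$ gives (b); and substituting (a) into $T^{-1}\|\hat\bdelta\bX\|_2^2\le 12\lambda\sqrt{s_i}\|\hat\bdelta\|_2$ gives (c). The numerical factors $12,48,144$ are precisely the $3$'s and $4$'s tracked through this chain.

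\emph{The probabilistic step (the main obstacle).} What remains---and the only genuinely VAR-specific ingredient---is $\Pro(\cE)\ge 1-O((N\vee T)^{-\nu})$ for the stated $\lambda$. The plan is to invoke the VMA($\infty$) expansion \eqref{cond:sum0}, whose uniform summability $b<\infty$ is secured by Condition~\ref{ass:stab} via Lemma~\ref{lem:VMA}, to represent each coordinate $\sum_t u_{it}x_{jt}$ of $\bu_{i\cdot}\bX'$ as a martingale sum (under Condition~\ref{ass:subG}, $u_{it}$ is independent of the $\sigma$-field generated by past innovations while $x_{jt}$ is measurable with respect to it), and each centered coordinate $\sum_t(x_{jt}x_{kt}-\E x_{jt}x_{kt})$ of $T(\hat\bSigma_x-\bSigma_x)$ as a sum with geometrically decaying serial dependence. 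Products of the sub-Gaussian innovations are sub-exponential; truncating them at a level $\asymp\sqrt{(\nu+5)\log(N\vee T)}$, applying a Bernstein/Freedman-type inequality to the truncated (martingale/mixingale) sums, bounding the truncation remainders via Condition~\ref{ass:subG}, and taking a union bound over the $\lesssim (KN)^2=\mathrm{poly}(T)$ coordinates produces deviations of order $\sqrt{(\nu+5)^3 T^{-1}\log^3(N\vee T)}$ up to the absolute constants built into the definition of $\lambda$---matching $\lambda/2$ and $b\lambda/2$. This concentration inequality for dependent sub-Gaussian sequences (with the extra $\log$-powers arising from turning sub-Gaussian tails into bounded increments and from the union bound) is the technical heart of the proof; everything else is the deterministic lasso algebra above.
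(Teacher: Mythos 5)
Your proposal is correct and follows essentially the same route as the paper: the same two high-probability events on the score $\|T^{-1}\bu_{i\cdot}\bX'\|_{\max}\le\lambda/2$ and the Gram deviation $\|\hat\bSigma_x-\bSigma_x\|_{\max}\le b\lambda/2$ (the paper's Lemmas \ref{lem:tail_yu} and \ref{lem:tail_yy}, proved via the VMA($\infty$) truncation, martingale/Azuma--Hoeffding bounds for the truncated sub-exponential increments, and a union bound), the same basic inequality and cone condition (Lemma \ref{lem:negahban}), the same restricted-eigenvalue perturbation $\gamma-16s_i\|\hat\bSigma_x-\bSigma_x\|_{\max}\ge\gamma-8bs_i\lambda$ (Lemma \ref{lem:E2}), and the same chaining that produces the constants $12$, $48$, $144$. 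No gaps.
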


Using Proposition \ref{thm:errbound}, we next show the asymptotic linearity of the debiased lasso estimator, $\hat{\bPhi}$. This requires some conditions on the precision matrix, $\bOmega$.

\begin{con}\normalfont\label{ass:invest}
There exist some positive numbers $s_\omega$ and $M_\omega$ and some constant $r\in[0,1)$ such that 
$\max_{i\in[KN]}\sum_{j=1}^{KN}|\omega_{ij}|^r\leq s_{\omega}$ and  $\max_{j\in[KN]}\|{\bomega}_{j}\|_{1}\leq M_\omega$, where $s_\omega$ and $M_\omega$ can diverge as $N,T\to \infty$. 
\end{con}
Condition \ref{ass:invest} has frequently been used to derive the rate of convergence of the CLIME estimator $\hat{\bOmega}$; see e.g., \cite{CaiEtAl2011} and \cite{ShuNan2019}. 
This condition requires that $\bOmega$ is approximately sparse. Especially when $r=0$, the condition reduces to the exact sparsity assumption. 
Denote by $\{\be_j\}$ the standard basis of $\bbR^{KN}$.

\begin{prop}[Asymptotic linearity of the debiased lasso]\label{thm:asynormal}
The debiased lasso estimator defined in \eqref{dlasso} has the representation $\sqrt{T}( \hat\bPhi - \bPhi) = \bZ + \bR$, where the $(i,j)$th elements of $\bZ$ and $\bR$ are respectively given by
\begin{align*}
z_{ij} = \frac{1}{\sqrt{T}}\sum_{t=1}^Tu_{it}\bx_t'{\bomega}_{j}, ~~~
r_{ij} = \frac{1}{\sqrt{T}}\bu_{i\cdot}\bX^{\prime}(\hat{\bomega}_{j}-{\bomega}_{j})
- \sqrt{T}( \hat\bphi_{i\cdot}^{\normalfont{\textsf{L}}} - \bphi_{i\cdot} )(\hat{\bSigma}_{x}\hat{\bomega}_{j}-\be_{j}).
\end{align*}
Furthermore, if Conditions \ref{ass:subG}--\ref{ass:invest} are assumed, then the following inequality holds with probability at least $1 - O ((N\vee T)^{-\nu})$:
\begin{align}
\max_{j\in[KN]}|r_{ij}|
\lesssim \left(s_\omega M_\omega^{2-2r}\lambda^{1-r}+M_\omega s_i\lambda\right)\sqrt{\log^3(N\vee T)}=:\bar{r}_i \label{ribar}
\end{align}
for all $i\in[N]$ such that $s_i\lambda=o(1)$. 
\end{prop}
To achieve the asymptotic normality of $\sqrt{T}(\hat{\phi}_{ij}-\phi_{ij})$, we need $\bar{r}_i=o(1)$. Then $z_{ij}$ becomes the dominating term, which can converge in distribution to a Gaussian random variable. Note that $\bar{r}_i\asymp (s_\omega^3+s_\omega s_i)\ell_{NT}/\sqrt{T}$, where $\ell_{NT}$ is some power of $\log(N\vee T)$, if $r=0$ in Condition \ref{ass:invest}.

\subsection{Theory for the multiple testing}\label{subsec:multes}

We are now ready to develop a statistical theory of the multiple testing of \eqref{hypo} by Procedures \ref{proc:asy}--\ref{proc:eBH} in Section \ref{ssec:ginference}. We first establish the asymptotic normality of the $t$-statistics defined in \eqref{Tstat}. 

\begin{con}\label{ass:mineig2}\normalfont
For all $K$ and $N$, there exists some constant $\gamma>0$ such that 
$\gamma\leq \min_{i\in[N]}\sigma_i^2\leq \max_{i\in[N]}\sigma_i^2\leq 1/\gamma$ and 
$\gamma\leq \min_{j\in[KN]}\omega_{j}^2\leq \max_{j\in[KN]}\omega_{j}^2\leq 1/\gamma$.
\end{con}
Condition \ref{ass:mineig2} complements Condition \ref{ass:mineig}, and is required to deal with the standard errors; see Lemma \ref{lem:omegahat1} in Supplementary Material. 


\begin{thm}[Asymptotic normality of $t$-statistic]\label{thm:t}
If Conditions \ref{ass:subG}--\ref{ass:mineig2} hold, then either $t$-statistics $\tT_{ij}$ defined in \eqref{Tstat} satisfies that  $\tT_{ij}-\sqrt{T}\phi_{ij}/\hat{m}_{ij}$ converges in distribution to $N(0,1)$ for all $(i,j)\in\cH$ such that $\bar{v}_i:=\bar{r}_i+M_\omega^2 \lambda = o(1)$.  
\end{thm}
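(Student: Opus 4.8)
The natural starting point is the asymptotic linearity representation of Proposition~\ref{thm:asynormal}. Since $\tT_{ij}-\sqrt{T}\phi_{ij}/\hat m_{ij}=\sqrt{T}(\hat\phi_{ij}-\phi_{ij})/\hat m_{ij}=(z_{ij}+r_{ij})/\hat m_{ij}$, the plan is to write
\[
\tT_{ij}-\frac{\sqrt{T}\phi_{ij}}{\hat m_{ij}}
=\frac{z_{ij}}{m_{ij}}\cdot\frac{m_{ij}}{\hat m_{ij}}+\frac{r_{ij}}{\hat m_{ij}},
\qquad m_{ij}:=\sigma_i\omega_j,
\]
where $m_{ij}^2=\sigma_i^2\omega_j^2=\Var(z_{ij})$ by \eqref{se}. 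By Slutsky's theorem it then suffices to establish three facts: (i) $z_{ij}/m_{ij}$ converges in distribution to $\mathcal{N}(0,1)$; (ii) $\hat m_{ij}/m_{ij}\to1$ in probability; and (iii) $r_{ij}/\hat m_{ij}\to0$ in probability. The argument will be identical for the two choices of $\hat m_{ij}$ in \eqref{Tstat}, so they can be treated in parallel.

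For (i), I would note that for each fixed $(i,j)$ the summand $\xi_t:=u_{it}\bx_t'\bomega_j$ is a stationary martingale difference sequence with respect to $\cF_t=\sigma(\bu_s:s\le t)$: $\bx_t$ is $\cF_{t-1}$-measurable and, by the i.i.d.\ assumption in Condition~\ref{ass:subG}, $\E[u_{it}\mid\cF_{t-1}]=0$ and $\E[u_{it}^2\mid\cF_{t-1}]=\sigma_i^2$. One then applies a martingale CLT for triangular arrays (the triangular-array structure being unavoidable since the law of $\xi_t$ changes with $N=N(T)$), for which it suffices to verify the conditional-variance condition
\[
\frac1T\sum_{t=1}^T\E[\xi_t^2\mid\cF_{t-1}]=\sigma_i^2\cdot\frac1T\sum_{t=1}^T(\bx_t'\bomega_j)^2\;\xrightarrow{p}\;\sigma_i^2\omega_j^2
\]
together with a conditional Lindeberg condition. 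The conditional-variance limit reduces to a concentration bound for the quadratic form $T^{-1}\sum_t(\bx_t'\bomega_j)^2$ around $\bomega_j'\bSigma_x\bomega_j=\omega_j^2$; this is available because, under Conditions~\ref{ass:subG}--\ref{ass:mineig}, the VMA($\infty$) representation \eqref{cond:sum0} (applied to the stacked vector $\bx_t$) writes $\bx_t'\bomega_j$ as a summable linear combination of the i.i.d.\ sub-Gaussian innovations, so $\bx_t'\bomega_j$ is sub-Gaussian with parameter of order $\bomega_j'\bSigma_x\bomega_j\le1/\gamma$ uniformly in $N$ (the constant controlled by $b$ and $c_u$), its square is sub-exponential with uniformly bounded Orlicz norm, and a Bernstein/Nagaev-type inequality for sums of the resulting dependent sequence applies. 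The same uniform sub-exponential tail yields $\max_{t\le T}|\xi_t|=o_p(\sqrt T)$ and hence the Lindeberg condition.

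For (ii) and (iii), decompose $|\hat m_{ij}^2-m_{ij}^2|\le|\hat\sigma_i^2-\sigma_i^2|\,\widehat\kappa_j+\sigma_i^2\,|\widehat\kappa_j-\omega_j^2|$, where $\widehat\kappa_j$ denotes $\hat{\bomega}_j'\hat{\bSigma}_x\hat{\bomega}_j$ or $\hat\omega_j^2$. Consistency of $\hat\sigma_i^2$ follows by writing $\hat u_{it}=u_{it}-(\hat\bphi^{\mathsf{L}}_{i\cdot}-\bphi_{i\cdot})\bx_t$ and expanding $\hat\sigma_i^2$: the leading term $T^{-1}\sum_t u_{it}^2\to\sigma_i^2$ in probability, while the two remaining terms are controlled by the prediction-error bound of Proposition~\ref{thm:errbound}(c) and Cauchy--Schwarz. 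Consistency of $\widehat\kappa_j$ for $\omega_j^2$ is exactly the content of the CLIME convergence rate under Condition~\ref{ass:invest}, invoking the auxiliary Lemma~\ref{lem:omegahat1}; this is where the term $M_\omega^2\lambda$ in $\bar v_i$ enters, being the order of the standard-error estimation error, so that $\bar v_i=o(1)$ forces $\hat m_{ij}/m_{ij}\to1$. Condition~\ref{ass:mineig2} guarantees $m_{ij}^2=\sigma_i^2\omega_j^2$ is bounded away from $0$ and $\infty$, so the divisions are harmless; in particular $\hat m_{ij}$ is bounded away from $0$ in probability. Finally, Proposition~\ref{thm:asynormal} gives $|r_{ij}|\le\bar r_i\le\bar v_i=o(1)$ on an event of probability $1-O((N\vee T)^{-\nu})$, whence $r_{ij}/\hat m_{ij}\to0$ in probability. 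Combining (i)--(iii) through Slutsky completes the argument.

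The main obstacle is step (i): one cannot quote the classical stationary--ergodic martingale CLT, because the distribution of $\xi_t$ varies with $N$, so the conditional-variance convergence and the Lindeberg condition must be established with explicit, dimension-robust rates, i.e.\ via a Bernstein/Nagaev-type concentration inequality for weighted sums of the VMA($\infty$) process, with constants kept uniform in $N$ through the summability constant $b$ of \eqref{cond:sum0} and the sub-Gaussian constant $c_u$. The second most delicate piece is tracking how the CLIME error $M_\omega^2\lambda$ propagates into $\widehat\kappa_j$ in step (ii), but this is largely outsourced to Lemma~\ref{lem:omegahat1}.
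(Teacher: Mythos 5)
Your proposal is correct and follows essentially the same route as the paper: the identical decomposition into the leading term $z_{ij}/m_{ij}$, the standard-error ratio $m_{ij}/\hat m_{ij}$ (controlled via Lemma \ref{lem:omegahat1}, which is exactly where the $M_\omega^2\lambda$ piece of $\bar v_i$ enters), and the remainder $r_{ij}$ bounded by $\bar r_i$ from Proposition \ref{thm:asynormal}, followed by a martingale CLT for the leading term. The only cosmetic difference is the choice of martingale CLT: the paper applies McLeish's theorem by checking $\sum_t\xi_{Tt}^2\to_p 1$ and $\max_t|\xi_{Tt}|\to_p 0$, whereas you verify the conditional-variance and conditional Lindeberg conditions; both reduce to the same concentration bound for the quadratic form $T^{-1}\sum_t(\bx_t'\bomega_j)^2$ via $\|\bomega_j\|_1\le M_\omega$ and the max-norm concentration of $\hat\bSigma_x$.
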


In the theorem, $\bar{v}_i= o(1)$ makes both $\bar{r}_i$ in \eqref{ribar} and the estimation error of the standard error, $M_\omega^2 \lambda$, asymptotically negligible. If $r=0$ in Condition \ref{ass:invest}, it reduces to $(s_\omega^6+s_\omega^2 s_i^2)\ell_{NT}\ll T$.

\subsubsection{Theory for the FDR control}

For $\left((i,j),(k,\ell)\right)\in\cH\times \cH=:\cH^2$, the correlation between $z_{ij}$ and $z_{k\ell}$ is given by
\begin{align}
\rho_{(i,j),(k,\ell)} := \textrm{Corr}(z_{ij},z_{k\ell}) = \frac{\sigma_{ik}\omega_{j\ell}}{\sigma_i\sigma_k\omega_{j}\omega_{\ell}},
\end{align}
which becomes $|\rho_{(i,j),(k,\ell)}|\asymp |\sigma_{ik}\omega_{j\ell}|$ under Condition \ref{ass:mineig2}. Obviously, we have $\rho_{(i,j),(i,j)}=1$ for all $(i,j)\in\cH$, but the ``off-diagonal'' correlations in 
\begin{align*}
\cH_{\text{off}}^2 := \left\{((i,j),(k,\ell))\in\cH^2: (i,j)\not=(k,\ell) \right\}
\end{align*}
with $|\cH_{\text{off}}^2|=|\cH^2|-|\cH|$ take non-trivial values. To manipulate them, impose the following condition. 

\begin{con}\normalfont\label{ass:corr}
There exists a partition of $\cH_{\text{off}}^2$ denoted as $\cH_{\text{off}}^2=\cH_{w}^2\cup \cH_{s}^2$  such that for some constant $c>0$ and $\bar{\rho}:=1-\ep$ with arbitrary constant $\ep\in(0,1)$,
\begin{align*}
|\rho_{(i,j),(k,\ell)}| 
\in 
\begin{cases}
\left[0,c/(\log N)^2\right] & \text{for } ~((i,j),(k,\ell)) \in \cH_{w}^2~~~ \text{(weak correlations)}, \\
\left(c/(\log N)^2,\bar{\rho}\right] & \text{for } ~((i,j),(k,\ell)) \in \cH_{s}^2~~~ \text{(strong correlations)},
\end{cases}
\end{align*}
where $|\cH_{w}^2|=|\cH_{\text{off}}^2|-|\cH_{s}^2|$ 
and $|\cH_{s}^2|=O\left(|\cH_{\text{off}}^2|/(\log N)^2\right)$.
\end{con}
Condition \ref{ass:corr} says that most of the correlations are ``weak'' so that they are close to zero, while some are allowed to be ``strong'' enough to be non-vanishing. Because $\tT_{ij}$'s are asymptotically normal as shown in Theorem \ref{thm:t}, Condition \ref{ass:corr} ensures that most of $\tT_{ij}$'s are asymptotically independent. This condition can still be satisfied by many VAR models. To see this, note that $|\rho_{(i,j),(k,\ell)}|\asymp |\sigma_{ik}\omega_{j\ell}|$ and $\sigma_{ik},\omega_{j\ell}=O(1)$ under Condition 5. Therefore, for example, if either $\bSigma_u=(\sigma_{ik})$ or $\bOmega=(\omega_{j\ell})$ is approximately sparse, Condition \ref{ass:corr} can be satisfied regardless of the structure of the other. Here, neither $\bSigma_u$ nor $\bOmega$ needs to be exactly sparse.


\begin{thm}[FDR control: normal distribution]\label{thm:fdr-t}
Suppose $\nu>4$ and $\max_{i\in[N]}\bar{v}_i=O(T^{-\kappa_1})$ for some constant $\kappa_1\in(0,1/2)$. If Conditions \ref{ass:subG}--\ref{ass:corr} hold, then for any predetermined level $q\in[0,1]$, Procedure \ref{proc:asy} with either $t$-statistics in \eqref{Tstat} achieves the following: If $\tt_0$ is given by \eqref{proc:t0}, the obtained $\hat{\cS}(\tt_0)$ satisfies
$\limsup_{N,T\to\infty}\dFDR \leq q$ and 
$\lim_{N,T\to\infty} \Pro\left(\dFDP \leq q + \ep \right) =1$ for any $\ep>0$. 
If $\tt_0$ is given by \eqref{proc:t01}, the obtained $\hat{\cS}(\tt_0)$ satisfies
$\limsup_{N,T\to\infty} \dFWER \leq q$.
\end{thm}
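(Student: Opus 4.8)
The plan is to reduce the analysis to a careful control of the false discovery proportion $\dFDP$ at the data‑driven threshold $\tt_0$, following the now‑standard strategy pioneered by \cite{Liu2013} and adapted in \cite{JJ2019} and \cite{UY2021}, but with all the approximation errors tracked through the VAR‑specific quantities $\bar v_i$, $\bar r_i$, and the lasso bounds of Proposition \ref{thm:errbound}. First I would write, for any fixed $\tt\in[0,\bar\tt]$,
\begin{align*}
\dFDP(\tt) = \frac{|\{(i,j)\in\hat\cS(\tt): \sgn(\hat\phi_{ij})\neq\sgn(\phi_{ij})\}|}{|\hat\cS(\tt)|\vee 1},
\end{align*}
and observe that false (sign‑)discoveries can only come from the null set $\cS^c\cap\cH$ together with a vanishing contribution from signals with $|\phi_{ij}|$ close to zero; on the nulls, $\tT_{ij}$ is asymptotically $\mathcal N(0,1)$ uniformly by Theorem \ref{thm:t} and the rate assumption $\bar v=O(T^{-\kappa_1})$. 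The numerator is thus sandwiched, up to negligible terms, by $\sum_{(i,j)\in\cS^c\cap\cH} 1\{|\tT_{ij}|\geq\tt\}$, whose mean is $\approx 2|\cS^c\cap\cH|\,Q(\tt)\leq 2|\cH|Q(\tt)$. The key step is then a \emph{uniform} (in $\tt\in[0,\bar\tt]$) law of large numbers for this sum, i.e. showing
\begin{align*}
\sup_{0\le\tt\le\bar\tt}\left|\frac{\sum_{(i,j)\in\cS^c\cap\cH}1\{|\tT_{ij}|\geq\tt\}}{2|\cS^c\cap\cH|Q(\tt)} - 1\right| \xrightarrow{\Pro} 0,
\end{align*}
which is exactly where Condition \ref{ass:corr} enters: split the second moment of the sum into the weakly‑correlated pairs $\cH_w^2$, where a normal‑comparison / Berman‑type bound gives $|\Pro(|\tT_{ij}|\ge\tt,|\tT_{k\ell}|\ge\tt)-Q(\tt)^2|$ small because $|\rho_{(i,j),(k,\ell)}|\lesssim 1/\log^2(KN^2)$, and the strongly‑correlated pairs $\cH_s^2$, which are few enough ($|\cH_s^2|=O(|\cH_{\mathrm{off}}^2|/\log^2 N)$) that their crude bound is still lower order relative to $(|\cH|Q(\bar\tt))^2 \asymp |\cH|^2 (\log|\cH|)^{-O(1)}|\cH|^{-1}$; the choice $\bar\tt=\sqrt{2\log|\cH|-a\log\log|\cH|}$ with $a>3$ and $\nu>4$ is precisely calibrated so that $|\cH|Q(\bar\tt)\to\infty$ fast enough to beat the variance. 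Combining these, the event defining $\tt_0$ in \eqref{proc:t0} is nonempty with probability tending to one whenever $q>0$, and on that event $\dFDP(\tt_0)\le q+o_p(1)$ by plugging the uniform LLN into the ratio $2|\cH|Q(\tt_0)/(|\hat\cS(\tt_0)|\vee 1)\le q$; an application of bounded convergence (the $\dFDP$ is bounded by one) upgrades this to $\limsup\dFDR\le q$.

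For the power/$\dFDP$‑concentration statement, I would argue that at $\tt=\bar\tt$ the number of true discoveries $|\hat\cS(\bar\tt)\cap\cS|$ is of larger order than any spurious count — this uses the signal‑strength implicit in the asymptotic normality (the non‑nulls have $\sqrt T\phi_{ij}/\hat m_{ij}$ contributing a location shift) together with the fact that the denominator in \eqref{proc:t0} is nondecreasing in discoveries — so the infimum defining $\tt_0$ is attained strictly below $\bar\tt$ and the ratio is continuous there; hence $\dFDP(\tt_0)$ is within $\ep$ of $q$ with probability one in the limit. For the case where \eqref{proc:t0} has no solution in $[0,\bar\tt]$, one sets $\tt_0=\sqrt{2\log|\cH|}$; here I would show directly that $\Pro(\max_{(i,j)\in\cS^c\cap\cH}|\tT_{ij}|\ge\sqrt{2\log|\cH|})\to 0$ — a Bonferroni/maximal‑inequality argument over asymptotically normal statistics, again using $\bar v=O(T^{-\kappa_1})$ to pass from the Gaussian approximation to the $t$‑statistics uniformly — so that with probability tending to one \emph{no} false sign‑discovery occurs, giving $\dFWER\le o(1)\le q$ and a fortiori $\dFDR\le q$.

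The main obstacle I anticipate is the uniform‑in‑$\tt$ control of the null‑count ratio near the upper end $\tt\approx\bar\tt$, where $Q(\tt)$ is tiny and the relative fluctuations are largest. This is delicate on three fronts: (i) the Gaussian approximation in Theorem \ref{thm:t} must be made uniform over all $|\cH|=KN^2$ null coordinates \emph{and} quantitatively strong enough (a Cramér‑type moderate‑deviation rate, not just a CLT) to survive division by $Q(\bar\tt)$, which forces one to revisit the proof of Theorem \ref{thm:t} and bound $\max_{i,j}|r_{ij}|$ and the standard‑error estimation error at the scale $\sqrt{2\log|\cH|}$ — this is why $\bar v_i$ is assumed polynomially small in $T$ rather than merely $o(1)$; (ii) the normal‑comparison bound for pairs must be applied with the correlation $\rho_{(i,j),(k,\ell)}=\sigma_{ik}\omega_{j\ell}/(\sigma_i\sigma_k\omega_j\omega_\ell)$ plugged in, and one must verify that replacing the \emph{estimated} correlation structure (through $\hat\sigma_i,\hat\bomega_j,\hat\bSigma_x$) by the population one costs only lower‑order terms, which again leans on Proposition \ref{thm:asynormal} and the CLIME rates under Condition \ref{ass:invest}; and (iii) the partition into $\cH_w^2$ and $\cH_s^2$ has to be handled so that the strong‑correlation block, though sparse, does not inflate the variance — the bookkeeping $|\cH|^2/\log^2 N \cdot \sup_{\cH_s^2}|\mathrm{Cov}|$ versus $(|\cH|Q(\bar\tt))^2$ is tight and is where the exponents $a>3$ and $\nu>4$ are consumed. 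Once this uniform control is in hand, the remaining steps — monotonicity of $|\hat\cS(\tt)|$, continuity of the ratio, and bounded convergence — are routine.
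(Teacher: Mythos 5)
Your proposal follows essentially the same route as the paper's proof: reduce $\dFDP(\tt_0)$ to a uniform-in-$\tt\in[0,\bar{\tt}]$ relative law of large numbers for the null exceedance counts normalized by $2|\cH|Q(\tt)$ (proved there by discretizing $[0,\bar{\tt}]$ and bounding second moments, with the weak/strong partition of Condition \ref{ass:corr} entering through a bivariate normal tail comparison $\Pro(\cZ_1\geq z,\cZ_2\geq z)\leq Q(z)^2/\sqrt{1-\rho^2}$ and the exponents $a>3$, $\nu>4$ consumed exactly where you say), and handle the no-solution case $\tt_0=\sqrt{2\log|\cH|}$ by a Bonferroni bound giving $\dFWER=o(1)$. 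The one loose point is your claim that sign errors at non-nulls are "negligible": the paper instead decomposes $\cH$ into $\cH_{\leq0}$ and $\cH_{\geq0}$ and observes that for, say, $\phi_{ij}<0$ the location shift $\sqrt{T}\phi_{ij}/\hat{m}_{ij}$ only pushes $\Pro(\tT_{ij}\geq\tt)$ below $Q(\tt-\delta_1)$, so these terms are dominated by (not negligible relative to) the null contribution, and the total is still bounded by $2|\cH|Q(\tt)(1+o_p(1))$ — the conclusion is unaffected.
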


Recall $\bar{v}_i=\bar{r}_i+M_\omega^2 \lambda $ defined in Theorem \ref{thm:asynormal}. The condition, $\max_{i\in[N]}\bar{v}_i=O(T^{-\kappa_1})$, reduces to 
$(s_\omega^6+s_\omega^2 \bar{s}^2)^{1/(1-2\kappa_1)}\ell_{NT}\ll T$, where $\bar{s}=\max_{i\in[N]}s_i$, if $r=0$ in Condition \ref{ass:invest}. 

We next show the dFDR control of Procedure \ref{proc:boot} with setting $\tilde{\cS}=\hat{\cS}_{\normalfont{\textsf{L}}}$. For this purpose, we need a distributional assumption on the wild bootstrap. 

\begin{con}\label{ass:wildboot}
$\{\zeta_t\}$ is a sequence of i.i.d.\ sub-Gaussian random variables with $\E \zeta_t = 0$ and $\E \zeta_t^2 =1$, where for every $T>0$, there exists some constant $c_\zeta>0$ such that for all $x>0$, $\max_{t\in[T]}\Pro\left( |\zeta_{t}| > x \right) \leq 2\exp(-x^2/c_\zeta)$.
\end{con}

Define the bootstrap distribution as
\begin{align*}
\bbQ^{*}(\tt) &= \frac{1}{|\hat{\cS}_{\normalfont{\textsf{L}}}^c|}\sum_{(i,j)\in \hat{\cS}_{\normalfont{\textsf{L}}}^c} \Pro^*\left( \tT_{ij}^{*}>\tt \right)
~~\text{with}~~
\Pro^*( \tT_{ij}^{*}>\tt )=\plim_{B\to \infty}\frac{1}{B}\sum_{b=1}^B1 \{\tT_{ij}^{*(b)}>\tt \},
\end{align*}
where $|\hat{\cS}_{\normalfont{\textsf{L}}}^c|=KN^2-\hat{s}$. 
Let $\bar{\mu}= \max_{i\in[N]}\bar{v}_i + s_\omega M_\omega^{3-2r}\lambda^{1-r} \log^2(N\vee T)$.

\begin{thm}[FDR control: Bootstrap]\label{thm:bootstrap}
Suppose  $\nu>4$ and $\bar{\mu}=O(T^{-\kappa_1})$ for some constant $\kappa_1\in(0,1/2)$. 
If Conditions \ref{ass:subG}--\ref{ass:wildboot} hold, then for any preassigned level $q\in[0,1]$, Procedure \ref{proc:boot} with either $t$-statistics in \eqref{Tstat*} achieves the same results as Theorem \ref{thm:fdr-t}. 
\end{thm}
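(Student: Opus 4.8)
The strategy is to deduce Theorem~\ref{thm:bootstrap} from Theorem~\ref{thm:fdr-t} by showing that, on a data event of probability $1-o(1)$, the bootstrap tail $\bbQ^{*}$ approximates the standard normal tail $Q$ in \emph{relative error}, uniformly over the whole band of candidate thresholds:
\begin{align*}
\sup_{\tt\in[0,\bar{\tt}]}\left|\frac{\bbQ^{*}(\tt)}{Q(\tt)}-1\right|\to 0
\quad\text{and}\quad
\sup_{\tt\in[0,\bar{\tt}]}\left|\frac{1-\bbQ^{*}(-\tt)}{Q(\tt)}-1\right|\to 0
\end{align*}
in probability. Granted this, $\bbQ^{*}(\tt)+1-\bbQ^{*}(-\tt)=2Q(\tt)(1+o(1))$ uniformly in $\tt\in[0,\bar{\tt}]$, so the criterion in \eqref{proc2:t0} equals that of \eqref{proc:t0} up to a $(1+o(1))$ factor; hence the threshold $\tt_0$ returned by Procedure~\ref{proc:boot} differs from the one returned by Procedure~\ref{proc:asy} by $o_p(1)$. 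Replaying the proof of Theorem~\ref{thm:fdr-t} — which relies only on the asymptotic normality of the $\tT_{ij}$'s (Theorem~\ref{thm:t}), the weak-dependence Condition~\ref{ass:corr}, and $\nu>4$ — then yields $\limsup\dFDR\le q$ and $\Pro(\dFDP\le q+\ep)\to1$ when $\tt_0$ comes from \eqref{proc2:t0}, and $\limsup\dFWER\le q$ in the fallback case \eqref{proc:t01}. The screening hypothesis $\cS\subset\hat{\cS}_{\normalfont{\textsf{L}}}$ (whp) enters exactly here: it ensures that every $(i,j)\in\hat{\cS}_{\normalfont{\textsf{L}}}^{c}$ is a true null whose ``bootstrap-true'' coefficient $\hat{\phi}_{ij}^{\normalfont{\textsf{L}}}=0$ makes $\tT_{ij}^{*}$ correctly centered, and that the lasso residuals $\hat{\bu}_t=\by_t-\hat{\bPhi}^{\normalfont{\textsf{L}}}\bx_t$ driving the FWB are asymptotically uncontaminated.

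Toward the supremum bounds, the first ingredient is a bootstrap version of Propositions~\ref{thm:errbound} and~\ref{thm:asynormal}, established under $\Pro^{*}(\cdot)=\Pro(\cdot\mid\text{data})$. On a data event of probability $1-O((N\vee T)^{-\nu})$ — where Proposition~\ref{thm:errbound} holds and $\max_{i,t}|\hat{u}_{it}|$ is controlled via Conditions~\ref{ass:subG}--\ref{ass:mineig} — the multipliers $u_{it}^{*}=\hat{u}_{it}\zeta_t$ are conditionally sub-Gaussian by Condition~\ref{ass:wildboot}, with conditional covariance concentrating on $\bSigma_u$; hence the bootstrap lasso error $\hat{\bPhi}^{\normalfont{\textsf{L}}*}-\hat{\bPhi}^{\normalfont{\textsf{L}}}$ satisfies the same-form nonasymptotic $\ell_1/\ell_2/\text{prediction}$ bounds. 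Substituting into the debiased formula gives, for the null coordinates, $\tT_{ij}^{*}=(z_{ij}^{*}+r_{ij}^{*})/\hat{m}_{ij}^{*}$ with $z_{ij}^{*}=T^{-1/2}\sum_{t}\hat{u}_{it}\zeta_t\,\bx_t'\hat{\bomega}_{j}$ and $\max_{(i,j)}|r_{ij}^{*}|\lesssim\bar{\mu}=O(T^{-\kappa_1})$ with high data-probability; the extra summand $s_\omega M_\omega^{3-2r}\lambda^{1-r}\log^2(N\vee T)$ in $\bar{\mu}$ (absent from $\bar{v}$) is precisely the combined cost of propagating the CLIME error $\hat{\bOmega}-\bOmega$ — bounded via Condition~\ref{ass:invest} — through the conditional variance $\E^{*}z_{ij}^{*2}=T^{-1}\sum_t\hat u_{it}^2(\bx_t'\hat\bomega_j)^2$, which must still be matched to the target $\sigma_i^2\omega_j^2$, and through $r_{ij}^{*}$. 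A parallel routine computation shows $\hat{m}_{ij}^{*2}$ concentrates on $\sigma_i^2\omega_j^2$ uniformly in $(i,j)$, so self-normalizing is harmless up to $O(\bar\mu)$.

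The core step is a Cram\'er-type \emph{moderate-deviation} (relative-error) Gaussian approximation for the conditional law of the self-normalized $z_{ij}^{*}$, uniform over $\tt\in[0,\bar{\tt}]$ with $\bar{\tt}\asymp\sqrt{2\log|\cH|}$. Conditionally on the data, $z_{ij}^{*}=\sum_t\zeta_t\,w_{ijt}$, $w_{ijt}=T^{-1/2}\hat{u}_{it}\bx_t'\hat{\bomega}_j$, is a weighted sum of i.i.d.\ sub-Gaussian variables whose empirical second moment converges to $\sigma_i^2\omega_j^2$ (Conditions~\ref{ass:mineig},~\ref{ass:mineig2} and the residual bounds) and whose weights are uniformly negligible in $t$ (a maximal inequality on $\max_t|\bx_t'\hat\bomega_j|$ using sub-Gaussianity and $\|\hat\bomega_j\|_1\lesssim M_\omega$); with the sub-Gaussian tails of $\zeta_t$, a self-normalized moderate-deviation bound gives $\Pro^{*}(\tT_{ij}^{*}>\tt)/Q(\tt)=1+o(1)$ uniformly in $\tt\in[0,\bar{\tt}]$ and in $(i,j)\in\hat{\cS}_{\normalfont{\textsf{L}}}^{c}$, with the error of order $\bar{\tt}\,\bar\mu\asymp\bar\mu\sqrt{\log(KN^2)}=o(1)$ coming from $r_{ij}^{*}$ and the normalization. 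Averaging over the $KN^2-\hat{s}$ null indices in the definition of $\bbQ^{*}$ cannot worsen this ratio bound, and symmetry of the standard normal yields the twin statement for $1-\bbQ^{*}(-\tt)$. I expect this last step to be the main obstacle: obtaining a \emph{relative}-error normal tail approximation for wild-bootstrap $t$-statistics out to $\tt\asymp\sqrt{\log(KN^2)}$ that holds simultaneously (i) with random weights $\hat u_{it}$ built from lasso residuals, (ii) with the bootstrap lasso estimation error absorbed into $r_{ij}^{*}$, and (iii) uniformly over the $O(KN^2)$ null indices — and this is exactly where Condition~\ref{ass:wildboot} and the polynomial remainder rate $\bar{\mu}=O(T^{-\kappa_1})$ (rather than merely $o(1)$) become indispensable.
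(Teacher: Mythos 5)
Your proposal is correct and follows essentially the same route as the paper: reduce the claim to a uniform relative-error approximation $\sup_{\tt\in[0,\bar{\tt}]}|\bbQ^{*}(\tt)/Q(\tt)-1|=o_p(1)$ (and its mirror for the lower tail), establish bootstrap analogues of the lasso and debiasing bounds so that $\tT_{ij}^{*}$ reduces to a self-normalized sum up to an $O(\bar{\mu})$ remainder, and then invoke a Cram\'er-type self-normalized moderate-deviation theorem — the paper uses \cite{JingShaoWang2003} — before feeding the result back into the argument of Theorem \ref{thm:fdr-t}. You also correctly locate both the role of the screening condition $\cS\subset\hat{\cS}_{\normalfont{\textsf{L}}}$ and the origin of the extra $s_\omega M_\omega^{3-2r}\lambda^{1-r}\log^2(N\vee T)$ term in $\bar{\mu}$, which is exactly how the paper motivates it.
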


The condition, $\bar{\mu}=O(T^{-\kappa_1})$, reduces to $(s_\omega^8+s_\omega^2 \bar{s}^2)^{1/(1-2\kappa_1)}\ell_{NT}\ll T$ if $r=0$ in Condition \ref{ass:invest}. 

To achieve Theorem \ref{thm:bootstrap}, we should show that $\tt_0$ using $\bbQ^*$ is asymptotically the same as using $Q$. Towards this goal, it suffices to verify that the event, $|\bbQ^*(\tt)/Q(\tt)-1|$ and $|\{1-\bbQ^*(-\tt)\}/Q(\tt)-1|$ converge to zero uniformly in $\tt\in[0,\bar{\tt}]$, occurs with high probability. This proof will complete in two steps. First, show that $\tT_{ij}^*$ with the ``sandwich'' s.e.\ can be approximated by the self-normalized sum, $\sum_{t=1}^T\hat{u}_{it}^*\bx_t'\hat{\bomega}_j/\sum_{t=1}^T(\hat{u}_{it}^*\bx_t'\hat{\bomega}_j)^2$. Then, verify that it can be uniformly normally approximated in the relative error using the Cram\'er-type large deviation theory of \cite{JingShaoWang2003}. The additional cost $s_\omega M_\omega^{3-2r}\lambda^{1-r} \log^2(N\vee T)$ in $\bar{\mu}$ is due to the approximation error in the first step. 


\subsubsection{Theory for the power guarantee}

We next investigate the asymptotic power of our Procedures \ref{proc:asy} and \ref{proc:boot}. A condition on the signal strength is required to distinguish the nonzero elements from zeros. 
\begin{con}\label{ass:betamin} For $\cS\subset \cH$, it holds that $\min_{(i,j)\in\cS}|\phi_{ij}|/(\sigma_i\omega_{ j}) \geq 4\sqrt{2\log(KN^2)/T}$. 
\end{con}

\begin{thm}[Power guarantee]\label{thm:power-t}
	Suppose $\nu>4$ and $\max_{i\in[N]}\bar{v}_i=O(T^{-\kappa_1})$ for some constant $\kappa_1\in(0,1/2)$. If Conditions \ref{ass:subG}--\ref{ass:mineig2} and \ref{ass:betamin} hold, then both Procedures \ref{proc:asy} and \ref{proc:boot} achieve $\lim_{N,T\to\infty}\dPower = 1$.
\end{thm}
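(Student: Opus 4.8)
The plan is to show that, on an event of probability tending to one, every true edge in $\cS$ is discovered with the correct sign; the directional power integrand is then identically one on that event, and bounded convergence gives $\dPower\to1$. The argument is identical for Procedures \ref{proc:asy} and \ref{proc:boot} because in both cases the discovery set $\hat\cS(\tt_0)$ is formed from the \emph{original} $t$-statistics $\tT_{ij}$ of \eqref{Tstat}, and in both cases the selected threshold obeys $\tt_0\le\sqrt{2\log(KN^2)}$: by construction $\tt_0\in[0,\bar\tt]$ with $\bar\tt=\sqrt{2\log(KN^2)-a\log\log(KN^2)}<\sqrt{2\log(KN^2)}$, while the fallback values \eqref{proc:t01} and the one reached through \eqref{proc2:t0} equal $\sqrt{2\log(KN^2)}$ exactly (recall $|\cH|=KN^2$). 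Hence it suffices to establish
\[
\Pro\!\left(\min_{(i,j)\in\cS}|\tT_{ij}|\ge\sqrt{2\log(KN^2)}\ \text{ and }\ \sgn(\hat\phi_{ij})=\sgn(\phi_{ij})\ \ \forall (i,j)\in\cS\right)\longrightarrow1.
\]

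First I would isolate the deterministic drift in $\tT_{ij}$. From \eqref{deb} and the definition of $\tT_{ij}$, $\tT_{ij}=\sqrt{T}\phi_{ij}/\hat m_{ij}+(z_{ij}+r_{ij})/\hat m_{ij}$. The uniform consistency of the standard errors — $\max_{(i,j)\in\cH}|\hat m_{ij}/(\sigma_i\omega_j)-1|=o_p(1)$, already available from the lemmas underlying Theorem \ref{thm:t} together with Conditions \ref{ass:mineig} and \ref{ass:mineig2} — gives $|\sqrt T\phi_{ij}/\hat m_{ij}|\ge(1-o_p(1))\sqrt T|\phi_{ij}|/(\sigma_i\omega_j)$, so Condition \ref{ass:betamin} forces the drift to be at least $4(1-o_p(1))\sqrt{2\log(KN^2)}$ uniformly over $\cS$. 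Next I would bound the stochastic part uniformly over $\cS$: Proposition \ref{thm:asynormal} gives $\max_{(i,j)}|r_{ij}|\le\max_i\bar r_i\le\bar v=O(T^{-\kappa_1})=o_p(\sqrt{\log(KN^2)})$ on the good event, and since $\hat m_{ij}\asymp1$ the term $|r_{ij}/\hat m_{ij}|$ is negligible; for the leading noise, $z_{ij}/(\sigma_i\omega_j)$ has unit variance and the maximal/moderate-deviation bounds for the VAR process already invoked to prove the Gaussian approximation in Theorem \ref{thm:fdr-t} yield $\max_{(i,j)\in\cH}|z_{ij}|/(\sigma_i\omega_j)\le c_0\sqrt{2\log(KN^2)}$ with probability tending to one, for some constant $c_0<3$ (in fact $c_0$ can be taken arbitrarily close to $1$).

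Combining the three pieces,
\[
\min_{(i,j)\in\cS}|\tT_{ij}|\ \ge\ \big(4-c_0-o_p(1)\big)\sqrt{2\log(KN^2)}\ \ge\ \sqrt{2\log(KN^2)}\ \ge\ \tt_0
\]
with probability tending to one. The same comparison handles the sign: since $\hat\phi_{ij}-\phi_{ij}=(z_{ij}+r_{ij})/\sqrt T$ and, on this event, $|z_{ij}+r_{ij}|/\sqrt T\le(c_0+o_p(1))\sigma_i\omega_j\sqrt{2\log(KN^2)/T}<4\sigma_i\omega_j\sqrt{2\log(KN^2)/T}\le|\phi_{ij}|$, the perturbation is strictly smaller in magnitude than $|\phi_{ij}|$, so $\sgn(\hat\phi_{ij})=\sgn(\phi_{ij})$ for every $(i,j)\in\cS$. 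Therefore $\cS\subseteq\hat\cS(\tt_0)$ with all signs matched on this event, the numerator $|\{(i,j)\in\hat\cS(\tt_0):\sgn(\hat\phi_{ij})=\sgn(\phi_{ij})\}|$ equals $|\cS|$, and the $\dPower$ integrand equals $1$ there; since it is bounded by $1$ and the event has probability $\to1$, bounded convergence gives $\lim_{N,T\to\infty}\dPower=1$. For Procedure \ref{proc:boot} nothing changes, because the bootstrap enters only through $\bbQ_B^{*}$ in the threshold rule \eqref{proc2:t0}, while the resulting $\tt_0$ still satisfies $\tt_0\le\sqrt{2\log(KN^2)}$, so the identical chain of inequalities applies.

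I expect the only genuinely delicate point to be the uniform bound $\max_{(i,j)\in\cH}|z_{ij}|/(\sigma_i\omega_j)\le c_0\sqrt{2\log(KN^2)}$ with a constant small enough that $4-c_0>1$ — it must survive the serial dependence in $\bx_t$ — but this is precisely the ingredient already needed for the FDR proofs of Theorems \ref{thm:fdr-t} and \ref{thm:bootstrap}, so here it is invoked rather than re-derived. The remaining work is routine bookkeeping: tracking the $o_p(1)$ terms, using Condition \ref{ass:betamin}'s constant $4$, and carrying out the directional (sign) refinement.
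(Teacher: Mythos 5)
Your proof is correct, but it takes a genuinely different route from the paper's. The paper never constructs a simultaneous detection event: it expands $\dPower$ by linearity of expectation as $s^{-1}\sum_{(i,j)\in\cS_{>0}}\Pro(\tT_{ij}\geq\tt_0)+s^{-1}\sum_{(i,j)\in\cS_{<0}}\Pro(\tT_{ij}\leq-\tt_0)$, lower-bounds each term by replacing $\tt_0$ with $\bar{\tt}_0=\sqrt{2\log p}$, and then shows that each \emph{marginal} probability tends to one uniformly over $\cS$, using only the single-index Gaussian approximation of Lemma \ref{lem:strongapp}(a) together with the crude event $m_{ij}/\hat{m}_{ij}>1/2$ --- this is where the constant $4$ in Condition \ref{ass:betamin} is spent: after halving, the drift is still $2\sqrt{2\log p}$, leaving slack $\sqrt{2\log p}$ over the threshold. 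You instead prove the stronger statement that $\cS\subseteq\hat{\cS}(\tt_0)$ with all signs matched on an event of probability tending to one, via the maximal inequality $\max_{(i,j)\in\cH}|z_{ij}|/(\sigma_i\omega_j)\leq c_0\sqrt{2\log p}$ with $1<c_0<3$. Two caveats. First, that maximal bound is not literally ``already needed for the FDR proofs'': Theorem \ref{thm:fdr-t} works with sums of tail probabilities and variances of normalized counts, never with a maximum; the bound does follow from Lemma \ref{lem:strongapp}(a) plus a union bound over the $p=KN^2$ pairs, with the accumulated approximation error $p\cdot O((N\vee T)^{-\nu+2})$ vanishing precisely because $\nu>4$ is assumed --- but you should carry out that union bound rather than treat the statement as already available. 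Second, the threshold reached through \eqref{proc2:t0}, when the infimum exists, is $\leq\bar{\tt}<\sqrt{2\log p}$ rather than equal to $\sqrt{2\log p}$; your argument only needs $\tt_0\leq\sqrt{2\log p}$, which holds in every case, so this is harmless. On balance, your route buys a stronger conclusion (simultaneous signed support recovery with probability tending to one), while the paper's route is more economical, requiring only marginal tail control and no maximal inequality.
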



This guarantees that both Procedures \ref{proc:asy} and \ref{proc:boot} do not asymptotically miss any important relation between variables, whichever threshold of either \eqref{proc:t0} or \eqref{proc:t01} is selected. 

\subsubsection{Theory for the robustification}\label{subsubsec:robust}

As noted in Section \ref{ssec:robust}, Condition \ref{ass:corr} sometimes seems annoying though it is crucial for Procedures \ref{proc:asy} and \ref{proc:boot}. Even without the condition, Procedure \ref{proc:eBH} with transformed statistics \eqref{e-var} can still provide reasonable discoveries.

\begin{thm}\label{thm:rob_fdr_pwr}
If Conditions \ref{ass:subG}--\ref{ass:mineig2} and $\bar{v}=o(1)$ hold. 
For a given level $q\in[0,1]$, the set of discoveries, $\hat{\cS}_R(h^*)$, obtained by Procedure \ref{proc:eBH} with e-variables \eqref{e-var} satisfies the following:
\begin{enumerate}
\item [(a)] For any continuous function $f:\bbR\to\bbR_+$ such that 
$f(\tT_{ij})$ is uniformly integrable (UI) for every $(i,j)\in \cS^c$, we have $\limsup_{N,T\to\infty}\FDR \leq q$. 
\item [(b)]For any monotonically increasing continuous function $f:\bbR\to\bbR_+$ such that $f(\sqrt{T})/|\cH|\to\infty$ for every $(i,j)\in \cS$, we have $\lim_{N,T\to\infty}\Po = 1$. 
\end{enumerate}
\end{thm}

It is not difficult to choose $f$ that admits the condition of (b). As for (a), the UI of $f(\tT_{ij})$ (as a random sequence indexed by $(T,N,K)$ for every $(i,j)\in\cS^c$) is pivotal for our e-based robust FDR control. For instance, we can show that $|\tT_{ij}|^p$ and $\exp(c|\tT_{ij}|^\alpha)$ are UI for some  constants $p,c,\alpha>0$; see Section \ref{ssec:eval} of Supplementary Material. 

The theory for e-based robustification still has room for exploration. First, Theorem \ref{thm:rob_fdr_pwr} does not investigate the directional FDR and power. Second, an optimal functional form of $f$ is not well understood. These issues will be addressed as future research topics.

\section{Monte Carlo Experiments\label{sec:MC}}

We investigate the finite sample behavior of the proposed procedures. 

\subsection{First experiment: Stability}

We begin with a small experiment to illustrate the superiority of our approach to those by the (adaptive) lasso in terms of selection stability. 
An $N\times1$ vector ${\by}_{t}$ for $t=-50,\dots,T$ is generated from the stationary VAR(1) model for $N=20$ and $T=100$ with $\mathbf{u}_{t}\sim \text{i.i.d.}N(\mathbf{0},\mathbf{I}_{N})$. The sparse $\bPhi$ is generated as in
Section \ref{subs:design} in such a way that the diagonal and $j$-diagonals for $j=\pm 1,\pm 2$, are non-zero. 

Figure \ref{figure:prelim} shows heatmaps of the frequencies with which the ($i,j$)th element of $\bPhi$ is discovered as nonzero by different methods in 1000 replications. Figure \ref{figure:prelim}(a) shows a heatmap of the absolute values of elements in $\bPhi_1$,
and Figures \ref{figure:prelim}(b), (c), and (d) report the results using the lasso, adaptive lasso, and Procedure \ref{proc:asy} with the target FDR level $q=0.2$, respectively. The overall frequency of false positives by the lasso is very high, and for certain elements unreasonably so (shown by dark blue). The adaptive lasso reduces overall false positives, but cannot do so sufficiently for the elements wrongly selected in large numbers by lasso. In addition, some nonzero $\phi_{ij}$'s are selected too infrequently by lasso, which is largely inherited by adaptive lasso (shown by very weak red). 
Meanwhile, when Procedure \ref{proc:asy} is used, the frequency of false positives is evenly spread among the elements and well controlled. Furthermore, nonzero elements are selected more evenly than the (adaptive) lasso.
These suggest that the proposed inferential methods can provide more stable and reproducible results for discovering the network Granger causality than existing popular estimation methods.

\graphicspath{ {./images/} }
\begin{figure}[h!]
	\centering
	\begin{minipage}{0.24\hsize}
		\centering
		\includegraphics[width=1.00\linewidth]{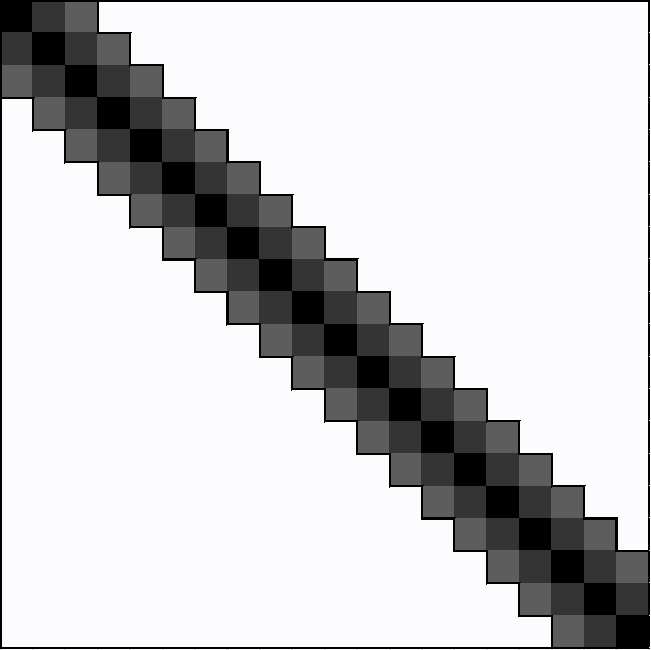}
		\subcaption{\footnotesize{$\bPhi_1$ absolute value}}
	\end{minipage}
	\begin{minipage}{0.24\hsize}
		\centering
		\includegraphics[width=1.00\linewidth]{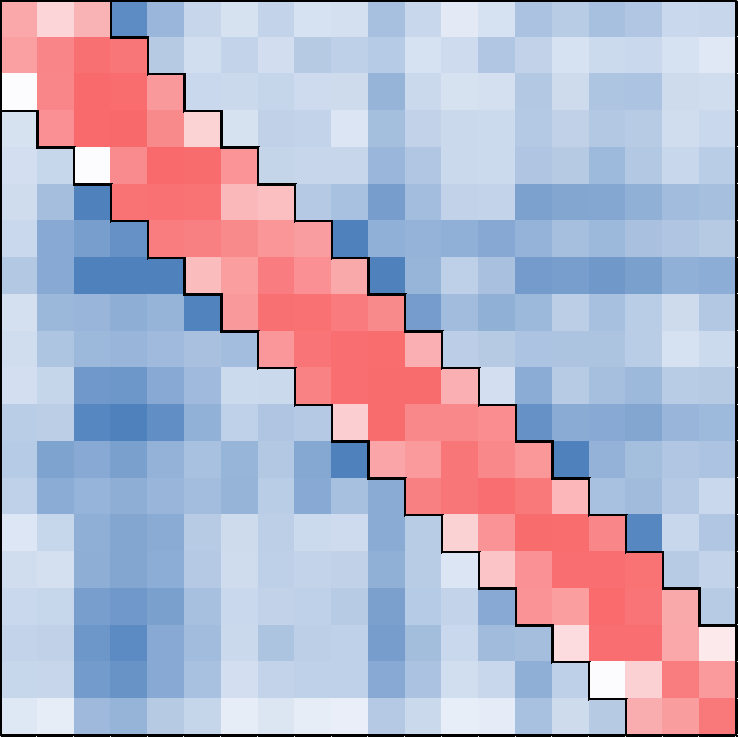}
		\subcaption{\footnotesize{lasso}}
	\end{minipage}
	\begin{minipage}{0.24\hsize}
		\centering
		\includegraphics[width=1.00\linewidth]{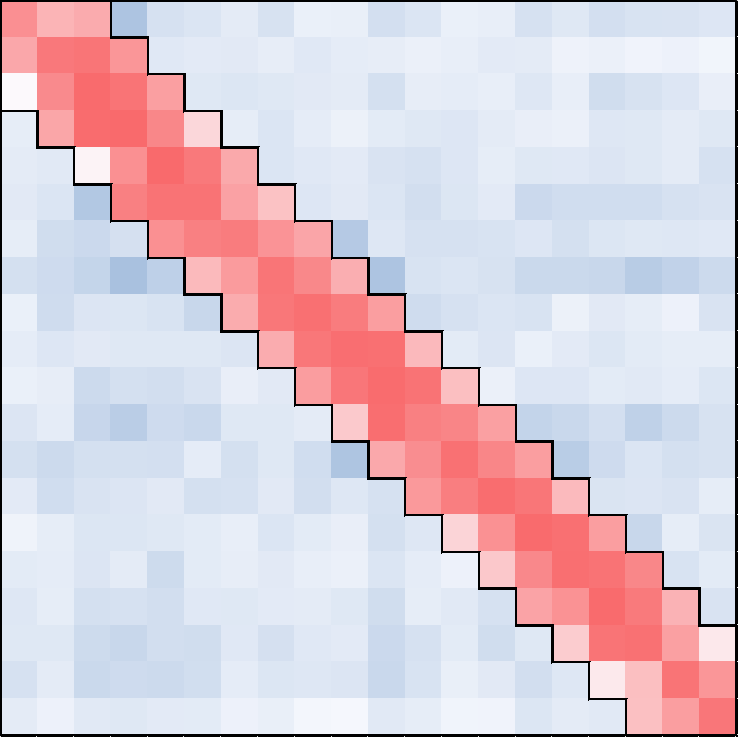}
		\subcaption{\footnotesize{adaptive lasso}}
	\end{minipage} 
	\begin{minipage}{0.24\hsize}
		\centering
		\includegraphics[width=1.00\linewidth]{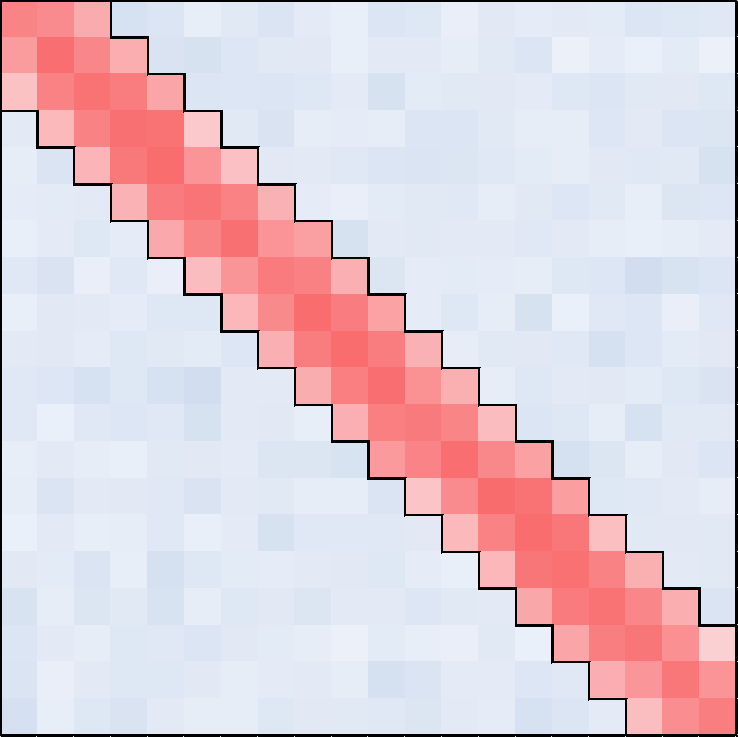}
		\subcaption{\footnotesize{multiple test,$q=0.2$}}
	\end{minipage} 
    \raggedright
       \includegraphics[width=0.1\linewidth]{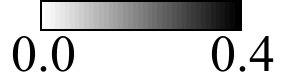}
 \hspace{50mm}
 	\includegraphics[width=0.2\linewidth]{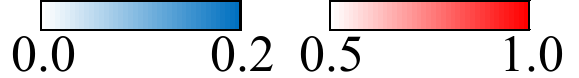}
	\caption{Heatmap of selection frequencies in  $\bPhi_1$; the blue and red cells indicate the frequencies of false and true discoveries, respectively.}
	\label{figure:prelim}
\end{figure}

Another practical advantage of our inference method is that by varying the value of $q$, elements of $\bPhi$ with different levels of significance can be visualized. By decreasing the value of $q$, the overall rejection frequency naturally decreases. In particular, the smaller the value of $q$, the more statistically significant elements are more likely to be selected; see the results with $q=0.1$ in Figure \ref{figure:prelim2} in Supplementary Material. 
Clearly, point estimation methods, such as the (adaptive) lasso, cannot provide this feature. 


\subsection{Second experiment: Performance of our methods}

\subsubsection{Design}\label{subs:design}
We consider the $N\times1$ vector generated as $\mathbf{y}_{t}=\sum_{\ell=1}^{K}{\bPhi}_{\ell
}\mathbf{y}_{t-\ell}+\mathbf{u}_{t}$ for $t=-(50+K),\dots,T$ with $\mathbf{y}%
_{-(50+K)}=\mathbf{0}$, and the set $\{\mathbf{y}_{-(50+K)},\dots,\mathbf{y}_{-K}\}$
is discarded. Define $\mathbf{u}_{t}={\bSigma}_{u}^{1/2}{\bve
}_{t}$, where 
${\bSigma}_{u}=\sigma^2 \bI_n$ with $\sigma=1$.
Another construction of $\bSigma_u$ with non-zero off-diagonals is considered in Supplementary Material. 
Two different distributions of $\varepsilon_{ti}$ are considered: 
(i) Standard normal, $\varepsilon_{ti}\sim \text{i.i.d.}N\left(  0,1\right)  $; (ii) Standardized mixture normal, $\varepsilon_{ti}=(\eta_{ti}-\mu_{\eta})/\sigma_{\eta}$ with $\eta_{ti}=q_{ti}\xi_{ti}+\left(  1-q_{ti}\right)\zeta_{ti}$, $\mu_{\eta}=\mathbb{E}\eta_{ti}$, and $\sigma_{\eta}^{2}=\mathbb{E}\eta_{ti}^{2}-\mu_{\eta}^{2}$, where $q_{ti}\sim \text{i.i.d.}Ber(\pi)$, $\xi_{ti}\sim \text{i.i.d.}N(  \mu_{\xi},\sigma_{\xi}^{2})$, and $\zeta_{ti}\sim \text{i.i.d.}N(\mu_{\zeta},\sigma_{\zeta}^{2})$. We have chosen the parameter values $\mu_{\xi}=2$, $\sigma_{\xi}=2$, $\mu_{\zeta}=4$, $\sigma_{\zeta}=10$, and $\pi=0.9$, which give $\mu_{\eta}=0.4$ and $\sigma_{\eta}=3.88$. The resulting error variable, $\varepsilon_{ti}$, is unimodal yet it has skewness 1.86 and kurtosis 3.53.

We focus on the model with $K=1$. The coefficient matrix ${\bPhi}=(\phi_{ij})$ is constructed as follows. First generate $\mathbf{\Psi}=(\psi_{ij})$ with $\psi_{ij}=\rho^{1+|i-j|/4}1\{| i-j| \leq m\}$. Now
form an $N\times N$ sign matrix 
$\boldsymbol{\Upsilon}^{(r)}=(\upsilon_{ij}^{(r)})$ for $r=1,2,\dots$, where
$\upsilon_{ij}^{(r)}=2\varphi_{ij}^{(r)}-1$ with $\varphi_{ij}^{(r)}\sim \text{i.i.d.}Ber(0.5)$. Repeatedly compute ${\bPhi}^{(r)}={\bPsi}\circ \boldsymbol{\Upsilon}^{(r)}$ for $r=1,2,\dots$ until $\lambda_{\max}({\bPhi}^{(r)})\leq0.96$. If the process stops at the $R$th repetition,  set ${\bPhi}={\bPhi}^{(R)}$. We consider $\rho=0.4$ and $m\in\{2,4,7\}$.
Observe that, except for the first and last rows, we have $s_{i}=5, 9, 15$ for
$m=2,4,7$, respectively. Set $q=0.1$ and consider all the combinations of $N\in\{50,100,200,300\}$ and $T\in\{200,300\}$. 

The model is estimated for each $i$th row using the R package, ``\texttt{glmnet}.'' The bootstrap procedure uses the same values of the lasso tuning parameters at the estimation stage. The CLIME estimator is constructed by the R package, ``\texttt{fastclime}.'' 
All the results are based on 1000 replications and 100 bootstrap samples. 

\subsubsection{Results}\label{subs:res}
Table \ref{table:fdr_het1} summarizes the results of Procedures \ref{proc:asy} and \ref{proc:boot} with
${\bSigma}_{u}=\bI_n$  
and  $\hat{m}_{ij}=\hat{\sigma}_i\sqrt{\hat{\bomega}'_{j} \hat{\bSigma}_x\hat{\bomega}_{j}}$.
It contains three panels for $m=2,4,7$; The larger the value of $m$, the larger the number of nonzeros in ${\bPhi}$. The results with $\bSigma_u$ with nonzero off-diagonals are qualitatively very similar; see Table \ref{table:fdr_het2} in Supplementary Material. 

As can be seen, both procedures control the dFDR to be around the predetermined level $q=0.1$, maintaining the high power. In particular, the good performance for large $N\geq T$ is very encouraging. The asymptotic threshold tends to produce slightly larger dFDR's than $q$ while the bootstrap one is more conservative. The mixture normal error moderately amplifies the tendency. Despite this conservativeness, the power based on the bootstrap threshold is almost identical to that based on the asymptotic one. The exaggeration and the conservativeness by both procedures are mitigated if $T$ is getting large. As expected, when the nonzero elements in $\bPhi$ increase (i.e. $m$ rises), the power goes down. However, it quickly rises as $T$ increases. 

The results of Procedure \ref{proc:eBH} based on the transformed t-statistic, $f(\tT_{ij})$, is reported in Table \ref{table:fdr_het1_ev}. Two transformation functions $f(x)$ are considered: $|x|^{p}$ with $p=10$ and $\exp(c|x|^{\alpha})$ with $c=3$ and $\alpha=1$. Both give very similar FDR and power. Their FDR are more conservative than the asymptotic and bootstrap FDR, but the associated reduction in power is relatively small. In conclusion, the e-BH procedure can be a reliable and robust alternative.

\setlength{\tabcolsep}{3.3pt}%
\renewcommand{\arraystretch}{.95}
\begin{table}[htb!]
	\caption{Directional FDR and power using asymptotic and bootstrap thresholds for $q=0.1$, with cross-sectionally uncorrelated errors}	
	\label{table:fdr_het1}
	\centering
	
	\begin{tabular}
		[c]{rlrccrccrccrcc}\hline
		\multicolumn{14}{c}{$m=2$ ($\max s_{i}=5$)}\\\hline
		&  &  & \multicolumn{5}{c}{$T=200$} &  & \multicolumn{5}{c}{$T=300$%
		}\\\cline{4-8}\cline{10-14}
		&  &  & \multicolumn{2}{c}{asymptotic} &  & \multicolumn{2}{c}{bootstrap} &  &
		\multicolumn{2}{c}{asymptotic} &  & \multicolumn{2}{c}{bootstrap}%
		\\\cline{4-5}\cline{7-8}\cline{10-11}\cline{13-14}
		&  &  & dFDR & PWR &  & dFDR & PWR &  & dFDR & PWR &  & dFDR & PWR\\
		\multicolumn{14}{l}{Standard Normal Error}\\
		& $N=50$ &  & \multicolumn{1}{r}{9.3} & \multicolumn{1}{r}{97.5} &  &
		\multicolumn{1}{r}{6.7} & \multicolumn{1}{r}{96.8} &  &
		\multicolumn{1}{r}{8.8} & \multicolumn{1}{r}{99.8} &  &
		\multicolumn{1}{r}{7.0} & \multicolumn{1}{r}{99.7}\\
		& $N=100$ &  & \multicolumn{1}{r}{10.7} & \multicolumn{1}{r}{94.7} &  &
		\multicolumn{1}{r}{6.9} & \multicolumn{1}{r}{93.2} &  &
		\multicolumn{1}{r}{9.9} & \multicolumn{1}{r}{99.4} &  &
		\multicolumn{1}{r}{7.4} & \multicolumn{1}{r}{99.2}\\
		& $N=200$ &  & \multicolumn{1}{r}{10.7} & \multicolumn{1}{r}{91.9} &  &
		\multicolumn{1}{r}{7.7} & \multicolumn{1}{r}{90.5} &  &
		\multicolumn{1}{r}{10.3} & \multicolumn{1}{r}{99.0} &  &
		\multicolumn{1}{r}{8.4} & \multicolumn{1}{r}{98.8}\\
		& $N=300$ &  & \multicolumn{1}{r}{10.4} & \multicolumn{1}{r}{89.6} &  &
		\multicolumn{1}{r}{6.8} & \multicolumn{1}{r}{87.3} &  &
		\multicolumn{1}{r}{9.7} & \multicolumn{1}{r}{98.5} &  &
		\multicolumn{1}{r}{8.5} & \multicolumn{1}{r}{98.4}\\
		\multicolumn{14}{l}{Mixture Normal Error}\\
		& $N=50$ &  & \multicolumn{1}{r}{10.4} & \multicolumn{1}{r}{94.2} &  &
		\multicolumn{1}{r}{5.7} & \multicolumn{1}{r}{91.9} &  &
		\multicolumn{1}{r}{9.5} & \multicolumn{1}{r}{99.0} &  &
		\multicolumn{1}{r}{6.3} & \multicolumn{1}{r}{98.7}\\
		& $N=100$ &  & \multicolumn{1}{r}{12.6} & \multicolumn{1}{r}{89.5} &  &
		\multicolumn{1}{r}{5.5} & \multicolumn{1}{r}{84.5} &  &
		\multicolumn{1}{r}{11.5} & \multicolumn{1}{r}{97.8} &  &
		\multicolumn{1}{r}{6.5} & \multicolumn{1}{r}{96.8}\\
		& $N=200$ &  & \multicolumn{1}{r}{13.1} & \multicolumn{1}{r}{87.0} &  &
		\multicolumn{1}{r}{6.6} & \multicolumn{1}{r}{82.5} &  &
		\multicolumn{1}{r}{11.7} & \multicolumn{1}{r}{97.2} &  &
		\multicolumn{1}{r}{7.6} & \multicolumn{1}{r}{96.4}\\
		& $N=300$ &  & \multicolumn{1}{r}{14.0} & \multicolumn{1}{r}{84.4} &  &
		\multicolumn{1}{r}{5.5} & \multicolumn{1}{r}{77.4} &  &
		\multicolumn{1}{r}{12.6} & \multicolumn{1}{r}{96.2} &  &
		\multicolumn{1}{r}{7.7} & \multicolumn{1}{r}{95.1}\\\hline
		\multicolumn{14}{c}{$m=4$ ($\max s_{i}=9$)}\\\hline
		&  &  & \multicolumn{5}{c}{$T=200$} &  & \multicolumn{5}{c}{$T=300$%
		}\\\cline{4-8}\cline{10-14}
		&  &  & \multicolumn{2}{c}{asymptotic} &  & \multicolumn{2}{c}{bootstrap} &  &
		\multicolumn{2}{c}{asymptotic} &  & \multicolumn{2}{c}{bootstrap}%
		\\\cline{4-5}\cline{7-8}\cline{10-11}\cline{13-14}
		&  &  & dFDR & PWR &  & dFDR & PWR &  & dFDR & PWR &  & dFDR & PWR\\
		\multicolumn{14}{l}{Standard Normal Error}\\
		& $N=50$ &  & \multicolumn{1}{r}{8.3} & \multicolumn{1}{r}{89.1} &  &
		\multicolumn{1}{r}{6.7} & \multicolumn{1}{r}{87.9} &  &
		\multicolumn{1}{r}{7.9} & \multicolumn{1}{r}{96.7} &  &
		\multicolumn{1}{r}{6.9} & \multicolumn{1}{r}{96.4}\\
		& $N=100$ &  & \multicolumn{1}{r}{8.5} & \multicolumn{1}{r}{83.7} &  &
		\multicolumn{1}{r}{8.5} & \multicolumn{1}{r}{83.6} &  &
		\multicolumn{1}{r}{8.3} & \multicolumn{1}{r}{94.7} &  &
		\multicolumn{1}{r}{8.5} & \multicolumn{1}{r}{94.8}\\
		& $N=200$ &  & \multicolumn{1}{r}{9.3} & \multicolumn{1}{r}{78.1} &  &
		\multicolumn{1}{r}{9.6} & \multicolumn{1}{r}{78.4} &  &
		\multicolumn{1}{r}{8.8} & \multicolumn{1}{r}{91.9} &  &
		\multicolumn{1}{r}{9.6} & \multicolumn{1}{r}{92.3}\\
		& $N=300$ &  & \multicolumn{1}{r}{10.3} & \multicolumn{1}{r}{73.0} &  &
		\multicolumn{1}{r}{8.8} & \multicolumn{1}{r}{71.8} &  &
		\multicolumn{1}{r}{9.4} & \multicolumn{1}{r}{89.3} &  &
		\multicolumn{1}{r}{9.9} & \multicolumn{1}{r}{89.5}\\
		\multicolumn{14}{l}{Mixture Normal Error}\\
		& $N=50$ &  & \multicolumn{1}{r}{8.5} & \multicolumn{1}{r}{85.1} &  &
		\multicolumn{1}{r}{5.8} & \multicolumn{1}{r}{82.5} &  &
		\multicolumn{1}{r}{7.8} & \multicolumn{1}{r}{94.8} &  &
		\multicolumn{1}{r}{6.1} & \multicolumn{1}{r}{93.9}\\
		& $N=100$ &  & \multicolumn{1}{r}{8.9} & \multicolumn{1}{r}{79.9} &  &
		\multicolumn{1}{r}{7.2} & \multicolumn{1}{r}{78.2} &  &
		\multicolumn{1}{r}{8.7} & \multicolumn{1}{r}{92.4} &  &
		\multicolumn{1}{r}{7.6} & \multicolumn{1}{r}{91.9}\\
		& $N=200$ &  & \multicolumn{1}{r}{11.5} & \multicolumn{1}{r}{74.7} &  &
		\multicolumn{1}{r}{8.2} & \multicolumn{1}{r}{71.7} &  &
		\multicolumn{1}{r}{10.6} & \multicolumn{1}{r}{88.9} &  &
		\multicolumn{1}{r}{8.8} & \multicolumn{1}{r}{88.0}\\
		& $N=300$ &  & \multicolumn{1}{r}{12.0} & \multicolumn{1}{r}{69.9} &  &
		\multicolumn{1}{r}{6.9} & \multicolumn{1}{r}{65.3} &  &
		\multicolumn{1}{r}{11.0} & \multicolumn{1}{r}{86.1} &  &
		\multicolumn{1}{r}{8.9} & \multicolumn{1}{r}{85.1}\\\hline
		\multicolumn{14}{c}{$m=7$ ($\max s_{i}=15$)}\\\hline
		&  &  & \multicolumn{5}{c}{$T=200$} &  & \multicolumn{5}{c}{$T=300$%
		}\\\cline{4-8}\cline{10-14}
		&  &  & \multicolumn{2}{c}{asymptotic} &  & \multicolumn{2}{c}{bootstrap} &  &
		\multicolumn{2}{c}{asymptotic} &  & \multicolumn{2}{c}{bootstrap}%
		\\\cline{4-5}\cline{7-8}\cline{10-11}\cline{13-14}
		&  &  & dFDR & PWR &  & dFDR & PWR &  & dFDR & PWR &  & dFDR & PWR\\
		\multicolumn{14}{l}{Standard Normal Error}\\
		& $N=50$ &  & \multicolumn{1}{r}{6.8} & \multicolumn{1}{r}{68.6} &  &
		\multicolumn{1}{r}{6.0} & \multicolumn{1}{r}{67.5} &  &
		\multicolumn{1}{r}{6.7} & \multicolumn{1}{r}{80.4} &  &
		\multicolumn{1}{r}{6.1} & \multicolumn{1}{r}{79.8}\\
		& $N=100$ &  & \multicolumn{1}{r}{7.6} & \multicolumn{1}{r}{60.6} &  &
		\multicolumn{1}{r}{8.0} & \multicolumn{1}{r}{61.0} &  &
		\multicolumn{1}{r}{7.6} & \multicolumn{1}{r}{74.2} &  &
		\multicolumn{1}{r}{8.0} & \multicolumn{1}{r}{74.6}\\
		& $N=200$ &  & \multicolumn{1}{r}{9.5} & \multicolumn{1}{r}{53.0} &  &
		\multicolumn{1}{r}{9.0} & \multicolumn{1}{r}{52.5} &  &
		\multicolumn{1}{r}{8.6} & \multicolumn{1}{r}{67.8} &  &
		\multicolumn{1}{r}{9.1} & \multicolumn{1}{r}{68.2}\\
		& $N=300$ &  & \multicolumn{1}{r}{9.4} & \multicolumn{1}{r}{50.1} &  &
		\multicolumn{1}{r}{9.3} & \multicolumn{1}{r}{50.1} &  &
		\multicolumn{1}{r}{8.5} & \multicolumn{1}{r}{65.3} &  &
		\multicolumn{1}{r}{10.1} & \multicolumn{1}{r}{66.4}\\
		\multicolumn{14}{l}{Mixture Normal Error}\\
		& $N=50$ &  & \multicolumn{1}{r}{6.4} & \multicolumn{1}{r}{65.5} &  &
		\multicolumn{1}{r}{4.9} & \multicolumn{1}{r}{63.2} &  &
		\multicolumn{1}{r}{6.1} & \multicolumn{1}{r}{77.9} &  &
		\multicolumn{1}{r}{5.1} & \multicolumn{1}{r}{76.6}\\
		& $N=100$ &  & \multicolumn{1}{r}{7.9} & \multicolumn{1}{r}{58.1} &  &
		\multicolumn{1}{r}{6.8} & \multicolumn{1}{r}{56.7} &  &
		\multicolumn{1}{r}{7.9} & \multicolumn{1}{r}{71.8} &  &
		\multicolumn{1}{r}{7.2} & \multicolumn{1}{r}{71.1}\\
		& $N=200$ &  & \multicolumn{1}{r}{9.4} & \multicolumn{1}{r}{51.1} &  &
		\multicolumn{1}{r}{7.5} & \multicolumn{1}{r}{49.3} &  &
		\multicolumn{1}{r}{9.0} & \multicolumn{1}{r}{65.6} &  &
		\multicolumn{1}{r}{8.3} & \multicolumn{1}{r}{65.0}\\
		& $N=300$ &  & \multicolumn{1}{r}{10.5} & \multicolumn{1}{r}{48.9} &  &
		\multicolumn{1}{r}{7.5} & \multicolumn{1}{r}{46.5} &  &
		\multicolumn{1}{r}{9.6} & \multicolumn{1}{r}{63.4} &  &
		\multicolumn{1}{r}{9.1} & \multicolumn{1}{r}{62.9}\\\hline
	\end{tabular}
\end{table}

\setlength{\tabcolsep}{3.3pt}%
\renewcommand{\arraystretch}{.95}
\begin{table}[htb!]
	\caption{FDR and power using e-BH thresholds for $q=0.1$, with cross-sectionally uncorrelated errors}	
	\label{table:fdr_het1_ev}
	\centering

\begin{tabular}
[c]{rlrcccccrccccc}\hline
\multicolumn{14}{c}{$m=2$ ($\max s_{i}=5$)}\\\hline
&  &  & \multicolumn{5}{c}{$T=200$} &  & \multicolumn{5}{c}{$T=300$%
}\\\cline{4-8}\cline{10-14}%
\multicolumn{1}{c}{} & \multicolumn{1}{c}{$f(x)$} & \multicolumn{1}{c}{} &
\multicolumn{2}{c}{$|x|^{p}$} &  & \multicolumn{2}{c}{$\exp(c|x|^{\alpha})$} &
\multicolumn{1}{c}{} & \multicolumn{2}{c}{$|x|^{p}$} &  &
\multicolumn{2}{c}{$\exp(c|x|^{\alpha})$}\\\cline{4-5}\cline{7-8}%
\cline{10-11}\cline{13-14}
&  &  & FDR & PWR & \multicolumn{1}{r}{} & FDR & PWR &  & FDR & PWR &
\multicolumn{1}{r}{} & FDR & PWR\\
\multicolumn{8}{l}{Standard Normal Error} & \multicolumn{1}{l}{} &
\multicolumn{1}{l}{} & \multicolumn{1}{l}{} & \multicolumn{1}{l}{} &
\multicolumn{1}{l}{} & \multicolumn{1}{l}{}\\
& $N=50$ &  & \multicolumn{1}{r}{1.8} & \multicolumn{1}{r}{93.2} &
\multicolumn{1}{r}{} & \multicolumn{1}{r}{1.3} & \multicolumn{1}{r}{92.0} &  &
\multicolumn{1}{r}{1.4} & \multicolumn{1}{r}{99.0} & \multicolumn{1}{r}{} &
\multicolumn{1}{r}{0.9} & \multicolumn{1}{r}{98.7}\\
& $N=100$ &  & \multicolumn{1}{r}{2.2} & \multicolumn{1}{r}{88.4} &
\multicolumn{1}{r}{} & \multicolumn{1}{r}{1.5} & \multicolumn{1}{r}{86.5} &  &
\multicolumn{1}{r}{1.8} & \multicolumn{1}{r}{98.0} & \multicolumn{1}{r}{} &
\multicolumn{1}{r}{1.2} & \multicolumn{1}{r}{97.5}\\
& $N=200$ &  & \multicolumn{1}{r}{1.7} & \multicolumn{1}{r}{82.9} &
\multicolumn{1}{r}{} & \multicolumn{1}{r}{1.2} & \multicolumn{1}{r}{80.8} &  &
\multicolumn{1}{r}{1.5} & \multicolumn{1}{r}{96.9} & \multicolumn{1}{r}{} &
\multicolumn{1}{r}{1.0} & \multicolumn{1}{r}{96.3}\\
& $N=300$ &  & \multicolumn{1}{r}{1.4} & \multicolumn{1}{r}{78.6} &
\multicolumn{1}{r}{} & \multicolumn{1}{r}{1.1} & \multicolumn{1}{r}{76.6} &  &
\multicolumn{1}{r}{1.2} & \multicolumn{1}{r}{95.7} & \multicolumn{1}{r}{} &
\multicolumn{1}{r}{0.8} & \multicolumn{1}{r}{95.0}\\
\multicolumn{8}{l}{Mixture Normal Error} & \multicolumn{1}{l}{} &
\multicolumn{1}{l}{} & \multicolumn{1}{l}{} & \multicolumn{1}{l}{} &
\multicolumn{1}{l}{} & \multicolumn{1}{l}{}\\
& $N=50$ &  & \multicolumn{1}{r}{2.5} & \multicolumn{1}{r}{87.7} &
\multicolumn{1}{r}{} & \multicolumn{1}{r}{1.8} & \multicolumn{1}{r}{86.0} &  &
\multicolumn{1}{r}{1.9} & \multicolumn{1}{r}{97.1} & \multicolumn{1}{r}{} &
\multicolumn{1}{r}{1.4} & \multicolumn{1}{r}{96.4}\\
& $N=100$ &  & \multicolumn{1}{r}{3.1} & \multicolumn{1}{r}{80.6} &
\multicolumn{1}{r}{} & \multicolumn{1}{r}{2.3} & \multicolumn{1}{r}{78.3} &  &
\multicolumn{1}{r}{2.4} & \multicolumn{1}{r}{94.7} & \multicolumn{1}{r}{} &
\multicolumn{1}{r}{1.7} & \multicolumn{1}{r}{93.6}\\
& $N=200$ &  & \multicolumn{1}{r}{2.4} & \multicolumn{1}{r}{75.5} &
\multicolumn{1}{r}{} & \multicolumn{1}{r}{1.8} & \multicolumn{1}{r}{73.1} &  &
\multicolumn{1}{r}{2.0} & \multicolumn{1}{r}{92.9} & \multicolumn{1}{r}{} &
\multicolumn{1}{r}{1.4} & \multicolumn{1}{r}{91.9}\\
& $N=300$ &  & \multicolumn{1}{r}{2.3} & \multicolumn{1}{r}{71.3} &
\multicolumn{1}{r}{} & \multicolumn{1}{r}{1.8} & \multicolumn{1}{r}{69.4} &  &
\multicolumn{1}{r}{2.3} & \multicolumn{1}{r}{71.3} & \multicolumn{1}{r}{} &
\multicolumn{1}{r}{1.5} & \multicolumn{1}{r}{90.1}\\\hline
\multicolumn{14}{c}{$m=4$ ($\max s_{i}=9$)}\\\hline
&  &  & \multicolumn{5}{c}{$T=200$} &  & \multicolumn{5}{c}{$T=300$%
}\\\cline{4-8}\cline{10-14}%
\multicolumn{1}{c}{} & \multicolumn{1}{c}{$f(x)$} & \multicolumn{1}{c}{} &
\multicolumn{2}{c}{$|x|^{p}$} &  & \multicolumn{2}{c}{$\exp(c|x|^{\alpha})$} &
\multicolumn{1}{c}{} & \multicolumn{2}{c}{$|x|^{p}$} &  &
\multicolumn{2}{c}{$\exp(c|x|^{\alpha})$}\\\cline{4-5}\cline{7-8}%
\cline{10-11}\cline{13-14}
&  &  & FDR & PWR & \multicolumn{1}{r}{} & FDR & PWR &  & FDR & PWR &
\multicolumn{1}{r}{} & FDR & PWR\\
\multicolumn{8}{l}{Standard Normal Error} & \multicolumn{1}{l}{} &
\multicolumn{1}{l}{} & \multicolumn{1}{l}{} & \multicolumn{1}{l}{} &
\multicolumn{1}{l}{} & \multicolumn{1}{l}{}\\
& $N=50$ &  & \multicolumn{1}{r}{1.7} & \multicolumn{1}{r}{79.5} &
\multicolumn{1}{r}{} & \multicolumn{1}{r}{1.2} & \multicolumn{1}{r}{77.4} &  &
\multicolumn{1}{r}{1.4} & \multicolumn{1}{r}{91.8} & \multicolumn{1}{r}{} &
\multicolumn{1}{r}{1.0} & \multicolumn{1}{r}{90.7}\\
& $N=100$ &  & \multicolumn{1}{r}{1.6} & \multicolumn{1}{r}{72.4} &
\multicolumn{1}{r}{} & \multicolumn{1}{r}{1.1} & \multicolumn{1}{r}{69.6} &  &
\multicolumn{1}{r}{1.5} & \multicolumn{1}{r}{88.7} & \multicolumn{1}{r}{} &
\multicolumn{1}{r}{0.9} & \multicolumn{1}{r}{87.2}\\
& $N=200$ &  & \multicolumn{1}{r}{1.7} & \multicolumn{1}{r}{66.5} &
\multicolumn{1}{r}{} & \multicolumn{1}{r}{1.2} & \multicolumn{1}{r}{63.9} &  &
\multicolumn{1}{r}{1.5} & \multicolumn{1}{r}{85.1} & \multicolumn{1}{r}{} &
\multicolumn{1}{r}{1.0} & \multicolumn{1}{r}{83.4}\\
& $N=300$ &  & \multicolumn{1}{r}{1.8} & \multicolumn{1}{r}{60.3} &
\multicolumn{1}{r}{} & \multicolumn{1}{r}{1.3} & \multicolumn{1}{r}{58.1} &  &
\multicolumn{1}{r}{1.5} & \multicolumn{1}{r}{81.4} & \multicolumn{1}{r}{} &
\multicolumn{1}{r}{1.0} & \multicolumn{1}{r}{79.6}\\
\multicolumn{8}{l}{Mixture Normal Error} & \multicolumn{1}{l}{} &
\multicolumn{1}{l}{} & \multicolumn{1}{l}{} & \multicolumn{1}{l}{} &
\multicolumn{1}{l}{} & \multicolumn{1}{l}{}\\
& $N=50$ &  & \multicolumn{1}{r}{1.9} & \multicolumn{1}{r}{74.2} &
\multicolumn{1}{r}{} & \multicolumn{1}{r}{1.4} & \multicolumn{1}{r}{71.9} &  &
\multicolumn{1}{r}{1.5} & \multicolumn{1}{r}{88.4} & \multicolumn{1}{r}{} &
\multicolumn{1}{r}{1.1} & \multicolumn{1}{r}{86.9}\\
& $N=100$ &  & \multicolumn{1}{r}{1.8} & \multicolumn{1}{r}{67.5} &
\multicolumn{1}{r}{} & \multicolumn{1}{r}{1.3} & \multicolumn{1}{r}{64.5} &  &
\multicolumn{1}{r}{1.7} & \multicolumn{1}{r}{85.0} & \multicolumn{1}{r}{} &
\multicolumn{1}{r}{1.1} & \multicolumn{1}{r}{83.1}\\
& $N=200$ &  & \multicolumn{1}{r}{2.1} & \multicolumn{1}{r}{62.5} &
\multicolumn{1}{r}{} & \multicolumn{1}{r}{1.5} & \multicolumn{1}{r}{59.9} &  &
\multicolumn{1}{r}{1.9} & \multicolumn{1}{r}{81.0} & \multicolumn{1}{r}{} &
\multicolumn{1}{r}{1.3} & \multicolumn{1}{r}{79.0}\\
& $N=300$ &  & \multicolumn{1}{r}{2.2} & \multicolumn{1}{r}{56.6} &
\multicolumn{1}{r}{} & \multicolumn{1}{r}{1.6} & \multicolumn{1}{r}{54.4} &  &
\multicolumn{1}{r}{1.9} & \multicolumn{1}{r}{77.1} & \multicolumn{1}{r}{} &
\multicolumn{1}{r}{1.3} & \multicolumn{1}{r}{75.1}\\\hline
\multicolumn{14}{c}{$m=7$ ($\max s_{i}=15$)}\\\hline
&  &  & \multicolumn{5}{c}{$T=200$} &  & \multicolumn{5}{c}{$T=300$%
}\\\cline{4-8}\cline{10-14}%
\multicolumn{1}{c}{} & \multicolumn{1}{c}{$f(x)$} & \multicolumn{1}{c}{} &
\multicolumn{2}{c}{$|x|^{p}$} &  & \multicolumn{2}{c}{$\exp(c|x|^{\alpha})$} &
\multicolumn{1}{c}{} & \multicolumn{2}{c}{$|x|^{p}$} &  &
\multicolumn{2}{c}{$\exp(c|x|^{\alpha})$}\\\cline{4-5}\cline{7-8}%
\cline{10-11}\cline{13-14}
&  &  & FDR & PWR & \multicolumn{1}{r}{} & FDR & PWR &  & FDR & PWR &
\multicolumn{1}{r}{} & FDR & PWR\\
\multicolumn{8}{l}{Standard Normal Error} & \multicolumn{1}{l}{} &
\multicolumn{1}{l}{} & \multicolumn{1}{l}{} & \multicolumn{1}{l}{} &
\multicolumn{1}{l}{} & \multicolumn{1}{l}{}\\
& $N=50$ &  & \multicolumn{1}{r}{1.3} & \multicolumn{1}{r}{55.8} &
\multicolumn{1}{r}{} & \multicolumn{1}{r}{0.9} & \multicolumn{1}{r}{53.7} &  &
\multicolumn{1}{r}{1.1} & \multicolumn{1}{r}{69.2} & \multicolumn{1}{r}{} &
\multicolumn{1}{r}{0.8} & \multicolumn{1}{r}{67.3}\\
& $N=100$ &  & \multicolumn{1}{r}{1.5} & \multicolumn{1}{r}{48.7} &
\multicolumn{1}{r}{} & \multicolumn{1}{r}{1.0} & \multicolumn{1}{r}{46.5} &  &
\multicolumn{1}{r}{1.4} & \multicolumn{1}{r}{63.7} & \multicolumn{1}{r}{} &
\multicolumn{1}{r}{0.9} & \multicolumn{1}{r}{61.5}\\
& $N=200$ &  & \multicolumn{1}{r}{1.6} & \multicolumn{1}{r}{42.3} &
\multicolumn{1}{r}{} & \multicolumn{1}{r}{1.1} & \multicolumn{1}{r}{40.2} &  &
\multicolumn{1}{r}{1.4} & \multicolumn{1}{r}{58.0} & \multicolumn{1}{r}{} &
\multicolumn{1}{r}{0.9} & \multicolumn{1}{r}{55.9}\\
& $N=300$ &  & \multicolumn{1}{r}{1.7} & \multicolumn{1}{r}{39.8} &
\multicolumn{1}{r}{} & \multicolumn{1}{r}{1.2} & \multicolumn{1}{r}{38.0} &  &
\multicolumn{1}{r}{1.4} & \multicolumn{1}{r}{55.7} & \multicolumn{1}{r}{} &
\multicolumn{1}{r}{1.0} & \multicolumn{1}{r}{53.9}\\
\multicolumn{8}{l}{Mixture Normal Error} & \multicolumn{1}{l}{} &
\multicolumn{1}{l}{} & \multicolumn{1}{l}{} & \multicolumn{1}{l}{} &
\multicolumn{1}{l}{} & \multicolumn{1}{l}{}\\
& $N=50$ &  & \multicolumn{1}{r}{1.3} & \multicolumn{1}{r}{52.0} &
\multicolumn{1}{r}{} & \multicolumn{1}{r}{0.9} & \multicolumn{1}{r}{49.6} &  &
\multicolumn{1}{r}{1.1} & \multicolumn{1}{r}{65.9} & \multicolumn{1}{r}{} &
\multicolumn{1}{r}{0.8} & \multicolumn{1}{r}{63.8}\\
& $N=100$ &  & \multicolumn{1}{r}{1.6} & \multicolumn{1}{r}{45.9} &
\multicolumn{1}{r}{} & \multicolumn{1}{r}{1.1} & \multicolumn{1}{r}{43.5} &  &
\multicolumn{1}{r}{1.5} & \multicolumn{1}{r}{60.7} & \multicolumn{1}{r}{} &
\multicolumn{1}{r}{1.1} & \multicolumn{1}{r}{58.4}\\
& $N=200$ &  & \multicolumn{1}{r}{1.9} & \multicolumn{1}{r}{40.2} &
\multicolumn{1}{r}{} & \multicolumn{1}{r}{1.3} & \multicolumn{1}{r}{38.1} &  &
\multicolumn{1}{r}{1.7} & \multicolumn{1}{r}{55.2} & \multicolumn{1}{r}{} &
\multicolumn{1}{r}{1.1} & \multicolumn{1}{r}{53.0}\\
& $N=300$ &  & \multicolumn{1}{r}{1.9} & \multicolumn{1}{r}{38.1} &
\multicolumn{1}{r}{} & \multicolumn{1}{r}{1.4} & \multicolumn{1}{r}{36.4} &  &
\multicolumn{1}{r}{1.7} & \multicolumn{1}{r}{53.2} & \multicolumn{1}{r}{} &
\multicolumn{1}{r}{1.2} & \multicolumn{1}{r}{51.3}\\\hline
\end{tabular}

\end{table}
\section{Two Empirical Applications}\label{sec:ee}

We apply our proposed methods to two large datasets to discover the underlying Granger-causal networks. Section \ref{sec:ee1} investigates the large macroeconomic and financial variables. Section \ref{sec:ee2} analyzes the regional house price growths in the UK. All the results in this section are based on $K=1$ and $\hat{m}_{ij}=\hat{\sigma}_i\sqrt{\hat{\bomega}'_{j} \hat{\bSigma}_x\hat{\bomega}_{j}}$.

\subsection{Large macroeconomic variables\label{sec:ee1} }

The FRED-MD data file of May 2019 is obtained from
McCracken's website, and the variables are transformed to be stationary as instructed by
\cite{MN2016}. The data consists of a balanced panel of 128 monthly series
spanning the period from June 1999 to May 2019. All series are standardized
before the analysis. Following \cite{MN2016}, the series are categorized into
eight groups:
\textbf{G1}, Output and Income; \textbf{G2}, Labour Market; \textbf{G3}, Consumption, Orders and Inventories; \textbf{G4}, Housing; \textbf{G5}, Interest and Exchange Rate; \textbf{G6}, Prices; \textbf{G7}, Money and
Credit; \textbf{G8}, Stock Market. (The group order is different from \cite{MN2016}.) The variables are numbered from 1 to 128,
and the descriptions for all the variables are reported in Table \ref{table:fredmdv} in Supplementary Material.

We estimate a VAR(1) model. The asymptotic and bootstrap thresholds for the $t$-ratios with $q=0.05$ were 2.88 and 4.41, respectively. Here we summarize the result with Procedure \ref{proc:boot} in Figure \ref{figure:nwFREDMDq05bts} as a network Granger causality diagram. The result with Procedure \ref{proc:asy} is reported in Supplementary Material. The nodes represent the variables, and their colors show the eight categories. The size of a node indicates the number of variables it significantly predicts; the larger the node size, the more variables it can predict. The arrows show the direction of the Granger causality. The self-lag effects are excluded in the figure. The main dynamic inter-relations are clustered within the groups, yet interesting interlinkages between the variable groups are observed. In particular, Price variables are clustered together, and eleven price variables are caused by variable 112, \textit{Real M2 Money Supply}. Price variable 94, \textit{Crude Oil}, also Granger-causes seven other price variables. Finally, \textit{Real Manufactures and Trade Industries Sales} (variable 49) Granger-cause three producer price indices (variables 90,\ 91,\ 92). These findings make a lot of sense from an economic point of view. In addition, it is easy to identify the variables that cause many other variables (many edges come out from the node) and those which are caused by many other variables (the node surrounded by many pointing arrows); see the \textit{Housing} variable cluster, for example.  

\noindent
\begin{figure}[h!]
	\begin{minipage}{1\hsize}
		\centering
		\includegraphics[width=130mm]
		{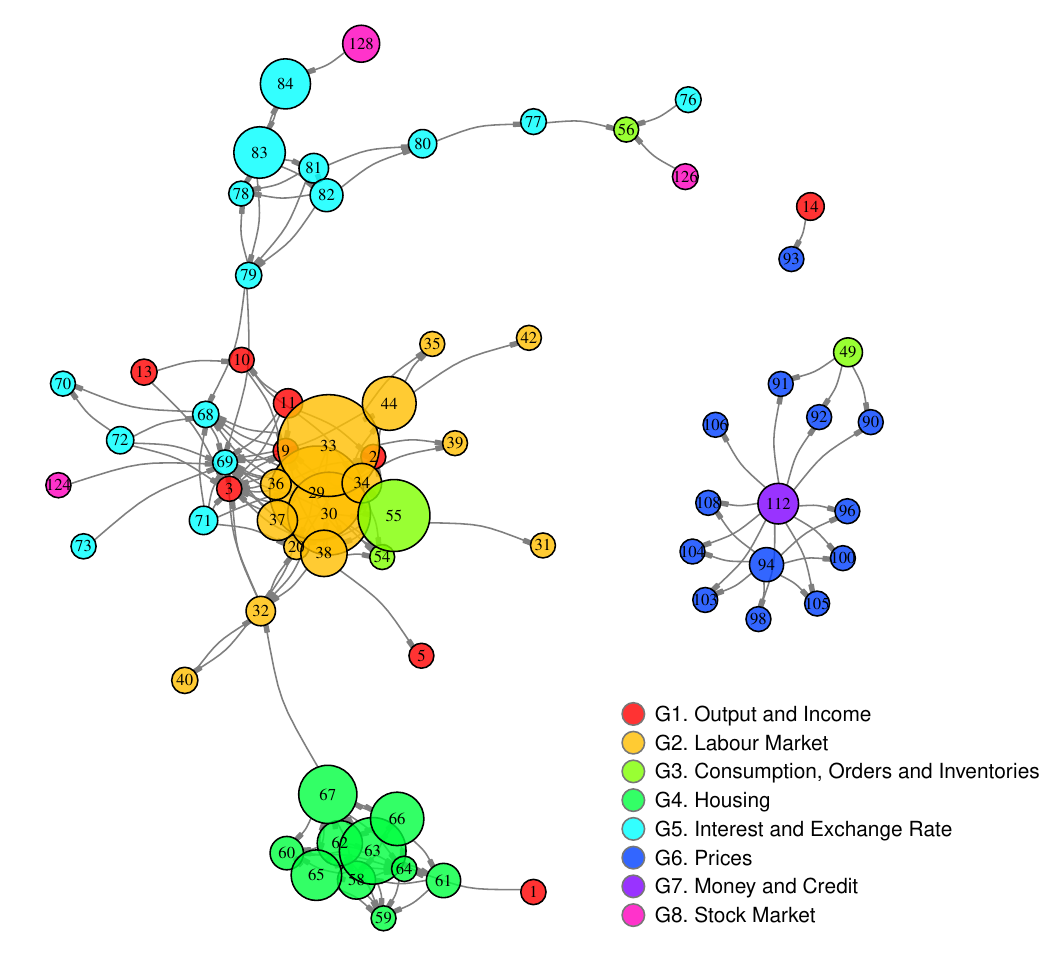}
		\vspace{0mm}
		\caption{\footnotesize{Network Granger-causality: 128 macroeconomic variables, bootstrap $t_0$, $q=0.05$}}
		\label{figure:nwFREDMDq05bts}
		\vspace{-3mm}
	\end{minipage}
\end{figure}

\subsection{UK regional house price growths\label{sec:ee2}}

We obtained the monthly average house prices at the local authority district level, published in November 2021 by HM Land Registry in the UK. Before analysis, we seasonally adjusted the prices, then deflated them by the UK consumer price index (CPI).\footnote{The CPI index (D7BT, not seasonally adjusted) is obtained from Office for National Statistics, UK. The house prices and the CPI are seasonally adjusted using the R package ``\texttt{seasonal}.'' } 
Denoting by $HP_{it}$ the seasonally adjusted real house price of the district $i$ at month $t$, the monthly house price growth is computed as $\Delta hp_{it}=\log(HP_{it}/HP_{it-1})$. In this analysis, we choose variables of 86 districts of Scotland, Wales, and the London area, spanning
209 months, from February 2004 to June 2021. The variables are numbered from 1
to 86, and the full list of the district names is reported in Table \ref{table:ukhpv} in Supplementary Materials. All the series are demeaned before the analysis. 

We estimate a VAR(1) model. The asymptotic and bootstrap thresholds for the $t$-ratios with $q=0.05$ were 2.73 and 4.22, respectively. Following the previous subsection, we summarize the result with the bootstrap threshold in Figure \ref{figure:nwUKHPq05bts} as a network Granger causality diagram. The result with the asymptotic threshold is reported in Supplementary Material.

\begin{figure}[!h]
	\begin{minipage}{1\hsize}
		\centering
		\includegraphics[width=130mm]
		{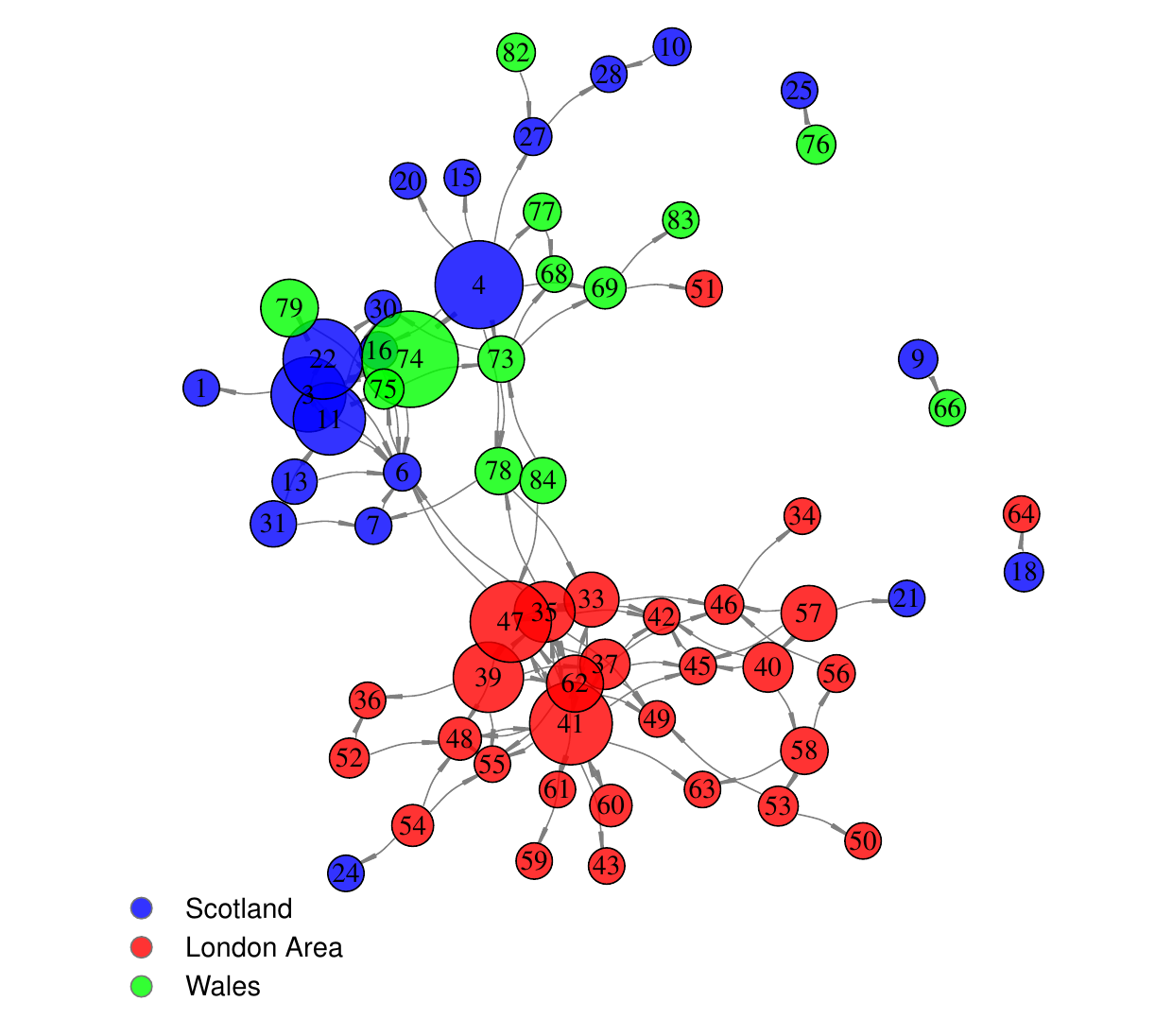}
		\vspace{0mm}
		\caption{\footnotesize{Network Granger causality: 86 UK regional house prices, bootstrap $t_0$, $q=0.05$}}
		\label{figure:nwUKHPq05bts}
		\vspace{-3mm}
	\end{minipage}
\end{figure}

The results show that house price growth causality networks are more or less clustered in each of the three regions. However, there are interesting inter-regional Granger-causal relationships. The London network Granger-causes variables 21 and 24, which are the house price growth in the City of Edinburgh and West Lothian. These two regions comprise a large part of Edinburgh, the capital of Scotland. Furthermore, the London regional network is shown to Granger-cause and also to be Granger-caused by Cardiff, the capital of Wales, variable 78. These results suggest that house price fluctuations in London interact dynamically with those in the capitals of other countries in the UK. 

There is one unique inter-regional Granger Causality. Falkirk in Scotland (Variable 6) has a significant concentration of arrows from London, Scotland, and Wales. A little searching leads to the BBC News of 21 July 2020 reporting that the Scottish and UK governments have pledged 90 million pounds in ``Growth Deal'' funds to stimulate the economy around Falkirk.\footnote{https://www.bbc.co.uk/news/uk-scotland-tayside-central-53471904} The deal was signed off by the UK, Scottish governments, and Falkirk council on 21 December 2021.\footnote{https://www.bbc.co.uk/news/uk-scotland-scotland-business-59734937} This large deal may have attracted investment to the Falkirk property market and was statistically identified as a Granger Causality for house price fluctuations in Falkirk.

\section{Conclusion}\label{sec:concl}

This paper has proposed multiple testing procedures that control the FDR for discovering the network Granger causality in high-dimensional VAR models. 
The validity of our inference-based framework is supported by the theory, simulation studies, and two empirical applications. We hope that the methods will enable us to stably discover networks inherent in various high-dimensional time series, serving as a clue for new theories in their domains.


We can consider some directions for future research:  
(i) \cite{Eichler2007} investigates a \textit{path diagram} that is composed of the Granger-causal network and contemporaneous connections in the error covariance matrix. As considered in \cite{BarigozziBrownlees2019}, extending our methods to include detection of the contemporaneous correlation networks is interesting and important; 
(ii) We have not included a serious discussion of lag order selection.  \cite{Stokes2017} argue that both too large and too small lag order selections will lead to spurious Granger causality findings. Apparently, investigating the ``optimal'' choice of $K$, such as combining the regularization-based lag selection method of \cite{NicholsonEtAl2020} with our framework, can be another important topic. 



\section*{Supplementary Material}
Section A: Proof of Theorems,
Section B: Proof of Propositions, 
Section C: Lemmas and their proofs, 
Section D: Precision Matrix Estimation, 
Section E: Additional Results on Robustification, 
Section F: Additional Experimental Results, 
Section G: List of Variable Names in Empirical Applications, 
Section H: Additional Results of Empirical Applications.

\section*{Acknowledgments}
The authors thank the editor, associate editor, and anonymous referees for their valuable comments and suggestions, which significantly improved the paper. 
This work was supported by JSPS KAKENHI Grant Numbers 19K13665, 20H01484, 20H05631, 21H00700, and 21H04397. The authors report there are no competing interests to declare.


\bibliographystyle{chicago}
\bibliography{references_GVAR}

\newpage
\appendix
\setcounter{page}{1}
\setcounter{section}{0}
\renewcommand{\theequation}{A.\arabic{equation}}
\setcounter{equation}{0}
\begin{center}
	{\Large Supplementary Material for} \\[7mm]
	\textbf{\Large Discovering the Network Granger Causality \\
		in Large Vector Autoregressive Models} \\[10mm]
	\textsc{\large Yoshimasa Uematsu$^*$} {\large and} \textsc{\large Takashi Yamagata$^\dagger$} \\[5mm]
	*\textit{\large Department of Social Data Science, Hitotsubashi University} \\[1mm]
	$\dagger$\textit{\large Department of Economics and Related Studies, University of York} \\[1mm]
	$\dagger$\textit{\large Institute of Social Economic Research, Osaka University}
\end{center}

\section{Proofs of Theorems} \label{sec:proof}


\subsection{Proof of Theorem \ref{thm:t}}

\begin{proof}[Proof]
	Fix any $(i,j)\in\cH$ such that $\bar{v}_i=o(1)$. Denote $m_{ij}=\sqrt{\sigma_i^2\omega_{jj}}=\sqrt{\sigma_i^2{\bomega}_{j}'\bSigma_x{\bomega}_{j}}$ and $\hat{m}_{ij}=\hat{\sigma}_i\sqrt{\hat{\bomega}_{j}'\hat{\bSigma}_x\hat{\bomega}_{j}}$ or $\hat{\sigma}_i\hat{\omega}_{j}$. 
	By the construction of the debiased lasso estimator with Proposition \ref{thm:asynormal} and Lemma \ref{lem:omegahat1}, the $t$-statistic is written as
	\begin{align*}
		\tT_{ij}-\frac{\sqrt{T}\phi_{ij}}{\hat{m}_{ij}}
		&= \frac{ \sqrt{T} ( \hat{\phi}_{ij} - \phi_{ij} )}{\hat{m}_{ij}} 
		= \frac{z_{ij}+r_{ij}}{m_{ij}}\left\{1+\left(\frac{m_{ij}}{\hat{m}_{ij}}-1\right)\right\} \\
		&=\left\{\frac{z_{ij}}{m_{ij}}+O\left(\bar{r}_i\right)\right\}\left\{1+O\left(M_\omega^2\lambda\right)\right\},
	\end{align*}
	which holds with probability at least $1-O((N\vee T)^{-\nu})$. Here, $\bar{r}_i$ and $M_\omega^2\lambda$ are asymptotically negligible (with high probability) by the assumed condition. 
	
	Next, prove the asymptotic normality of $z_{ij}/m_{ij}$ for each $(i,j)\in\cH$. 
	Recall that
	\begin{align*}
		\frac{z_{ij}}{m_{ij}} = \sum_{t=1}^T\xi_{Tt}^{(i,j)}, ~~~~~
		\xi_{Tt}^{(i,j)}=\frac{u_{it}\bx_t'{\bomega}_{j}}{\sqrt{T}m_{ij}}.
	\end{align*}
	By a simple calculation, we have $\Var(\xi_{Tt}^{(i,j)})=1/T$ and $\sum_{t=1}^T(\xi_{Tt}^{(i,j)})^2\to_p1$. Furthermore, by an application of Lemma \ref{lem:tail_yu}, we obtain
	\begin{align*}
		\max_t|\xi_{Tt}^{(i,j)}| \leq T^{-1/2}\max_t\|u_{it}\bx_t\|_\infty\|{\bomega}_{j}\|_1/m_{ij}
		\lesssim M_\omega T^{-1/2}\sqrt{\log^3(N\vee T)}
	\end{align*}
	with high probability, which implies $\max_{t}|\xi_{Tt}^{(i,j)}|\to_p0$ when $M_\omega^2\lambda=o(1)$. Thus McLeish's central limit theorem is applicable to achieve $\tT_{ij}-\sqrt{T}\phi_{ij}/\hat{m}_{ij}\to_d N(0,1)$ for each $(i,j)\in\cH$ such that $\bar{r}_i+M_\omega^2\lambda=o(1)$. This completes the proof. 
\end{proof}

\subsection{Proof of Theorem \ref{thm:fdr-t}}
\begin{proof}
	Throughout this proof, let $p=|\cH|=KN^2$ denote the total number of parameters in the (augmented) coefficient matrix, $\bPhi$. 
	Define 
	\begin{align*}
		&\cH_{\leq0}=\cH\cap\{(i,j):\phi_{ij}\leq 0\},~~~\cH_{\geq0}=\cH\cap\{(i,j):\phi_{ij}\geq 0\}, \\
		&\cS_{<0}=\cH\cap\{(i,j):\phi_{ij}< 0\},~~~\cS_{>0}=\cH\cap\{(i,j):\phi_{ij}> 0\}.
	\end{align*}
	By the definition of $s=|\cS|=|\cH\cap \{(i,j):\phi_{ij}\not=0\}|$, there is some sequence $\pi=\pi_{np}\in[0,1]$ such that $|\cS_{<0}|=\pi s$ and $|\cS_{>0}|=(1-\pi)s$. We also have $|\cH_{\leq0}|=p-(1-\pi)s$ and $|\cH_{\geq0}|=p-\pi s$. 
	
	\textit{Case 1.} Consider when \eqref{proc:t0} does not exist and $\tt_0=\sqrt{2\log p}$. First, we observe that 
	\begin{align}
		\dFDR(\tt_0) 
		&\leq \dFWER 
		= \Pro \left( \sum_{(i,j)\in\hat\cS(\tt_0)}1\{\sgn(\hat{\phi}_{ij})\not=\sgn(\phi_{ij})\}\geq 1 \right) \notag\\
		&\leq \Pro \left( \sum_{(i,j)\in\cH_{\leq0}}1\{\tT_{ij}\geq \tt_0\} \geq 1 \right)
		+ \Pro \left( \sum_{(i,j)\in\cH_{\geq0}}1\{\tT_{ij}\leq -\tt_0\} \geq 1 \right). \label{case1-1}
	\end{align}
	The first probability of \eqref{case1-1} is further bounded as
	\begin{align*}
		&\Pro \left( \sum_{(i,j)\in\cH_{\leq0}}1\{\tT_{ij}\geq \tt_0\} \geq 1 \right)
		\leq \sum_{(i,j)\in\cH_{\leq0}} \Pro \left( \tT_{ij}\geq \tt_0 \right) \\
		&\quad \leq \sum_{(i,j)\in\cS^c} \Pro \left( \cZ_{ij}\geq \tt_0-\delta_1 \right) + 
		\sum_{(i,j)\in\cS_{<0}} \Pro \left( \cZ_{ij}+\sqrt{T}\phi_{ij}/{\hat{m}_{ij}}\geq \tt_0-\delta_1 \right)  \\
  &\qquad\qquad\qquad\qquad\qquad\qquad\qquad\qquad\qquad\qquad\qquad\qquad + pO((N\vee T)^{-\nu+2})\\
		&\quad \leq \sum_{(i,j)\in\cS^c} \Pro \left( \cZ_{ij}\geq \tt_0-\delta_1 \right) + 
		\sum_{(i,j)\in\cS_{<0}} \Pro \left( \cZ_{ij}\geq \tt_0-\delta_1 \right)  + pO((N\vee T)^{-\nu+2})\\
		&\quad = (p-s+\pi s) Q(\tt_0-\delta_1)  + pO((N\vee T)^{-\nu+2})\\
		&\quad \leq p Q(\tt_0-\delta_1) + pO((N\vee T)^{-\nu+2}),
	\end{align*}
	where the second inequality holds for some positive sequence $\delta_1=O(T^{-\kappa})$ for some $\kappa\in(0,\kappa_1]$ by Lemma \ref{lem:strongapp}(a) and $Q(\tt)=\Pro(\cZ>\tt)$ is the upper tail probability of a standard normal random variable $\cZ$. Because $Q(\tt)\leq (\tt\sqrt{2\pi})^{-1}\exp(-\tt^2/2)$ and $\tt_0-\delta_1=\sqrt{2\log p}+ o(1)$, we have
	\begin{align*}
		p Q(\tt_0-\delta_1)
		\lesssim p(\sqrt{\log p})^{-1}\exp(-\log p + o(1) ) 
		=O(1/\sqrt{\log p}). 
	\end{align*}
	We also obtain $pO((N\vee T)^{-\nu+2})=o(1)$ for $\nu>4$. 
	The second probability of \eqref{case1-1} is bounded in the same way. We thus conclude $\dFDR(\tt_0) \leq \dFWER= o(1)$. 

	\textit{Case 2.} Consider when $\tt_0$ is given by \eqref{proc:t0}. Define
	\begin{align*}
		V = \sup_{\tt\in[0,\bar{\tt}]}\left|\frac{\sum_{(i,j)\in\cH_{\leq0}}\left[1\{\tT_{ij}\geq \tt\}-Q(\tt)\right] + \sum_{(i,j)\in\cH_{\geq0}}\left[1\{\tT_{ij}\leq -\tt\}-Q(\tt)\right]}{2pQ(\tt)}\right|.
	\end{align*}
	Then we have
	\begin{align*}
		\dFDP(\tt_0) 
		&= \frac{|\{(i,j)\in\hat{\cS}(\tt_0):\sgn(\hat{\phi}_{ij})\not=\sgn(\phi_{ij})\}|}{|\hat{\cS}(\tt_0)|\vee 1} \\
		&= \frac{2pQ(\tt_0)}{|\hat{\cS}(\tt_0)|\vee 1}\frac{\sum_{(i,j)\in\cH_{\leq0}}1\{\tT_{ij}\geq \tt_0\} + \sum_{(i,j)\in\cH_{\geq0}}1\{\tT_{ij}\leq -\tt_0\}}{2pQ(\tt_0)} \\
		&\leq q \left[\frac{\sum_{(i,j)\in\cH_{\leq0}}1\{\tT_{ij}\geq \tt_0\} + \sum_{(i,j)\in\cH_{\geq0}}1\{\tT_{ij}\leq -\tt_0\}}{2pQ(\tt_0)} \right] \\
		&\leq q\left[ V + \frac{(|\cH_{\leq0}|+|\cH_{\geq0}|)Q(\tt_0)}{2pQ(\tt_0)} \right] 
		\leq  q\left(V + 1\right).
	\end{align*}
	Thus the dFDR and dFDP are controlled if we prove $V=o_p(1)$ in view of Fatou's lemma and Markov's inequality, respectively. 
	Note that 
	\begin{align*}
		V &\leq \sup_{\tt\in[0,\bar{\tt}]}\left|\frac{\sum_{(i,j)\in\cH_{\leq0}}\left[1\{\tT_{ij}\geq \tt\}-Q(\tt)\right]}{2pQ(\tt)}\right| \\
		&\qquad \qquad \qquad + \sup_{\tt\in[0,\bar{\tt}]}\left|\frac{\sum_{(i,j)\in\cH_{\geq0}}\left[1\{\tT_{ij}\leq -\tt\}-Q(\tt)\right]}{2pQ(\tt)}\right|.
	\end{align*}
	We only prove that the first term is $o_p(1)$ by symmetry.

	To this end, consider discretization. That is, we partition $[0,\bar{\tt}]$ into small intervals, $0=\tt_0 < \tt_1< \dots <\tt_h = \bar{\tt}=(2\log p -a\log\log p)^{1/2}$, such that $\tt_m-\tt_{m-1}=v_p$ for $m\in\{1,\dots,h-1\}$ and $\tt_h-\tt_{h-1}\leq v_p$, where $v_p=(\log p\log\log p)^{-1/2}$. 
	Then a simple calculation gives $1/h \leq v_p/\bar{\tt} =O( 1/(\log p \sqrt{\log\log p}))$. 
	Fix arbitrary $m\in\{1,\dots,h\}$. For any $\tt\in[\tt_{m-1},\tt_m]$, we have
	\begin{align*}
		\frac{\sum_{(i,j)\in\cH_{\leq0}}1\{\tT_{ij}\geq \tt\}}{2pQ(\tt)}
		\leq \frac{\sum_{(i,j)\in\cH_{\leq0}}1\{\tT_{ij}\geq \tt_{m-1}\}}{2pQ(\tt_{m-1})}\frac{Q(\tt_{m-1})}{Q(\tt_{m})}
	\end{align*}
	and 
	\begin{align*}
		\frac{\sum_{(i,j)\in\cH_{\leq0}}1\{\tT_{ij}\geq \tt\}}{2pQ(\tt)}
		\geq \frac{\sum_{(i,j)\in\cH_{\leq0}}1\{\tT_{ij}\geq \tt_{m}\}}{2pQ(\tt_{m})}\frac{Q(\tt_{m})}{Q(\tt_{m-1})}.
	\end{align*}
	Lemma 7.2 of \cite{JJ2019} gives
	\begin{align*}
		\frac{Q(\tt_{m-1})}{Q(\tt_{m})}
		\leq\frac{Q(\tt_{m}-v_p)}{Q(\tt_{m})}
		= 1+O(v_p+v_p\bar{\tt})=1+o(1)
	\end{align*}
	uniformly in $m\in\{1,\dots,h\}$. Thus the proof completes if the following is true:
	\begin{align*}
		\tilde{V} := \max_{m\in\{1,\dots,h\}}\left|\frac{\sum_{(i,j)\in\cH_{\leq0}}\left[1\{\tT_{ij}\geq \tt_m\}-Q(\tt_m)\right]}{2pQ(\tt_m)}\right|=o_p(1).
	\end{align*}
	
	Prove $\tilde{V}=o_p(1)$. 
	Fix arbitrary $\ep>0$. Then we have
	\begin{align*}
		\Pro\left(\tilde{V}> \ep \right) 
		&\leq h \max_{m\in\{1,\dots,h\}} \Pro \left( \left|\frac{\sum_{(i,j)\in\cH_{\leq0}}\left[1\{\tT_{ij}\geq \tt_m\}-Q(\tt_m)\right]}{2pQ(\tt_m)}\right| > \ep\right) \\
		&\leq h \max_{m\in\{1,\dots,h\}} \E \left|\frac{\sum_{(i,j)\in\cH_{\leq0}}\left[1\{\tT_{ij}\geq \tt_m\}-Q(\tt_m)\right]}{2pQ(\tt_m)}\right|^2/\ep^2,
	\end{align*}
	where $h\lesssim \log p \sqrt{\log\log p}$. 
	Consider bounding the expectation uniformly in $m\in\{1,\dots,b\}$. Denote by $\sum$ and $\sum\sum$ the summations over $(i,j)\in\cH_{\leq0}$ and $(i,j),(k,\ell)\in\cH_{\leq0}$, respectively. By a simple calculus and Lemma \ref{lem:strongapp}(a)(b), we obtain
	\begin{align*}
		&\E \left[ \frac{\sum\sum\left[1\{\tT_{ij}\geq \tt_m\}-Q(\tt_m)\right]\left[1\{\tT_{k\ell}\geq \tt_m\}-Q(\tt_m)\right]}{4p^2Q(\tt_m)^2} \right] \\
		&= \frac{\sum\sum\Pro(\tT_{ij}\geq \tt_m, \tT_{k\ell}\geq \tt_m)}{4p^2Q(\tt_m)^2}
		- \frac{|\cH_{\leq0}|\sum\Pro(\tT_{ij}\geq\tt_m)}{2p^2Q(\tt_m)}
		+ \frac{|\cH_{\leq0}|^2}{4p^2} \\
		&\leq \frac{\sum\sum \Pro(\cZ_{ij}\geq \tt_m-\delta_1, \cZ_{k\ell}\geq \tt_m-\delta_1)}{4p^2Q(\tt_m)^2}
		- \frac{(1+O(s/p))Q(\tt_m+\delta_1)}{2Q(\tt_m)} +\frac{1}{4} \\
		&=: (i) + (ii) + 1/4,
	\end{align*}
	where $(\cZ_{ij},\cZ_{k\ell})$ is a bivariate standard normal random vector and  $\delta_1=O(T^{-\kappa})$ for some constant $\kappa>0$. 
	Thus we will conclude $\tilde{V}=o_p(1)$ if we show that $(i)\leq 1/4+o(1/h)$ and $(ii)\leq-1/2+o(1/h)$, where $1/h=O( 1/(\log p \sqrt{\log\log p}))$.

	First consider $(ii)$. Expand $Q(\tt_m+\delta_1)$ around $\delta_1=0$ by the mean value theorem. Then there exists $\delta_1^*$ between $0$ and $\delta_1$ such that 
	\begin{align}
		(ii) &= -\frac{Q(\tt_m+\delta_1)}{2Q(\tt_m)}(1+o(1))
		= -\frac{Q(\tt_m)+Q'(\tt_m+\delta_1^*)\delta_1}{2Q(\tt_m)}(1+o(1)) \notag\\
		&\leq -1/2 - \frac{(2\pi)^{-1/2}\exp\left\{-(\tt_m+\delta_1^*)^2/2\right\}\delta_1}{2(2\pi)^{-1/2}\tt_m^{-1}\exp\left\{-\tt_m^2/2\right\}}(1+o(1)) \notag\\
		&= -1/2 - \tt_m\exp\left\{-\delta_1^*(\tt_m+\delta_1^*/2)\right\}\delta_1(1+o(1))/2 \notag \\
		&= -1/2 + o(1/h), \label{(ii0)}
	\end{align}
	where the last equality holds uniformly in $m\in\{1,\dots,h\}$ because $\delta_1$ is polynomially decaying while $\tt_m$ is the logarithmic function for all $m\in\{1,\dots,h\}$.
	
	Next consider $(i)$ by decomposing the summation into two parts, $(i,j)=(k,\ell)$ and $(i,j)\not=(k,\ell)$. First we see the summation over $(i,j)=(k,\ell)$, which has $p$ entries. We have
	\begin{align}
		&\frac{\sum\sum \Pro(\cZ_{ij}\geq \tt_m-\delta_1, \cZ_{k\ell}\geq \tt_m-\delta_1)1\{(i,j)=(k,\ell)\}}{4p^2Q(\tt_m)^2} \notag\\
		&= \frac{|\cH_{\geq0}|}{4p} \frac{1}{pQ(\tt_m)} \frac{ Q(\tt_m-\delta_1)}{Q(\tt_m)}
		= o(1/h). \label{(I00)}
	\end{align}
	The last estimate is true because we have $|\cH_{\geq0}|/p=O(1)$ and $Q(\tt_m-\delta_1)/Q(\tt_m)=1+o(1/h)$ by the same reason as above, and  by \cite{SzarekWerner}, 
	\begin{align}
		\frac{1}{pQ(\tt_m)}
		&\leq \frac{\bar{\tt}+(\bar{\tt}^2+4)^{1/2}}{p2(2\pi)^{-1/2}\exp\left\{-\bar{\tt}^2/2\right\}}
		\lesssim \frac{\bar{\tt}}{p\exp\left\{-\bar{\tt}^2/2\right\}} \notag \\
		&\lesssim \log^{1/2}p\cdot\exp\{- \log\log^{a/2} p\}
		=O(\log^{1/2-a/2}p) = o(1/h) \label{szarek}
	\end{align}
	for any $a>3$.

	Finally we bound $(i)$ with summation over $(i,j)\not=(k,\ell)$, which contains $p^2-p$ entries. We have
	\begin{align}
		&\frac{\sum\sum \Pro(\cZ_{ij}\geq \tt_m-\delta_1, \cZ_{k\ell}\geq \tt_m-\delta_1)}{4p^2Q(\tt_m)^2}\notag \\
		&\leq \frac{\sum\sum \Pro(\cZ_{ij}\geq \tt_m-\delta_1, \cZ_{k\ell}\geq \tt_m-\delta_1)}{4p^2Q(\tt_m-\delta_1)^2}\frac{Q(\tt_m-\delta_1)^2}{Q(\tt_m)^2},\label{(I100)}
	\end{align}
	where $Q(\tt_m-\delta_1)^2/Q(\tt_m)^2=1+o(1/h)$ as above. 
	By Lemma \ref{lem:probratio} and the inequality $1/\sqrt{1-x^2}\leq 1+|x|/\sqrt{1-x^2}$ for any $x\in(-1,1)$, we obtain
	\begin{align}
		&\frac{\sum\sum \Pro(\cZ_{ij}\geq \tt_m-\delta_1, \cZ_{k\ell}\geq \tt_m-\delta_1)}{4p^2Q(\tt_m-\delta_1)^2}
		\leq \frac{1}{4p^2}\sum\sum \frac{1}{\sqrt{1-\rho_{(i,j),(k,\ell)}^2}} \notag\\
		&= 1/4+o(1/h)+\frac{1}{4p^2}\sum\sum \frac{|\rho_{(i,j),(k,\ell)}|}{\sqrt{1-\rho_{(i,j),(k,\ell)}^2}}.\label{(I10)}
	\end{align}
	We evaluate the sum using Condition \ref{ass:corr} with $\log N \asymp \log p$. Denote $\cH_w^2=\cH_{w1}\times \cH_{w2}$ and $\cH_s^2=\cH_{s1}\times \cH_{s2}$. Then it is decomposed as
	\begin{align*}
		\sum_{(i,j)\in\cH_{\leq 0}}\sum_{(k,\ell)\in\cH_{\leq 0}}
		=\sum_{(i,j)\in\cH_{\leq 0}\cap\cH_{w1}}\sum_{(k,\ell)\in\cH_{\leq 0}\cap\cH_{w2}}
		+\sum_{(i,j)\in\cH_{\leq 0}\cap\cH_{s1}}\sum_{(k,\ell)\in\cH_{\leq 0}\cap\cH_{s2}}.
	\end{align*}
	They are bounded by
	\begin{align}
		&\frac{1}{4p^2}\sum_{(i,j)\in\cH_{\leq 0}\cap\cH_{w1}}\sum_{(k,\ell)\in\cH_{\leq 0}\cap\cH_{w2}} \frac{|\rho_{(i,j),(k,\ell)}|}{\sqrt{1-\rho_{(i,j),(k,\ell)}^2}} \notag\\
		&\leq\frac{1}{4p^2}\sum_{(i,j)\in\cH_{w1}}\sum_{(k,\ell)\in\cH_{w2}} \frac{|\rho_{(i,j),(k,\ell)}|}{\sqrt{1-\rho_{(i,j),(k,\ell)}^2}} \notag\\
		&\leq \frac{p^2-p-|\cH_{s1}\times\cH_{s2}|}{4p^2} \frac{c/\log^2 p}{\sqrt{1-c^2/\log^4p}} 
		= O(1)O(1/\log^2 p)=o(1/h) \label{(I20)}
	\end{align}
	and
	\begin{align}
		&\frac{1}{4p^2}\sum_{(i,j)\in\cH_{\leq 0}\cap\cH_{s1}}\sum_{(k,\ell)\in\cH_{\leq 0}\cap\cH_{s2}} \frac{|\rho_{(i,j),(k,\ell)}|}{\sqrt{1-\rho_{(i,j),(k,\ell)}^2}}\notag \\
		&\leq\frac{1}{4p^2}\sum_{(i,j)\in\cH_{s1}}\sum_{(k,\ell)\in\cH_{s2}} \frac{|\rho_{(i,j),(k,\ell)}|}{\sqrt{1-\rho_{(i,j),(k,\ell)}^2}}\notag \\
		&\leq \frac{|\cH_{s1}\times\cH_{s2}|}{4p^2} \frac{|\bar{\rho}|}{\sqrt{1-\bar{\rho}^2}}
		= O(1/\log^2 p)O(1) = o(1/h). \label{(I30)}
	\end{align}
	From \eqref{(I00)} and \eqref{(I100)}--\eqref{(I30)}, we have $(i)=1/4+o(1/h)$. This completes the proof. 
\end{proof}

\subsection{Proof of Theorem \ref{thm:bootstrap}}
Before proceeding the proof, we recall the notation. In the proof, we mainly use the $t$-statistics with the ``sandwich'' s.e., defined as
\begin{align*}
	\tT_{ij}^* 
	= \frac{ \sqrt{T} \hat{\phi}_{ij}^*}{\hat{\sigma}^{*}_{i}\sqrt{\hat{\bomega}'_{j} \hat{\bSigma}_x\hat{\bomega}_{j}}}
\end{align*}
for $(i,j)\in\hat{\cS}_{\normalfont{\textsf{L}}}^c$, 
where 
\begin{align*}
	&\hat{\sigma}_i^{*2}=(T-\hat{s}_i)^{-1}\sum_{t=1}^T\hat{u}_{it}^{*2},~~~
	\hat{u}_{it}^{*2}={u}_{it}^{*2}-2{u}_{it}^*\bx_t'\bdelta_i^{*}+{\bdelta_i^{*}}'\bx_t\bx_t'\bdelta_i^{*}, \\
	&\bdelta_i^{*} = (\hat{\bphi}_{i\cdot}^{\normalfont{\textsf{L}}*} - \hat{\bphi}_{i\cdot}^{\normalfont{\textsf{L}}})', ~~~
	{u}_{it}^{*2}=\hat{u}_{it}^2\zeta_t^2. 
\end{align*}
The bootstrap debiased lasso estimator has been defined as
\begin{align*}
	\hat{\bPhi}^*&=\hat{\bPhi}^{\normalfont{\textsf{L}}*} + \left( \bY^* - \hat\bPhi^{\normalfont{\textsf{L}}*} \bX \right)\bX'\hat\bOmega/T \\
	&= \hat{\bPhi}^{\normalfont{\textsf{L}}} 
	+ \bU^*\bX'\bOmega/T
	+ \bU^*\bX'(\hat\bOmega-\bOmega)/T
	+ \left(\hat{\bPhi}^{\normalfont{\textsf{L}}}- \hat\bPhi^{\normalfont{\textsf{L}}*} \right)(\hat{\bSigma}_x\hat\bOmega-\bI).
\end{align*} 
By the definition, we have $\hat{\phi}_{ij}^{\normalfont{\textsf{L}}} =0$ for all $(i,j)\in\hat{\cS}_{\normalfont{\textsf{L}}}^c$, and hence
\begin{align*}
	\hat{\phi}_{ij}^*
	= \bu_i^*\bX'\bomega_j/T
	+ \bu_i^*\bX'(\hat\bomega_j-\bomega_j)/T
	+ \left(\hat{\bphi}_{i}^{\normalfont{\textsf{L}}}- \hat\bphi_{i}^{\normalfont{\textsf{L}}*} \right)(\hat{\bSigma}_x\hat\bomega_j-\be_j)
\end{align*} 
for all $(i,j)\in\hat{\cS}_{\normalfont{\textsf{L}}}^c$. 
We have also defined  
\begin{align*}
	\bbQ^*(\tt) &= \frac{1}{|\hat{\cS}_{\normalfont{\textsf{L}}}^c|}\sum_{(i,j)\in \hat{\cS}_{\normalfont{\textsf{L}}}^c}  \Pro^*\left( \tT_{ij}^* > \tt \right)~~\text{(conditional on the observations)},
\end{align*}
where $|\hat{\cS}_{\normalfont{\textsf{L}}}^c|=p-\hat{s}$ with $p=KN^2$, and $\Pro^*$ indicates the probability measure induced by the bootstrap.

\begin{proof}
	It is sufficient to consider the case when $\tt_0$ is computed by \eqref{proc2:t0}. From the beginning of \textit{Case 2} in the proof of Theorem \ref{thm:fdr-t} with the same argument, the FDR is decomposed as 
	\begin{align*}
		&\dFDP(\tt_0) \\
		&= \frac{p\left\{\bbQ^*(\tt_0)+1-\bbQ^*(-\tt_0)\right\}}{|\hat{\cS}(\tt_0)|\vee 1}\cdot\frac{2Q(\tt_0)}{\bbQ^*(\tt_0)+1-\bbQ^*(-\tt_0)} \\
		&\qquad \qquad \qquad \qquad \qquad \cdot \frac{\sum_{(i,j)\in\cI_{\leq0}}1\{\tT_{ij}\geq \tt_0\} + \sum_{(i,j)\in\cI_{\geq0}}1\{\tT_{ij}\leq -\tt_0\}}{2pQ(\tt_0)} \\
		&\leq q \cdot\frac{2}{\bbQ^*(\tt_0)/Q(\tt_0)+\{1-\bbQ^*(-\tt_0)\}/Q(\tt_0)}\left\{ V + 1 + o(1)\right\},
	\end{align*}
	where $V$ is defined in the proof of Theorem \ref{thm:fdr-t} and has been shown $V=o_p(1)$. Hence, it suffices to prove that the event, 
	\begin{align*}
		\sup_{\tt\in[0,\bar{\tt}]}\left|\frac{\bbQ^*(\tt)}{Q(\tt)}-1\right| +  \sup_{\tt\in[0,\bar{\tt}]}\left|\frac{1-\bbQ^*(-\tt)}{Q(\tt)}-1\right| =o(1),
	\end{align*}
	occurs with high probability. 
	
	Define 
	\begin{align*}
		S_{ij}^* &= T^{-1/2}\sum_{t=1}^T{u}_{it}^*\bx_t'\hat{\bomega}_{j}, \\
		R_{ij}^* &= T^{-1/2}\bu_i^*\bX'(\hat{\bomega}_{j}-{\bomega}_{j}) + \sqrt{T}(\hat{\bphi}_i^L-\hat{\bphi}_i^{L*})(\hat{\bSigma}_x\hat{\bomega}_{j}-\be_j), \\	\hat{m}_{ij}^{*2}&=\hat{\sigma}_i^{*2}\hat{\bomega}_{j}'\hat{\bSigma}_x\hat{\bomega}_{j},~~~
		\tilde{m}_{ij}^{*2} = T^{-1}\sum_{t=1}^T{u}_{it}^{*2}\hat{\bomega}_{j}'\bx_t\bx_t'\hat{\bomega}_{j},
	\end{align*}
	
	Then by the construction, we have $\tT_{ij}^*=(S_{ij}^*+R_{ij}^*)/\hat{m}_{ij}^{*}$. 
	We first check if $\tT_{ij}^*$ is asymptotically equivalent to the self-normalized sum,  $S_{ij}^*/\tilde{m}_{ij}^*$, and then verify that it can be uniformly normally approximated in the relative error. 
	
	Define the events, 
	\begin{align*}
		A_1^* = \left\{\max_{i\in[N]}\max_{j\in[KN]}\left|\frac{\tilde{m}_{ij}^{*}}{\hat{m}_{ij}^*} -1 \right| \lesssim \bar{\mu}_1 \right\}, ~~~
		A_2^* = \left\{\max_{i\in[N]}\max_{j\in[KN]} \left|\frac{R_{ij}^*}{\hat{m}_{ij}^*}\right| \lesssim \bar{\mu}_2 \right\},
	\end{align*}
	where
	\begin{align*}
		\bar{\mu}_1 &= \max\left\{\bar{\tau}_1,\bar{\tau}_2,\bar{\tau}_3\right\}, \\
		\bar{\mu}_2 &= \max\left\{ \lambda^{2-r} M_\omega^{2-2r}s_\omega,\ \bar{s}\lambda M_\omega \log^{3/2}(N\vee T) \right\}
	\end{align*}
	with $\bar{\tau}_1$, $\bar{\tau}_2$, and $\bar{\tau}_3$ defined in Lemmas \ref{lem:sigma*}, \ref{lem:u*2-hhat2}, and \ref{lem:uhat2-sig2}, respectively. 
	Then the event,
	\begin{align*}
		A := \left\{ \hat{\cS}_{\normalfont{\textsf{L}}}\supset \cS \right\}
		\cap \left\{ \Pro^*\left( A_1^{*c} \right) 
		= O((NT)^{-\nu}) \right\} 
		\cap 
		\left\{ \Pro^*\left( A_2^{*c} \right)=O((NT)^{-\nu}) \right\},
	\end{align*}
	occurs with high probability by Lemmas \ref{lem:boot}, \ref{lem:boot2}, and \ref{lem:screen}. 
	By a simple computation, we have $\bar{\mu}_1\vee\bar{\mu}_2\asymp \bar{\mu}$ with 
	\begin{align*}
		\bar{\mu} &=  \max\left\{ 
		M_\omega^{3-2r}\lambda^{1-r}s_\omega \log^2(N\vee T),~ 
		M_\omega\bar{s}\lambda \log^{3/2}(N\vee T),~ 
		M_\omega^2\lambda \log(N\vee T) \right\}.
	\end{align*}
	Because $\bar{\mu}$ tends to zero polynomially, we observe that conditional on $A$,
	\begin{align}
		\bbQ^*(\tt) 
		&= \frac{1}{|\hat{\cS}_{\normalfont{\textsf{L}}}^c|}\sum_{(i,j)\in \hat{\cS}_{\normalfont{\textsf{L}}}^c} \Pro^*\left(\frac{S_{ij}^*+R_{ij}^*}{\hat{m}_{ij}^{*}}>\tt\right) \notag \\
		&= \frac{1}{|\hat{\cS}_{\normalfont{\textsf{L}}}^c|}\sum_{(i,j)\in \hat{\cS}_{\normalfont{\textsf{L}}}^c} \Pro^*\left(\frac{S_{ij}^*}{\tilde{m}_{ij}^{*}} \left\{ 1 +  \left(\frac{\tilde{m}_{ij}^{*}}{\hat{m}_{ij}^*} -1 \right) \right\}>\tt-\frac{R_{ij}^*}{\hat{m}_{ij}^{*}},~ A_1^*\cap A_2^*\right) \notag\\
		&\qquad \qquad \qquad \qquad \qquad \qquad \qquad \qquad \qquad \qquad + O((N\vee T)^{-\nu})  \notag \\
		&=\frac{1}{|\hat{\cS}_{\normalfont{\textsf{L}}}^c|}\sum_{(i,j)\in \hat{\cS}_{\normalfont{\textsf{L}}}^c} \Pro^*\left( \frac{S_{ij}^*}{\tilde{m}_{ij}^{*}}  > \frac{\tt - O(\bar{\mu}_2)}{1+O(\bar{\mu}_1)} \right) + O((N\vee T)^{-\nu}) \label{(1)}
	\end{align}
	with high probability, where the terms, $O(\bar{\mu}_1)$, $O(\bar{\mu}_2)$, and $O((N\vee T)^{-\nu})$, depend neither on $\tt$ nor $(i,j)$. 
	Since $(i,j)\in\hat{\cS}_{\normalfont{\textsf{L}}}^c$ implies $(i,j)\in\cS^c$ on event $A$, Lemma \ref{Jiang} entails the normal approximation of the self-normalized sum,
	\begin{align}
		&\Pro^*\left( \frac{S_{ij}^*}{\tilde{m}_{ij}^{*}}  > \frac{\tt - O(\bar{\mu}_2)}{1+O(\bar{\mu}_1)} \right) \notag \\
		&= Q\left(\frac{\tt - O(\bar{\mu}_2)}{1+O(\bar{\mu}_1)}\right)\left\{ 1+O\left( \frac{M_\omega\log NT}{T^{1/2}} \right)\left(1+\frac{\tt-O(\bar{\mu}_2)}{1+O(\bar{\mu}_1)}\right)^{3} \right\},\label{(2)}
	\end{align}
	with high probability. 
	Combining \eqref{(1)} and \eqref{(2)} and dividing the both sides by $Q(\tt)$ yield, with high probability,
	\begin{align*}
		\frac{\bbQ^*(\tt)}{Q(\tt)}
		&= \frac{Q\left(\frac{\tt - O(\bar{\mu}_2)}{1+O(\bar{\mu}_1)}\right)}{Q(\tt)}\left\{ 1+O\left( \frac{M_\omega\log NT}{T^{1/2}} \right)\left(1+\frac{\tt-O(\bar{\mu}_2)}{1+O(\bar{\mu}_1)}\right)^{3} \right\} \\
		&+\frac{O((N\vee T)^{-\nu})}{Q(\tt)}.
	\end{align*}
	By the assumed condition, $\bar{\mu}_1$, $\bar{\mu}_2$ and $M_\omega/T^{1/2}$  decay polynomially  while $\tt\in[0,\bar{\tt}]$ at most diverges logarithmically. Thus we obtain uniformly in $\tt\in[0,\bar{\tt}]$,
	\begin{align*}
		1+O\left( \frac{M_\omega\log NT}{T^{1/2}} \right)\left(1+\frac{\tt-O(\bar{\mu}_2)}{1+O(\bar{\mu}_1)}\right)^{3} 
		=1 + o(1)
	\end{align*}
	and, 
	by an application of Lemma 7.2 of \cite{JJ2019}, 
	\begin{align*}
		\frac{Q\left(\frac{\tt - O(\bar{\mu}_2)}{1+O(\bar{\mu}_1)}\right)}{Q(\tt)}
		= 1 + O(\bar{\mu}_2)(1+\tt)+O(\bar{\mu}_1)(1+\tt^2)
		= 1 + o(1)
	\end{align*} 
	uniformly in $\tt\in[0,\bar{\tt}]$. 
	By \eqref{szarek}, we have
	\begin{align*}
		\sup_{\tt\in[0,\bar{\tt}]}\frac{O((N\vee T)^{-\nu})}{Q(\tt)}
		=\frac{O((N\vee T)^{-\nu})}{Q(\bar{\tt})}
		=O((N\vee T)^{-\nu+2}).
	\end{align*}

	Consequently, it holds that
	\begin{align*}
		\sup_{\tt\in[0,\bar{\tt}]}\left| \frac{\bbQ^*(\tt)}{Q(\tt)} - 1 \right| 
		=o(1)
	\end{align*}
	with high probability for any $\nu>2$. The same argument is applied to showing $|\{1-\bbQ^*(-\tt)\}/Q(\tt)-1|=o(1)$ uniformly in $\tt\in[0,\bar{\tt}]$. This completes the proof. 
\end{proof}

\subsection{Proof of Theorem \ref{thm:power-t}}

\begin{proof}
	We use the same notation as in the proof of Theorem \ref{thm:fdr-t}. 
	Let $\bar{\tt}_0=\sqrt{2\log p}$ denote the upper bound of the critical value, $\tt_0$. By the definition of power and monotonicity of probability, we have
	\begin{align*}
		\Po(\tt_0)
		&= \E\left[ \frac{|\{(i,j)\in\hat{\cS}(\tt_0):\sgn(\hat{\phi}_{ij})=\sgn(\phi_{ij})\}|}{s\vee 1} \right]\\
		&= \frac{1}{s}\sum_{(i,j)\in\cS_{<0}}\Pro\left( \tT_{ij} \leq -\tt_0 \right) 
		+ \frac{1}{s}\sum_{(i,j)\in\cS_{>0}}\Pro\left( \tT_{ij} \geq \tt_0 \right) \\
		&\geq \frac{1}{s}\sum_{(i,j)\in\cS_{<0}}\Pro\left( \tT_{ij} \leq -\bar{\tt}_0 \right) 
		+ \frac{1}{s}\sum_{(i,j)\in\cS_{>0}}\Pro\left( \tT_{ij} \geq \bar{\tt}_0 \right).
	\end{align*}
	Consider the second probability in the lower bound. Proposition \ref{thm:asynormal} gives
	\begin{align*}
		\tT_{ij}=\sqrt{T}\hat{\phi}_{ij}/\hat{m}_{ij}=(\sqrt{T}\phi_{ij}+z_{ij}+r_{ij})/\hat{m}_{ij},
	\end{align*}
	where $\phi_{ij}=0$ if and only if $(i,j)\in\cS^c$. It holds that
	\begin{align*}
		&\max_{(i,j)\in\cS_{>0}}\Pro\left( \tT_{ij} \leq \bar{\tt}_0\right) \\
		&\leq \max_{(i,j)\in\cS_{>0}}\Pro\left( \frac{z_{ij}+r_{ij}}{\hat{m}_{ij}} + \frac{m_{ij}}{\hat{m}_{ij}}\frac{\sqrt{T}\phi_{ij}^0}{m_{ij}} \leq \bar{\tt}_0,~ \frac{m_{ij}}{\hat{m}_{ij}}>\frac{1}{2} \right) +\max_{(i,j)\in\cS_{>0}}\Pro\left( \frac{m_{ij}}{\hat{m}_{ij}}\leq \frac{1}{2} \right) \\
		&\leq \max_{(i,j)\in\cS_{>0}}\Pro\left(\cZ_{ij}\leq \bar{\tt}_0-\min_{(i,j)\in\cS_{>0}}\frac{\sqrt{T}\phi_{ij}^0}{2m_{ij}} +\delta_1\right) + O\left((N\vee T)^{-\nu}\right) \\
		&\leq \Phi\left(-\sqrt{2\log p} + \delta_1 \right) + O\left((N\vee T)^{-\nu}\right),
	\end{align*}
	where the second inequality follows from Lemmas \ref{lem:omegahat1} and \ref{lem:strongapp}, and the third inequality holds by $\bar{\tt}_0=\sqrt{2\log p}$ and Condition \ref{ass:betamin}. Since $\delta_1\sqrt{\log p}=o(1)$ by the assumption, the Gaussian probability in the upper bound tends to zero. The same result is obtained for the first probability. Thus the power goes to unity because $|\cS_{<0}|+|\cS_{>0}|=s$. This completes the proof. 
\end{proof}

\subsection{Proof of Theorem \ref{thm:rob_fdr_pwr}}
\begin{proof}[Proof of (a)]
	For any $(i,j)\in\hat{\cS}_R(h^*)$, we have $qE_{h(i,j)}/|{\cH}|\geq 1/|\hat{\cS}_R(h^*)|$ by the property of e-BH. Thus the FDR of $\hat{\cS}:=\hat{\cS}_R(h^*)$ is bounded by
	\begin{align*}
		\FDR &= \E\left[\sum_{(i,j)\in\cS^c}\frac{1\{(i,j)\in\hat{\cS}\}}{|\hat{\cS}|}\right] 
		\leq \E\left[\sum_{(i,j)\in\cS^c}\frac{qE_{h(i,j)}1\{(i,j)\in\hat{\cS}\}}{|{\cH}|}\right] \\
		&\leq \frac{1}{|{\cH}|}\sum_{(i,j)\in\cS^c}q\E[E_{h(i,j)}]
		\leq q\max_{(i,j)\in\cS^c}\E\left[\frac{f(\tT_{ij})}{\E f(\cZ)}\right],
	\end{align*}
	where $\cZ\sim N(0,1)$. 
	Theorem \ref{thm:t} with Conditions \ref{ass:subG}--\ref{ass:mineig2} and $\bar{v}=o(1)$ achieves  $f(\tT_{ij})\to_d f(\cZ)$ for all $(i,j)\in\cS^c$ for any continuous function $f$ by the continuous mapping theorem. Therefore, for every $(i,j)\in \cS^c$, the assumed uniform integrability of $f(\tT_{ij})$ implies  $\E[f(\tT_{ij})]\to \E[f(\cZ)]$. This makes the upper bound of FDR be $q+o(1)$, which completes the proof.
\end{proof}

\begin{proof}[Proof of (b)]
	For any $(i,j)\in \hat{\cS}_R(h^*)$, 
	we have $qE_{h(i,j)}\geq |{\cH}|/|\hat{\cS}_R(h^*)|$. This implies $qE_{(h^*)}/|{\cH}|\geq 1/|\hat{\cS}|$.
	Therefore, the power is lower bounded by
	\begin{align*}
		\Po &= \E\left[ \sum_{(i,j)\in\cS}\frac{1\{(i,j)\in\hat{\cS}\}}{s} \right]
		= \frac{1}{s} \sum_{(i,j)\in\cS} \Pro\left(E_{h(i,j)}\geq E_{(h^*)} \right) \\
		&\geq \frac{1}{s} \sum_{(i,j)\in\cS} \Pro\left(E_{h(i,j)}\geq \frac{|{\cH}|}{q} \right)
		\geq \min_{(i,j)\in\cS} \Pro\left(f(\tT_{ij})\geq \frac{\E f(\cZ)|{\cH}|}{q} \right) \\
		&\geq \min_{(i,j)\in\cS} \Pro\left( |\tT_{ij}|\gtrsim f^{-1}(|{\cH}|) \right),
	\end{align*}
	where $f^{-1}$ exists by the monotonicity. 
	From Theorem \ref{thm:t} and its proof, $|\tT_{ij}|/\sqrt{T}$ is uniformly tight for every $(i,j)\in\cS$ when $\bar{v}=o(1)$. Therefore, the lower bound of $\text{Power}$ tends to unity by the condition of $f$. This completes the proof. 
\end{proof}

\section{Proofs of Propositions}

\subsection{Proof of Proposition \ref{thm:errbound}}
\begin{proof}
	Derive the non-asymptotic error bound for the Lasso estimator. 
	First define two events:
	\begin{align*}
		\mathcal{E}_1 = \left\{ \left\| T^{-1}\bX\bU' \right\|_{\max} \leq \lambda/2 \right\}, ~~~
		\mathcal{E}_2 = \left\{ \left\|\hat{\bSigma}_x-\bSigma_x\right\|_{\max} \leq b\lambda/2 \right\},
	\end{align*}
	where $\lambda=8bc_{uu}\sqrt{2(\nu+7)^3\log^3(N\vee T)/T}$ with any positive constant $\nu$. We work on event $\cE_1\cap \cE_2$ since Lemmas \ref{lem:tail_yu} and \ref{lem:tail_yy} guarantee that $\cE_1\cap \cE_2$ occurs with  probability at least $1-O((N\vee T)^{-\nu})$.

	Define $\bdelta_{i\cdot} = \hat{\bphi}_{i\cdot}^{\normalfont{\textsf{L}}}-\bphi_{i\cdot}$. 
	Because $\hat{\bphi}_{i\cdot}^{\normalfont{\textsf{L}}}$ minimizes the objective function, we have 
	\begin{align*}
		(2T)^{-1}\| \by_{i\cdot} - \hat{\bphi}_{i\cdot}^{\normalfont{\textsf{L}}}\bX \|_{2}^2 + \lambda\|\hat{\bphi}_{i\cdot}^{\normalfont{\textsf{L}}}\|_1 
		&\leq (2T)^{-1}\| \by_{i\cdot}- \bphi_{i\cdot}\bX \|_{2}^2 + \lambda\|\bphi_{i\cdot} \|_1, 
	\end{align*}
	which is equivalently written as
	\begin{align*}
		(2T)^{-1}\|\bdelta_{i\cdot}\bX \|_{2}^2  
		&\leq T^{-1}\bu_{i\cdot}\bX'\bdelta_{i\cdot}' + \lambda\|\bphi_{i\cdot} \|_1- \lambda\|\hat{\bphi}_{i\cdot}^{\normalfont{\textsf{L}}}\|_1. 
	\end{align*}
	By H\"older's inequality and the triangle inequality, we have
	\begin{align*}
		(2T)^{-1}\| \bdelta_{i\cdot}\bX \|_{2}^2  
		&\leq \left| T^{-1}\bu_{i\cdot}\bX'\bdelta_{i\cdot}'\right| + \lambda\| \bphi_{i\cdot} \|_1 
		- \lambda\| \hat{\bphi}_{i\cdot}^{\normalfont{\textsf{L}}}\|_1 \\
		&\leq \| T^{-1}\bu_{i\cdot}\bX'\|_{\infty}  \|\bdelta_{i\cdot}\|_1
		+ \lambda \|\bdelta_{i\cdot}\|_1.
	\end{align*}
	On event $\cE_1$, we thus obtain the upper bound of $\| \bdelta_{i\cdot}\bX \|_{2}^2$ as
	\begin{align}
		T^{-1}\| \bdelta_{i\cdot}\bX \|_{2}^2  
		\leq 3\lambda \| \bdelta_{i\cdot}\|_1. \label{1234}
	\end{align}

	Next we bound $\|\bdelta_{i\cdot}\bX\|_{2}^2$ from below. 
	Lemma \ref{lem:negahban} states that $\bdelta_{i\cdot}$ lies in $\cD=\{\bdelta_{i\cdot}\in\mathbb{R}^{1\times KN}:\|\bdelta_{i\cS_i^c}\|_1\leq 3\|\bdelta_{i\cS_i}\|_1\}$ on $\mathcal{E}_1$. Hence under $\mathcal{E}_1$ and $\mathcal{E}_2$, Lemma \ref{lem:E2} entails 
	\begin{align}
		T^{-1}\|\bdelta_{i\cdot}\bX \|_{2}^2 /\|\bdelta_{i\cdot}\|_{2}^2 
		\geq \gamma -16s_i\|\hat{\bSigma}_x-\bSigma_x\|_{\max} 
		\geq \gamma-8bs_i\lambda. \label{basic4}
	\end{align}
	By \eqref{1234}, \eqref{basic4}, Lemma \ref{lem:negahban}, and the Cauchy--Schwarz inequality, we have
	\begin{align*}
		(\gamma-8bs_i\lambda) \| \bdelta_{i\cdot}\|_{2}^2
		&\leq 3\lambda \|\bdelta_{i\cdot} \|_1 
		= 3\lambda \|\bdelta_{i\cS_i}\|_1 + 3\lambda \|\bdelta_{i\cS_i^c}\|_1 \\
		&\leq 12 \lambda \|\bdelta_{i\cS_i}\|_1 
		\leq 12 s_i^{1/2} \lambda \|\bdelta_{i\cS_i}\|_2
		\leq 12 s_i^{1/2} \lambda \|\bdelta_{i\cdot}\|_{2}.
	\end{align*}
	This concludes that
	\begin{align*}
		\|\bdelta_{i\cdot}\|_{2}
		\leq\frac{ 12s_i^{1/2} \lambda}{\gamma-8bs_i\lambda}. 
	\end{align*}

	Next we derive the prediction error bound. 
	By Lemma \ref{lem:negahban} and the Cauchy--Schwarz inequality again, the error bound in the element-wise $\ell_1$-norm is given by
	\begin{align}
		\|\bdelta_{i\cdot}\|_1 
		&= \|\bdelta_{i\cS_i}\|_1 + \|\bdelta_{i\cS_i^c}\|_1 \notag \\
		&\leq 4 \|\bdelta_{i\cS_i}\|_1 
		\leq 4s_i^{1/2}\|\bdelta_{i\cS_i}\|_{2} 
		\leq 4s_i^{1/2}\|\bdelta_{i\cdot}\|_{2} 
		\leq \frac{48s_i \lambda}{\gamma-8bs_i\lambda}. \label{basic8}
	\end{align}
	From \eqref{1234} and \eqref{basic8}, the prediction error bound is obtained as
	\begin{align*}
		T^{-1}\| \bdelta_{i\cdot}\bX \|_{2}^2  
		\leq 3 \lambda \|\bdelta_{i\cdot}\|_1 
		\leq \frac{144s_i \lambda^2}{\gamma-8bs_i\lambda}. 
	\end{align*}
	This completes the proof. 
\end{proof}

\subsection{Proof of Proposition \ref{thm:asynormal}}
\begin{proof}
	The first assertion is trivial by the construction of the debiased lasso estimator. We derive the upper bound of $|r_{ij}|$. Observe that  for each $i\in[N]$,
	\begin{align*}
		|r_{ij}| &\leq \left\|T^{-1/2}\bu_{i\cdot}\bX'\right\|_\infty\left\|\hat{\bomega}_{j}-{\bomega}_{j}\right\|_1
		+ \sqrt{T}\left\|\hat\bphi_{i\cdot}^{\normalfont{\textsf{L}}} - \bphi_{i\cdot} \right\|_1\left\|\hat{\bSigma}_{x}\hat{\bomega}_{j}-\be_{j}\right\|_\infty \\
		&\leq \left\|T^{-1/2}\bU\bX'\right\|_{\max}\max_j\left\|\hat{\bomega}_{j}-{\bomega}_{j}\right\|_1
		+ \sqrt{T}\left\|\hat\bphi_{i\cdot}^{\normalfont{\textsf{L}}} - \bphi_{i\cdot} \right\|_1\left\|\hat{\bSigma}_{x}\hat\bOmega-\bI_{KN}\right\|_{\max} 
	\end{align*}
	uniformly in $j\in[KN]$. 
	Let $\lambda_1=b\|\bOmega\|_{\ell_1}\lambda/2$, and consider the event 
	\begin{align*}
		\{\|\bX\bU'/T\|_{\max}\leq \lambda/2\}\cap \{\|\hat{\bSigma}_x-\bSigma_x\|_{\max}\leq \lambda_1/\|\bOmega\|_{\ell_1}\}. 
	\end{align*}
	Then this occurs with probability at least $1-O((N\vee T)^{-\nu})$ by Lemmas \ref{lem:tail_yu} and \ref{lem:tail_yy} with the proof of Proposition \ref{thm:errbound}. On this event, the proof of Theorem 6 in \cite{CaiEtAl2011} establishes the bounds, $\max_{j}\|\hat{\bomega}_{j}-{\bomega}_{j}\|_1\lesssim (\|\bOmega\|_{\ell_1}\lambda_1)^{1-r}s_\omega$ and $\|\hat\bSigma_x\hat\bOmega-\bI_{KN}\|_{\max}\leq \lambda_1$, under Condition \ref{ass:invest}. Therefore, together with the Lasso bound derived in Proposition \ref{thm:errbound} and $\lambda_1\lesssim M_\omega\lambda$, we obtain
	\begin{align*}
		\max_{j\in[KN]}|r_{ij}|
		&\lesssim \sqrt{T}\lambda(\|\bOmega\|_{\ell_1}\lambda_1)^{1-r}s_\omega/2 
		+ \frac{48s_i\sqrt{T} \lambda\lambda_1}{\gamma-8bs_i\lambda} \\
		&\lesssim \sqrt{T}\lambda M_\omega^{1-r}(M_\omega\lambda)^{1-r}s_\omega + s_i\sqrt{T}\lambda^2 M_\omega \\
		&\lesssim \left(s_\omega M_\omega^{2-2r}\lambda^{1-r}+s_iM_\omega\lambda\right)\sqrt{\log^3(N\vee T)}~(=:\bar{r}_i)
	\end{align*}
	for each $i\in[N]$ such that $s_i\lambda=o(1)$. This completes the proof. 
\end{proof}

\section{Lemmas and their Proofs}

\begin{lem}\label{lem:subE}
	If Condition \ref{ass:subG} is true, then for any $i,j\in[N]$ and $s,t\in[T]$, there exist constants $c_{uu},c_m>0$ such that
	\begin{align*}
		(a)~~&\Pro\left(\left|\frac{1}{T}\sum_{t=1}^T(u_{it}u_{jt}-\E[u_{it}u_{jt}])\right|> x\right) \leq 2\exp\left\{ -\frac{1}{2}\left(\frac{Tx^2}{c_{uu}^2}\wedge\frac{Tx}{c_{uu}}\right)\right\},\\
		(b)~~&\E\left[ \max_{i\in[N]}\max_{t\in[T]}|u_{it}|^m \right] \leq c_m\log^{m/2} (N\vee T),
	\end{align*}
	where $m\in\mathbb{N}$ is an arbitrary fixed constant. 
\end{lem}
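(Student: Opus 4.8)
The plan is to derive both bounds directly from Condition \ref{ass:subG} — sub-Gaussianity of the entries $u_{it}$ together with independence across $t$ — using only elementary concentration and integration arguments, with no appeal to the VAR structure.

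For part (a), the key point is that the product $u_{it}u_{jt}$ is sub-exponential. First I would translate the tail bound $\Pro(|u_{it}|>x)\le 2\exp(-x^2/c_u)$ into a bound on the Orlicz $\psi_2$-norm, $\|u_{it}\|_{\psi_2}\le C\sqrt{c_u}$ for a universal constant $C$; by the Cauchy--Schwarz inequality for Orlicz norms, $\|u_{it}u_{jt}\|_{\psi_1}\le\|u_{it}\|_{\psi_2}\|u_{jt}\|_{\psi_2}\le C^2 c_u$, and centering at most doubles this, so $K:=\max_{t\in[T]}\|u_{it}u_{jt}-\E[u_{it}u_{jt}]\|_{\psi_1}\le 2C^2 c_u$. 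Since $\{\bu_t\}$ is i.i.d.\ by Condition \ref{ass:subG}, the variables $\{u_{it}u_{jt}-\E[u_{it}u_{jt}]\}_{t=1}^T$ are i.i.d., mean zero, and sub-exponential, so the standard Bernstein inequality for sums of independent sub-exponential random variables gives $\Pro\bigl(|T^{-1}\sum_{t=1}^T(u_{it}u_{jt}-\E[u_{it}u_{jt}])|>x\bigr)\le 2\exp\bigl(-cT(x^2/K^2\wedge x/K)\bigr)$ for a universal constant $c\in(0,1/2]$. Choosing $c_{uu}:=K/(2c)$ makes both $\tfrac12 Tx^2/c_{uu}^2\le cTx^2/K^2$ and $\tfrac12 Tx/c_{uu}\le cTx/K$ hold, which yields exactly the stated inequality; $c_{uu}$ depends only on $c_u$ (through $K$) and the universal Bernstein constant.

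For part (b), I would use a union bound over the $NT$ entries followed by integration of the tail. By Condition \ref{ass:subG}, $\Pro(\max_{i\in[N],t\in[T]}|u_{it}|>x)\le 2NT\exp(-x^2/c_u)$ for every $x>0$. Writing $\E[\max_{i,t}|u_{it}|^m]=\int_0^\infty m x^{m-1}\Pro(\max_{i,t}|u_{it}|>x)\,dx$ and splitting at $x_0:=\sqrt{c_u\log(2NT)}$, the integral over $[0,x_0]$ contributes at most $x_0^m$, while on $[x_0,\infty)$ the identity $2NT=\exp(x_0^2/c_u)$ gives $2NT\exp(-x^2/c_u)=\exp(-(x^2-x_0^2)/c_u)\le\exp(-(x-x_0)^2/c_u)$; the substitution $u=x-x_0$ together with $(u+x_0)^{m-1}\le 2^{m-1}(u^{m-1}+x_0^{m-1})$ bounds the tail integral by a constant (depending on $m$) times $c_u^{m/2}\Gamma(m/2)+x_0^{m-1}\sqrt{c_u}$, which is $\lesssim x_0^m$ because $\log(2NT)$ is bounded away from zero once $N\vee T\ge 2$. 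Hence $\E[\max_{i,t}|u_{it}|^m]\le c_m'(c_u\log(2NT))^{m/2}$, and since $\log(2NT)\le\log\bigl(2(N\vee T)^2\bigr)\lesssim\log(N\vee T)$ this is at most $c_m\log^{m/2}(N\vee T)$ with $c_m$ depending only on $m$ and $c_u$.

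Neither step is genuinely hard — both are textbook once sub-Gaussianity and independence across $t$ are available (the i.i.d.\ assumption is what keeps the argument short; with dependence one would need mixing-type inputs). The only real care needed is bookkeeping: matching the universal Bernstein constant to the prescribed form with the explicit factor $1/2$ in (a), handled by taking $c_{uu}$ large enough, and making the passage from $\log(NT)$ to $\log(N\vee T)$ and the ``$x_0$ bounded away from $0$'' step rigorous for all relevant $(N,T)$ — which is immediate in the paper's regime $N\wedge T\to\infty$, or otherwise absorbed into $c_m$ over a bounded range of $(N,T)$.
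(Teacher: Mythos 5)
Your proof is correct and follows essentially the same route as the paper's: part (a) is the identical argument (the product of two sub-Gaussians is sub-exponential via the $\psi_1$--$\psi_2$ inequality, followed by Bernstein's inequality for i.i.d.\ sub-exponential sums, with the constant $c_{uu}$ absorbing the universal Bernstein constant), and part (b) rests on the same union bound plus sub-Gaussian tail integration, differing only in bookkeeping --- you split the tail integral at $x_0=\sqrt{c_u\log(2NT)}$, whereas the paper bounds $\E\bigl[\max_{i,t\in\mathbb{N}}|u_{it}|^m/(1+\log it)^{m/2}\bigr]$ by a constant over the infinite index set and then specializes to $[N]\times[T]$. Both versions are sound and yield the stated bounds up to the choice of constants.
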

\begin{proof}[Proof of Lemma \ref{lem:subE}]
	Prove $(a)$. Since $u_{it}$ is sub-Gaussian by Condition \ref{ass:subG}, Lemma 2.7.7 of \cite{Vershynin2018} entails that for any $i,j\in[N]$,  $\{u_{it}u_{jt}-\E[u_{it}u_{jt}]\}_t$ is a sequence of i.i.d.\ (centered) sub-exponential random variables. Thus the result directly follows by Bernstein's inequality \citep[Theorem 2.8.1]{Vershynin2018}. 
	
	Prove $(b)$. For any fixed $m\in\mathbb{N}$, the sub-Gaussian tail property for $u_{it}$ in Condition \ref{ass:subG} implies for all $x>0$, 
	\begin{align*}
		\Pro(|u_{it}|^m>x)=\Pro(|u_{it}|>x^{1/m})\leq 2\exp(-x^{2/m}/c_u).
	\end{align*}
	Using this tail probability with the union bound, we have
	\begin{align*}
		&\E\left[ \max_{i,t\in\mathbb{N}}\frac{|u_{it}|^m}{(1+\log it)^{m/2}} \right]
		\leq \int_0^\infty \Pro\left( \max_{i,t\in\mathbb{N}}\frac{|u_{it}|^m}{(1+\log it)^{m/2}} > x \right) \diff x \\
		&\quad \leq \int_0^{(2c_u)^{m/2}} \diff x + 2\sum_{i=1}^\infty\sum_{t=1}^\infty \int_{(2c_u)^{m/2}}^\infty \exp\left( -\frac{x^{2/m}(1+\log i+\log t)}{c_u} \right) \diff x \\
		&\quad \leq (2c_u)^{m/2} + 2\sum_{i=1}^\infty i^{-2}\sum_{t=1}^\infty t^{-2} \int_{(2c_u)^{m/2}}^\infty \exp\left( -\frac{x^{2/m}}{c_u} \right) \diff x,
	\end{align*}
	where the upper bound is further bounded by some positive constants. Thus there exists some constant $M>0$ such that 
	\begin{align*}
		M &\geq \E\left[ \max_{i,t\in\mathbb{N}}\frac{|u_{it}|^m}{(1+\log it)^{m/2}} \right] \\
		&\geq \E\left[ \max_{i\in[N],t\in[T]}\frac{|u_{it}|^m}{(1+\log it)^{m/2}} \right]
		\geq \E\left[ \max_{i\in[N],t\in[T]}|u_{it}|^m \right]\frac{1}{(1+\log NT)^{m/2}}. 
	\end{align*}
	Replacing the constant factor appropriately gives the result. This completes the proof. 
\end{proof}

\begin{lem}\label{lem:VMA}
	If Condition \ref{ass:stab} is true, then there exists a constant $\delta\in(0,1)$ such that for any monotonically divergent sequence $r_T>0$ with sufficiently large $T$,
	\begin{align}
		\sum_{\ell=r_T}^\infty \|\bB_\ell\|_{\infty} \leq \frac{\delta^{r_T}}{1-\delta}.\label{cond:sum}
	\end{align}
	In particular, the summability condition in \eqref{cond:sum0} holds.
\end{lem}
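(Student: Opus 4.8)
The plan is to show that the induced $\ell_\infty$ norms of the VMA coefficients decay geometrically at a rate that is uniform in $N$, i.e.\ that there are $\delta_0\in(0,1)$ and $C<\infty$, both independent of $N$, with $\|\bB_\ell\|_\infty\le C\delta_0^{\,\ell}$ for every $\ell\ge0$; both assertions of the lemma then drop out of a geometric series. By Condition \ref{ass:stab} there is $\bar{\rho}\in(0,1)$, independent of $N$, with $\rho(\bA)\le\bar{\rho}$ for all $N\in\bbN$. Fix any $\delta_0\in(\bar{\rho},1)$. Since every eigenvalue of $\bA$ lies strictly inside $\{z:|z|=\delta_0\}$ and $z\mapsto z^{\ell}$ is entire, the Cauchy (holomorphic functional calculus) representation of the matrix power gives
\begin{align*}
\bB_\ell=\bJ'\bA^\ell\bJ=\frac{1}{2\pi\im}\,\bJ'\!\left(\oint_{|z|=\delta_0}z^{\ell}\,(z\bI_{KN}-\bA)^{-1}\diff z\right)\!\bJ,\qquad \ell\ge1 .
\end{align*}
Because $\bJ'=(\bI_N,\bzero_{N\times(KN-N)})$ and $\bJ$ satisfy $\|\bJ'\|_\infty=\|\bJ\|_\infty=1$, submultiplicativity of the induced $\ell_\infty$ norm and the triangle inequality for the contour integral yield
\begin{align*}
\|\bB_\ell\|_\infty\le\delta_0^{\,\ell+1}\sup_{|z|=\delta_0}\bigl\|(z\bI_{KN}-\bA)^{-1}\bigr\|_\infty,\qquad \ell\ge1 ,
\end{align*}
while $\|\bB_0\|_\infty=\|\bI_N\|_\infty=1$.

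The crucial step, and the main obstacle, is to establish
\begin{align*}
M:=\sup_{N\in\bbN}\;\sup_{|z|=\delta_0}\bigl\|(z\bI_{KN}-\bA)^{-1}\bigr\|_\infty<\infty .
\end{align*}
This does \emph{not} follow from $\rho(\bA)\le\bar{\rho}$ alone: the dimension $KN$ grows, and a pure spectral-radius bound controls neither the resolvent on $|z|=\delta_0$ nor the powers $\bB_\ell=\bJ'\bA^\ell\bJ$ uniformly in $N$. The uniform-in-$N$ phrasing of Condition \ref{ass:stab} must therefore be read as delivering exactly this uniform bound. Concretely, one passes to the inverse matrix polynomial: with $\bPhi(w)=\bI_N-\bPhi_1w-\dots-\bPhi_Kw^{K}$ one has $\bPhi(w)^{-1}=\sum_{\ell\ge0}\bB_\ell w^{\ell}$ on $|w|<1/\bar{\rho}$, so $\bB_\ell=\tfrac{1}{2\pi\im}\oint_{|w|=1/\delta_0}w^{-\ell-1}\bPhi(w)^{-1}\diff w$ and hence $\|\bB_\ell\|_\infty\le\delta_0^{\,\ell}\sup_{|w|=1/\delta_0}\|\bPhi(w)^{-1}\|_\infty$; by the block structure of $\bA$, $M$ is likewise controlled by $\sup_N\sup_{|w|=1/\delta_0}\|\bPhi(w)^{-1}\|_\infty$. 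Uniform stability, in the sense of a uniform-in-$N$ lower bound on the smallest singular value of $\bPhi(w)$ over the circle $|w|=1/\delta_0>1$ (the $\ell_\infty$ counterpart of the spectral-density conditions of \cite{BasuMichailidis2015}), makes $\bPhi(w)^{-1}$ boundedly invertible there uniformly in $N$, giving $M<\infty$. Granting this, take $C=\max\{1,M\delta_0\}$, so $\|\bB_\ell\|_\infty\le C\delta_0^{\,\ell}$ for all $\ell\ge0$.

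The rest is routine bookkeeping. Summing, $b=\sum_{\ell\ge0}\|\bB_\ell\|_\infty\le C/(1-\delta_0)<\infty$ uniformly in $N$, which is the summability in \eqref{cond:sum0}. For the refined tail bound, pick any $\delta\in(\delta_0,1)$ and, using $(\delta_0/\delta)^{\ell}\to0$, an integer $\ell_0$ with $C\delta_0^{\,\ell}\le\delta^{\,\ell}$ for all $\ell\ge\ell_0$. Then for any monotonically divergent sequence $r_T>0$ and every $T$ large enough that $r_T\ge\ell_0$,
\begin{align*}
\sum_{\ell=r_T}^{\infty}\|\bB_\ell\|_\infty\le\sum_{\ell\ge r_T}\delta^{\,\ell}=\frac{\delta^{\,r_T}}{1-\delta},
\end{align*}
which is \eqref{cond:sum}, and \eqref{cond:sum0} follows a fortiori. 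In short, once the uniform resolvent (equivalently, inverse-polynomial) bound $M<\infty$ is secured, the lemma is immediate; securing that uniform bound, which is precisely what the uniform-in-$N$ reading of Condition \ref{ass:stab} is for, is where all the substance lies.
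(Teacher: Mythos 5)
Your route is genuinely different from the paper's. The paper bounds $\|\bB_\ell\|_\infty\le\|\bJ\|_\infty^2\|\bA^\ell\|_\infty=\|\bA^\ell\|_\infty$ and then invokes the Gelfand formula $\lim_{\ell\to\infty}\|\bA^\ell\|_\infty^{1/\ell}=\rho(\bA)$: picking $\delta=\rho(\bA)+\ep<1$, it asserts that $\|\bA^{r_T}\|_\infty\le\delta^{r_T}$ once $T$ is large, sums the geometric tail to get \eqref{cond:sum}, and gets summability of the whole series from the root test. You instead go through the Cauchy-integral representation of $\bA^\ell$ and reduce everything to a uniform resolvent bound $M=\sup_N\sup_{|z|=\delta_0}\|(z\bI_{KN}-\bA)^{-1}\|_\infty<\infty$ (equivalently, a uniform bound on $\|\bPhi(w)^{-1}\|_\infty$ on $|w|=1/\delta_0$). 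Your concluding bookkeeping --- the geometric series, the choice of $\delta\in(\delta_0,1)$, the tail bound for $r_T$ large --- lands on exactly the paper's conclusion.

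The one substantive issue is uniformity in $N$, and here you are more candid than the paper but no more complete. You correctly observe that $\rho(\bA)\le\bar{\rho}<1$ for each $N$ does not by itself control the resolvent on $|z|=\delta_0$, nor the constant in $\|\bA^\ell\|_\infty\le C\delta_0^\ell$, uniformly over a growing family of matrices; you then declare that Condition \ref{ass:stab} ``must be read as'' supplying the missing bound. That is not a derivation from the stated hypothesis --- it is a strengthening of it. The paper's own proof has the same hole from the other side: the Gelfand formula yields the threshold beyond which $\|\bA^\ell\|_\infty\le\delta^\ell$ only for a \emph{fixed} matrix, and since $\bA=\bA_N$ changes with $N$ the claim that this threshold can be taken as a single $T^*$ is silently assuming uniformity. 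So your argument is correct conditional on the uniform resolvent bound, the paper's is correct conditional on a uniform Gelfand threshold, and these are essentially the same unproved uniformity assumption viewed from two angles. Actually closing it requires an additional quantitative hypothesis --- e.g.\ a uniform lower bound on the smallest singular value of $\bPhi(w)$ on a circle $|w|=1/\delta_0>1$, as in the spectral-density conditions of \cite{BasuMichailidis2015}, or uniform control of the eigenvector conditioning of $\bA_N$ --- which is precisely what your parenthetical remark points at.
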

\begin{proof}[Proof of Lemma \ref{lem:VMA}]
	The VAR($K$) model is written as the VAR(1) companion form with coefficient
	\begin{align*}
		\bA = 
		\begin{pmatrix}
			\bPhi_1 & \bPhi_2 & \dots & \bPhi_{K-1} & \bPhi_K \\
			\bI_{N}	& \bzero 	  & \dots & \bzero & \bzero \\
			\bzero 		& \bI_{N} 	& \dots & \bzero & \bzero \\
			\vdots      & \vdots      &       & \vdots & \vdots \\
			\bzero      & \bzero      & \dots & \bI_{N}   & \bzero
		\end{pmatrix}.
	\end{align*}
	Condition \ref{ass:stab} entails that the spectral radius of $\bA$, defined as $\rho(\bA)=\max_{j\in[KN]}|\lambda_j(\bA)|$, is strictly less than one uniformly in $N$. This implies that the VAR(1) model is invertible to the VMA($\infty$). Therefore, we have the representation $\by_t = \sum_{\ell=0}^\infty \bB_\ell \bu_{t-\ell}$, where $\bB_0=\bI_N$ and $\bB_\ell=\bJ'\bA^\ell\bJ$ with $\bJ'=(\bI_N, \bzero_{N\times(KN-N)})$ for $\ell=1,2,\dots$. Note that 
	\begin{align*}
		\sum_{\ell=0}^\infty \|\bB_\ell\|_{\infty} 
		\leq \sum_{\ell=0}^\infty \|\bJ\|_{\infty}^2 \|\bA^\ell\|_{\infty} 
		= \sum_{\ell=0}^\infty \|\bA^\ell\|_{\infty}. 
	\end{align*}
	Again, since we have $\rho(\bA)<1$ uniformly in $N$ by Condition \ref{ass:stab}, we can always pick up a small constant $\ep>0$ such that  $\delta:=\rho(\bA)+\ep$ lies in $(0,1)$. We apply the Gelfand formula \citep{HornJohnson2012}; for this choice of $\ep$, there exists $T^*$ such that $\|\bA^{r_T}\|_\infty\leq \delta^{r_T}$ for all $T\geq T^*$, where $r_T>0$ is any monotonically diverging sequence as $T\to \infty$.
	Therefore, for any $T\geq T^*$ we have
	\begin{align*}
		\sum_{\ell=r_T}^\infty \|\bA^\ell\|_{\infty}
		\leq \sum_{\ell=r_T}^\infty\delta^\ell
		=\frac{\delta^{r_T}}{1-\delta},
	\end{align*}
	which proves \eqref{cond:sum}. 
	Finally, because $\limsup_{\ell\to\infty}\|\bA^\ell\|_\infty^{1/\ell}\leq \delta<1$, 
	the summability condition in \eqref{cond:sum0} holds by the root test. 
	This completes the proof. 
\end{proof}

\begin{lem}\label{lem:tail_yu}
	Let $b=\sum_{\ell=0}^{\infty}\left\|\bB_\ell\right\|_\infty$. 
	If Conditions \ref{ass:subG} and \ref{ass:stab} are true, then for any $\bar{c}_1>0$ and arbitrary divergent sequence $r_T>0$ with sufficiently large $T$, we have for all $x>0$,
	\begin{align*}
		&\Pro\left( \left\|T^{-1}\bX\bU'\right\|_{\max} > x \right) \\
		&\leq 2KN^2 \exp\left(-\frac{x^2T}{2\bar{c}_1^2}\right)
		+ 2r_TKN^2T\exp\left(-\frac{\bar{c}_1}{4bc_{uu}}\right)
		+ 2c_1^2KT\frac{\delta^{r_T}\log N}{\bar{c}_1(1-\delta)},
	\end{align*}
	where constants $\delta\in(0,1)$ and $c_{uu},c_1>0$ are given by Lemmas \ref{lem:subE} and \ref{lem:VMA}, respectively.
	In particular, if we set $x=\sqrt{2\bar{c}_1^2(\nu+3)\log(N\vee T)/T}$, $\bar{c}_1=4bc_{uu}(\nu+5)\log (N\vee T)$, and $r_T=T$ with any constant $\nu>0$, then the upper bound becomes $O((N\vee T)^{-\nu})$. 
\end{lem}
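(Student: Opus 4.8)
The plan is to bound each of the $KN^{2}$ entries of $T^{-1}\bX\bU'$ individually and then take a union bound over them. Write $x_{jt}$ for the $j$th coordinate of $\bx_{t}$; since $\bx_{t}=(\by_{t-1}',\dots,\by_{t-K}')'$ we have $x_{jt}=y_{j',t-k}$ for the unique $k\in[K]$, $j'\in[N]$ with $j=(k-1)N+j'$, so the $(j,i)$ entry of $T^{-1}\bX\bU'$ is $T^{-1}\sum_{t=1}^{T}x_{jt}u_{it}$. The structural fact driving the proof is that, with $\cF_{t}=\sigma(\bu_{s}:s\le t)$, the factor $x_{jt}$ is $\cF_{t-1}$-measurable (a function of $\bu_{t-k},\bu_{t-k-1},\dots$ with $k\ge1$) while $u_{it}$ is independent of $\cF_{t-1}$ with mean zero; hence $\{x_{jt}u_{it}\}_{t}$ is a martingale difference sequence with respect to $\{\cF_{t}\}$. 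Its increments are unbounded, so the core is a two-layer truncation, and the three terms in the claimed bound will correspond to (i) the concentration of the truncated martingale, (ii) the failure of the magnitude truncation, and (iii) the tail of the infinite moving-average expansion.

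I would first split off the moving-average tail. By Condition \ref{ass:stab}, $\by_{t-k}=\sum_{\ell\ge0}\bB_{\ell}\bu_{t-k-\ell}$, so write $x_{jt}=x_{jt}^{(r_{T})}+\tilde x_{jt}$, where $x_{jt}^{(r_{T})}$ keeps only the terms with $\ell<r_{T}$; both pieces stay $\cF_{t-1}$-measurable. For the remainder, Lemma \ref{lem:VMA} gives $\sum_{\ell\ge r_{T}}\|\bB_{\ell}\|_{\infty}\le\delta^{r_{T}}/(1-\delta)$, which together with the moment bound of Lemma \ref{lem:subE}(b) controls $\E\big[\|T^{-1}\tilde{\bX}^{(r_{T})}\bU'\|_{\max}\big]$ by $\delta^{r_{T}}$ times a factor polynomial in the dimensions and in $\log N$; Markov's inequality at a level tied to $\bar{c}_{1}$, after a union bound over entries, then produces the third term $2c_{1}^{2}KT\delta^{r_{T}}\log N/(\bar{c}_{1}(1-\delta))$.

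It then remains to control $T^{-1}\sum_{t}x_{jt}^{(r_{T})}u_{it}$. Truncate the predictable factor: replace $x_{jt}^{(r_{T})}$ by $x_{jt}^{(r_{T})}1\{|x_{jt}^{(r_{T})}|\le C\}$, with $C$ chosen so that the resulting conditionally sub-Gaussian increments have variance proxy of order $\bar{c}_{1}^{2}$; truncating only the $\cF_{t-1}$-measurable factor preserves the martingale-difference property. Because $|x_{jt}^{(r_{T})}|\le\sum_{\ell<r_{T}}\|\bB_{\ell}\|_{\infty}\|\bu_{t-k-\ell}\|_{\infty}$, the event that this truncation is violated lies in a union of events $\{|u|>\mathrm{const}\}$ indexed by $j\in[KN]$, $t\in[T]$, the lag $\ell<r_{T}$, and the coordinate of $\bu$ in $[N]$, each bounded by $2\exp(-\bar{c}_{1}/(4bc_{uu}))$ via the sub-Gaussian tail of Condition \ref{ass:subG} (equivalently Lemma \ref{lem:subE}); this gives the second term $2r_{T}KN^{2}T\exp(-\bar{c}_{1}/(4bc_{uu}))$. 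On the complementary event the truncated sequence is a martingale difference sequence with conditionally sub-Gaussian increments of proxy $\lesssim\bar{c}_{1}^{2}$, so Azuma--Hoeffding for such sequences gives $\Pro\big(|T^{-1}\sum_{t}x_{jt}^{(r_{T})}1\{\cdot\}u_{it}|>x\big)\le2\exp(-x^{2}T/(2\bar{c}_{1}^{2}))$ for each $(i,j)$; a union bound over the $KN^{2}$ entries produces the first term. Adding the three contributions gives the stated inequality, and the ``in particular'' claim is immediate: with $x=\sqrt{2\bar{c}_{1}^{2}(\nu+2)\log(N\vee T)/T}$, $\bar{c}_{1}=4bc_{uu}(\nu+4)\log(N\vee T)$ and $r_{T}=T$, the first term is $2KN^{2}(N\vee T)^{-(\nu+2)}$, the second is $2KN^{2}T^{2}(N\vee T)^{-(\nu+4)}$, and the third is exponentially small, all $O((N\vee T)^{-\nu})$ since $K$ is fixed and $N,T\le N\vee T$.

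The main obstacle is the truncation bookkeeping rather than any single inequality: one must truncate both the moving-average depth and the magnitude of the predictable regressor while keeping the martingale-difference structure intact (hence truncating $x_{jt}^{(r_{T})}$, not the product $x_{jt}^{(r_{T})}u_{it}$), and then account precisely for the polynomial prefactor $r_{T}KN^{2}T$ — in particular the extra factor $N$ from passing to $\|\bu_{s}\|_{\infty}$ and the factor $r_{T}$ from the truncated expansion. Once the decomposition is in place, the concentration step (Azuma--Hoeffding) and the moving-average-tail step (Lemmas \ref{lem:VMA} and \ref{lem:subE}(b)) are routine.
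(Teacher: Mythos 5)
Your three-term decomposition matches the paper's: Azuma--Hoeffding for the martingale difference array after a magnitude truncation (first term), a union bound over the $r_TKN^2T$ ways the truncation can fail (second term), and Markov's inequality combined with $\sum_{\ell\ge r_T}\|\bB_\ell\|_\infty\le\delta^{r_T}/(1-\delta)$ and Lemma \ref{lem:subE}(b) for the moving-average tail (third term). The route differs in where the truncation is applied. The paper conditions on the event $\max_{k,t}\|\by_{t-k}\bu_t'\|_{\max}\le\bar{c}_1$ and applies the bounded-increment Azuma--Hoeffding inequality to the products directly; the failure probability is then controlled by expanding $\by_{t-k}$ in its VMA form and bounding each \emph{product} $|u_{i,t-k-\ell}u_{jt}|$, which is sub-exponential, so the threshold $\bar{c}_1/(2b)$ yields a tail that is linear in $\bar{c}_1$ in the exponent, namely $2\exp(-\bar{c}_1/(4bc_{uu}))$. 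You instead truncate only the predictable factor $x_{jt}^{(r_T)}$, which has the advantage of provably preserving the martingale-difference structure (the paper's conditioning step is the more delicate one in this respect) at the cost of needing a conditionally sub-Gaussian version of Azuma--Hoeffding, and you peel the MA tail off the statistic itself rather than off the truncation-failure event, which forces a split of the threshold $x$ that you do not carry out explicitly.

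The one step that does not go through as written is your claim that each truncation-violation event is bounded by $2\exp(-\bar{c}_1/(4bc_{uu}))$ ``via the sub-Gaussian tail'' while the truncated increments simultaneously have variance proxy of order $\bar{c}_1^2$. Truncating the single sub-Gaussian factor at level $C$ gives per-event failure probability $2\exp(-cC^2)$ and Azuma variance proxy of order $C^2c_u$; with $C\asymp\bar{c}_1$ you recover the first term $2KN^2\exp(-x^2T/(2\bar{c}_1^2))$ but the second term becomes $\exp(-c\bar{c}_1^2)$ rather than $\exp(-c\bar{c}_1)$, and with $C\asymp\sqrt{\bar{c}_1}$ you recover the stated second term but not the first. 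The linear-in-$\bar{c}_1$ exponent in the lemma comes precisely from the sub-\emph{exponential} tail of the product $u_{i,s}u_{jt}$, which your single-factor truncation never uses. Since $\exp(-c\bar{c}_1^2)\le\exp(-\bar{c}_1/(4bc_{uu}))$ once $\bar{c}_1$ exceeds a fixed constant, and the application takes $\bar{c}_1\asymp\log(N\vee T)\to\infty$, your version still implies the stated inequality for all sufficiently large $N,T$ and delivers the $O((N\vee T)^{-\nu})$ conclusion unchanged; but as a proof of the displayed bound for \emph{every} $\bar{c}_1>0$ it requires either adopting the paper's product truncation or restating the second term with your Gaussian-type exponent.
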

\begin{proof}[Proof of Lemma \ref{lem:tail_yu}]
	Following the definition, we have for any $x,\bar{c}_1>0$,
	\begin{align*}
		&\Pro\left(\left\|T^{-1}\bX\bU'\right\|_{\max}>x\right) 
		= \Pro\left(\max_{k\in[K]}\left\|T^{-1}\sum_{t=1}^T\by_{t-k}\bu_{t}'\right\|_{\max}>x \right)\\
		&\leq \Pro\left(\max_{k\in[K]}\left\|T^{-1}\sum_{t=1}^T\by_{t-k}\bu_{t}'\right\|_{\max}>x \mid \max_{k\in[K]}\max_{t\in[T]}\|\by_{t-k}\bu_{t}'\|_{\max}\leq \bar{c}_1\right) \\
		&\qquad + \Pro\left(\max_{k\in[K]}\max_{t\in[T]}\|\by_{t-k}\bu_{t}'\|_{\max}> \bar{c}_1\right).  
	\end{align*}
	
	Consider the first term. Note for any $k\in[K]$ that $\{\by_{t-k}\bu_t'\}$ forms a martingale difference sequence with respect to the filtration, $\cF_t=\sigma\{\bu_{t-s} :s=0,1,\dots\}$. Thus the union bound and the Azuma-Hoeffding inequality \citep{Bercu2015} give
	\begin{align*}
		&\Pro\left(\max_{k\in[K]}\left\|T^{-1}\sum_{t=1}^T\by_{t-k}\bu_{t}'\right\|_{\max}>x \mid \max_{k\in[K]}\max_{t\in[T]}\|\by_{t-k}\bu_{t}'\|_{\max}\leq \bar{c}_1\right) \\
		&\leq KN^2\max_{k\in[K]}\max_{i,j\in[N]}\Pro\left(\left|T^{-1}\sum_{t=1}^Ty_{i,t-k}u_{jt}\right|>x \mid \max_{k\in[K]}\max_{t\in[T]}\|\by_{t-k}\bu_{t}'\|_{\max}\leq \bar{c}_1\right)\\
		&\leq 2KN^2 \exp\left(-\frac{x^2T}{2\bar{c}_1^2}\right).
	\end{align*}
	
	For the second term, applying the triangle and H\"older's inequalities yields
	\begin{align*}
		\left\|\by_{t-k}\bu_t'\right\|_{\max}
		=\left\|\sum_{\ell=0}^\infty \bB_\ell \bu_{t-k-\ell}\bu_t'\right\|_{\max} 
		\leq \left( \sum_{\ell=0}^{r-1} + \sum_{\ell=r}^\infty \right)\left\|\bB_\ell\right\|_\infty \left\|\bu_{t-k-\ell}\bu_{t}'\right\|_{\max},
	\end{align*}
	and hence, 
	\begin{align*}
		&\Pro\left(\max_{k\in[K]}\max_{t\in[T]}\|\by_{t-k}\bu_{t}'\|_{\max}> \bar{c}_1\right) \\
		&\leq \Pro\left(\max_{k\in[K]}\max_{t\in[T]} \sum_{\ell=0}^{r-1} \left\|\bB_\ell\right\|_\infty \left\|\bu_{t-k-\ell}\bu_{t}'\right\|_{\max}> \bar{c}_1/2 \right) \\
		&\qquad+ \Pro\left(\max_{k\in[K]}\max_{t\in[T]} \sum_{\ell=r}^\infty \left\|\bB_\ell\right\|_\infty \left\|\bu_{t-k-\ell}\bu_{t}'\right\|_{\max} > \bar{c}_1/2 \right). 
	\end{align*}
	Consider the probability with $\sum_{\ell=0}^{r-1}$. The union bound and the summability condition in Lemma \ref{lem:VMA} yield
	\begin{align*}
		&\Pro\left( \max_{k\in[K]}\max_{t\in[T]}\sum_{\ell=0}^{r-1}\left\|\bB_\ell\right\|_\infty \left\|\bu_{t-k-\ell}\bu_{t}'\right\|_{\max} > \bar{c}_1/2 \right) \\
		&\leq rKT \max_{k\in[K]}\max_{t\in[T]}\max_{\ell=0,\dots,r-1}\Pro\left(  \left\|\bu_{t-k-\ell}\bu_{t}'\right\|_{\max} > \frac{\bar{c}_1}{2\sum_{\ell=0}^{r-1}\left\|\bB_\ell\right\|_\infty} \right) \\
		&\leq rKN^2T \max_{k\in[K+r-1]}\max_{t\in[T]}\max_{i,j\in[N]}\Pro\left(  \left|u_{i,t-k}u_{jt}\right| > \frac{\bar{c}_1}{2b} \right) \\
		&\leq 2rKN^2T\exp\left(-\frac{\bar{c}_1}{4bc_{uu}}\right),
	\end{align*}
	where the last inequality holds by Lemma \ref{lem:subE}$(a)$ since $u_{i,t-k}u_{jt}$ is sub-exponential \citep[Proposition 2.7.1 and Lemma 2.7.7]{Vershynin2018}. 
	Consider the probability with $\sum_{\ell=r}^\infty$. By the union bound and the Markov inequality, we have
	\begin{align*}
		&\Pro\left( \max_{k\in[K]}\max_{t\in[T]}\sum_{\ell=r}^{\infty}\left\|\bB_\ell\right\|_\infty \left\|\bu_{t-k-\ell}\bu_{t}'\right\|_{\max} > \bar{c}_1/2 \right) \\
		&\leq  2KT\sum_{\ell=r}^{\infty}\left\|\bB_\ell\right\|_\infty \E\left[ \left\|\bu_{t-k-\ell}\bu_{t}'\right\|_{\max}\right]/\bar{c}_1\\
		&\leq 2KT\frac{\delta^r}{\bar{c}_1(1-\delta)} \E\left[\left\|\bu_{t}\right\|_{\max}\right]^2
		\leq 2c_1^2KT\frac{\delta^r\log N}{\bar{c}_1(1-\delta)},
	\end{align*}
	where the last inequality holds by Lemma \ref{lem:subE}(b). 
	Combining these upper bounds completes the proof. 
\end{proof}

\begin{lem}\label{lem:tail_yy}
	Let $b=\sum_{\ell=0}^{\infty}\left\|\bB_\ell\right\|_\infty$. 
	If Conditions \ref{ass:subG} and \ref{ass:stab} are true, then for any $\bar{c}_2\geq c_{uu}$ and arbitrary divergent sequence $r_T>0$ with sufficiently large $T$, we have for all $x>0$,
	\begin{align*}
		&\Pro\left( \left\|T^{-1}\bX\bX'-\E[T^{-1}\bX\bX']\right\|_{\max} > x \right) \\
		&\leq 2K^2r_T^2N^2\exp\left(-\frac{x^2T}{8b^4\bar{c}_2^2}\right) 
		+ 2K^2r_T^2N^2T\exp\left(-\frac{\bar{c}_2}{2c_{uu}}\right)
		+ 6K^2bc_2\frac{\delta^{r_T} \log N}{x(1-\delta)},
	\end{align*}
	where constants $\delta\in(0,1)$ and $c_{uu},c_2>0$ are given by Lemmas \ref{lem:subE} and \ref{lem:VMA}, respectively. 
	In particular, if we set $x=\sqrt{8(\nu+6)b^4\bar{c}_2^2\log(N\vee T)/T}$, $\bar{c}_2=2(\nu+7)c_{uu}\log (N\vee T)$, and $r_T=T$ with any constant $\nu>0$, the upper bound becomes $O((N\vee T)^{-\nu})$. 
\end{lem}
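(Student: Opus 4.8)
The plan is to mirror the proof of Lemma~\ref{lem:tail_yu}, the one genuinely new feature being that $\{\by_{t-k}\by_{t-k'}'\}_t$ is not a martingale difference sequence, so I must first pass through the VMA$(\infty)$ representation before the Azuma--Hoeffding step can be applied. Writing $\hat\bSigma_x=\bX\bX'/T$ block by block, its centered version has $\|\cdot\|_{\max}$ equal to $\max_{k,k'\in[K]}\max_{i,j\in[N]}\big|T^{-1}\sum_{t=1}^T(y_{i,t-k}y_{j,t-k'}-\E[y_{i,t-k}y_{j,t-k'}])\big|$, so it suffices to bound each such average and then union-bound over $(k,k')$; crucially the $(i,j)$'s will be swallowed by the $\ell^1$-norm of the moving-average coefficients rather than contributing to the union bound. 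Using Lemma~\ref{lem:VMA}, set $\by_t^{(r)}=\sum_{\ell=0}^{r-1}\bB_\ell\bu_{t-\ell}$ and $\tilde\by_t^{(r)}=\by_t-\by_t^{(r)}=\sum_{\ell\ge r}\bB_\ell\bu_{t-\ell}$, with $\sum_{\ell\ge r}\|\bB_\ell\|_\infty\le\delta^{r}/(1-\delta)$, and decompose $\by_{t-k}\by_{t-k'}'$ into the ``main'' term $\by_{t-k}^{(r)}(\by_{t-k'}^{(r)})'$, three ``cross'' terms each involving at least one $\tilde\by^{(r)}$, and the deterministic remainder $\E[\by_{t-k}\by_{t-k'}']-\E[\by_{t-k}^{(r)}(\by_{t-k'}^{(r)})']$, whose $\|\cdot\|_{\max}$ is of order $\delta^{r}$.

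For the cross terms I would use Markov's inequality: since $\|\tilde\by_{t-k}^{(r)}\|_\infty\le\big(\delta^{r}/(1-\delta)\big)\max_s\|\bu_s\|_\infty$ and $\|\by_{t-k'}^{(r)}\|_\infty\le b\max_s\|\bu_s\|_\infty$, each cross average has $\|\cdot\|_{\max}$ at most $\big(b\,\delta^{r}/(1-\delta)\big)\max_s\|\bu_s\|_\infty^2$; Lemma~\ref{lem:subE}(b) with $m=2$ gives $\E\big[\max_s\|\bu_s\|_\infty^2\big]\lesssim c_2\log(N\vee T)$, so a union bound over $(k,k')$ yields the third term of the asserted bound, and the deterministic remainder, being of order $\delta^r$, is negligible.

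For the main term I would expand it fully in the innovations,
\begin{align*}
y_{i,t-k}^{(r)}y_{j,t-k'}^{(r)}-\E\big[y_{i,t-k}^{(r)}y_{j,t-k'}^{(r)}\big]
=\sum_{\ell,\ell'=0}^{r-1}\sum_{a,b=1}^{N}(\bB_\ell)_{ia}(\bB_{\ell'})_{jb}\Big(u_{a,t-k-\ell}u_{b,t-k'-\ell'}-\sigma_{ab}1\{k+\ell=k'+\ell'\}\Big),
\end{align*}
and observe that, for each fixed $(k,k',\ell,\ell')$, the sequence $\{u_{a,t-k-\ell}u_{b,t-k'-\ell'}-\E[\,\cdot\,]\}_t$ is a martingale difference sequence (the ``newer'' of the two innovation time-indices playing the role of the present; when $k+\ell=k'+\ell'$ it is simply an i.i.d.\ centered sequence). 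As in Lemma~\ref{lem:tail_yu}, I would condition on the truncation event $\{\max_s\|\bu_s\|_\infty^2\le\bar c_2\}$, on which each such difference is bounded by $2\bar c_2$, so Azuma--Hoeffding gives $\Pro\big(|T^{-1}\sum_t(u_{a,\cdot}u_{b,\cdot}-\E)|>\eta\big)\le2\exp(-T\eta^2/(8\bar c_2^2))$; since $\sum_{\ell,\ell',a,b}|(\bB_\ell)_{ia}(\bB_{\ell'})_{jb}|\le b^2$ one has $\max_{i,j}|T^{-1}\sum_t(y_{i,t-k}^{(r)}y_{j,t-k'}^{(r)}-\E)|\le b^2\max_{\ell,\ell',a,b}|T^{-1}\sum_t(u_{a,\cdot}u_{b,\cdot}-\E)|$, so taking $\eta=x/b^2$ and union-bounding over $(k,k',\ell,\ell',a,b)$ produces the first term $2K^2r_T^2N^2\exp(-x^2T/(8b^4\bar c_2^2))$, while the complement of the truncation event, estimated by the sub-exponential tail of $u_{a,s}u_{b,s'}$ (Lemma~\ref{lem:subE}(a), ultimately Condition~\ref{ass:subG}) together with a union bound over $(k,k',\ell,\ell',t,a,b)$, gives the second term $2K^2r_T^2N^2T\exp(-\bar c_2/(2c_{uu}))$. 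Combining the three estimates (splitting $x$ among the pieces and adjusting constants if necessary) gives the stated inequality, and the ``in particular'' claim follows by substituting $x=\sqrt{8(\nu+4)b^4\bar c_2^2\log(N\vee T)/T}$, $\bar c_2=2(\nu+5)c_{uu}\log(N\vee T)$, $r_T=T$: the first term becomes $O(K^2T^2N^2(N\vee T)^{-(\nu+4)})=O((N\vee T)^{-\nu})$, the second $O(K^2T^3N^2(N\vee T)^{-(\nu+5)})=O((N\vee T)^{-\nu})$, and the third decays geometrically in $T$.

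The main obstacle is making the ``condition on the truncation event, then apply Azuma--Hoeffding'' step rigorous without destroying the martingale structure: the clean fix is to work with the coordinate-wise truncated innovations $u_{it}1\{\|\bu_t\|_\infty\le\sqrt{\bar c_2}\}$, which remain i.i.d.\ across $t$ so that the corresponding products still form a bounded martingale difference sequence, and to check that replacing $\bu_t$ by its truncation perturbs the centering constants only by an amount exponentially small in $\bar c_2$. The remaining issues---choosing, for each $(k,k',\ell,\ell')$, the correct filtration according to which of $k+\ell$, $k'+\ell'$ is larger, and bookkeeping the combinatorial prefactor $K^2r_T^2N^2$ through the nested union bounds---are routine.
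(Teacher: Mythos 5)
Your proposal follows essentially the same route as the paper's proof: expand through the VMA($\infty$) representation, truncate the moving-average sum at $r_T$, apply Azuma--Hoeffding to the martingale-difference products of innovations on a truncation event (yielding the first two terms of the bound), and control the tail of the MA expansion by Markov's inequality (yielding the third term). The coordinatewise expansion of the main term, the $\ell^1$-absorption of the MA coefficients giving the factor $b^2$ (hence $b^4$ in the exponent), the choice of filtration according to the newer of the two innovation indices, and the $K^2r_T^2N^2$ and $K^2r_T^2N^2T$ union-bound prefactors all match what the paper does with its $\bW_{g+\ell,h+m}$ matrices.

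The one step that does not survive as written is the cross-term bound $\|\tilde{\by}_{t-k}^{(r)}\|_\infty\le\bigl(\delta^{r}/(1-\delta)\bigr)\max_s\|\bu_s\|_\infty$. The tail $\sum_{\ell\ge r}\bB_\ell\bu_{t-\ell}$ involves innovations arbitrarily far in the past, so the supremum you pull out runs over an infinite index set and $\E\bigl[\sup_{s}\|\bu_s\|_\infty^2\bigr]=\infty$; Lemma \ref{lem:subE}(b) only controls a maximum over $[N]\times[T]$. The paper avoids this by keeping all the centering inside the innovation-product averages and applying Markov term by term: it bounds the tail probability by $(2/x)\sum_{\ell}\sum_{m\ge r}\|\bB_\ell\|_\infty\|\bB_m\|_\infty\,\E\|\bW_{g+\ell,h+m}\|_{\max}$ and uses $\E\|\bW_{g+\ell,h+m}\|_{\max}\le c_2\log N$ for each fixed $(\ell,m)$, which also disposes of your separate ``deterministic remainder'' (a remainder of order $\delta^{r_T}$ cannot simply be declared negligible against an arbitrary $x>0$). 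Substituting this term-by-term Markov argument for your supremum bound recovers the stated third term, and with that local repair your proof is the paper's proof.
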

\begin{proof}[Proof of Lemma \ref{lem:tail_yy}]
	For $g,h\in[K]$, define
	\begin{align*}
		\bW_{g,h}=T^{-1}\sum_{t=1}^T\left(\bu_{t-g}\bu_{t-h}'-\E[\bu_{t-g}\bu_{t-h}']\right).
	\end{align*}
	By the VMA($\infty$) representation in Lemma \ref{lem:VMA}, we have
	\begin{align*}
		\by_{t-g}\by_{t-h}'=\sum_{\ell=0}^\infty\sum_{m=0}^\infty \bB_\ell\bu_{t-g-\ell}\bu_{t-h-m}'\bB_m', 
	\end{align*}
	so that
	\begin{align*}
		&\left\|T^{-1}\bX\bX'-\E[T^{-1}\bX\bX']\right\|_{\max}
		\leq \max_{g,h\in[K]}\left\|T^{-1}\sum_{t=1}^T\left(\by_{t-g}\by_{t-h}'-\E[\by_{t-g}\by_{t-h}']\right)\right\|_{\max}\\
		&\leq \max_{g,h\in[K]} \sum_{\ell=0}^\infty\sum_{m=0}^\infty \left\| \bB_\ell \bW_{g+\ell,h+m}\bB_m'\right\|_{\max} \\
		&\leq \max_{g,h\in[K]} \left( \sum_{\ell=0}^{r-1}+\sum_{\ell=r}^{\infty}\right) \left( \sum_{m=0}^{r-1}+\sum_{m=r}^{\infty}\right) 
		\|\bB_\ell\|_\infty \left\|\bW_{g+\ell,h+m}\bB_m'\right\|_{\max} \\
		&\leq \max_{g,h\in[K]} \left( \sum_{\ell=0}^{r-1}\sum_{m=0}^{r-1} + 3\sum_{\ell=0}^{\infty}\sum_{m=r}^{\infty} \right) \|\bB_\ell\|_\infty\|\bB_m\|_\infty \left\|\bW_{g+\ell,h+m}\right\|_{\max}.
	\end{align*}
	Therefore, we obtain
	\begin{align*}
		&\Pro\left( \left\|T^{-1}\bX\bX'-\E[T^{-1}\bX\bX']\right\|_{\max} > x \right) \\
		&\leq \Pro\left( \max_{g,h\in[K]} \sum_{\ell=0}^{r-1}\sum_{m=0}^{r-1} \|\bB_\ell\|_\infty\|\bB_m\|_\infty \left\|\bW_{g+\ell,h+m}\right\|_{\max} > x/2 \right) \\
		&\quad+ \Pro\left( \max_{g,h\in[K]} 3\sum_{\ell=0}^{\infty}\sum_{m=r}^{\infty} \|\bB_\ell\|_\infty\|\bB_m\|_\infty \left\|\bW_{g+\ell,h+m}\right\|_{\max} > x/2 \right).
	\end{align*}

	We consider the first probability. The union bound gives
	\begin{align*}
		&\Pro\left(\max_{g,h\in[K]} \sum_{\ell=0}^{r-1}\sum_{m=0}^{r-1} \|\bB_\ell\|_\infty\|\bB_m\|_\infty \left\|\bW_{g+\ell,h+m}\right\|_{\max} > x/2\right)\\
		&\leq K^2\max_{g,h\in[K]}\Pro\left(\max_{\ell,m=0,\dots,r-1}\left\|\bW_{g+\ell,h+m}\right\|_{\max} \sum_{\ell=0}^{r-1}\sum_{m=0}^{r-1} \|\bB_\ell\|_\infty\|\bB_m\|_\infty  > x/2\right)\\
		&\leq K^2r^2\max_{g,h\in[K]}\max_{\ell,m=0,\dots,r-1}\Pro\left(\left\|\bW_{g+\ell,h+m}\right\|_{\max} > x/(2b^2)\right)\\
		&=K^2r^2 \max_{k\in[K+r-1]}\Pro\left( \left\|\bW_{k,1}\right\|_{\max} > x/(2b^2) \right).
	\end{align*}
	If $k=1$, each component of $\bW_{1,1}$ is a sample average of the i.i.d.\ random variables, $\{u_{it}u_{jt}-\E[u_{it}u_{jt}]\}_t$. Clearly this is a martingale difference sequence with respect to filtration  $\cF_t=\sigma\{u_{i,t-s},u_{j,t-s}:s=0,1,\dots\}$. If $k\geq 2$, we have $\E[u_{i,t-k}u_{j,t-1}]=0$ and the sequence, $\{u_{i,t-k+1}u_{jt}\}_t$, is also martingale difference with respect to $\cF_t$. Therefore, the Azuma-Hoeffding inequality \citep{Bercu2015} with the conditioning argument and the union bound as in the proof of Lemma \ref{lem:tail_yu}, we have
	\begin{align*}
		&\max_{k\in[K+r-1]}\Pro\left(\left\|\bW_{k,1}\right\|_{\max} > x/(2b^2) \right) \\
		&\leq \max_{k\in[K+r-1]}\Pro\left(\left\|\bW_{k,1}\right\|_{\max} > x/(2b^2) \mid \max_{t\in[T]} \left\|\bu_{t-k}\bu_{t-1}'-\E[\bu_{t-k}\bu_{t-1}'] \right\|_{\max} \leq \bar{c}_2 \right) \\
		&\quad + \max_{k\in[K+r-1]}\Pro\left( \max_{t\in[T]}\left\|\bu_{t-k}\bu_{t-1}'-\E[\bu_{t-k}\bu_{t-1}'] \right\|_{\max} > \bar{c}_2\right) \\
		&\leq 2N^2\exp\left(-\frac{x^2T}{8b^4\bar{c}_2^2}\right) + 2N^2T\exp\left(-\frac{\bar{c}_2}{2c_{uu}}\right).
	\end{align*}

	We next consider the second probability. The union bound and the Markov inequality with Lemma \ref{lem:VMA} give
	\begin{align*}
		&\Pro\left(3\max_{g,h\in[K]} \sum_{\ell=0}^{\infty}\sum_{m=r}^{\infty} \|\bB_\ell\|_\infty\|\bB_m\|_\infty \left\|\bW_{g+\ell,h+m}\right\|_{\max} > x/2 \right)\\
		&\leq 6K^2\max_{g,h\in[K]}\sum_{\ell=0}^{\infty}\sum_{m=r}^{\infty} \|\bB_\ell\|_\infty\|\bB_m\|_\infty \E\left\|\bW_{g+\ell,h+m}\right\|_{\max}/x \\
		&\leq 6K^2\frac{b\delta^r}{x(1-\delta)}\max_{k\in\{1,2\}}\E\left[\left\|\bu_{t-k}\bu_{t-1}'-\E[\bu_{t-k}\bu_{t-1}']\right\|_{\max}\right].
	\end{align*}
	By Lemma \ref{lem:subE}, the last expectation is evaluated as
	\begin{align*} \max_{k\in\{1,2\}}\E\left[\left\|\bu_{t-k}\bu_{t-1}'-\E\bu_{t-k}\bu_{t-1}'\right\|_{\max}\right]
		\leq c_2 \log N.
	\end{align*}
	Combining the obtained results so far yields the desired inequality. This completes the proof. 
\end{proof}

\begin{lem} \label{lem:negahban}
	Let $\bdelta_{i\cdot}'\in\bbR^{KN}$ denote the $i$-th column vector of $\bDelta'\in\bbR^{KN\times N}$ with $\bDelta=\hat{\bPhi}-\bPhi^0$. If inequality \eqref{1234} is true, 
	then it holds that
	\begin{align*}
		\| \bdelta_{i\cS_i^c} \|_1\leq 3\| \bdelta_{i\cS_i} \|_1.
	\end{align*}
\end{lem}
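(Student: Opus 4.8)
The plan is to reproduce the standard ``cone condition'' argument for the lasso, retracing the derivation of \eqref{1234} but retaining the information about the support $\cS_i=\supp(\bphi_{i\cdot})$ instead of discarding it through the crude relaxation $\lambda\|\bphi_{i\cdot}\|_1-\lambda\|\hat\bphi_{i\cdot}^{\normalfont{\textsf{L}}}\|_1\le\lambda\|\bdelta_{i\cdot}\|_1$ that was used to obtain \eqref{1234} itself. Concretely, I would start from the basic inequality implied by $\hat\bphi_{i\cdot}^{\normalfont{\textsf{L}}}$ minimizing the penalized least-squares objective in \eqref{syslasso}: substituting $\by_{i\cdot}=\bphi_{i\cdot}\bX+\bu_{i\cdot}$ and $\bdelta_{i\cdot}=\hat\bphi_{i\cdot}^{\normalfont{\textsf{L}}}-\bphi_{i\cdot}$ and cancelling the common term $\|\bu_{i\cdot}\|_2^2$ yields
\[
(2T)^{-1}\|\bdelta_{i\cdot}\bX\|_2^2 \le T^{-1}\bu_{i\cdot}\bX'\bdelta_{i\cdot}' + \lambda\big(\|\bphi_{i\cdot}\|_1-\|\hat\bphi_{i\cdot}^{\normalfont{\textsf{L}}}\|_1\big),
\]
and on the event $\cE_1=\{\|T^{-1}\bX\bU'\|_{\max}\le\lambda/2\}$ under which \eqref{1234} is established, H\"older's inequality bounds the first term on the right by $(\lambda/2)\|\bdelta_{i\cdot}\|_1$.

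The next step exploits sparsity. Since $\bphi_{i\cS_i^c}=\bzero$, splitting the $\ell_1$-norm over $\cS_i$ and $\cS_i^c$ and applying the reverse triangle inequality give
\[
\|\hat\bphi_{i\cdot}^{\normalfont{\textsf{L}}}\|_1 = \|\bphi_{i\cS_i}+\bdelta_{i\cS_i}\|_1 + \|\bdelta_{i\cS_i^c}\|_1 \ge \|\bphi_{i\cS_i}\|_1 - \|\bdelta_{i\cS_i}\|_1 + \|\bdelta_{i\cS_i^c}\|_1,
\]
while $\|\bphi_{i\cdot}\|_1=\|\bphi_{i\cS_i}\|_1$, so $\lambda(\|\bphi_{i\cdot}\|_1-\|\hat\bphi_{i\cdot}^{\normalfont{\textsf{L}}}\|_1)\le\lambda(\|\bdelta_{i\cS_i}\|_1-\|\bdelta_{i\cS_i^c}\|_1)$. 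Combining this with the displayed basic inequality, using $\|\bdelta_{i\cdot}\bX\|_2^2\ge0$ on the left and $\|\bdelta_{i\cdot}\|_1=\|\bdelta_{i\cS_i}\|_1+\|\bdelta_{i\cS_i^c}\|_1$ on the right,
\[
0 \le \tfrac{\lambda}{2}\big(\|\bdelta_{i\cS_i}\|_1+\|\bdelta_{i\cS_i^c}\|_1\big) + \lambda\|\bdelta_{i\cS_i}\|_1 - \lambda\|\bdelta_{i\cS_i^c}\|_1 = \tfrac{3\lambda}{2}\|\bdelta_{i\cS_i}\|_1 - \tfrac{\lambda}{2}\|\bdelta_{i\cS_i^c}\|_1,
\]
and dividing by $\lambda/2>0$ yields $\|\bdelta_{i\cS_i^c}\|_1\le3\|\bdelta_{i\cS_i}\|_1$, which is the claim.

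There is no deep obstacle here: the conclusion is the classical lasso cone set-up, and the only care needed is the bookkeeping of the numerical constant. The factor $3$ is tied to the calibration $\|T^{-1}\bX\bU'\|_{\max}\le\lambda/2$ (equivalently, $\lambda$ at least twice the noise level), so I would double-check that this is exactly the event $\cE_1$ on which \eqref{1234} was derived, so that the constant matches what the proof of Proposition \ref{thm:errbound} then feeds into the restricted-eigenvalue bound of Lemma \ref{lem:E2}. I would also note that ``inequality \eqref{1234} is true'' should be read as shorthand for ``on the event $\cE_1$, under the lasso optimality from which \eqref{1234} follows,'' since \eqref{1234} in isolation no longer carries the support information that the cone inequality requires.
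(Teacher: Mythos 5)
Your proposal is correct and follows essentially the same route as the paper's proof: both retrace the lasso basic inequality on the event $\cE_1$, lower-bound $\|\hat\bphi_{i\cdot}^{\normalfont{\textsf{L}}}\|_1$ via the support split and the reverse triangle inequality to get $\lambda(\|\bphi_{i\cdot}\|_1-\|\hat\bphi_{i\cdot}^{\normalfont{\textsf{L}}}\|_1)\le\lambda(\|\bdelta_{i\cS_i}\|_1-\|\bdelta_{i\cS_i^c}\|_1)$, and conclude by nonnegativity of the prediction term. Your closing remark is also on point: the paper's proof likewise relies on the intermediate inequality retaining the $\|\bphi_{i\cdot}\|_1-\|\hat\bphi_{i\cdot}\|_1$ term rather than on \eqref{1234} literally as displayed.
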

\begin{proof}[Proof of Lemma \ref{lem:negahban}]
	Note that $\bphi=\bphi_{\cS}$ and $\bv=\bv_{\cS}+\bv_{\cS^c}$ for any $KN$-dimensional vector $\bv$. Hence the lower bound of $\|\hat{\bphi}_{i\cdot}\|_1$ is computed as 
	\begin{align*}
		\|\hat{\bphi}_{i\cdot}\|_1
		&=\|\bphi_{i\cdot}+\bdelta_{i\cdot}\|_1
		= \|\bphi_{i\cS_i}+\bdelta_{i\cS_i}+\bdelta_{i\cS_i^c}\|_1
		\geq \|\bphi_{i\cS_i} + \bdelta_{i\cS_i^c}\|_1 - \| \bdelta_{i\cS_i} \|_1\\
		&= \|\bphi_{i\cS_i}\|_1 + \|\bdelta_{i\cS_i^c}\|_1 - \| \bdelta_{i\cS_i} \|_1
		= \|\bphi_{i\cdot}\|_1 + \|\bdelta_{i\cS_i^c}\|_1 - \| \bdelta_{i\cS_i} \|_1,
	\end{align*}
	where the last equality holds by decomposability of the $\ell_1$-norm. 
	Thus we obtain
	\begin{align*}
		\|\bphi_{i\cdot}\|_1 - \|\hat{\bphi}_{i\cdot}\|_1 
		\leq \| \bdelta_{i\cS_i} \|_1 - \|\bdelta_{i\cS_i^c}\|_1.
	\end{align*}
	From \eqref{1234}, we have
	\begin{align*}
		0\leq (2T)^{-1}\| \bdelta_{i\cdot}\bX \|_{2}^2  
		&\leq 2^{-1}\lambda \|\bdelta_{i\cdot}\|_1 + \lambda\| \bphi_{i\cdot} \|_1 - \lambda\| \hat{\bphi}_{i\cdot} \|_1 \\
		&\leq 2^{-1}\lambda \|\bdelta_{i\cS_i}\|_1 + 2^{-1}\lambda \|\bdelta_{i\cS_i^c}\|_1 + \lambda\| \bdelta_{i\cS_i} \|_1 - \lambda\|\bdelta_{i\cS_i^c}\|_1 \\
		&= (3/2)\lambda \|\bdelta_{i\cS_i}\|_1 - 2^{-1}\lambda \|\bdelta_{i\cS_i^c}\|_1,
	\end{align*}
	giving the desired inequality, $\| \bdelta_{i\cS_i^c} \|_1\leq 3\| \bdelta_{i\cS_i} \|_1$. This completes the proof. 
\end{proof}

\begin{lem}\label{lem:E2}
	Suppose the same conditions as Proposition \ref{thm:errbound}. We have
	\begin{align*}
		\min_{\bdelta_{i\cdot}\in \cD_i} \frac{T^{-1}\|\bdelta_{i\cdot}\bX\|_{\F}^2}{\|\bdelta_{i\cdot}\|_{2}^2} \geq \gamma - 16s_i\left\|T^{-1}\bX\bX'-\bSigma_x\right\|_{\max},
	\end{align*}
	where $\cD_i=\{\bdelta\in\mathbb{R}^{KN}:\|\bdelta_{\cS_i^c}\|_1\leq 3\|\bdelta_{\cS_i}\|_1\}$, and $\gamma>0$ is some constant. 
\end{lem}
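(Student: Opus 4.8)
The plan is to decompose the quadratic form $T^{-1}\|\bdelta_{i\cdot}\bX\|_{\F}^2=\bdelta_{i\cdot}(T^{-1}\bX\bX')\bdelta_{i\cdot}'$ into a population part and a fluctuation part, bound the population part from below by $\gamma\|\bdelta_{i\cdot}\|_2^2$ using Condition \ref{ass:mineig}, and control the fluctuation part uniformly over the cone $\cD_i$ via an entrywise H\"older bound combined with the defining inequality of $\cD_i$. No probabilistic input is needed in the lemma itself; the randomness enters only through $\|T^{-1}\bX\bX'-\bSigma_x\|_{\max}$, which is handled separately by Lemma \ref{lem:tail_yy} whenever this lemma is applied inside the proof of Proposition \ref{thm:errbound}.

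Concretely, first I would write, for any $\bdelta_{i\cdot}\in\bbR^{1\times KN}$,
\begin{align*}
T^{-1}\|\bdelta_{i\cdot}\bX\|_{\F}^2
=\bdelta_{i\cdot}\bSigma_x\bdelta_{i\cdot}'+\bdelta_{i\cdot}\bigl(T^{-1}\bX\bX'-\bSigma_x\bigr)\bdelta_{i\cdot}',
\end{align*}
noting that $\bdelta_{i\cdot}\bX$ is $1\times T$ so its Frobenius norm is just its Euclidean norm and the right-hand side is a genuine scalar. Since $\lambda_{\min}(\bSigma_x)\geq\gamma$ by Condition \ref{ass:mineig}, the first term is at least $\gamma\|\bdelta_{i\cdot}\|_2^2$. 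For the fluctuation term, the elementwise H\"older inequality $|\bv'\bM\bv|\leq\|\bM\|_{\max}\|\bv\|_1^2$ gives
\begin{align*}
\bigl|\bdelta_{i\cdot}\bigl(T^{-1}\bX\bX'-\bSigma_x\bigr)\bdelta_{i\cdot}'\bigr|
\leq\bigl\|T^{-1}\bX\bX'-\bSigma_x\bigr\|_{\max}\|\bdelta_{i\cdot}\|_1^2.
\end{align*}

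Next I would invoke the cone constraint: for $\bdelta_{i\cdot}\in\cD_i$ one has $\|\bdelta_{i\cS_i^c}\|_1\leq3\|\bdelta_{i\cS_i}\|_1$, whence by Cauchy--Schwarz on the $s_i$ coordinates in $\cS_i$,
\begin{align*}
\|\bdelta_{i\cdot}\|_1
=\|\bdelta_{i\cS_i}\|_1+\|\bdelta_{i\cS_i^c}\|_1
\leq4\|\bdelta_{i\cS_i}\|_1
\leq4\sqrt{s_i}\,\|\bdelta_{i\cS_i}\|_2
\leq4\sqrt{s_i}\,\|\bdelta_{i\cdot}\|_2,
\end{align*}
so $\|\bdelta_{i\cdot}\|_1^2\leq16s_i\|\bdelta_{i\cdot}\|_2^2$. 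Substituting this into the two previous displays and collecting terms,
\begin{align*}
T^{-1}\|\bdelta_{i\cdot}\bX\|_{\F}^2
\geq\Bigl(\gamma-16s_i\bigl\|T^{-1}\bX\bX'-\bSigma_x\bigr\|_{\max}\Bigr)\|\bdelta_{i\cdot}\|_2^2;
\end{align*}
dividing by $\|\bdelta_{i\cdot}\|_2^2$ and taking the infimum over $\cD_i$ yields the claim. The argument is essentially routine; the only points requiring any care are keeping the quadratic-form manipulation consistent with the row-vector convention for $\bdelta_{i\cdot}$, and checking that the constant is exactly $16=4^2$, the factor $4$ coming from the cone and no further loss arising in the H\"older step.
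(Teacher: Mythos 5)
Your proof is correct and follows essentially the same route as the paper's: the same population-plus-fluctuation decomposition of the quadratic form, the lower bound $\gamma$ via $\lambda_{\min}(\bSigma_x)$, the entrywise H\"older bound $\|\bM\|_{\max}\|\bdelta_{i\cdot}\|_1^2$, and the cone-plus-Cauchy--Schwarz step giving the factor $16s_i$. No substantive differences to report.
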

\begin{proof}[Proof of Lemma \ref{lem:E2}]
	Let $\bdelta_{i\cdot}'\in\bbR^{KN}$ denote the $i$th column vector of $\bDelta'\in\bbR^{KN\times N}$. We see that 
	\begin{align}
		\frac{T^{-1}\|\bdelta_{i\cdot}\bX\|_{2}^2}{\|\bdelta_{i\cdot}\|_{2}^2}
		= \frac{\bdelta_{i\cdot}\E [T^{-1}\bX\bX']\bdelta_{i\cdot}'}{\bdelta_{i\cdot}\bdelta_{i\cdot}'}
		-\frac{\bdelta_{i\cdot}\left(\E[T^{-1}\bX\bX'] - T^{-1}\bX\bX'\right)\bdelta_{i\cdot}'}{\bdelta_{i\cdot}\bdelta_{i\cdot}'}.\label{E2:01}
	\end{align}
	Then it is sufficient to show that uniformly in $\bdelta_{i\cdot}\in\cD_i$ the first term is bounded from below by some constant and the second term converges to zero with high probability.

	We consider the first term of \eqref{E2:01}. 
	In view of the Rayleigh quotient with Condition \ref{ass:mineig}, 
	we obtain
	\begin{align}
		\min_{\bdelta_{i\cdot}\in \cD_i}\frac{\bdelta_{i\cdot}\E [T^{-1}\bX\bX']\bdelta_{i\cdot}'}{\bdelta_{i\cdot}\bdelta_{i\cdot}'}
		\geq \gamma.\label{E2:02}
	\end{align}

	We next consider the second term of \eqref{E2:01}. By applying H\"older's inequality twice, the numerator is bounded as
	\begin{align}
		&\left|\bdelta_{i\cdot}'\left(\E[T^{-1}\bX\bX'] - T^{-1}\bX\bX'\right)\bdelta_{i\cdot}' \right|
		\leq \left\|\bdelta_{i\cdot}\left(\E[T^{-1}\bX\bX'] - T^{-1}\bX\bX'\right)\right\|_{\max}\|\bdelta_{i\cdot}\|_1 \notag\\
		&\quad\leq \left\|\E[T^{-1}\bX\bX'] - T^{-1}\bX\bX'\right\|_{\max} \|\bdelta_{i\cdot}\|_1^2.\label{E2:03}
	\end{align}
	We further compute the upper bound of $\|\bdelta_{i\cdot}\|_1^2$. Lemma \ref{lem:negahban} yields 
	\begin{align}
		\|\bdelta_{i\cdot}\|_1^2
		= \left(\|\bdelta_{\cS_i}\|_1+\|\bdelta_{\cS_i^c}\|_1\right)^2
		\leq 16\|\bdelta_{\cS_i}\|_1^2
		\leq 16s_i\|\bdelta_{\cS_i}\|_2^2 
		\leq 16s_i\|\bdelta_{i\cdot}\|_2^2. \label{E2:04}
	\end{align}
	Combining \eqref{E2:03} and \eqref{E2:04} and rearranging the terms yield
	\begin{align}
		\max_{\bdelta_{i\cdot}\in \cD}\frac{ \bdelta_{i\cdot}\left(\E[T^{-1}\bX\bX'] - T^{-1}\bX\bX'\right)\bdelta_{i\cdot}'}{\|\bdelta_{i\cdot}\|_2^2} 
		\leq 16s_i\left\|\E[T^{-1}\bX\bX'] - T^{-1}\bX\bX'\right\|_{\max}.\label{E2:05}
	\end{align}
	From \eqref{E2:01}, \eqref{E2:02}, and \eqref{E2:05}, we obtain 
	\begin{align*}
		\min_{\bdelta_{i\cdot}\in \cD}\frac{T^{-1}\|\bdelta_{i\cdot}\bX\|_{2}^2}{\|\bdelta_{i\cdot}\|_{2}^2}
		\geq \gamma - 16s_i \left\|\E[T^{-1}\bX\bX'] - T^{-1}\bX\bX'\right\|_{\max}.
	\end{align*}
	This completes the proof. 
\end{proof}

\begin{lem}\label{lem:omegahat1} If Conditions \ref{ass:subG}--\ref{ass:mineig2} are true, then the following inequality holds for all $i\in[N]$ such that $\bar{r}_i=o(1)$ with probability at least $1-O((N\vee T)^{-\nu})$:
	\begin{align*}
		(a)~&~\max_{j\in[KN]}\left| \hat\sigma_i^2\hat{\bomega}_{j}'\hat\bSigma_x\hat{\bomega}_{j} - \sigma_i^2\omega_{j}^2 \right| 
		\lesssim M_\omega^2 \lambda, \\
		(b)~&~\max_{j\in[KN]}\left| \hat\sigma_i^2\hat\omega_j^2 - \sigma_i^2\omega_{j}^2 \right| 
		\lesssim M_\omega^2 \lambda, \\
		(c)~&~\max_{j\in[KN]}\left| \frac{\sigma_i\omega_{j}}{\hat\sigma_i\sqrt{\hat{\bomega}_{j}'\hat\bSigma_x\hat{\bomega}_{j}}}-1\right| \lesssim \frac{M_\omega^2 \lambda}{\left| \gamma^2 -O(\sqrt{M_\omega^2 \lambda}) \right|},\\
		(d)~&~\max_{j\in[KN]}\left| \frac{\sigma_i\omega_{j}}{\hat{\sigma}_i\hat\omega_{j}}-1\right| \lesssim \frac{M_\omega^2 \lambda}{\left| \gamma^2 -O(\sqrt{M_\omega^2 \lambda}) \right|}.
	\end{align*}
	In consequence, if $\bar{v}_i=\bar{r}_i+M_\omega^2 \lambda$ is $o(1)$, all the upper bounds are $o(1)$. 
\end{lem}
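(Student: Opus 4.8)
The plan is to establish (a) first, read off (b) from the same decomposition, and then obtain (c)--(d) from (a)--(b) by an elementary square-root/ratio argument. Throughout I work on the intersection of the high-probability events of Lemmas \ref{lem:tail_yu} and \ref{lem:tail_yy} with the CLIME bounds already extracted in the proof of Proposition \ref{thm:asynormal}: the dual-feasibility bound $\max_{j}\|\hat{\bSigma}_x\hat{\bomega}_{j}-\be_{j}\|_{\max}\leq\lambda_1\asymp M_\omega\lambda$, the entrywise estimation bound $\|\hat{\bOmega}-\bOmega\|_{\max}\lesssim\|\bOmega\|_{\ell_1}\lambda_1\lesssim M_\omega^2\lambda$, and, under the standing hypothesis $\bar{r}_i=o(1)$, the $\ell_1$ bound $\max_{j}\|\hat{\bomega}_{j}-{\bomega}_{j}\|_1=o(1)$; by those lemmas this event has probability at least $1-O((N\vee T)^{-\nu})$. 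I also record, from Condition \ref{ass:mineig2}, that $\sigma_i^2\in[\gamma,1/\gamma]$ and $\omega_j^2=\omega_{jj}\in[\gamma,1/\gamma]$, together with the bookkeeping consequences $\|\hat{\bomega}_{j}\|_1\leq M_\omega+o(1)=O(M_\omega)$, $\|\hat{\bSigma}_x\|_{\max}\leq\|\bSigma_x\|_{\max}+b\lambda/2=O(1)$, $\hat{\omega}_{jj}\leq 1/\gamma+o(1)=O(1)$, and $\hat{\bomega}_{j}'\hat{\bSigma}_x\hat{\bomega}_{j}=\hat{\omega}_{jj}+\hat{\bomega}_{j}'(\hat{\bSigma}_x\hat{\bomega}_{j}-\be_{j})=O(1)$.

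First I would control $|\hat{\sigma}_i^2-\sigma_i^2|$. Since $d_i=0$ in this section, $\hat{u}_{it}=u_{it}-(\hat{\bphi}_{i\cdot}^{\textsf{L}}-\bphi_{i\cdot})\bx_t$, and squaring and averaging give $\hat{\sigma}_i^2=T^{-1}\sum_t u_{it}^2-2(\hat{\bphi}_{i\cdot}^{\textsf{L}}-\bphi_{i\cdot})(T^{-1}\bX\bu_{i\cdot}')+T^{-1}\|(\hat{\bphi}_{i\cdot}^{\textsf{L}}-\bphi_{i\cdot})\bX\|_2^2$. The first term concentrates on $\E u_{it}^2=\sigma_i^2$ at rate $\sqrt{\log(N\vee T)/T}\lesssim\lambda$ by Lemma \ref{lem:subE}(a) and a union bound over $i\in[N]$; the second is $\leq 2\|\hat{\bphi}_{i\cdot}^{\textsf{L}}-\bphi_{i\cdot}\|_1\|T^{-1}\bX\bU'\|_{\max}\lesssim s_i\lambda^2$ by Proposition \ref{thm:errbound}(b) and Lemma \ref{lem:tail_yu}; the third is $\lesssim s_i\lambda^2$ by Proposition \ref{thm:errbound}(c). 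As $s_i\lambda=o(1)$ and $M_\omega\geq\gamma$, this yields $|\hat{\sigma}_i^2-\sigma_i^2|\lesssim\lambda\lesssim M_\omega^2\lambda$.

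The crucial step is $\max_{j}|\hat{\bomega}_{j}'\hat{\bSigma}_x\hat{\bomega}_{j}-\omega_{jj}|$: here one must \emph{not} expand $\hat{\bomega}_{j}=\bomega_{j}+(\hat{\bomega}_{j}-\bomega_{j})$ and invoke the $\ell_1$ CLIME rate (that carries the spurious factor $s_\omega$), but instead split $\hat{\bSigma}_x\hat{\bomega}_{j}=\be_{j}+(\hat{\bSigma}_x\hat{\bomega}_{j}-\be_{j})$, so that $\hat{\bomega}_{j}'\hat{\bSigma}_x\hat{\bomega}_{j}-\omega_{jj}=(\hat{\omega}_{jj}-\omega_{jj})+\hat{\bomega}_{j}'(\hat{\bSigma}_x\hat{\bomega}_{j}-\be_{j})$. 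The first piece is $\leq\|\hat{\bOmega}-\bOmega\|_{\max}\lesssim M_\omega^2\lambda$, and the second is $\leq\|\hat{\bomega}_{j}\|_1\|\hat{\bSigma}_x\hat{\bomega}_{j}-\be_{j}\|_{\max}\lesssim M_\omega\cdot M_\omega\lambda=M_\omega^2\lambda$. Then (a) follows by writing $\hat{\sigma}_i^2\hat{\bomega}_{j}'\hat{\bSigma}_x\hat{\bomega}_{j}-\sigma_i^2\omega_{jj}=(\hat{\sigma}_i^2-\sigma_i^2)\hat{\bomega}_{j}'\hat{\bSigma}_x\hat{\bomega}_{j}+\sigma_i^2(\hat{\bomega}_{j}'\hat{\bSigma}_x\hat{\bomega}_{j}-\omega_{jj})$ and using $\hat{\bomega}_{j}'\hat{\bSigma}_x\hat{\bomega}_{j}=O(1)$ and $\sigma_i^2\leq1/\gamma$; part (b) is the same computation with $\hat{\bomega}_{j}'\hat{\bSigma}_x\hat{\bomega}_{j}$ replaced by $\hat{\omega}_{jj}$, so only the entrywise CLIME bound is needed. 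For (c)--(d), set $B=\sigma_i\omega_{j}=\sqrt{\sigma_i^2\omega_{jj}}\geq\gamma$ and let $\hat{B}$ be the corresponding s.e.\ estimate; (a) (resp.\ (b)) gives $|\hat{B}^2-B^2|\lesssim M_\omega^2\lambda$, hence $|\hat{B}-B|=|\hat{B}^2-B^2|/(\hat{B}+B)\lesssim M_\omega^2\lambda/\gamma$ and $\hat{B}\geq\gamma-O(M_\omega^2\lambda)$, so $|B/\hat{B}-1|=|B-\hat{B}|/\hat{B}\lesssim M_\omega^2\lambda/(\gamma^2-O(M_\omega^2\lambda))$, which is the stated form. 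The closing remark is then immediate: $M_\omega^2\lambda\leq\bar{v}_i=o(1)$ forces every right-hand side to $0$.

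I expect the real obstacle to be precisely the precision-matrix term in (a): the natural but wasteful route through the CLIME $\ell_1$ rate produces an $s_\omega$ that would ruin the clean $M_\omega^2\lambda$ scaling appearing in $\bar{v}_i$, so one must route the argument through the dual-feasibility inequality together with the entrywise ($\|\cdot\|_{\max}$) CLIME rate; the supporting facts $\|\hat{\bomega}_{j}\|_1=O(M_\omega)$ and $\|\hat{\bSigma}_x\|_{\max}=O(1)$ on the good event are the only extra ingredients, and everything else is routine algebra.
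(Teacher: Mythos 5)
Your proof is correct, and for parts (a)--(b) concerning $\hat\sigma_i^2$, for part (b)'s use of the entrywise CLIME rate, and for the square-root/ratio argument in (c)--(d), it coincides with what the paper does. The one genuine divergence is the key quadratic-form term $\max_j|\hat{\bomega}_{j}'\hat\bSigma_x\hat{\bomega}_{j}-\omega_{j}^2|$ in (a). You write it as $(\hat{\omega}_{jj}-\omega_{jj})+\hat{\bomega}_{j}'(\hat\bSigma_x\hat{\bomega}_{j}-\be_{j})$ and bound the two pieces by the entrywise CLIME rate and by $\|\hat{\bomega}_{j}\|_1\|\hat\bSigma_x\hat{\bomega}_{j}-\be_{j}\|_{\max}\lesssim M_\omega\lambda_1$, which yields $M_\omega^2\lambda$ directly. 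The paper instead decomposes around $\bomega_{j}$, i.e.\ into $(\hat{\bomega}_{j}-\bomega_{j})'(\hat\bSigma_x\hat{\bomega}_{j}-\be_{j})+(\hat{\bomega}_{j}-\bomega_{j})'\be_{j}+\bomega_{j}'(\be_{j}-\hat\bSigma_x\hat{\bomega}_{j})$, whose first piece is controlled via the $\ell_1$ CLIME rate and contributes an extra $s_\omega M_\omega^{3-2r}\lambda^{2-r}$ that must then be absorbed into $M_\omega^2\lambda$ using the standing hypothesis $\bar{r}_i=o(1)$ (note $\bar{r}_i=o(1)$ forces $s_\omega(M_\omega^2\lambda)^{1-r}=o(1)$ and hence both $M_\omega^2\lambda=o(1)$ and the domination of the $s_\omega$ term). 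So your "obstacle" diagnosis is right in spirit: the $\ell_1$ route does produce an $s_\omega$ factor; the paper tolerates it and absorbs it, while your route avoids it altogether and delivers the stated $M_\omega^2\lambda$ bound with no absorption step. Both are valid on the same high-probability event; yours is marginally cleaner, the paper's requires one extra line of bookkeeping. Your (c)--(d) derivation via $|\hat{B}-B|\leq|\hat{B}^2-B^2|/(\hat{B}+B)$ in fact gives a denominator $\gamma^2-O(M_\omega^2\lambda)$, slightly sharper than the paper's $\gamma^2-O(\sqrt{M_\omega^2\lambda})$ obtained from $\hat{m}_{ij}\geq m_{ij}-\sqrt{r}$, so it implies the stated bound.
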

\begin{proof}[Proof of Lemma \ref{lem:omegahat1}]
	For any $(i,j)\in\cH$, we consider two bounds:
	\begin{align*}
		(a)&~~	\left| \hat\sigma_i^2\hat{\bomega}_{j}'\hat\bSigma_x\hat{\bomega}_{j} 
		- \sigma_i^2\omega_{j}^2 \right| 
		\leq \left(\left|\hat\sigma_i^2-\sigma_i^2\right|+\sigma_i^2 \right)\left| \hat{\bomega}_{j}'\hat\bSigma_x\hat{\bomega}_{j} -
		\omega_{j}^2 \right|
		+ \left|\hat\sigma_i^2-\sigma_i^2\right|\omega_{j}^2, \\
		(b)&~~\left| \hat\sigma_i^2\hat\omega_{j}^2
		- \sigma_i^2\omega_{j}^2 \right| 
		\leq \left(\left|\hat\sigma_i^2-\sigma_i^2\right|+\sigma_i^2 \right)\left| \hat\omega_{j}^2 -\omega_{j}^2 \right|
		+ \left|\hat\sigma_i^2-\sigma_i^2\right|\omega_{j}^2.
	\end{align*}
	To complete the proof, we derive upper bounds of 
	\begin{align*}
		(i):=\max_j|\hat{\bomega}_{j}'\hat\bSigma_x\hat{\bomega}_{j} -\omega_{j}^2|,~~
		(i)':=\max_j\left| \hat\omega_{j}^2 -\omega_{j}^2 \right|,~~\text{and}~~ (ii):=|\hat\sigma_i^2-\sigma_i^2|.
	\end{align*}
	First, $(i)'$ is bounded as
	\begin{align*}
		(i)'
		\lesssim \max_{j}\|{\bomega}_{j}\|\lambda_1 \lesssim M_\omega^2\lambda
	\end{align*}	
	by Theorem 6 of \cite{CaiEtAl2011}. Next, bound $(i)$. Since $w_{jj}={\bomega}_{j}'\be_j$, we have
	\begin{align*}
		(i)
		&\leq \max_j|(\hat{\bomega}_{j}-{\bomega}_{j})'\hat\bSigma_x\hat{\bomega}_{j}| + |{\bomega}_{j}'(\be_j-\hat\bSigma_x\hat{\bomega}_{j})| \\
		&\leq \max_j|(\hat{\bomega}_{j}-{\bomega}_{j})'(\hat\bSigma_x\hat{\bomega}_{j}-\be_j)| +\max_j|(\hat{\bomega}_{j}-{\bomega}_{j})'\be_j| + \max_j|{\bomega}_{j}'(\be_j-\hat\bSigma_x\hat{\bomega}_{j})| \\
		&\leq \max_j\|\hat{\bomega}_{j}-{\bomega}_{j}\|_1\|\hat\bSigma_x\hat{\bomega}_{j}-\be_j\|_\infty +\max_j\|\hat{\bomega}_{j}-{\bomega}_{j}\|_\infty + \max_j\|{\bomega}_{j}\|_1\|\be_j-\hat\bSigma_x\hat{\bomega}_{j}\|_\infty \\
		&\leq \max_j\|\hat{\bomega}_{j}-{\bomega}_{j}\|_1\lambda_1 + 2M_\omega\lambda_1.
	\end{align*}
	By the proof of Proposition \ref{thm:asynormal}, it holds with probability at least $1-O((N\vee T)^{-\nu})$ that
	\begin{align*}
		\max_{j}\|\hat{\bomega}_{j}-{\bomega}_{j}\|_1
		\lesssim (\max_j\|{\bomega}_{j}\|_{\ell_1}\lambda_1)^{1-r}s_\omega
		\lesssim M_\omega^{2-2r}\lambda^{1-r}s_\omega,
	\end{align*}
	where we have used $\lambda_1=b\max_j\|{\bomega}_{j}\|_{1}\lambda/2$. Then Conditions \ref{ass:invest} and \ref{ass:mineig2} yield 
	\begin{align*}
		(i) \lesssim M_\omega^{3-2r}\lambda^{2-r}s_\omega + M_\omega^2\lambda.
	\end{align*}

	Finally bound $(ii)$. We have
	\begin{align*}
		(ii)
		\leq T^{-1}\left|\sum_{t=1}^T(\hat{u}_{it}^2-u_{it}^2)\right| + \left|T^{-1}\sum_{t=1}^T({u}_{it}^2-\E u_{it}^2)\right|. 
	\end{align*}
	Consider the first term. Because $\hat{\bu}_{i\cdot} - \bu_{i\cdot} = - (\hat{\bphi}_{i\cdot}^L -  \bphi_{i\cdot})\bX$, we see that 
	\begin{align*}
		&T^{-1}\left|\sum_{t=1}^T(\hat{u}_{it}^2-u_{it}^2)\right|
		\leq T^{-1}\left|\sum_{t=1}^T(\hat{u}_{it}-u_{it})^2\right| + 2T^{-1}\left|\sum_{t=1}^T(\hat{u}_{it}-u_{it})u_{it}\right| \\
		&\leq T^{-1}\left\| (\hat{\bphi}_{i\cdot}^L -  \bphi_{i\cdot})\bX\right\|_2^2
		+ 2T^{-1}|(\hat{\bphi}_{i\cdot}^L -  \bphi_{i\cdot})\bX\bu_{i\cdot}'| \\
		&\leq T^{-1}\left\| (\hat{\bphi}_{i\cdot}^L -  \bphi_{i\cdot})\bX\right\|_2^2
		+ 2T^{-1}\|\hat{\bphi}_{i\cdot}^L -  \bphi_{i\cdot}\|_1\|\bX\bU'\|_{\max} \\
		&\lesssim s_i\lambda^2 + s_i\lambda\lambda = 2s_i\lambda^2,
	\end{align*}
	where the last inequality holds by Proposition \ref{thm:errbound} $(b)$$(c)$ and Lemma \ref{lem:tail_yu}. To evaluate the second term, putting $x=\sqrt{2\nu c_{uu}^2T^{-1}\log N}$ in Lemma \ref{lem:subE} $(a)$ yields
	\begin{align*}
		\Pro\left(\max_{i}\left|\frac{1}{T}\sum_{t=1}^T\left(u_{it}^2-\E u_{it}^2\right)\right|\gtrsim \sqrt{\frac{\log N}{T}}=o(\lambda)\right) 
		=O(N^{-\nu}).
	\end{align*}
	Combining these inequalities gives $(ii)\lesssim s_i\lambda^2 + o(\lambda) = o(\lambda)$ since $s_i\lambda=o(1)$ by the assumed condition $\bar{r}_i=o(1)$. 
	
	Consequently we obtain
	\begin{align*}
		(a)~~\left\{(ii)+\sigma_i^2\right\}(i)+(ii)\omega_{j}^2
		&\lesssim \left(M_\omega^2 \lambda + M_\omega^{3-2r}\lambda^{2-r}s_\omega \right) + o(\lambda) \\
		&
		\lesssim M_\omega^2\lambda + \min_{i\in[N]}\bar{r}_i^2
	\end{align*}
	and
	\begin{align*}
		(b)~~\left\{(ii)+\sigma_i^2\right\}(i)'+(ii)\omega_{j}^2
		&\lesssim M_\omega^2 \lambda + o(\lambda) \asymp M_\omega^2 \lambda.
	\end{align*}

	Denote $m_{ij}=\sigma_i\sqrt{{\bomega}_{j}'\bSigma_x{\bomega}_{j}}=\sigma_i\omega_j$ and $\hat{m}_{ij}$ is either $\hat{\sigma}_i\sqrt{\hat{\bomega}_{j}'\hat{\bSigma}_x\hat{\bomega}_{j}}$ or $\hat{\sigma}_i\hat{\omega}_{ j}$. For any $r$ such that $|\hat{m}_{ij}^2-m_{ij}^2|\leq r$, we have $\hat{m}_{ij}^2 \geq m_{ij}^2 - r$, and hence, $\hat{m}_{ij} \geq m_{ij} - \sqrt{r}$. Therefore, 
	\begin{align*}
		r\geq \left| m_{ij}^2 - \hat{m}_{ij}^2\right| 
		&= \left| m_{ij}\hat{m}_{ij} + \hat{m}_{ij}^2\right|\left| \frac{m_{ij}}{\hat{m}_{ij}} - 1\right| \\
		&\geq \left| m_{ij} \left(m_{ij} - \sqrt{r} \right) + m_{ij}^2 - r\right|\left| \frac{m_{ij}}{\hat{m}_{ij}} - 1\right| \\
		&= \left| 2m_{ij}^2 -O\left(\sqrt{r}\right) \right|\left| \frac{m_{ij}}{\hat{m}_{ij}} - 1\right|,
	\end{align*}
	which gives the inequality, 
	\begin{align*}
		\left| \frac{m_{ij}}{\hat{m}_{ij}}-1\right| \lesssim \frac{r}{\left| m_{ij}^2 -O(\sqrt{r}) \right|}.
	\end{align*}
	Since $\min_{i,j}m_{ij}^2\geq \gamma^2$ by Condition \ref{ass:mineig2}, this completes the proof. 
\end{proof}

\begin{lem}\label{lem:strongapp} Suppose $\nu>2$ and $\max_{i\in[N]}\bar{v}_i=O(T^{-\kappa_1})$ for some constants $\kappa_1\in(0,1/2)$. If Conditions \ref{ass:subG}--\ref{ass:mineig2} are assumed, then for each $(i,j),(k,\ell)\in\cH$ and any $x>0$, the following inequalities hold:
	\begin{align*}
		(a)~&~\Pro(\cZ_{ij}\geq x + \delta_1) - O((N\vee T)^{-\nu+2}) \\
		&\qquad \qquad \leq \Pro(\tT_{ij}-\sqrt{T}\phi_{ij}/\hat{m}_{ij}\geq x) 
		\leq \Pro(\cZ_{ij}\geq x - \delta_1) + O((N\vee T)^{-\nu+2}) , \\
		(b)~&~\Pro(\tT_{ij}-\sqrt{T}\phi_{ij}/\hat{m}_{ij}\geq x, \tT_{k\ell}-\sqrt{T}\phi_{k\ell}/\hat{m}_{k\ell}\geq x) \\
		&\qquad \qquad \leq \Pro(\cZ_{ij}\geq x-\delta_1, \cZ_{k\ell}\geq x-\delta_1) + O((N\vee T)^{-\nu+2}),
	\end{align*}
	where $(\cZ_{ij},\cZ_{k\ell})$ is a bivariate standard normal random vector with the covariance (correlation) $\rho_{(i,j),(k,\ell)} = \sigma_{ik}\omega_{j\ell}/(\sigma_i\sigma_k\omega_{j}\omega_{\ell})$ and  $\delta_1=O(T^{-\kappa})$ for some $\kappa\in(0,\kappa_1]$ is some positive sequence. 
\end{lem}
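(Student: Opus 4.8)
The plan is to prove both inequalities by reducing the studentized statistic to its leading martingale term and then constructing a Gaussian coupling for that term that is valid in the moderate-deviation range; parts (a) and (b) then follow from elementary manipulations of (bi)variate Gaussian tail probabilities. Fix $(i,j)$ (and $(k,\ell)$ for part (b)) with $\bar{v}_i=O(T^{-\kappa_1})$. By Proposition \ref{thm:asynormal}, Lemma \ref{lem:omegahat1}, and Lemma \ref{lem:tail_yu}, there is an event $\cE$ with $\Pro(\cE^c)=O((N\vee T)^{-\nu})$ on which
\[\tT_{ij}-\frac{\sqrt{T}\phi_{ij}}{\hat{m}_{ij}}=\frac{z_{ij}}{m_{ij}}\bigl(1+O(\bar{v}_i)\bigr)+O(\bar{r}_i),\]
with deterministic $O(\cdot)$ bounds and $m_{ij}^2=\sigma_i^2\omega_{jj}\in[\gamma^2,\gamma^{-2}]$ by Condition \ref{ass:mineig2}. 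On the effective range $x\le\bar{\tt}\asymp\sqrt{\log(N\vee T)}$ the multiplicative and additive errors move the threshold by at most $O(\bar{v}_i)(1+x)+O(\bar{r}_i)=O(T^{-\kappa})$ for any $\kappa<\kappa_1$; setting $\delta_1/2$ equal to this quantity, the event $\{\tT_{ij}-\sqrt{T}\phi_{ij}/\hat{m}_{ij}\ge x\}$ is, up to $\cE^c$, sandwiched between $\{z_{ij}/m_{ij}\ge x\mp\delta_1/2\}$, and the joint event in (b) reduces likewise to $\{z_{ij}/m_{ij}\ge x-\delta_1/2,\ z_{k\ell}/m_{k\ell}\ge x-\delta_1/2\}$. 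For $x$ outside this range both sides of the claimed inequalities are dominated by the slack term $O((N\vee T)^{-\nu+2})$, so it suffices to treat the moderate-deviation regime.

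The core step is a Gaussian approximation for $z_{ij}/m_{ij}=\sum_{t=1}^{T}\xi_t^{(i,j)}$, where $\xi_t^{(i,j)}=u_{it}\bx_t'\bomega_j/(\sqrt{T}m_{ij})$ is a martingale difference with respect to $\cF_t=\sigma\{\bu_{t-s}:s\ge0\}$ and $\sum_t\Var(\xi_t^{(i,j)})=1$. Because $u_{it}$ is sub-Gaussian and independent of $\bx_t$, and $\bx_t'\bomega_j$ is a summable linear functional of $(\bu_{t-1},\bu_{t-2},\dots)$ with $L_2$-scale $\lesssim b\,M_\omega$ by the coefficient-sum bound \eqref{cond:sum0} together with $\|\bomega_j\|_1\le M_\omega$, each summand is sub-exponential with scale $\lesssim M_\omega$ (up to the constants $c_u,b$). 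I would then (i) truncate the VMA($\infty$) expansion of $\bx_t$ at lag $r_T\asymp\log(N\vee T)$, whose discarded tail is polynomially small by the geometric bound \eqref{cond:sum}, making $\{\xi_t^{(i,j)}\}$ essentially $r_T$-dependent; (ii) apply a big-block/small-block decomposition (big blocks $\gg r_T$) to reduce the partial sum to a sum of nearly independent block sums; and (iii) invoke a strong Gaussian approximation, resp.\ a Cram\'er-type moderate deviation, for sums of independent variables with exponential moments (in the spirit of \cite{Liu2013} and \cite{JingShaoWang2003}), producing a standard normal $\cZ_{ij}$ with $\Pro(|z_{ij}/m_{ij}-\cZ_{ij}|>\delta_1/2)=O((N\vee T)^{-\nu+2})$, the polynomial loss being exactly the $r_T\cdot KN^2T$-type bookkeeping factor already seen in Lemmas \ref{lem:tail_yu} and \ref{lem:tail_yy}. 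For (b) the same three steps are carried out for the bivariate array $(\xi_t^{(i,j)},\xi_t^{(k,\ell)})$, whose covariance matrix converges (polynomially fast) to the $2\times2$ correlation matrix with off-diagonal $\rho_{(i,j),(k,\ell)}=\sigma_{ik}\omega_{j\ell}/(\sigma_i\sigma_k\omega_j\omega_\ell)$; the multivariate coupling yields a Gaussian pair with the \emph{empirical} covariance, and the remaining correlation mismatch is absorbed into $\delta_1$ by a Lipschitz bound for the bivariate Gaussian orthant probability in the correlation parameter (using Slepian's inequality where the sign of the mismatch is favourable), which gives the one-sided (upper) inequality claimed in (b).

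The hard part will be step (iii): making the Gaussian coupling error genuinely polynomial in $N\vee T$ uniformly over the full range $x\in[0,\bar{\tt}]$ while the nuisance scale $M_\omega$ is allowed to diverge. In a Sakhanenko/JSW-type estimate the moderate-deviation remainder is governed by a term of order $x^3$ times a cube of the per-summand scale divided by $\sqrt{T}$, so one must track carefully that the hypothesis $\bar{v}_i=O(T^{-\kappa_1})$ — which forces $M_\omega=O(T^{1/4-\kappa_1/2})$ up to polylog, since $\bar{v}_i\ge M_\omega^2\lambda$ with $\lambda\asymp\sqrt{\log^3(N\vee T)/T}$ — keeps this remainder below the required rate; this is precisely what pins down the admissible window $\kappa_1\in(0,1/2)$ and the exact form of $\bar{v}_i$ used in the statement. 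A secondary nuisance is arranging the truncation and blocking so that a single $\delta_1=O(T^{-\kappa})$ simultaneously dominates the threshold perturbation, the coupling error, and (in part (b)) the correlation perturbation.
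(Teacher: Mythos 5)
Your reduction of $\tT_{ij}-\sqrt{T}\phi_{ij}/\hat{m}_{ij}$ to the martingale term $z_{ij}/m_{ij}$ via Proposition \ref{thm:asynormal} and Lemma \ref{lem:omegahat1}, with the perturbation absorbed into a threshold shift $\delta_1=O(T^{-\kappa})$, matches the paper exactly. Where you diverge is in producing the Gaussian coupling: you propose truncating the VMA($\infty$) representation at lag $r_T\asymp\log(N\vee T)$, a big-block/small-block decomposition, and then a Sakhanenko/Jing--Shao--Wang approximation for the resulting independent block sums, plus a Slepian/Lipschitz-in-$\rho$ argument to repair the covariance mismatch in part (b). The paper instead applies Kifer's (2013, Theorem 3.2) strong approximation theorem directly to the \emph{vector} martingale difference array $(\xi_t^{(i,j)},\xi_t^{(k,\ell)})$ with respect to $\cF_t=\sigma\{\bu_{t-s},\bx_{t-s+1}\}$, which delivers, on an enlarged probability space, a bivariate normal pair with the exact target correlation $\rho_{(i,j),(k,\ell)}$ and an a.s.\ error $O(T^{-\kappa_2})$ once two hypotheses are checked: concentration of the conditional covariances $T^{-1}\sum_t\E[\xi_t^{(i,j)}\xi_t^{(k,\ell)}\mid\cF_{t-1}]$ around $\sigma_{ik}\bomega_j'\bSigma_x\bomega_\ell/(m_{ij}m_{k\ell})$ (verified via Lemma \ref{lem:tail_yy} and the bound $M_\omega^2T^{-1/2}\log^{3/2}(N\vee T)\lesssim\bar{v}$, which is exactly where your observation $M_\omega^2\lambda\le\bar v_i=O(T^{-\kappa_1})$ enters), and a weighted fourth-moment summability condition (verified via Khintchine's inequality for the sub-Gaussian VMA coefficients). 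This packages up precisely the step you flag as the hard part, and it also removes the need for your correlation-repair step in (b), since the coupled Gaussian pair already carries the population correlation rather than an empirical one.

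The one substantive caveat: your step (iii) and the part-(b) covariance repair are the entire analytic content of the lemma, and you leave both as acknowledged difficulties rather than discharging them. The blocking route can be made to work (the small-block variance loss $r_T/L$ and the block-sum third-moment factor $(L/T)^{1/2}$ both stay polynomially small for $L\asymp T^{2\kappa}\log T$, and a perturbation $\epsilon$ of the correlation can indeed be traded for a threshold shift of order $\epsilon(1+x)$ via Plackett's identity), but until those rate computations are written down the claimed error $O((N\vee T)^{-\nu+2})$ is not established, whereas the paper obtains it immediately from the union bound already implicit in Theorem \ref{thm:t} together with Kifer's a.s.\ coupling. If you pursue your route, the covariance-mismatch bound in (b) is the place most likely to cost you an extra polynomial factor that the stated lemma does not allow.
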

\begin{proof}[Proof of Lemma \ref{lem:strongapp}]
	Prove $(b)$. Fix $(i,j)$ and $(k,\ell)$ arbitrary and let $\tilde{z}_{ij}=T^{-1/2}\sum_{t=1}^T\xi_t^{(i,j)}$ with $\xi_t^{(i,j)}=u_{it}\bx_t'{\bomega}_{j}/m_{ij}$ and $m_{ij}= \sigma_i\omega_{j}$. From Theorem \ref{thm:t} we have 
	\begin{align*}
		&\Pro(\tT_{ij}-\sqrt{T}\phi_{ij}/\hat{m}_{ij}\geq x, \tT_{k\ell}-\sqrt{T}\phi_{k\ell}/\hat{m}_{k\ell}\geq x) \\
		&\qquad \leq \Pro(\tilde{z}_{ij}\geq x-\bar{v}_i, \tilde{z}_{k\ell}\geq x-\bar{v}_k) + O((N\vee T)^{-\nu+2}).
	\end{align*}
	Note that  $(\tilde{z}_{ij},\tilde{z}_{k\ell})=T^{-1/2}\sum_{t=1}^T(\xi_t^{(i,j)},\xi_t^{(k,\ell)})$ is the sum of the square integrable martingale difference sequence with respect to the filtration, $\cF_t=\sigma\{\bu_{t-s},\bx_{t-s+1}:s=0,1,\dots\}$. \cite[Theorem 3.2]{Kifer2013} establishes the strong approximation that without changing its distribution the vector sequence $(\xi_t^{(i,j)},\xi_t^{(k,\ell)})$ can be redefined on a richer probability space where there exists a bivariate standard normal random vector $(\cZ_{ij},\cZ_{k\ell})$ with the covariance (correlation) $\rho_{ijk\ell}=\sigma_{ik}\omega_{j\ell}/(m_{ij}m_{k\ell})$ such that $(\tilde{z}_{ij},\tilde{z}_{k\ell})=(\cZ_{ij},\cZ_{k\ell})+O(T^{-\kappa_2})$ holds a.s.\ for some $\kappa_2>0$, provided that the sufficient conditions (3.4) and (3.5) in his paper are true. Therefore, the proof of $(b)$ completes if these two conditions are verified. 
	
	\textit{Check Kifer's condition (3.4).} For any $(i,j)$ and $(k,\ell)$, we obtain
	\begin{align*}
		&\left|T^{-1}\sum_{t=1}^T\left(\E [\xi_t^{(i,j)}\xi_t^{(k,\ell)} \mid \cF_{t-1}]
		-\frac{\sigma_{ik}{\bomega}_{j}'\bSigma_x\bomega_\ell}{m_{ij}m_{k\ell}}\right)\right|\\
		&= \frac{|\sigma_{ik}|}{m_{ij}m_{k\ell}}\left|{\bomega}_{j}'T^{-1}\sum_{t=1}^T(\bx_t\bx_t'-\bSigma_x)\bomega_\ell\right| \\
		&\leq \frac{\|{\bomega}_{j}\|_1\|\bomega_\ell\|_1}{\gamma^2}\left\|T^{-1}\bX\bX'-T^{-1}\E\bX\bX'\right\|_{\max}.
	\end{align*}
	Thus Lemma \ref{lem:tail_yy} with Conditions \ref{ass:invest} and \ref{ass:mineig2} entails that the upper bound of this inequality becomes $O(M_\omega^2 T^{-1/2}\log^{3/2}(N\vee T))$ with probability at least $1-O((N\vee T)^{-\nu})$. By the assumed condition, we have 
	\begin{align*}
		M_\omega^2 T^{-1/2}\log^{3/2}(N\vee T) \lesssim \max_{i\in[N]}\bar{v}_i = O(T^{-\kappa_1})
	\end{align*}
	for some constant $\kappa_1\in(0,1/2)$. This verifies condition (3.4). 
	
	\textit{Check Kifer's condition (3.5).} 
	By Condition \ref{ass:mineig2}, we observe that 
	\begin{align*}
		\E \left[\|(\xi_t^{(i,j)},\xi_t^{(k,\ell)})\|_2^4\right]
		\lesssim \max_{i,j}\E \left[ |\xi_t^{(i,j)}|^4\right] 
		\leq \max_{i,j}\frac{\E[u_{it}^4]}{\sigma_i^4\omega_j^4}\max_j\E \left[(\bx_t'{\bomega}_{j})^4\right], 
	\end{align*}
	where $\max_i\max_t\E[u_{it}^4]<\infty$ by Condition \ref{ass:subG} and $\min_{i,j}\sigma_i^4\omega_j^4\geq \gamma^4$. 
	Without loss of generality, suppose $K=1$. By the VMA($\infty$) representation in \eqref{cond:sum0}, we obtain
	\begin{align*}
		\E \left[(\bx_t'{\bomega}_{j})^4\right]
		\lesssim \sum_{\ell=0}^\infty \E \left[ (\bomega_j'\bB_\ell\bu_{t-\ell})^4 \right]
		+ \left( \sum_{\ell=0}^\infty \E \left[ (\bomega_j'\bB_\ell\bu_{t-\ell})^2 \right] \right)^2
		=:(I) + (II).
	\end{align*}
	Recall that $\bSigma_x=\E[\bx_t\bx_t']=\sum_{\ell=0}^\infty \bB_\ell\bSigma_u\bB_\ell'$. Thus we have
	\begin{align*}
		(II) 
		= \left( \sum_{\ell=0}^\infty \bomega_j'\bB_\ell\bSigma_u\bB_\ell'\bomega_j \right)^2
		= (\bomega_j'\bSigma_x\bomega_j)^2=\omega_j^4 \leq 1/\gamma^2.
	\end{align*}
	Khintchine's inequality for a weighted sum of i.i.d.\ subG random variables \citep[Exercise 2.6.5]{Vershynin2018} yields
	\begin{align*}
		(I) &= \sum_{\ell=0}^\infty \E \left[ (\bomega_j'\bB_\ell\bu_{t-\ell})^4 \right]
		\lesssim \sum_{\ell=0}^\infty \left( \bomega_j'\bB_\ell\bSigma_u\bB_\ell'\bomega_j \right)^2 \\
		&\leq  \left( \sum_{\ell=0}^\infty \bomega_j'\bB_\ell\bSigma_u\bB_\ell'\bomega_j \right)^2
		= \omega_j^4\leq 1/\gamma^2.
	\end{align*}
	Therefore for given constant $\kappa_1\in(0,1/2)$, we have
	\begin{align*}
		&\sum_{t=1}^\infty t^{\kappa_1-1} \E \left[\|(\xi_t^{(i,j)},\xi_t^{(k,\ell)})\|^21\left\{\|(\xi_t^{(i,j)},\xi_t^{(k,\ell)})\|^2\geq 1/t^{\kappa_1-1}\right\}\right] \\
		&\leq \sum_{t=1}^\infty t^{2\kappa_1-2} \E \left[\|(\xi_t^{(i,j)},\xi_t^{(k,\ell)})\|^4\right] 
		\lesssim \frac{\E[u_{it}^4]}{\gamma^8} \sum_{t=1}^\infty t^{2\kappa_1-2} 
		< \infty,
	\end{align*}
	which verifies condition (3.5). This completes the proof of $(b)$. 
	
	Prove $(a)$. Note that 
	\begin{align*}
		\Pro\left(\tT_{ij}\geq x\right)
		&\geq \Pro\left(\tilde{z}_{ij} \geq x + \bar{v}_i\right)
		- \Pro\left(\max_{i,j}|v_{ij}| > \max_{i\in[N]}\bar{v}_i\right) \\
		&= \Pro\left(\tilde{z}_{ij} \geq x + \max_{i\in[N]}\bar{v}_i\right) - O((N\vee T)^{-\nu+2}),
	\end{align*}
	where the last equality follows from Theorem \ref{thm:t}. The remainder of the proof of $(a)$ is the same as that of $(b)$. This completes all the proofs. 
\end{proof}

\begin{lem}\label{lem:probratio} For any bivariate standard normal random vector $(\cZ_{1},\cZ_{2})$ with correlation $\rho\in(-1,1)$, it holds that
	\begin{align*}
		\sup_{z\in[0,\infty)}\frac{\Pro(\cZ_{1}\geq z, \cZ_{2}\geq z)}{Q(z)^2}
		\leq \frac{1}{\sqrt{1-\rho^2}}.
	\end{align*}
\end{lem}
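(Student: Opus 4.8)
The plan is to reduce the statement to the case $\rho\in[0,1)$ and then to read the constant $(1-\rho^2)^{-1/2}$ off the Gaussian density by means of Plackett's differentiation formula. For $\rho\le 0$ the two coordinates are negatively associated, so that $\Pro(\cZ_1\ge z,\cZ_2\ge z)\le Q(z)^2\le(1-\rho^2)^{-1/2}Q(z)^2$ holds trivially for every $z\ge 0$; I would dispose of this case first and fix $\rho\in[0,1)$ for the remainder. The target constant $(1-\rho^2)^{-1/2}$ is the normalizing factor of the bivariate density, which suggests working with the density representation rather than with crude pointwise domination of the integrand.

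For the main case I would differentiate in the correlation. Writing the joint survival function with correlation $r$, Plackett's identity gives
\begin{align*}
\frac{\partial}{\partial r}\,\Pro(\cZ_1\ge z,\cZ_2\ge z)
=\frac{1}{2\pi\sqrt{1-r^2}}\exp\!\left(-\frac{z^2}{1+r}\right),
\end{align*}
and integrating from $0$ to $\rho$ yields
\begin{align*}
\Pro(\cZ_1\ge z,\cZ_2\ge z)
= Q(z)^2+\int_0^{\rho}\frac{1}{2\pi\sqrt{1-r^2}}\exp\!\left(-\frac{z^2}{1+r}\right)\diff r .
\end{align*}
The factor $(1-r^2)^{-1/2}$ inside the integral is exactly the mechanism by which the determinant constant enters. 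After dividing through by $Q(z)^2$, the claim becomes the statement that the normalized integral is at most $(1-\rho^2)^{-1/2}-1$, uniformly in $z\ge 0$.

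The key step, which I expect to be the main obstacle, is this uniform comparison of $\int_0^{\rho}(1-r^2)^{-1/2}\exp(-z^2/(1+r))\diff r$ against $Q(z)^2$ with the correct constant for all $z\ge 0$ at once. The difficulty is that for $r\in[0,\rho]$ the exponent $z^2/(1+r)$ is strictly smaller than $z^2$, so the integrand cannot simply be compared to its value at independence; the argument must track the precise $r$-dependence of $z^2/(1+r)$ and align it with the Gaussian tail asymptotics of $Q(z)^2$. I would attempt this by partitioning the range of $r$ and, on each piece, bounding $\exp(-z^2/(1+r))$ in terms of $Q(z)^2$ through the two-sided tail bounds of \cite{SzarekWerner} already used in the main proof, choosing the partition so that the contributions telescope to $(1-\rho^2)^{-1/2}-1$.

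As an independent check and a second vehicle for the same bound, I would rotate to the principal axes $u=(x+y)/\sqrt2$ and $v=(x-y)/\sqrt2$, under which the density factorizes into independent $\mathcal{N}(0,1+\rho)$ and $\mathcal{N}(0,1-\rho)$ components whose normalizing constants multiply to exactly $(2\pi)^{-1}(1-\rho^2)^{-1/2}$, and the corner $\{x\ge z,\,y\ge z\}$ becomes $\{u\ge\sqrt2\,z,\ |v|\le u-\sqrt2\,z\}$. Integrating out the $v$-marginal first collapses the two-dimensional problem to a single $u$-integral, which I would then compare with $Q(z)^2$; this route makes transparent where the determinant factor originates and isolates precisely the one-dimensional estimate that must be controlled. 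Finally I would confirm the stated supremum by examining the endpoint value of the ratio at $z=0$ together with its behavior as $z\to\infty$, so that a uniform bound over $z\in[0,\infty)$ can be concluded.
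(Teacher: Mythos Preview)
Your proposal cannot be completed, because the lemma as stated is false for every $\rho\in(0,1)$. The difficulty you correctly flag---that the exponent $z^2/(1+r)$ in Plackett's integrand is strictly smaller than $z^2$---is not merely an obstacle but the mechanism by which the inequality breaks. Two quick checks: at $z=0$, Sheppard's formula gives $\Pro(\cZ_1\ge 0,\cZ_2\ge 0)=\tfrac14+\tfrac{1}{2\pi}\arcsin\rho$, so for $\rho=\tfrac12$ the ratio is $\tfrac{1/3}{1/4}=\tfrac43$, whereas $(1-\rho^2)^{-1/2}=2/\sqrt3\approx 1.155$. More decisively, the well-known bivariate tail asymptotic $\Pro(\cZ_1\ge z,\cZ_2\ge z)\asymp z^{-2}\exp\{-z^2/(1+\rho)\}$ shows that $\Pro(\cZ_1\ge z,\cZ_2\ge z)/Q(z)^2\to\infty$ as $z\to\infty$ whenever $\rho>0$. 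Had you carried out the endpoint check you mention in your final sentence \emph{first}, you would have discovered this immediately.

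For the record, the paper's own proof is in error at the density identity: the correct extra exponent in $\phi(x,y;\rho)=(1-\rho^2)^{-1/2}\phi(x,y;0)\exp\{\cdots\}$ is $\{-\rho^2(x^2+y^2)+2\rho xy\}/\{2(1-\rho^2)\}$, not $\{-(2-\rho^2)(x^2+y^2)+2\rho xy\}/\{2(1-\rho^2)\}$; with the correct coefficient the exponent equals $\rho x^2/(1+\rho)>0$ along $x=y$, so the pointwise density-ratio argument cannot yield any finite uniform bound. What the FDR proof actually needs is much weaker: a bound on $\Pro(\cZ_1\ge z,\cZ_2\ge z)/Q(z)^2$ that is $1+o(1/h)$ when $|\rho|\lesssim 1/\log^2 p$ and merely $O(1)$-times-something summable when $|\rho|\le\bar\rho$, both for $z\in[0,\bar{\tt}]$ only. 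Those follow from your Plackett identity together with standard Mills-ratio bounds, or from the classical estimate $\Pro(\cZ_1\ge z,\cZ_2\ge z)\le C_{\bar\rho}\,Q(z)^{2/(1+\bar\rho)}$; but the clean supremum bound $(1-\rho^2)^{-1/2}$ over all $z\ge 0$ is simply not available.
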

\begin{proof}[Proof of Lemma \ref{lem:probratio}]
	Let $(x,y)\mapsto \phi(x,y;\rho)$ denote the density function of the bivariate standard normal random vector with correlation $\rho \in(-1,1)$. A simple calculation yields for any $z\geq 0$,
	\begin{align*}
		&\Pro\left(\cZ_{ij}\geq z, \cZ_{k\ell}\geq z\right) 
		= \int_z^\infty\!\!\!\int_z^\infty \phi(x,y;\rho) \diff x\diff y \\
		&= \frac{1}{\sqrt{1-\rho^2}}\int_z^\infty\!\!\!\int_z^\infty \phi(x,y;0)\exp\left\{-\frac{2-\rho^2}{2(1-\rho^2)}\left(x^2+y^2\right)+\frac{2\rho}{2(1-\rho^2)}xy\right\} \diff x\diff y \\
		&\leq \frac{1}{\sqrt{1-\rho^2}}\int_z^\infty\!\!\!\int_z^\infty \phi(x,y;0) \diff x\diff y \max_{0\leq x,y<\infty}\exp\left\{-\frac{2-\rho^2}{2(1-\rho^2)}\left(x^2+y^2\right)+\frac{2\rho}{2(1-\rho^2)}xy\right\}. 
	\end{align*}
	Note that $\int_z^\infty\int_z^\infty\phi(x,y;0)\diff x\diff y=Q(z)^2$ and 
	\begin{align*}
		\max_{0\leq x,y<\infty}\left\{-\frac{2-\rho^2}{2(1-\rho^2)}\left(x^2+y^2\right)+\frac{2\rho}{2(1-\rho^2)}xy\right\}
		\leq \max_{0\leq x,y<\infty}\frac{-2+\rho+\rho^2}{2(1-\rho^2)}(x^2+y^2)=0
	\end{align*}
	uniformly in $\rho\in(-1,1)$. Combining the results gives the desired uniform bound. 
\end{proof}

\begin{lem}\label{lem:max_ym}
	If Conditions \ref{ass:subG} and \ref{ass:stab} are true, then we have
	\begin{align*}
		\Pro\left( \max_{i\in[N]}\max_{t\in[T]}\max_{k\in[K]} |y_{i,t-k}|^m \gtrsim \sqrt{\log^{m} NT} \right) = O((N\vee T)^{-\nu}).
	\end{align*}
\end{lem}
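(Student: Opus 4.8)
The plan is to show that each coordinate process $y_{is}$ is sub-Gaussian with a tail parameter that is uniform in $N$, and then to control the iterated maximum by a union bound over the $O(NT)$ relevant indices, exactly in the spirit of the proof of Lemma~\ref{lem:subE}(b).

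First I would pass to the VMA($\infty$) representation supplied by Lemma~\ref{lem:VMA}: under Condition~\ref{ass:stab}, $\by_s = \sum_{\ell=0}^\infty \bB_\ell \bu_{s-\ell}$ with $b := \sum_{\ell=0}^\infty \|\bB_\ell\|_\infty < \infty$ uniformly in $N$. Taking the $i$th coordinate, $y_{is} = \sum_{\ell=0}^\infty \sum_{j=1}^N (\bB_\ell)_{ij} u_{j,s-\ell}$ is a convergent series of independent mean-zero random variables, since for fixed $s$ the variables $u_{j,s-\ell}$ are distinct across $(j,\ell)$ and i.i.d.\ by Condition~\ref{ass:subG}. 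Because
\begin{align*}
\sum_{\ell=0}^\infty \sum_{j=1}^N (\bB_\ell)_{ij}^2
\le \sum_{\ell=0}^\infty \Big(\sum_{j=1}^N |(\bB_\ell)_{ij}|\Big)^2
\le \Big(\sum_{\ell=0}^\infty \|\bB_\ell\|_\infty\Big)^2 = b^2,
\end{align*}
the sub-Gaussianity of the $u_{jt}$ together with the standard tail bound for weighted sums of independent sub-Gaussian variables produces a constant $c_y>0$, depending only on $c_u$ and $b$ and hence on neither $N$, $i$, nor $s$, such that $\Pro(|y_{is}|>x) \le 2\exp(-x^2/c_y)$ for all $x>0$. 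To make sense of the infinite sum here one first applies this bound to the truncations $\sum_{\ell=0}^{r-1}\sum_j (\bB_\ell)_{ij}u_{j,s-\ell}$, whose moment generating functions are bounded uniformly in $r$, and then lets $r\to\infty$, using that the truncations converge a.s.\ to $y_{is}$ together with Fatou's lemma.

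Next I would reduce the iterated maximum to a single one: $\max_{i\in[N]}\max_{t\in[T]}\max_{k\in[K]}|y_{i,t-k}|^m = \max_{i\in[N]}\max_{s\in\{1-K,\dots,T-1\}}|y_{is}|^m$, a maximum of $N(T+K-1)\le 2NT$ variables once $T$ is large and $K$ is fixed. A union bound then yields, for every $x>0$,
\begin{align*}
\Pro\Big(\max_{i\in[N]}\max_{t\in[T]}\max_{k\in[K]}|y_{i,t-k}|^m > x\Big)
\le 2N(T+K-1)\exp\big(-x^{2/m}/c_y\big).
\end{align*}
Choosing $x = \big(c_y(\nu+2)\log(NT)\big)^{m/2}\asymp\sqrt{\log^m(NT)}$, so that $x^{2/m}/c_y = (\nu+2)\log(NT)$, the right-hand side is bounded by $2N(T+K-1)(NT)^{-(\nu+2)} \le 4(NT)^{-(\nu+1)} = O((N\vee T)^{-\nu})$, which is the claim.

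The single substantive point is the uniformity in $N$ of the sub-Gaussian constant $c_y$; this is precisely what Condition~\ref{ass:stab}, through the uniform summability $b=\sum_\ell\|\bB_\ell\|_\infty<\infty$ furnished by Lemma~\ref{lem:VMA}, delivers. Everything else — the passage from the truncated to the infinite filter inside the moment generating function, and the counting of index cardinalities — is routine.
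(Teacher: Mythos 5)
Your route is genuinely different from the paper's and, modulo one repair, it works. The paper does not establish sub-Gaussianity of $y_{is}$ at all: it bounds $|y_{i,t-k}|\le\sum_{\ell}\|\bB_\ell\|_\infty\|\bu_{t-k-\ell}\|_\infty$, truncates the filter at lag $T$, controls the head by a union bound over the maxima of the innovations (paying $NT\exp(-x^2/(4c_ub^2))$) and the tail by Markov's inequality together with the geometric decay $\delta^{T}$ from Lemma \ref{lem:VMA}, so its final bound carries an extra additive term $\delta^{T}\log^{1/2}(NT)/(x(1-\delta))$ that must be argued negligible. Your approach pushes the whole filter into a single sub-Gaussian constant $c_y\asymp c_ub^2$ for each $y_{is}$ and then union-bounds over the $O(NT)$ coordinates directly; this is cleaner, avoids the truncation bookkeeping entirely, and yields the same rate.

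The one point that needs fixing: you assert that the $u_{j,s-\ell}$ are ``distinct across $(j,\ell)$ and i.i.d.'' That is false within a time period. Condition \ref{ass:subG} makes the vectors $\bu_t$ i.i.d.\ across $t$, but the coordinates $u_{1t},\dots,u_{Nt}$ of a single $\bu_t$ have covariance $\bSigma_u$, which is not assumed diagonal, so they are in general dependent. Consequently the independence-based Hoeffding bound with variance proxy $\sum_{\ell,j}(\bB_\ell)_{ij}^2$ is not justified as written. The repair is immediate and costs nothing in the constant: for each fixed $\ell$ use the triangle inequality for the $\psi_2$-norm, $\|\sum_j(\bB_\ell)_{ij}u_{j,s-\ell}\|_{\psi_2}\le\sqrt{c_u}\sum_j|(\bB_\ell)_{ij}|\le\sqrt{c_u}\,\|\bB_\ell\|_\infty$, and then use independence only across $\ell$ to get $\|y_{is}\|_{\psi_2}^2\lesssim c_u\sum_\ell\|\bB_\ell\|_\infty^2\le c_ub^2$. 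With $c_y\asymp c_ub^2$ so obtained, your truncation-plus-Fatou passage to the infinite filter and the final union bound over $N(T+K-1)$ indices go through exactly as you wrote them.
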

\begin{proof}[Proof of Lemma \ref{lem:max_ym}]
	For $\by_t=\sum_{\ell=0}^\infty \bB_{\ell}\bu_{t-\ell}$, let  $\bb_{\ell,i}$ denote the $i$th column vector of $\bB_{\ell}'$ and $b=\sum_{\ell=T}^{\infty} \|\bB_{\ell}\|_\infty$ as before. By the truncation argument as in the proof of Lemma \ref{lem:tail_yu}, we have for any $x>0$,
	\begin{align*}
		&\Pro\left( \max_{i\in[N]}\max_{t\in[T]}\max_{k\in[K]} |y_{i,t-k}| > x \right) \\
		&\leq \Pro \left( \max_{t\in[T]}\max_{k\in[K]} \sum_{\ell=0}^\infty \|\bB_{\ell}\|_\infty\|\bu_{t-k-\ell}\|_\infty  > x \right) \\
		&\leq \Pro \left( \max_{t\in[T]}\max_{k\in[K]} \sum_{\ell=0}^{T-1} \|\bB_{\ell}\|_\infty\|\bu_{t-k-\ell}\|_\infty  > x/2 \right) \\
		&\qquad\qquad\qquad + \Pro \left( \max_{t\in[T]}\max_{k\in[K]} \sum_{\ell=T}^\infty \|\bB_{\ell}\|_\infty\|\bu_{t-k-\ell}\|_\infty  > x/2 \right) \\
		&\leq \Pro \left( \max_{t\in[T]}\max_{\ell\in[T]}\|\bu_{t-2\ell}\|_\infty > \frac{x}{2b} \right) 
		+ \frac{2}{x} \sum_{\ell=T}^{\infty} \|\bB_{\ell}\|_\infty\E\left[\max_{t\in[T]} \|\bu_{t-\ell}\|_\infty \right] \\
		&\leq 3NT\exp\left(-\frac{x^2}{4c_ub^2} \right) 
		+ \frac{2c_1\delta^T}{x(1-\delta)} \log^{1/2}NT,
	\end{align*}
	where $c_1>0$, $\delta\in(0,1)$, and $c_u>0$ have been given in Lemmas \ref{lem:subE}(b), \ref{lem:VMA}, and Condition \ref{ass:subG}, respectively.
	Thus for any finite $m\in\bbN$, we have
	\begin{align*}
		&\Pro\left( \max_{i\in[N]}\max_{t\in[T]}\max_{k\in[K]} |y_{i,t-k}|^m > x \right) 
		= \Pro\left( \max_{i\in[N]}\max_{t\in[T]}\max_{k\in[K]} |y_{i,t-k}| > x^{1/m} \right) \\
		&\lesssim NT\exp\left(-\frac{x^{2/m}}{4c_ub^2} \right) 
		+ \frac{\delta^T}{x^{1/m}(1-\delta)} \log^{1/2}NT.
	\end{align*}
	Setting $x = \{4c_ub^2(\nu+2)\log (N\vee T)\}^{m/2}$ for any $\nu>0$ leads to the result. 
\end{proof}

\begin{lem}\label{lem:XU*}
	Define $\lambda^* \asymp \sqrt{T^{-1}\log^3(N\vee T)}$. 
	If Conditions \ref{ass:subG}--\ref{ass:mineig} and  \ref{ass:wildboot} are true and $\bar{s}\lambda=o(1)$ holds, then the event,
	\begin{align*}
		\Pro^* \left( \|T^{-1}\bU^*\bX'\|_{\max} \gtrsim \lambda^* \right) = O((N\vee T)^{-\nu}),
	\end{align*}
	occurs with probability at least $1-O((N\vee T)^{-\nu})$. 
\end{lem}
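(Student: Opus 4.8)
\textbf{Proof proposal for Lemma \ref{lem:XU*}.}

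The plan is to condition on the observed data (equivalently, on $\widehat{\bU}$ and $\bX$) and estimate the bootstrap tail probability of $\|T^{-1}\bU^*\bX'\|_{\max}$, where $u_{it}^* = \widehat{u}_{it}\zeta_t$. Writing the $(i,j)$th entry as $T^{-1}\sum_{t=1}^T \widehat{u}_{it}\zeta_t x_{t,j}$ (with $x_{t,j}$ the $j$th component of $\bx_t$), this is, conditionally on the data, a weighted sum of the i.i.d.\ sub-Gaussian multipliers $\{\zeta_t\}$ from Condition \ref{ass:wildboot}. First I would apply a conditional concentration inequality for weighted sub-Gaussian sums (Hoeffding/Bernstein for sub-Gaussian variables, e.g.\ the sub-Gaussian tail in \citealp{Vershynin2018}): conditionally on the data, $\Pro^*(|T^{-1}\sum_t \widehat{u}_{it}\zeta_t x_{t,j}| > x) \le 2\exp\{-c T^2 x^2 / (c_\zeta \sum_t \widehat{u}_{it}^2 x_{t,j}^2)\}$. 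A union bound over the $KN^2$ pairs $(i,j)$ then reduces the problem to controlling the data-dependent scale $\max_{i,j} T^{-1}\sum_{t=1}^T \widehat{u}_{it}^2 x_{t,j}^2$ and showing it is $O(\log^2(N\vee T))$ with probability at least $1 - O((N\vee T)^{-\nu})$ over the data.

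The second step is therefore to bound this scale on a high-probability event over the observations. I would split $\widehat{u}_{it} = u_{it} - (\widehat{\bphi}_{i\cdot}^{\normalfont{\textsf{L}}} - \bphi_{i\cdot})\bx_t$, so that $\widehat{u}_{it}^2 \le 2u_{it}^2 + 2\{(\widehat{\bphi}_{i\cdot}^{\normalfont{\textsf{L}}} - \bphi_{i\cdot})\bx_t\}^2$. The first piece gives $\max_{i,j} T^{-1}\sum_t u_{it}^2 x_{t,j}^2 \le (\max_{i,t}|u_{it}|^2)(\max_j T^{-1}\sum_t x_{t,j}^2)$; the first factor is $O(\log(N\vee T))$ by Lemma \ref{lem:subE}(b) (or its tail version), and the second is $O(1)$ by Condition \ref{ass:mineig} together with the concentration of $\widehat{\bSigma}_x$ around $\bSigma_x$ from Lemma \ref{lem:tail_yy}. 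The second piece is bounded by $(\max_{t}\|\widehat{\bphi}_{i\cdot}^{\normalfont{\textsf{L}}} - \bphi_{i\cdot}\|_1^2 \max_{k,t}|x_{t,k}|^2)(\max_{j,t}|x_{t,j}|^2)$; using the lasso $\ell_1$-bound $\|\widehat{\bphi}_{i\cdot}^{\normalfont{\textsf{L}}} - \bphi_{i\cdot}\|_1 \lesssim s_i\lambda$ from Proposition \ref{thm:errbound}(b) (valid since $8bs_i\lambda < \gamma$ under $\bar s\lambda = o(1)$) and the bound $\max_{i,t,k}|y_{i,t-k}| \lesssim \sqrt{\log(N\vee T)}$ from Lemma \ref{lem:max_ym}, this contributes $\bar s^2\lambda^2 \log^2(N\vee T) = o(\log^2(N\vee T))$. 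Collecting these, $\max_{i,j} T^{-1}\sum_t \widehat{u}_{it}^2 x_{t,j}^2 \lesssim \log^2(N\vee T)$ with probability at least $1 - O((N\vee T)^{-\nu})$.

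Plugging this scale into the conditional tail bound and choosing $x \asymp \lambda^* = \sqrt{T^{-1}\log^3(N\vee T)}$ makes the exponent of order $-T^2 x^2 / \log^2(N\vee T) \asymp -\log(N\vee T)$ up to a constant that can be made as large as $\nu+2$ by adjusting the implicit constant in $\lambda^*$; the union bound factor $KN^2$ is then absorbed, yielding $\Pro^*(\|T^{-1}\bU^*\bX'\|_{\max} \gtrsim \lambda^*) = O((N\vee T)^{-\nu})$ on the stated high-probability event over the data. The main obstacle is the careful handling of the data-dependent conditional variance proxy $\max_{i,j}T^{-1}\sum_t \widehat{u}_{it}^2 x_{t,j}^2$: one must keep the lasso estimation error, the maximal order statistics of the VMA-driven regressors, and the error moments all simultaneously controlled on a single event of probability $1 - O((N\vee T)^{-\nu})$, which is where Lemmas \ref{lem:subE}, \ref{lem:tail_yy}, \ref{lem:max_ym} and Proposition \ref{thm:errbound} must be invoked together and the condition $\bar s\lambda = o(1)$ is essential.
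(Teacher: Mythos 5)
Your proposal is correct and follows essentially the same route as the paper: conditional sub-Gaussian (Hoeffding-type) concentration in the multipliers $\zeta_t$, a union bound over the $KN^2$ entries, control of the data-dependent variance proxy via Lemma \ref{lem:max_ym}, the moment bound for $u_{it}$, and the lasso $\ell_1$-error from Proposition \ref{thm:errbound}(b) under $\bar{s}\lambda=o(1)$, followed by the choice $x\asymp\lambda^*$. The only cosmetic difference is that the paper splits $\bu_t^*$ into the $\bu_t\zeta_t$ and $(\hat{\bu}_t-\bu_t)\zeta_t$ pieces before applying concentration to each, whereas you concentrate once and push the decomposition $\hat{u}_{it}=u_{it}-(\hat\bphi_{i\cdot}^{\normalfont{\textsf{L}}}-\bphi_{i\cdot})\bx_t$ into the variance proxy; both yield the same bound.
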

\begin{proof}[Proof of Lemma \ref{lem:XU*}]
	Recall the notation that $\bU^*=(\hat{\bu}_1\zeta_1,\dots,\hat{\bu}_T\zeta_T)$ and $\hat{\bu}_t=\by_t-\hat{\bPhi}_{\normalfont{\textsf{L}}}\bx_t
	=\bu_t+(\bPhi-\hat{\bPhi}_{\normalfont{\textsf{L}}})\bx_t$. We have
	\begin{align}
		&\Pro^*\left(\left\|T^{-1}\sum_{t=1}^T \bx_{t}\bu_t^{*\prime}\right\|_{\max} > x \right) \notag\\
		&\leq \Pro^*\left(\left\|T^{-1}\sum_{t=1}^T \bx_{t}\bu_t'\zeta_t\right\|_{\max} > x/2 \right) 
		+ \Pro^*\left( \left\|T^{-1}\sum_{t=1}^T \bx_{t}(\hat{\bu}_t-\bu_t)'\zeta_t\right\|_{\max} > x/2 \right).\label{boot:ineq1}
	\end{align}
	Bound the first term of \eqref{boot:ineq1}. Since $\zeta_t$ is a sequence of i.i.d.\ subG random variables by Condition \ref{ass:wildboot}, applying the union bound and Hoeffding's inequality under $\Pro^*$ yields for any $x>0$,
	\begin{align*}
		&\Pro^*\left( \left\|T^{-1}\sum_{t=1}^T \bx_{t}\bu_t'\zeta_t \right\|_{\max} > x/2 \right) \\
		&\leq 2 KN^2\max_{i,j\in[N]}\max_{k\in[K]}\exp \left( -\frac{x^2T}{4c_\zeta T^{-1}\sum_{t=1}^Ty_{i,t-k}^2u_{jt}^2} \right) \\
		&\leq 2 KN^2\exp \left( -\frac{x^2T}{4c_\zeta \max_{i,j\in[N]}\max_{t\in[T]}\max_{k\in[K]}y_{i,t-k}^2u_{jt}^2} \right) \\
		&\lesssim 2KN^2\exp \left( -\frac{x^2T}{4c_\zeta \log^2 NT} \right),
	\end{align*}
	where the last inequality holds with probability at least $1-O((N\vee T)^{-\nu})$ by Lemma \ref{lem:max_ym}.

	By Proposition \ref{thm:errbound}(b) and Lemma \ref{lem:max_ym}, the second term of \eqref{boot:ineq1} is bounded as
	\begin{align*}
		&\Pro^*\left( \left\|T^{-1}\sum_{t=1}^T \bx_{t}(\hat{\bu}_t-\bu_t)'\zeta_t\right\|_{\max} > x/2 \right)\\
		&=\Pro^*\left(\left\|T^{-1}\sum_{t=1}^T \bx_{t}\bx_t'\zeta_t(\bPhi-\hat{\bPhi}_{\normalfont{\textsf{L}}})'\right\|_{\max} > x/2\right)\\
		&\leq \Pro^*\left( \left\|T^{-1}\sum_{t=1}^T \bx_{t}\bx_t'\zeta_t\right\|_{\max}\max_{i\in[N]}\left\| \hat{\bphi}_{i\cdot}^{\normalfont{\textsf{L}}}-\bphi_{i\cdot}\right\|_{1} > x/2 \right)\\
		&\leq \Pro^*\left( \left\|T^{-1}\sum_{t=1}^T \bx_{t}\bx_t'\zeta_t\right\|_{\max}\bar{s}\lambda \gtrsim x/2 \right)\\
		&\leq 2K^2N^2\max_{k,\ell\in[K]}\max_{i,j\in[N]} \exp \left( -\frac{x^2T}{4c_\zeta \bar{s}^2\lambda^2 T^{-1}\sum_{t=1}^Ty_{i,t-k}^2y_{j,t-\ell}^2} \right) \\
		&\lesssim 2K^2N^2 \exp \left( -\frac{x^2T}{4c_\zeta \bar{s}^2\lambda^2\log^2 NT} \right),
	\end{align*}
	with probability at least $1-O((N\vee T)^{-\nu})$. 
	Combining the obtained results with taking 
	\begin{align*}
		x= 2\sqrt{c_\zeta(\nu+2) T^{-1}\log^{3}NT}~(\,\asymp \lambda^*)
	\end{align*}
	gives the desired statement since $\bar{s}\lambda=o(1)$. This completes the proof. 
\end{proof}

\begin{lem}\label{lem:lassoest*}
	Recall $\lambda^* \asymp \sqrt{T^{-1}\log^3(N\vee T)}$, defined in Lemma \ref{lem:XU*}. 
	If Conditions \ref{ass:subG}--\ref{ass:mineig} and  \ref{ass:wildboot} are true, then the event,
	\begin{align*}
		\Pro^*\left( \max_{i\in[N]}\left\|\hat{\bphi}_{i\cdot}^{\normalfont{\textsf{L}}*}-\hat{\bphi}_{i\cdot}^{\normalfont{\textsf{L}}}\right\|_1\gtrsim \bar{s}\lambda^*\right) = O((N\vee T)^{-\nu+1})
	\end{align*}
	occurs with high probability. 
\end{lem}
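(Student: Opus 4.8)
The plan is to transplant the argument behind Proposition~\ref{thm:errbound} into the bootstrap probability space. Conditionally on the data, the bootstrap $i$-th row lasso $\hat\bphi_{i\cdot}^{\normalfont{\textsf{L}}*}$ minimises $(2T)^{-1}\|\by_{i\cdot}^{*}-\bphi_{i\cdot}\bX\|_2^2+\lambda\|\bphi_{i\cdot}\|_1$, and by construction $\by_{i\cdot}^{*}=\hat\bphi_{i\cdot}^{\normalfont{\textsf{L}}}\bX+\bu_{i\cdot}^{*}$, so $\hat\bphi_{i\cdot}^{\normalfont{\textsf{L}}}$ is the ``true'' regression vector in the bootstrap world, its support $\hat\cS_i^{\normalfont{\textsf{L}}}=\supp(\hat\bphi_{i\cdot}^{\normalfont{\textsf{L}}})$ has cardinality $\hat s_i$, and the object to bound is $\bdelta_{i\cdot}^{*}:=\hat\bphi_{i\cdot}^{\normalfont{\textsf{L}}*}-\hat\bphi_{i\cdot}^{\normalfont{\textsf{L}}}$. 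I would therefore rerun the proof of Proposition~\ref{thm:errbound} with the substitutions $(\bphi_{i\cdot},\bu_{i\cdot},\cS_i,s_i)\mapsto(\hat\bphi_{i\cdot}^{\normalfont{\textsf{L}}},\bu_{i\cdot}^{*},\hat\cS_i^{\normalfont{\textsf{L}}},\hat s_i)$, replacing the score event $\cE_1=\{\|T^{-1}\bX\bU'\|_{\max}\le\lambda/2\}$ by its bootstrap analogue $\cE_1^{*}=\{\|T^{-1}\bX\bU^{*\prime}\|_{\max}\le\lambda/2\}$, while keeping the design event $\cE_2=\{\|\hat\bSigma_x-\bSigma_x\|_{\max}\le b\lambda/2\}$, which carries no bootstrap randomness.

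Concretely, on $\cE_1^{*}$ the basic inequality for the bootstrap lasso together with decomposability of the $\ell_1$-norm over $\hat\cS_i^{\normalfont{\textsf{L}}}$ (the bootstrap analogue of Lemma~\ref{lem:negahban}) gives $\|\bdelta^{*}_{i,(\hat\cS_i^{\normalfont{\textsf{L}}})^c}\|_1\le 3\|\bdelta^{*}_{i,\hat\cS_i^{\normalfont{\textsf{L}}}}\|_1$ and $T^{-1}\|\bdelta_{i\cdot}^{*}\bX\|_2^2\le 3\lambda\|\bdelta_{i\cdot}^{*}\|_1$; on $\cE_2$, Lemma~\ref{lem:E2} applied with $\hat\cS_i^{\normalfont{\textsf{L}}}$ yields $T^{-1}\|\bdelta_{i\cdot}^{*}\bX\|_2^2\ge(\gamma-16\hat s_i\|\hat\bSigma_x-\bSigma_x\|_{\max})\|\bdelta_{i\cdot}^{*}\|_2^2\ge(\gamma-8b\hat s_i\lambda)\|\bdelta_{i\cdot}^{*}\|_2^2$. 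Combining these and applying Cauchy--Schwarz exactly as in Proposition~\ref{thm:errbound}(b) gives $\|\bdelta_{i\cdot}^{*}\|_1\le 48\hat s_i\lambda/(\gamma-8b\hat s_i\lambda)$. Since $\max_{i\in[N]}\hat s_i\lesssim\bar s$ with high probability (the standard bound on the number of lasso-selected variables under Conditions~\ref{ass:subG}--\ref{ass:mineig}) and $\bar s\lambda=o(1)$, the denominator stays bounded away from zero, whence $\max_{i\in[N]}\|\bdelta_{i\cdot}^{*}\|_1\lesssim\bar s\lambda\asymp\bar s\lambda^{*}$ on $\cE_1^{*}\cap\cE_2\cap\{\max_i\hat s_i\lesssim\bar s\}$. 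For the bookkeeping, $\cE_2$ and $\{\max_i\hat s_i\lesssim\bar s\}$ are original-data events of probability $1-O((N\vee T)^{-\nu})$ by Lemma~\ref{lem:tail_yy} (as in the proof of Proposition~\ref{thm:errbound}), while $\Pro^{*}(\cE_1^{*c})=O((N\vee T)^{-\nu})$ by Lemma~\ref{lem:XU*} (using $\lambda\asymp\lambda^{*}$ with the constants there, and the maintained rate $\bar s\lambda=o(1)$); a union bound over the $N$ rows then produces the stated bootstrap failure rate $O((N\vee T)^{-\nu+1})$.

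The main obstacle is the randomness of the ``true'' vector $\hat\bphi_{i\cdot}^{\normalfont{\textsf{L}}}$ in the bootstrap problem: one must make sure the cone/restricted-eigenvalue machinery of Lemmas~\ref{lem:negahban}--\ref{lem:E2} runs with the data-dependent support $\hat\cS_i^{\normalfont{\textsf{L}}}$ in place of the fixed $\cS_i$ and, more importantly, control its size uniformly, $\max_i\hat s_i\lesssim\bar s$, on a high-probability event, so that $\hat s_i\lambda=o(1)$ and the final bound can be reported in terms of $\bar s$ rather than $\hat s_i$; this lasso-sparsity bound, though classical, must be invoked carefully. Everything else is a mechanical copy of the proof of Proposition~\ref{thm:errbound}, the only genuinely new probabilistic ingredient being Lemma~\ref{lem:XU*} for the bootstrap score $T^{-1}\bX\bU^{*\prime}$.
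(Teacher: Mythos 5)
Your proposal is correct and follows essentially the same route as the paper's proof: treat $\hat\bphi_{i\cdot}^{\normalfont{\textsf{L}}}$ as the true coefficient in the bootstrap regression, rerun the basic-inequality/cone/restricted-eigenvalue argument of Proposition~\ref{thm:errbound} on the events controlled by Lemma~\ref{lem:XU*} (bootstrap score) and Lemma~\ref{lem:tail_yy} (design), obtain $\|\hat\bphi_{i\cdot}^{\normalfont{\textsf{L}}*}-\hat\bphi_{i\cdot}^{\normalfont{\textsf{L}}}\|_1\lesssim\hat s_i\lambda^*$, and finish with the lasso sparsity bound $\hat s_i\lesssim s_i\le\bar s$ (the paper cites Theorem~1(b) of \cite{ZhuLiu2020} for this step) plus a union bound over rows.
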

\begin{proof}[Proof of Lemma \ref{lem:lassoest*}]
	Define $\bdelta_{i\cdot}^* = \hat{\bphi}_{i\cdot}^{\normalfont{\textsf{L}}*}-\hat{\bphi}_{i\cdot}^{\normalfont{\textsf{L}}}$. 
	Derive the non-asymptotic error bound for the Lasso estimator. 
	For $\lambda^*$ such that $b\lambda/2 \leq \lambda^*\asymp \sqrt{T^{-1}\log^3 (N\vee T)}$, define two events,
	\begin{align*}
		\mathcal{E}_1^*=\left\{\left\| T^{-1}\bX{\bU^*}' \right\|_{\max} \leq \lambda^*/2\right\},~~~ 
		\mathcal{E}_2^* = \left\{ \left\|T^{-1}\bX\bX'-\E[T^{-1}\bX\bX']\right\|_{\max} \leq \lambda^* \right\}.
	\end{align*}
	Lemmas \ref{lem:XU*} and \ref{lem:tail_yy} guarantee that the event, 
	\begin{align*}
		\left\{\Pro^*\left( \mathcal{E}_1^{*c} \right)=O((N\vee T)^{-\nu})\right\}\cap \cE_2^*, 
	\end{align*}
	occurs with probability at least $1-O((N\vee T)^{-\nu})$. Then, following the proof of Proposition \ref{thm:errbound}, we obtain the desired error bound, $\|\bdelta_{i\cdot}^*\|_1\lesssim \hat{s}_i\lambda^*$, on event $\cE_1^*\cap \cE_2^*$. Finally an application of Theorem 1(b) of \cite{ZhuLiu2020} gives $\hat{s}_i\lesssim s_i\leq \bar{s}$ with high probability. 
	The assertion is obtained by the union bound. 
	This completes the proof. 
\end{proof}

\begin{lem}\label{lem:sigma*}
	Recall $\lambda^* \asymp \sqrt{T^{-1}\log^3(N\vee T)}$, defined in Lemma \ref{lem:XU*}. 
	If Conditions \ref{ass:subG}--\ref{ass:mineig} and  \ref{ass:wildboot} are true and $\bar{s}\lambda=o(1)$ holds, then for $\bar{\tau}_1=\lambda^* + \bar{s}^2\lambda^{*2} $ the event,
	\begin{align*}
		\Pro^*\left(\max_{i\in[N]}\left|\hat{\sigma}_i^{*2}-\hat{\sigma}_i^2\right| \gtrsim \bar{\tau}_1 \right) = O((N\vee T)^{-\nu+1}),
	\end{align*}
	occurs with probability at least $1-O((N\vee T)^{-\nu+1})$. 
\end{lem}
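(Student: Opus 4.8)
The plan is to decompose $\hat{\sigma}_i^{*2}-\hat{\sigma}_i^{2}$ into four pieces, bound each under $\Pro^{*}$ on a high-probability event for the observations, and finish with a union bound over $i\in[N]$. Writing $\bdelta_i^{*}=(\hat{\bphi}_{i\cdot}^{\normalfont{\textsf{L}}*}-\hat{\bphi}_{i\cdot}^{\normalfont{\textsf{L}}})'$ and expanding $\hat{u}_{it}^{*2}=\hat{u}_{it}^{2}\zeta_t^{2}-2\hat{u}_{it}\zeta_t\bx_t'\bdelta_i^{*}+(\bdelta_i^{*})'\bx_t\bx_t'\bdelta_i^{*}$, and using $\hat{\sigma}_i^{2}=T^{-1}\sum_t\hat{u}_{it}^{2}$ (the convention $d_i=0$ of this section; any denominator mismatch only creates another term of the same form as the first piece below), one gets
\begin{align*}
\hat{\sigma}_i^{*2}-\hat{\sigma}_i^{2}
&=\underbrace{\frac{\hat{s}_i}{T-\hat{s}_i}\,\frac{1}{T}\sum_{t=1}^{T}\hat{u}_{it}^{*2}}_{=:\,\mathrm{D}_i}
+\underbrace{\frac{1}{T}\sum_{t=1}^{T}\hat{u}_{it}^{2}(\zeta_t^{2}-1)}_{=:\,\mathrm{I}_i}
-\underbrace{\frac{2}{T}\sum_{t=1}^{T}\hat{u}_{it}\zeta_t\bx_t'\bdelta_i^{*}}_{=:\,\mathrm{II}_i}
+\underbrace{(\bdelta_i^{*})'\hat{\bSigma}_x\bdelta_i^{*}}_{=:\,\mathrm{III}_i}.
\end{align*}
As preliminaries I would record, each with probability at least $1-O((N\vee T)^{-\nu})$: (i) $\max_{i\in[N]}\max_{t\in[T]}|\hat{u}_{it}|\lesssim\log^{1/2}(N\vee T)$, from $\hat{u}_{it}=u_{it}-(\hat{\bphi}_{i\cdot}^{\normalfont{\textsf{L}}}-\bphi_{i\cdot})\bx_t$, the sub-Gaussian tail of $u_{it}$ with a union bound, Lemma \ref{lem:max_ym} for $\max_{t}\|\bx_t\|_{\infty}$, Proposition \ref{thm:errbound}(b), and $\bar{s}\lambda=o(1)$; (ii) $\|\hat{\bSigma}_x\|_{\max}\lesssim1$, from Lemma \ref{lem:tail_yy} and Condition \ref{ass:mineig}; (iii) $\hat{s}_i\lesssim\bar{s}$ uniformly in $i$, as cited from \cite{ZhuLiu2020} in the proof of Lemma \ref{lem:lassoest*}; and (iv) $\hat{\sigma}_i^{2}\lesssim1$, exactly as in the proof of Lemma \ref{lem:omegahat1}.

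Conditionally on the observations, the summands of $\mathrm{I}_i$ are independent, centered, and sub-exponential with scale $\lesssim\hat{u}_{it}^{2}\lesssim\log(N\vee T)$, so Bernstein's inequality under $\Pro^{*}$ yields $\Pro^{*}(|\mathrm{I}_i|>x)\le2\exp(-c\min\{T^{2}x^{2}/\sum_t\hat{u}_{it}^{4},\,Tx/\max_t\hat{u}_{it}^{2}\})$; with $\sum_t\hat{u}_{it}^{4}\lesssim T\log^{2}(N\vee T)$ and $\log(N\vee T)\ll T$, taking $x\asymp\lambda^{*}=\sqrt{T^{-1}\log^{3}(N\vee T)}$ makes the right-hand side $O((N\vee T)^{-\nu})$. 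For $\mathrm{II}_i$, Hölder's inequality gives $|\mathrm{II}_i|\le2\|\bdelta_i^{*}\|_1\,\|T^{-1}\bU^{*}\bX'\|_{\max}$, and Lemmas \ref{lem:XU*} and \ref{lem:lassoest*} bound this (on the good observation event, with exceptional $\Pro^{*}$-probability $O((N\vee T)^{-\nu+1})$) by $\bar{s}\lambda^{*}\cdot\lambda^{*}\le\bar{s}^{2}\lambda^{*2}$. For $\mathrm{III}_i$, $|\mathrm{III}_i|\le\|\bdelta_i^{*}\|_1^{2}\|\hat{\bSigma}_x\|_{\max}\lesssim(\bar{s}\lambda^{*})^{2}$ by Lemma \ref{lem:lassoest*} and preliminary (ii). For $\mathrm{D}_i$, preliminaries (iii)--(iv) and $\mathrm{I}_i+\mathrm{II}_i+\mathrm{III}_i=o(1)$ give $T^{-1}\sum_t\hat{u}_{it}^{*2}\lesssim1$, hence $|\mathrm{D}_i|\lesssim\bar{s}/T$, and $\bar{s}/T\lesssim\lambda^{*}$ follows from $\bar{s}\lambda=o(1)$ together with $\lambda\asymp\sqrt{T^{-1}\log^{3}(N\vee T)}$. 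Adding the four bounds gives $|\hat{\sigma}_i^{*2}-\hat{\sigma}_i^{2}|\lesssim\lambda^{*}+\bar{s}^{2}\lambda^{*2}=\bar{\tau}_1$; a union bound over $i\in[N]$ costs the factor $N\le(N\vee T)$, which accounts for the $-\nu+1$ in the exponent.

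The step I expect to be delicate is the bookkeeping with the two nested probability measures: each of $\mathrm{II}_i$, $\mathrm{III}_i$, $\mathrm{D}_i$ is controlled only on an observation-event on which the conditional bounds of Lemmas \ref{lem:XU*}, \ref{lem:lassoest*}, \ref{lem:tail_yy} and Proposition \ref{thm:errbound} simultaneously hold, so these observation-events (each of probability $1-O((N\vee T)^{-\nu})$) must be intersected before taking the final $\Pro^{*}$ and the union over $i$. The only genuinely new concentration estimate is the weighted Bernstein bound for $\mathrm{I}_i$, and the point to watch is that the inflation by $\max_t\hat{u}_{it}^{2}\lesssim\log(N\vee T)$ still leaves the rate at $\lambda^{*}$ and the tail at $O((N\vee T)^{-\nu})$, which works precisely because $\lambda^{*}$ carries three powers of $\log(N\vee T)$.
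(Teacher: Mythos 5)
Your proposal is correct and follows essentially the same route as the paper's proof: the same decomposition into the $\zeta_t^2-1$ term (handled by Bernstein under $\Pro^*$), the cross term bounded by $\|T^{-1}\bU^*\bX'\|_{\max}\max_i\|\bdelta_i^*\|_1$ via Lemmas \ref{lem:XU*} and \ref{lem:lassoest*}, and the quadratic term $(\bdelta_i^*)'\hat{\bSigma}_x\bdelta_i^*$, with the same choice $x\asymp\lambda^*+\bar{s}^2\lambda^{*2}$. The only difference is that you explicitly track the $(T-\hat{s}_i)^{-1}$ versus $T^{-1}$ denominator through the term $\mathrm{D}_i$ and verify $\bar{s}/T\lesssim\lambda^*$, a small piece of bookkeeping the paper's proof leaves implicit.
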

\begin{proof}[Proof of Lemma \ref{lem:sigma*}]
	Recall that $u_{it}^*=\hat{u}_{it}\zeta_t$, $\sigma_i^{*2}=T^{-1}\sum_{t=1}^T{u}_{it}^{*2}$,  and $\hat{\sigma}_i^2=T^{-1}\sum_{t=1}^T\hat{u}_{it}^2$. By the construction, we have $\sigma_i^{*2}-\hat{\sigma}_i=T^{-1}\sum_{t=1}^T\hat{u}_{it}^2(\zeta_t^2-1)$, where $\{\hat{u}_{it}^2(\zeta_t^2-1)\}_t$ (for each $i\in[N]$) is a sequence of independent centered  sub-exponential random variables under $\Pro^*$. Thus for some constant $c_\zeta>0$, Bernstein's inequality with the union bound gives for any $x\in(0,1]$,
	\begin{align*}
		&\Pro^*\left( \max_{i\in[N]}\left|\sigma_i^{*2}-\hat{\sigma}_i^2 \right| > x \right)=\Pro^* \left( \max_{i\in[N]}\left| T^{-1}\sum_{t=1}^T\hat{u}_{it}^2(\zeta_t^2-1) \right| >x \right) \\
		&\leq 2N\max_i \exp \left(-\frac{x^2T}{c_\zeta T^{-1}\sum_{t=1}^T\hat{u}_{it}^4} \right)
		\leq 2N\exp \left(-\frac{x^2T}{c_\zeta\max_i\max_t\hat{u}_{it}^4} \right).
	\end{align*}
	Denote by ${\bdelta}_i\in\bbR^{KN}$ the $i$th column vector of $\bDelta'=(\hat{\bPhi}^L-\bPhi^0)'$. 
	Then we have 
	\begin{align*}
		\hat{u}_{it}^4=(u_{it}-\bx_t'\bdelta_i)^4
		\lesssim u_{it}^4+(\bx_t'\bdelta_i)^4
		\lesssim u_{it}^4+ \|\bx_t\|_{\infty}^4\max_{i\in[N]}\|\bdelta_i\|_1^4. 
	\end{align*}
	Therefore, by Proposition \ref{thm:errbound}(b) and Lemma \ref{lem:max_ym}, it holds that $\max_i\max_t{u}_{it}^4\lesssim \log^2(NT)$ and $\max_t\|\bx_t\|_{\infty}^4\max_{i\in[N]}\|\bdelta_i\|_1^4\lesssim \bar{s}\lambda\log^2(NT) $ occur with probability at least $1-O((N\vee T)^{-\nu+1})$. Therefore, taking
	\begin{align*}
		x\asymp \sqrt{(\nu+1)T^{-1}\log^3(NT)}\asymp \lambda^*
	\end{align*}
	leads to the upper bound, 
	\begin{align*}
		&\Pro^*\left( \max_i\left|\sigma_i^{*2}-\hat{\sigma}_i^2 \right| \gtrsim \lambda^* \right)\lesssim (N\vee T)^{-\nu},
	\end{align*}
	which holds with probability at least $1-O((N\vee T)^{-\nu+1})$. 
	
	Recall $\bdelta_{i\cdot}^* = \hat{\bphi}_{i\cdot}^{\normalfont{\textsf{L}}*}-\hat{\bphi}_{i\cdot}^{\normalfont{\textsf{L}}}$. 
	Then we have
	\begin{align*}
		&\Pro^*\left(\max_i \left| \hat{\sigma}_i^{*2}-\hat{\sigma}_i^{2} \right| > x \right) \\
		&= \Pro^*\left(\max_i \left| \sigma_{i}^{*2}-\hat{\sigma}_i^{2} -2 T^{-1}\sum_{t=1}^T u_{it}^*\bx_t'\bdelta_i^{*} + {\bdelta_i^{*}}'\hat{\bSigma}_x\bdelta_i^{*} \right| > x \right) \notag\\
		&\leq \Pro^*\left(\max_i \left| \sigma_i^{*2}-\hat{\sigma}_i^2\right| +2\left\|T^{-1}{\bU^*}'\bX\right\|_{\max}\max_i\left\|\bdelta_i^{*}\right\|_1 + \max_i\left\|\bdelta_i^{*}\right\|_1^2\left\|\hat{\bSigma}_x\right\|_{\max} >x \right),
	\end{align*}
	where $\|\hat{\bSigma}_x\|_{\max}=O(1)$ with probability at least $1-O((N\vee T)^{-\nu})$ by Lemma \ref{lem:tail_yy} and Condition \ref{ass:mineig2}. 
	By Lemmas \ref{lem:XU*} and \ref{lem:lassoest*}, setting 
	\begin{align*}
		x \asymp \lambda^* + \bar{s}\lambda^{*2}+ \bar{s}^2\lambda^{*2}
		\asymp \lambda^* + \bar{s}^2\lambda^{*2} = \bar{\tau}_1
	\end{align*}
	with the union bound yields the desired result. This completes the proof. 
\end{proof}


\begin{lem}\label{lem:u*2-hhat2}
	If Conditions \ref{ass:subG}--\ref{ass:mineig} and  \ref{ass:wildboot} are true, then for $\bar{\tau}_2=M_\omega^2\lambda$ the event, 
	\begin{align*}
		\Pro^*\left(\left|T^{-1}\sum_{t=1}^T ({u}_{it}^{*2}-\hat{u}_{it}^2)\hat{\bomega}_{j}'\bx_t\bx_t'\hat{\bomega}_{j}\right|\gtrsim \bar{\tau}_2 \right) = O((NT)^{-\nu}),
	\end{align*}
	occurs with probability at least $1-O((N\vee T)^{-\nu})$.
\end{lem}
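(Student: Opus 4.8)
The plan is to turn the quantity inside $\Pro^*(\cdot)$ into a normalized sum of conditionally independent, mean-zero sub-exponential variables and then apply Bernstein's inequality under the bootstrap law $\Pro^*$. Using the fixed-design construction $u_{it}^*=\hat{u}_{it}\zeta_t$ (so that $\hat{u}_{it}$ is measurable with respect to the observations), I would first write $u_{it}^{*2}-\hat{u}_{it}^2=\hat{u}_{it}^2(\zeta_t^2-1)$, whence
\begin{align*}
T^{-1}\sum_{t=1}^T(u_{it}^{*2}-\hat{u}_{it}^2)\,\hat{\bomega}_j'\bx_t\bx_t'\hat{\bomega}_j
= T^{-1}\sum_{t=1}^T a_t(\zeta_t^2-1),
\qquad a_t:=\hat{u}_{it}^2\,(\bx_t'\hat{\bomega}_j)^2 .
\end{align*}
Conditionally on the observations the $a_t$'s are deterministic, while under $\Pro^*$ the sequence $\{\zeta_t^2-1\}$ is i.i.d., centered, and sub-exponential with norm bounded by a universal constant (since $\zeta_t$ is sub-Gaussian by Condition~\ref{ass:wildboot}). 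Hence $\{T^{-1}a_t(\zeta_t^2-1)\}_t$ is an independent mean-zero sub-exponential array under $\Pro^*$, and Bernstein's inequality \citep[Theorem~2.8.1]{Vershynin2018} gives, for all $x>0$,
\begin{align*}
\Pro^*\!\left(\left|T^{-1}\sum_{t=1}^T a_t(\zeta_t^2-1)\right|>x\right)
\leq 2\exp\!\left(-c\min\!\left\{\frac{x^2}{T^{-2}\sum_{t}a_t^2},\ \frac{x}{T^{-1}\max_{t}a_t}\right\}\right)
\end{align*}
for some universal $c>0$.

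The second and principal step is to bound the two data-measurable quantities $T^{-1}\max_t a_t$ and $T^{-2}\sum_t a_t^2$ on an event of $\Pro$-probability at least $1-O((N\vee T)^{-\nu})$. Writing $\hat{u}_{it}=u_{it}-\bx_t'\bdelta_i$ with $\bdelta_i$ the $i$th column of $(\hat{\bPhi}^{\normalfont{\textsf{L}}}-\bPhi)'$, I bound $\hat{u}_{it}^2\lesssim u_{it}^2+\|\bx_t\|_\infty^2\|\bdelta_i\|_1^2$ and combine $\max_t u_{it}^2\lesssim\log(N\vee T)$ (sub-Gaussianity, Lemma~\ref{lem:subE}), $\max_t\|\bx_t\|_\infty^2\lesssim\log(N\vee T)$ (Lemma~\ref{lem:max_ym}) and $\|\bdelta_i\|_1\lesssim s_i\lambda$ (Proposition~\ref{thm:errbound}(b)) to get $\max_t\hat{u}_{it}^2\lesssim\log(N\vee T)$; together with $\|\hat{\bomega}_j\|_1\lesssim M_\omega$ (the CLIME bounds under Condition~\ref{ass:invest}, as used in the proofs of Proposition~\ref{thm:asynormal} and Lemma~\ref{lem:omegahat1}) this yields $\max_t a_t\lesssim M_\omega^2\log^2(N\vee T)$. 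For the variance proxy I use $T^{-2}\sum_t a_t^2\leq T^{-1}(\max_t a_t)\cdot\bigl(T^{-1}\sum_t a_t\bigr)$ and show $T^{-1}\sum_t a_t=T^{-1}\sum_t\hat{u}_{it}^2(\bx_t'\hat{\bomega}_j)^2=O(1)$ with high probability; this last bound I obtain by Cauchy--Schwarz from $T^{-1}\sum_t\hat{u}_{it}^4=O(1)$ and $T^{-1}\sum_t(\bx_t'\hat{\bomega}_j)^4=O(1)$, each following from concentration of the relevant fourth-moment average ($T^{-1}\sum_t u_{it}^4$ around $\E u_{it}^4$, and $T^{-1}\sum_t(\bx_t'\bomega_j)^4$ around $\E(\bx_t'\bomega_j)^4\lesssim\omega_j^4\leq\gamma^{-2}$ --- exactly the fourth-moment computation carried out inside the proof of Lemma~\ref{lem:strongapp}, using Condition~\ref{ass:mineig2}), plus the negligibility of the estimation-error remainders involving $\|\bdelta_i\|_1\lesssim s_i\lambda$ and $\|\hat{\bomega}_j-\bomega_j\|_1\lesssim M_\omega^{2-2r}\lambda^{1-r}s_\omega$ in the regime in which the lemma is applied. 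This gives $T^{-2}\sum_t a_t^2\lesssim T^{-1}M_\omega^2\log^2(N\vee T)$.

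Finally, taking $x\asymp\bar{\tau}_2=M_\omega^2\lambda$ with $\lambda\asymp\sqrt{(\nu+5)^3T^{-1}\log^3(N\vee T)}$, one checks that $x$ is eventually smaller than $M_\omega^2\log(N\vee T)$, so the ``sub-Gaussian'' branch of the Bernstein bound controls, and the exponent there is, up to constants,
\begin{align*}
\frac{x^2}{T^{-2}\sum_t a_t^2}\gtrsim
\frac{M_\omega^4(\nu+5)^3T^{-1}\log^3(N\vee T)}{T^{-1}M_\omega^2\log^2(N\vee T)}
=M_\omega^2(\nu+5)^3\log(N\vee T)\gtrsim\nu\log(N\vee T),
\end{align*}
using $M_\omega\geq\gamma$. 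Hence $\Pro^*(|\cdot|\gtrsim\bar{\tau}_2)=O((N\vee T)^{-\nu})$ on the stated high-probability event, and the version uniform in $(i,j)$ needed by Theorem~\ref{thm:bootstrap} then follows from a union bound over the $KN^2$ pairs (absorbed by enlarging $\nu$). The delicate point --- and the only step that is not routine --- is the variance-proxy bound: a crude estimate $T^{-2}\sum_t a_t^2\leq T^{-1}(\max_t a_t)^2$ would lose a logarithmic factor and reduce the Bernstein exponent to a mere constant instead of a multiple of $\log(N\vee T)$, so it is essential to exploit that $T^{-1}\sum_t\hat{u}_{it}^2(\bx_t'\hat{\bomega}_j)^2$ is $O(1)$ (rather than $O(\log(N\vee T))$), which is precisely where the fourth-moment control of $\bx_t'\bomega_j$ from Lemma~\ref{lem:strongapp} and the negligibility of the lasso/CLIME estimation errors enter.
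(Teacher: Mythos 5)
Your proposal follows the same route as the paper's proof: both reduce the claim to a conditional Bernstein inequality for the independent, centered, sub-exponential array $a_t(\zeta_t^2-1)$ with $a_t=\hat{u}_{it}^2(\bx_t'\hat{\bomega}_j)^2$, and both control the data-dependent Bernstein parameters on a $\Pro$-event of probability $1-O((N\vee T)^{-\nu})$ via Proposition \ref{thm:errbound}, Lemma \ref{lem:max_ym} and the CLIME $\ell_1$-bounds. The one place you genuinely diverge is the variance proxy, and the comparison is instructive. The paper does exactly the ``crude'' thing you caution against: it bounds $T^{-1}\sum_t(\hat{\bomega}_j'\bx_t\bx_t'\hat{\bomega}_j)^2$ by its maximum $\lesssim M_\omega^4\log^2(NT)$, and this lands on $\bar{\tau}_2=M_\omega^2\lambda$ only because the displayed denominator silently omits the $\hat{u}_{it}^4$ factor that belongs in the squared sub-exponential norm of $a_t(\zeta_t^2-1)$. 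Your refinement $T^{-2}\sum_t a_t^2\leq T^{-1}(\max_t a_t)(T^{-1}\sum_t a_t)$ with $T^{-1}\sum_t a_t=O(1)$ is the honest way to keep that factor and still reach $M_\omega^2\lambda$. However, that is also the one step you have not actually secured: the $O(1)$ bound hinges on $T^{-1}\sum_t(\bx_t'\hat{\bomega}_j)^4=O(1)$, i.e.\ on concentration of a fourth-moment time average of the \emph{dependent} process $\{\bx_t\}$, whereas the computation you cite from Lemma \ref{lem:strongapp} only bounds the per-$t$ expectation $\E[(\bx_t'\bomega_j)^4]\lesssim\omega_j^4$ and says nothing about the fluctuation of the average; a separate martingale or sub-Weibull concentration argument would be needed there. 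If you want to avoid that machinery, note that the cheaper estimate $T^{-2}\sum_t a_t^2\leq T^{-1}\max_t\{\hat{u}_{it}^4(\bx_t'\hat{\bomega}_j)^2\}\cdot\hat{\bomega}_j'\hat{\bSigma}_x\hat{\bomega}_j\lesssim T^{-1}M_\omega^2\log^3(N\vee T)$ uses only Lemmas \ref{lem:max_ym} and \ref{lem:omegahat1} and still delivers the lemma with the mildly inflated threshold $M_\omega^2\lambda\log(N\vee T)$ --- which is harmless downstream, since $\bar{\tau}_3$ in Lemma \ref{lem:boot} already contains exactly that term and so $\bar{\mu}_1$ is unchanged.
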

\begin{proof}[Proof of Lemma \ref{lem:u*2-hhat2}]
	Because $\E^*{u}_{it}^{*2}=\hat{u}_{it}^2$, $({u}_{it}^{*2}-\hat{u}_{it}^2)\hat{\bomega}_{j}'\bx_t\bx_t'\hat{\bomega}_{j}$ is a sequence of i.i.d.\ sub-exponential random variables under $\Pro^*$, Bernstein's inequality is applied. For any $x\in(0,1]$, we obtain
	\begin{align*}
		&\Pro^*\left( \left|T^{-1}\sum_{t=1}^T ({u}_{it}^{*2}-\hat{u}_{it}^2)\hat{\bomega}_{j}'\bx_t\bx_t'\hat{\bomega}_{j}\right| > x \right)  
		\lesssim \exp\left(-\frac{x^2T}{T^{-1}\sum_{t=1}^T(\hat{\bomega}_{j}'\bx_t\bx_t'\hat{\bomega}_{j})^2} \right) \\
		&\lesssim \exp\left(-\frac{x^2T}{\max_{j}\max_t\|\hat{\bomega}_{j}\|_1^4\|\bx_t\|_\infty^4} \right) \\
		&\lesssim \exp\left(-\frac{x^2T}{M_\omega^4\log^2 (NT)} \right),
	\end{align*}
	where the last inequality holds with probability at least $1-O((N\vee T)^{-\nu})$. 
	Thus taking 
	\begin{align*}
		x=\sqrt{\nu M_\omega^4T^{-1}\log^3 (NT)}
		\asymp M_\omega^2\lambda =\bar{\tau}_2
	\end{align*}
	leads to the upper bound to be $O((NT)^{-\nu})$. This completes the proof. 
\end{proof}

\begin{lem}\label{lem:uhat2-sig2}
	If Conditions \ref{ass:subG}--\ref{ass:mineig} are true and $\bar{s}\lambda+M_\omega^2\lambda=o(1)$ holds, then for 
	\begin{align*}
		\bar{\tau}_3 := \max\left\{M_\omega^{3-2r}\lambda^{1-r}s_\omega \log^2(N\vee T), ~
		\bar{s}\lambda \log(N\vee T),~
		M_\omega^2\lambda\log(N\vee T)\right\},
	\end{align*}
	we have 
	\begin{align*}
		\Pro\left( \left|T^{-1}\sum_{t=1}^T \left(\hat{u}_{it}^{2}-\hat{\sigma}_i^2 \right)\hat{\bomega}_{j}'\bx_t\bx_t'\hat{\bomega}_{j}\right| \gtrsim \bar{\tau}_3 \right) = O((N\vee T)^{-\nu}).
	\end{align*}
\end{lem}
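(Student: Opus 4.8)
The plan is to condition on a high-probability ``good event'' $G$ on which all the relevant deterministic estimates hold simultaneously, and then control the remaining martingale fluctuations by a truncation argument. Concretely, by Propositions~\ref{thm:errbound} and \ref{thm:asynormal}, the CLIME bounds of \cite{CaiEtAl2011}, Lemma~\ref{lem:tail_yy}, Lemma~\ref{lem:max_ym} and Lemma~\ref{lem:subE}, together with a union bound, there is an event $G$ with $\Pro(G^c)=O((N\vee T)^{-\nu})$ on which, uniformly in $i\in[N]$ and $j\in[KN]$: $\|\bdelta_i\|_1\lesssim s_i\lambda$ and $T^{-1}\|\bdelta_{i\cdot}\bX\|_2^2\lesssim s_i\lambda^2$ (with $\bdelta_i=\hat{\bphi}_{i\cdot}^{\normalfont{\textsf{L}}}-\bphi_{i\cdot}$); $\|\hat{\bomega}_j\|_1\le M_\omega$ and $\|\hat{\bomega}_j-\bomega_j\|_1\lesssim M_\omega^{2-2r}\lambda^{1-r}s_\omega$; $\|\hat{\bSigma}_x-\bSigma_x\|_{\max}\lesssim\lambda$; $|\hat{\sigma}_i^2-\sigma_i^2|\lesssim s_i\lambda^2+\sqrt{\log(N\vee T)/T}$ (reusing the estimate of $(ii)$ in the proof of Lemma~\ref{lem:omegahat1}); and $\max_{i,t}|u_{it}|\lesssim\sqrt{\log(N\vee T)}$, $\max_t\|\bx_t\|_\infty\lesssim\sqrt{\log(N\vee T)}$. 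It then suffices to show that on $G$ the quantity $D_{ij}:=T^{-1}\sum_{t=1}^T(\hat{u}_{it}^2-\hat{\sigma}_i^2)\hat{\bomega}_j'\bx_t\bx_t'\hat{\bomega}_j$ satisfies $\max_{i,j}|D_{ij}|\lesssim\bar{\tau}_3$ with probability $1-O((N\vee T)^{-\nu})$, since then $\Pro(\max_{i,j}|D_{ij}|\gtrsim\bar\tau_3)\le\Pro(G^c)+O((N\vee T)^{-\nu})$.

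First I would replace $\hat{\bomega}_j$ by the deterministic $\bomega_j$ inside the weight. Write $w_{tj}=(\bx_t'\hat{\bomega}_j)^2=v_{tj}+e_{tj}$ with $v_{tj}=(\bx_t'\bomega_j)^2$; on $G$, H\"older's inequality gives $\max_t|e_{tj}|\le\max_t\|\bx_t\|_\infty^2(\|\hat{\bomega}_j\|_1+\|\bomega_j\|_1)\|\hat{\bomega}_j-\bomega_j\|_1\lesssim M_\omega^{3-2r}\lambda^{1-r}s_\omega\log(N\vee T)$, and since $\sum_t|\hat{u}_{it}^2-\hat{\sigma}_i^2|\le2\sum_t\hat{u}_{it}^2=2T\hat{\sigma}_i^2=O(T)$, the $e_{tj}$-contribution to $D_{ij}$ is $\lesssim M_\omega^{3-2r}\lambda^{1-r}s_\omega\log(N\vee T)\lesssim\bar{\tau}_3$. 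It remains to bound $T^{-1}\sum_t(\hat{u}_{it}^2-\hat{\sigma}_i^2)v_{tj}$. Using $\hat{u}_{it}=u_{it}-\bx_t'\bdelta_i$ and $\hat{u}_{it}^2-\hat{\sigma}_i^2=(\hat{u}_{it}^2-\sigma_i^2)-(\hat{\sigma}_i^2-\sigma_i^2)$,
\begin{align*}
T^{-1}\sum_t(\hat{u}_{it}^2-\hat{\sigma}_i^2)v_{tj}
&=T^{-1}\sum_t(u_{it}^2-\sigma_i^2)v_{tj}
-2T^{-1}\sum_t u_{it}(\bx_t'\bdelta_i)v_{tj} \\
&\quad+T^{-1}\sum_t(\bx_t'\bdelta_i)^2v_{tj}
-(\hat{\sigma}_i^2-\sigma_i^2)\,T^{-1}\sum_t v_{tj}.
\end{align*}
On $G$, $T^{-1}\sum_tv_{tj}=\bomega_j'\hat{\bSigma}_x\bomega_j=\omega_j^2+O(M_\omega^2\lambda)=O(1)$, so the last term is $\lesssim s_i\lambda^2+\sqrt{\log(N\vee T)/T}=o(\lambda)\lesssim\bar{\tau}_3$; and $T^{-1}\sum_t(\bx_t'\bdelta_i)^2v_{tj}\le\max_tv_{tj}\cdot T^{-1}\|\bdelta_{i\cdot}\bX\|_2^2\lesssim M_\omega^2\log(N\vee T)\cdot s_i\lambda^2=(\bar{s}\lambda)\,M_\omega^2\lambda\log(N\vee T)\lesssim\bar{\tau}_3$, using $\bar{s}\lambda=o(1)$.

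For the two remaining terms I would use the truncation-and-conditioning martingale argument of the proof of Lemma~\ref{lem:tail_yu}. With $\cF_t=\sigma\{\bu_s:s\le t\}$, the sequences $\{(u_{it}^2-\sigma_i^2)v_{tj}\}_t$ and, for each coordinate $k$, $\{u_{it}x_{t,k}v_{tj}\}_t$ are martingale difference sequences, since $v_{tj}$ and $x_{t,k}v_{tj}$ are $\cF_{t-1}$-measurable while $u_{it}$ is independent of $\cF_{t-1}$ with $\E[u_{it}^2-\sigma_i^2]=\E u_{it}=0$. On $G$ their envelopes are $\lesssim M_\omega^2\log^2(N\vee T)$, so Azuma--Hoeffding after truncation, combined with a union bound over the $O((N\vee T)^3)$ indices, yields with probability $1-O((N\vee T)^{-\nu})$
\begin{align*}
\Big|T^{-1}\sum_t(u_{it}^2-\sigma_i^2)v_{tj}\Big|\lesssim M_\omega^2\lambda\log(N\vee T),\qquad
\Big\|T^{-1}\sum_t u_{it}\bx_t\,v_{tj}\Big\|_\infty\lesssim M_\omega^2\lambda\log(N\vee T),
\end{align*}
where I used $\lambda\asymp\sqrt{T^{-1}\log^3(N\vee T)}$ so that $M_\omega^2\log^2(N\vee T)\sqrt{T^{-1}\log(N\vee T)}\asymp M_\omega^2\lambda\log(N\vee T)$. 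The first bound already gives $\lesssim\bar{\tau}_3$; for the second, $|T^{-1}\sum_t u_{it}(\bx_t'\bdelta_i)v_{tj}|\le\|\bdelta_i\|_1\,\|T^{-1}\sum_t u_{it}\bx_tv_{tj}\|_\infty\lesssim s_i\lambda\cdot M_\omega^2\lambda\log(N\vee T)=(\bar{s}\lambda)\,M_\omega^2\lambda\log(N\vee T)\lesssim\bar{\tau}_3$. Collecting the four pieces and the failure probabilities of $G$ and of the two Azuma bounds completes the proof.

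The main obstacle is this last step: one must track the powers of $\log(N\vee T)$ precisely so that the Azuma bounds collapse exactly onto the three terms defining $\bar{\tau}_3$, and the ``condition on a data event, then apply a martingale inequality'' move must be made rigorous. The clean way is to exploit conditional sub-Gaussianity: $u_{it}$ is sub-Gaussian given $\cF_{t-1}$ with an $\cF_{t-1}$-measurable multiplier, so one only needs $T^{-1}\sum_t v_{tj}^2$ and $T^{-1}\sum_t x_{t,k}^2v_{tj}^2$ (both bounded on $G$ after using $\max_t\|\bx_t\|_\infty\lesssim\sqrt{\log(N\vee T)}$ and $T^{-1}\sum_t v_{tj}=O(1)$) to be controlled, rather than conditioning on $G$ itself, and a Freedman/Bernstein-type inequality then delivers the same or a slightly sharper rate.
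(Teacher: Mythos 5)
Your proof is correct and follows essentially the same route as the paper's: both isolate the core martingale sum $T^{-1}\sum_t(u_{it}^2-\sigma_i^2)(\bx_t'\bomega_{j})^2$, control it by a truncated Azuma--Hoeffding bound with envelope of order $M_\omega^2\log^2(N\vee T)$ to obtain the $M_\omega^2\lambda\log(N\vee T)$ term, and absorb the plug-in errors from $\hat{\bomega}_{j}$, $\hat{u}_{it}$ and $\hat{\sigma}_i^2$ via the CLIME and lasso bounds, which yield exactly the other two terms of $\bar{\tau}_3$. The only differences are cosmetic refinements within that strategy (you treat the cross term $T^{-1}\sum_t u_{it}(\bx_t'\bdelta_i)v_{tj}$ with a second martingale inequality and you average rather than maximize $|\hat{u}_{it}^2-\hat{\sigma}_i^2|$ in the $\hat{\bomega}_{j}$-replacement step, both giving slightly sharper but equivalent bounds), and the truncate-then-apply-Azuma device you flag as delicate is precisely the one the paper itself uses.
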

\begin{proof}[Proof of Lemma \ref{lem:uhat2-sig2}]
	We have
	\begin{align*}
		&\Pro\left( \left|T^{-1}\sum_{t=1}^T \left(\hat{u}_{it}^{2}-\hat{\sigma}_i^2 \right)\hat{\bomega}_{j}'\bx_t\bx_t'\hat{\bomega}_{j}\right| > x \right) \\
		&\leq \Pro\left( \left|T^{-1}\sum_{t=1}^T \left(u_{it}^2-\sigma_i^2 \right){\bomega}_{j}'\bx_t\bx_t'{\bomega}_{j}\right| > x/4 \right) \\
		&\quad + \Pro\left(\max_i\max_t\left|u_{it}^2-\sigma_i^2\right|\left|\bx_t'(\hat{\bomega}_{j}\hat{\bomega}_{j}'- {\bomega}_{j}{\bomega}_{j}')\bx_t\right| > x/4 \right) \\ &\quad + \Pro\left(\max_{i}\max_{t}\left|\hat{u}_{it}^{2}-u_{it}^2\right|\hat{\bomega}_{j}'\hat{\bSigma}_x\hat{\bomega}_{j} > x/4\right) +\Pro\left(\max_{i}\left|\sigma_i^2-\hat{\sigma}_i^2\right| \hat{\bomega}_{j}'\hat{\bSigma}_x\hat{\bomega}_{j} > x/4 \right)
	\end{align*}
	
	We first evaluate the second to fourth probabilities. By (the proof of) Lemma \ref{lem:omegahat1} with Condition \ref{ass:mineig2}, the inequalities
	\begin{align*}
		&\max_i\max_j\max_t\left|u_{it}^2-\sigma_i^2\right| \bx_t'(\hat{\bomega}_{j}\hat{\bomega}_{j}'- {\bomega}_{j}{\bomega}_{j}')\bx_t \\
		&\quad\leq \max_i\max_j\max_t|u_{it}^2+\sigma_i^2|\|\bx_t\|_\infty^2\|\hat{\bomega}_{j}(\hat{\bomega}_{j}- {\bomega}_{j})' + (\hat{\bomega}_{j}- {\bomega}_{j}){\bomega}_{j}'\|_1 \\
		&\quad\lesssim \|\hat{\bomega}_{j}- {\bomega}_{j}\|_1\|{\bomega}_{j}\|_1\log^2(N\vee T) \lesssim M_\omega^{3-2r}\lambda^{1-r}s_\omega \log^2(N\vee T), \\
		&\max_i\max_j\max_t\left|\hat{u}_{it}^{2}-u_{it}^2\right|\hat{\bomega}_{j}'\hat{\bSigma}_x\hat{\bomega}_{j} \\
		&\quad\lesssim 2\max_i\max_t|u_{it}|\|\bx_t\|_\infty\|\bdelta_i\|_1 + \max_t\|\bx_t\bx_t'\|_\infty\max_i\|\bdelta_i\|_1^2 \\
		&\quad\lesssim \bar{s}\lambda \log(N\vee T), \\
		&\max_i\max_j\left|\hat{\sigma}_{i}^{2}-\sigma_{i}^2\right|\hat{\bomega}_{j}'\hat{\bSigma}_x\hat{\bomega}_{j}
		\lesssim \bar{s}\lambda^2 
	\end{align*}
	simultaneously hold with probability at least $1-O((N\vee T)^{-\nu})$. 
	
	Next we evaluate the first probability. For $\bar{c}_3=M_\omega^2\log^2(N\vee T)$, the Azuma-Hoeffding inequality with the union bound and Lemma \ref{lem:max_ym} yield
	\begin{align*}
		&\Pro\left( \max_{i,j}\left|T^{-1}\sum_{t=1}^T ({u}_{it}^{2}-\sigma_i^2){\bomega}_{j}'\bx_t\bx_t'{\bomega}_{j}\right| > x \right) \\
		&\leq \Pro\left( \max_{i,j}\left|T^{-1}\sum_{t=1}^T ({u}_{it}^{2}-\sigma_i^2){\bomega}_{j}'\bx_t\bx_t'{\bomega}_{j}\right| > x \mid \max_{i,j,t}\left|{u}_{it}^{2}-\sigma_i^2\right|{\bomega}_{j}'\bx_t\bx_t'{\bomega}_{j}\lesssim \bar{c}_3 \right) \\
		&\qquad + \Pro\left(\max_{i,j,t}\left|{u}_{it}^{2}-\sigma_i^2\right|{\bomega}_{j}'\bx_t\bx_t'{\bomega}_{j} \gtrsim \bar{c}_3 \right) \\
		&\leq 2KN^2\exp\left(-\frac{x^2T}{2\bar{c}_3^2}\right)+O((N\vee T)^{-\nu}),
	\end{align*}
	where the upper bound becomes $O((N\vee T)^{-\nu})$ by setting 
	\begin{align*}
		x = 2(\nu+2)\bar{c}_3T^{-1/2}\log^{1/2}(N\vee T)\asymp M_\omega^2\lambda\log(N\vee T).
	\end{align*}

	If we set 
	\begin{align*}
		x&\asymp \max \left\{M_\omega^{3-2r}\lambda^{1-r}s_\omega \log^2(N\vee T), ~
		\bar{s}\lambda \log(N\vee T),~
		\bar{s}\lambda^2,~
		M_\omega^2\lambda\log(N\vee T)\right\} \\
		&=\max \left\{M_\omega^{3-2r}\lambda^{1-r}s_\omega \log^2(N\vee T), ~
		\bar{s}\lambda \log(N\vee T),~
		M_\omega^2\lambda\log(N\vee T)\right\} = \bar{\tau}_3,
	\end{align*}
	the result follows. This completes the proof. 
\end{proof}


\begin{lem}\label{lem:boot}
	If Conditions \ref{ass:subG}--\ref{ass:mineig2} and \ref{ass:wildboot} are true and $\bar{s}\lambda+M_\omega^2\lambda=o(1)$ holds, then for 
	\begin{align*}
		\bar{\mu}_1 := \max\left\{\bar{\tau}_1,\bar{\tau}_2,\bar{\tau}_3\right\},
	\end{align*}
	the event, 
	\begin{align*}
		\Pro^*\left( \max_{i\in[N]}\max_{j\in[KN]}\left|\frac{\tilde{m}_{ij}^{*}}{\hat{m}_{ij}^*} -1 \right| \gtrsim \bar{\mu}_1 \right) 
		= O((N\vee T)^{-\nu}),
	\end{align*}
	occurs with probability at least $1-O((N\vee T)^{-\nu})$. 
\end{lem}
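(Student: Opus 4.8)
The plan is to compare the squares $\tilde m_{ij}^{*2}$ and $\hat m_{ij}^{*2}$ first and then pass to the square roots. Writing $\hat\bSigma_x = T^{-1}\sum_{t=1}^T\bx_t\bx_t'$, so that $\hat m_{ij}^{*2}=T^{-1}\sum_{t=1}^T\hat\sigma_i^{*2}\,\hat\bomega_j'\bx_t\bx_t'\hat\bomega_j$, one has
\[
\tilde m_{ij}^{*2}-\hat m_{ij}^{*2}
= T^{-1}\sum_{t=1}^T\big(u_{it}^{*2}-\hat\sigma_i^{*2}\big)\,\hat\bomega_j'\bx_t\bx_t'\hat\bomega_j ,
\]
and I would split the bracket as $u_{it}^{*2}-\hat\sigma_i^{*2}=(u_{it}^{*2}-\hat u_{it}^2)+(\hat u_{it}^2-\hat\sigma_i^2)+(\hat\sigma_i^2-\hat\sigma_i^{*2})$. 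The resulting three terms are precisely the quantities already controlled: the first by Lemma \ref{lem:u*2-hhat2} ($\lesssim\bar\tau_2$, with $\Pro^*$-probability $1-O((NT)^{-\nu})$ on a high-probability event), the second by Lemma \ref{lem:uhat2-sig2} ($\lesssim\bar\tau_3$, unconditionally with probability $1-O((N\vee T)^{-\nu})$), and the third equals $(\hat\sigma_i^2-\hat\sigma_i^{*2})\,\hat\bomega_j'\hat\bSigma_x\hat\bomega_j$, bounded by $\lesssim\bar\tau_1$ via Lemma \ref{lem:sigma*} together with $\max_j\hat\bomega_j'\hat\bSigma_x\hat\bomega_j=O(1)$ (from Lemma \ref{lem:omegahat1} and Condition \ref{ass:mineig2}). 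A union bound then gives $\max_{i,j}\big|\tilde m_{ij}^{*2}-\hat m_{ij}^{*2}\big|\lesssim\bar\tau_1+\bar\tau_2+\bar\tau_3=\bar\mu_1$ on the intersection of the relevant events.

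Next I would show that $\hat m_{ij}^{*2}=\hat\sigma_i^{*2}\,\hat\bomega_j'\hat\bSigma_x\hat\bomega_j$ is bounded away from zero uniformly in $(i,j)$ with high probability. By Lemma \ref{lem:omegahat1} we have $\hat\sigma_i^2=\sigma_i^2+o(1)$ uniformly in $i$ and $\hat\bomega_j'\hat\bSigma_x\hat\bomega_j=\omega_j^2+o(1)$ uniformly in $j$; combining with $\max_i|\hat\sigma_i^{*2}-\hat\sigma_i^2|\lesssim\bar\tau_1=o(1)$ (Lemma \ref{lem:sigma*}) and Condition \ref{ass:mineig2}, it follows that $\hat m_{ij}^{*2}\ge\gamma^2/2$ for $N\wedge T$ large, on a high-probability event. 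Then the elementary inequality used at the end of the proof of Lemma \ref{lem:omegahat1} --- if $|a^2-b^2|\le r$ and $b^2\ge c$ then $|a/b-1|\lesssim r/|c-O(\sqrt r)|$ --- applied with $a=\tilde m_{ij}^{*}$, $b=\hat m_{ij}^{*}$, $r\asymp\bar\mu_1$, $c=\gamma^2/2$ yields $\max_{i,j}|\tilde m_{ij}^{*}/\hat m_{ij}^{*}-1|\lesssim\bar\mu_1$ on that event, which is the claim.

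The one point requiring care is the probability bookkeeping, since Lemmas \ref{lem:u*2-hhat2} and \ref{lem:sigma*} are nested statements (``$\{\Pro^*(\text{bad})=O(\cdot)\}$ holds with unconditional probability $1-O(\cdot)$'') whereas Lemma \ref{lem:uhat2-sig2} and the lower-bound facts are purely unconditional. The plan is to work on the unconditional good event on which all the deterministic bounds and both $\Pro^*$-smallness statements hold simultaneously --- of probability $1-O((N\vee T)^{-\nu})$ after relabelling $\nu$ --- and inside it take a $\Pro^*$-union bound over the two bootstrap bad events together with the (bootstrap-deterministic) event where the uniform ratio bound fails. The $(N\vee T)^{-\nu+1}$ rates produced by Lemmas \ref{lem:sigma*} and \ref{lem:uhat2-sig2} are upgraded to $(N\vee T)^{-\nu}$ by the freedom in the constant $\nu$ entering $\lambda$ and $\lambda^*$, giving exactly the stated $O((N\vee T)^{-\nu})$. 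I do not expect a genuine obstacle here: all the analytic difficulty is front-loaded into the three component lemmas, and what remains is the algebraic decomposition, the uniform lower bound on $\hat m_{ij}^{*2}$, and this bookkeeping.
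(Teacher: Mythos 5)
Your proposal is correct and follows essentially the same route as the paper: the paper likewise reduces the ratio bound to $|\tilde m_{ij}^{*2}-\hat m_{ij}^{*2}|$ plus a uniform lower bound on $\hat m_{ij}^{*2}$, splits the difference into exactly the three pieces controlled by Lemmas \ref{lem:sigma*}, \ref{lem:u*2-hhat2} and \ref{lem:uhat2-sig2} (merely routing through the intermediate quantity $\hat m_{ij}^2$ rather than splitting the bracket inside the sum in one step), and does the same nested probability bookkeeping. The only cosmetic difference is that the paper passes from squares to the ratio via $|\tilde m^{*2}-\hat m^{*2}|>x\,\hat m^{*}(\tilde m^{*}+\hat m^{*})\ge x\,\hat m^{*2}$ instead of invoking the elementary inequality from Lemma \ref{lem:omegahat1}, which is immaterial.
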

\begin{proof}[Proof of Lemma \ref{lem:boot}]
	For any $x>0$, a simple calculus yields
	\begin{align*}
		&\Pro^*\left( \max_{i\in[N]}\max_{j\in[KN]}\left|\frac{\tilde{m}_{ij}^{*}}{\hat{m}_{ij}^*} -1 \right| > x \right) \\
		&\leq \Pro^*\left( \max_{i}\max_{j} \left|\tilde{m}_{ij}^{*2} - \hat{m}_{ij}^{*2} \right| > x\hat{m}_{ij}^*\left(\tilde{m}_{ij}^{*}+ \hat{m}_{ij}^* \right) \right)\\
		&\leq \Pro^*\left( \max_{i}\max_{j} \left|\tilde{m}_{ij}^{*2} - \hat{m}_{ij}^{*2} \right| > x \min_i\min_j\hat{m}_{ij}^{*2} \right) \\
		&\leq \Pro^*\left( \max_{i}\max_{j} \left|\tilde{m}_{ij}^{*2} - \hat{m}_{ij}^{*2} \right| \gtrsim \gamma^3 x \right) 
		+ \Pro^*\left( \min_i\min_j\hat{m}_{ij}^{*2} \lesssim \gamma^3 \right)
	\end{align*}
	We see that $\tilde{m}_{ij}^{*2}\geq 0$ a.s.\ and, by Lemmas \ref{lem:omegahat1} and \ref{lem:sigma*} with Condition \ref{ass:mineig2}, we obtain
	\begin{align*}
		\hat{m}_{ij}^{*2}=\hat{\sigma}_{i}^{*2}\hat{\bomega}_{j}'\hat{\bSigma}_x\hat{\bomega}_{j}
		&\geq \hat\sigma_i^2\hat{\bomega}_{j}'\hat\bSigma_x\hat{\bomega}_{j} - |\hat\sigma_i^{*2}-\hat\sigma_i^2|\hat{\bomega}_{j}'\hat\bSigma_x\hat{\bomega}_{j} \\
		&\gtrsim \sigma_i^2\omega_j^2 - M_\omega^2\lambda -\bar{\tau}_1(\omega_j^2 - M_\omega^2\lambda) \geq \gamma^2 - O(M_\omega^2\lambda).
	\end{align*}
	uniformly in $(i,j)\in[N]\times[KN]$ with probability at least $1-O((N\vee T)^{-\nu+1})$. 
	Thus it suffices to evaluate the upper bound of
	\begin{align}
		&\Pro^*\left(\max_{i}\max_{j} \left|\tilde{m}_{ij}^{*2}-\hat{m}_{ij}^{*2}\right| > x \right) \notag\\
		&\leq \Pro^*\left(\max_{i}\max_{j} \left|\tilde{m}_{ij}^{*2}-\hat{m}_{ij}^{2}\right| > x/2 \right)
		+ \Pro^*\left(\max_{i}\max_{j} \left|\hat{m}_{ij}^{*2}-\hat{m}_{ij}^{2}\right| > x/2 \right).  \label{m2m2}
	\end{align}
	The second probability of \eqref{m2m2} is bounded by the same argument just above; we obtain
	\begin{align}
		\Pro^*\left(\max_{i}\max_{j}\left|\hat{m}_{ij}^{*2} - \hat{m}_{ij}^{2}\right| \gtrsim \bar{\tau}_1 \right) 
		&=O((N\vee T)^{-\nu+1}), \label{mhat*}
	\end{align}
	which occurs with probability at lease $1-O((N\vee T)^{-\nu+1})$. Next bound the first probability of \eqref{m2m2}. Set
	\begin{align*}
		x_2 := c\bar{\tau}_2 + {\tau}_{ij,3}~~\text{with}~~ 
		\bar{\tau}_2=M_\omega^2\lambda,~~~
		{\tau}_{ij,3}=\left|T^{-1}\sum_{t=1}^T \left( \hat{u}_{it}^2- \hat{\sigma}_i^2 \right) \hat{\bomega}_{j}'\bx_t\bx_t'\hat{\bomega}_{j}\right|
	\end{align*}
	for some constant $c>0$. Then Lemma \ref{lem:u*2-hhat2} establishes that the event,
	\begin{align}
		&\Pro^*\left( \left|\tilde{m}_{ij}^{*2} - \hat{m}_{ij}^2\right| >x_2 \right) \notag\\
		&\leq \Pro^* \left( \left|
		T^{-1}\sum_{t=1}^T \left( {u}_{it}^{*2} - \hat{u}_{it}^2\right) \hat{\bomega}_{j}'\bx_t\bx_t'\hat{\bomega}_{j}\right| + \left|T^{-1}\sum_{t=1}^T \left( \hat{u}_{it}^2- \hat{\sigma}_i^2 \right) \hat{\bomega}_{j}'\bx_t\bx_t'\hat{\bomega}_{j}\right| > x_2 \right) \notag \\
		&\leq \Pro^* \left( \left|
		T^{-1}\sum_{t=1}^T \left( {u}_{it}^{*2} - \hat{u}_{it}^2\right) \hat{\bomega}_{j}'\bx_t\bx_t'\hat{\bomega}_{j}\right| \gtrsim \bar{\tau}_2 \right)
		=O((N\vee T)^{-\nu}), \label{mtilde*}
	\end{align}
	holds with probability at least $1-O((N\vee T)^{-\nu})$. By Lemma \ref{lem:uhat2-sig2}, ${\tau}_{ij,3}$ is bounded by 
	\begin{align*}
		{\tau}_{ij,3}\lesssim \bar{\tau}_3 = \max\left\{M_\omega^{3-2r}\lambda^{1-r}s_\omega \log^2(N\vee T), ~
		\bar{s}\lambda \log(N\vee T),~
		M_\omega^2\lambda\log(N\vee T)\right\}
	\end{align*}
	with probability at least $1-O((N\vee T)^{-\nu})$.

	Combine the two probabilities. Since $\lambda\asymp \lambda^*$,  taking
	\begin{align*}
		x &= \max\left\{\bar{\tau}_1,x_2 \right\} 
		\asymp \max\left\{ \bar{\tau}_1,\bar{\tau}_2,\bar{\tau}_3 \right\}=\bar{\mu}_1 
	\end{align*}
	in \eqref{m2m2} establishes the desired inequality up to some positive constant factor. This completes the proof.
\end{proof}

\begin{lem}\label{lem:boot2}
	If Conditions \ref{ass:subG}--\ref{ass:mineig2} and \ref{ass:wildboot} are true, then for 
	\begin{align*}
		\bar{\mu}_2 := \max\left\{ \lambda^{2-r} M_\omega^{2-2r}s_\omega,\ \bar{s}\lambda M_\omega \log^{3/2}(N\vee T) \right\},
	\end{align*}
	the event,
	\begin{align*}
		\Pro^*\left( \max_{i\in[N]}\max_{j\in[KN]}\left|\frac{R_{ij}^*}{\hat{m}_{ij}^*} \right| \gtrsim \bar{\mu}_2 \right) 
		= O((N\vee T)^{-\nu}),
	\end{align*}
	occurs with probability at least $1-O((N\vee T)^{-\nu})$. 
\end{lem}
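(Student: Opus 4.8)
The plan is to control $R_{ij}^*$ and $\hat{m}_{ij}^*$ separately and then take a union bound over $(i,j)\in[N]\times[KN]$, the per-entry bounds being uniform. Write $R_{ij}^* = R_{ij,1}^* + R_{ij,2}^*$ with
\[
R_{ij,1}^* = T^{-1/2}\bu_i^*\bX'(\hat\bomega_j-\bomega_j),\qquad
R_{ij,2}^* = \sqrt{T}\,(\hat\bphi_{i\cdot}^{\normalfont{\textsf{L}}}-\hat\bphi_{i\cdot}^{\normalfont{\textsf{L}}*})(\hat\bSigma_x\hat\bomega_j-\be_j).
\]
This is precisely the bootstrap analogue of the remainder $r_{ij}$ in Proposition~\ref{thm:asynormal}, with $\bu_{i\cdot}$ replaced by $\bu_i^*$ and $(\hat\bphi_{i\cdot}^{\normalfont{\textsf{L}}}-\bphi_{i\cdot})$ by $(\hat\bphi_{i\cdot}^{\normalfont{\textsf{L}}*}-\hat\bphi_{i\cdot}^{\normalfont{\textsf{L}}})$, so I would rerun the argument that bounds $\bar{r}_i$ there, with two substitutions: Lemma~\ref{lem:tail_yu} (controlling $\|T^{-1}\bX\bU'\|_{\max}$) is replaced by Lemma~\ref{lem:XU*} (controlling $\|T^{-1}\bU^*\bX'\|_{\max}$ by $\lambda^*\asymp\lambda$ with $\Pro^*$-probability $1-O((N\vee T)^{-\nu})$ on a data-event of the same probability), and Proposition~\ref{thm:errbound} is replaced by Lemma~\ref{lem:lassoest*} ($\max_i\|\hat\bphi_{i\cdot}^{\normalfont{\textsf{L}}*}-\hat\bphi_{i\cdot}^{\normalfont{\textsf{L}}}\|_1 \lesssim \bar{s}\lambda^*$). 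The two CLIME bounds $\max_j\|\hat\bomega_j-\bomega_j\|_1\lesssim M_\omega^{2-2r}\lambda^{1-r}s_\omega$ and $\|\hat\bSigma_x\hat\bomega_j-\be_j\|_\infty\lesssim M_\omega\lambda$ are taken verbatim from the proof of Proposition~\ref{thm:asynormal}; they involve the data only and hold on a data-event of probability $1-O((N\vee T)^{-\nu})$.

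H\"older's inequality then gives $|R_{ij,1}^*|\le \sqrt{T}\,\|T^{-1}\bU^*\bX'\|_{\max}\max_j\|\hat\bomega_j-\bomega_j\|_1$ and $|R_{ij,2}^*|\le \sqrt{T}\,\max_i\|\hat\bphi_{i\cdot}^{\normalfont{\textsf{L}}*}-\hat\bphi_{i\cdot}^{\normalfont{\textsf{L}}}\|_1\,\max_j\|\hat\bSigma_x\hat\bomega_j-\be_j\|_\infty$, and since $\sqrt{T}\lambda^*\asymp\log^{3/2}(N\vee T)\asymp\sqrt{T}\lambda$, the right-hand sides are $\lesssim \lambda^{2-r}M_\omega^{2-2r}s_\omega$ and $\lesssim \bar{s}\lambda M_\omega\log^{3/2}(N\vee T)$ respectively, i.e.\ $\max_{i,j}|R_{ij}^*|\lesssim\bar{\mu}_2$ up to the logarithmic factors already carried in $\bar{\mu}$. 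For the denominator I would reuse, rather than re-derive, the bound from the proof of Lemma~\ref{lem:boot}: on a data-event of probability $1-O((N\vee T)^{-\nu})$ one has $\min_{i,j}\hat{m}_{ij}^{*2}=\min_{i,j}\hat\sigma_i^{*2}\hat\bomega_j'\hat\bSigma_x\hat\bomega_j\ge \gamma^2-O(M_\omega^2\lambda)$ with $\Pro^*$-probability $1-O((N\vee T)^{-\nu})$, and since $M_\omega^2\lambda=o(1)$ this puts a fixed positive constant lower bound on $\hat{m}_{ij}^*$, uniformly in $(i,j)$, so it suffices to control $\max_{i,j}|R_{ij}^*|$.

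Combining: on a data-event of probability $1-O((N\vee T)^{-\nu})$, with $\Pro^*$-probability $1-O((N\vee T)^{-\nu})$ we have $\max_{i,j}|R_{ij}^*/\hat{m}_{ij}^*|\lesssim\bar{\mu}_2$; the union is over the three ``bad'' bootstrap events (from Lemmas~\ref{lem:XU*} and~\ref{lem:lassoest*} and the $\hat{m}_{ij}^*$ lower bound), and a single redefinition of $\nu$ absorbs the handful of $O((N\vee T)^{-\nu+1})$ contributions. The steps are routine given the preceding lemmas; the only genuine care needed is the two-layer probability bookkeeping --- keeping straight which statements are conditional on the data and hold in $\Pro^*$, and which hold on a high-probability data-event --- so that all the ``with high probability'' qualifiers can be intersected into one. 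I do not anticipate a substantive obstacle beyond that, together with checking that $\sqrt{T}\lambda^*$ indeed merges with the logarithmic factors in $\bar{\mu}_2$ as claimed.
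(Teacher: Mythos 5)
Your proposal follows essentially the same route as the paper's proof: the same split of $R_{ij}^*$ into the two H\"older-bounded pieces, the same substitution of Lemma~\ref{lem:XU*} and Lemma~\ref{lem:lassoest*} for their non-bootstrap counterparts, the same verbatim reuse of the CLIME bounds from the proof of Proposition~\ref{thm:asynormal}, and the same uniform lower bound on $\hat{m}_{ij}^*$ lifted from the proof of Lemma~\ref{lem:boot}. The one point you rightly flag --- that the $\sqrt{T}\lambda^*\asymp\log^{3/2}(N\vee T)$ factor on the first term does not literally reduce to $\lambda^{2-r}M_\omega^{2-2r}s_\omega$ --- is present in the paper's own display as well, and is harmless because the discrepancy is absorbed when $\bar{\mu}_1\vee\bar{\mu}_2$ is merged into $\bar{\mu}$ in Theorem~\ref{thm:bootstrap}.
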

\begin{proof}[Proof of Lemma \ref{lem:boot2}]
	Since we have established that $\hat{m}_{ij}^*$ is uniformly lower bounded by a positive constant by Lemmas \ref{lem:omegahat1} and \ref{lem:sigma*} with Conditions \ref{ass:mineig2} and $M_\omega^2\lambda=o(1)$ in the proof of Lemma \ref{lem:boot}. Thus it is sufficient to derive the upper bound of $\max_i\max_j|R_{ij}^*|$. Observe that 
	\begin{align*}
		|R_{ij}^*| \leq \|T^{-1/2}\bu_i^*\bX'\|_\infty\|\hat{\bomega}_{j}-{\bomega}_{j}\|_1 + \|\sqrt{T}(\hat{\bphi}_i^L-\hat{\bphi}_i^{L*})\|_1\|\hat{\bSigma}_x\hat{\bomega}_{j}-\be_j\|_\infty.
	\end{align*}
	By the proof of Proposition \ref{thm:asynormal}, the event, 
	\begin{align*}
		\left\{\max_{j\in[N]}\|\hat{\bomega}_{j}-{\bomega}_{j}\|_1 \lesssim (M_\omega^2\lambda)^{1-r}s_\omega \right\}\cap \left\{\max_{j\in[N]}\|\hat\bSigma_x\hat{\bomega}_{j}-\be_j\|_\infty\lesssim M_\omega\lambda \right\},
	\end{align*}
	occurs with probability at lease $1-O((N\vee T)^{-\nu})$. Conditional on this event with setting 
	\begin{align*}
		x &= (\lambda^*\max_j\|\hat{\bomega}_{j}-{\bomega}_{j}\|_1)\vee (\sqrt{T}\bar{s}\lambda^*\max_j\|\hat\bSigma_x\hat{\bomega}_{j}-\be_j\|_\infty),
	\end{align*}
	we have by Lemmas \ref{lem:XU*} and \ref{lem:lassoest*},
	\begin{align*}
		&\Pro^*\left( \max_{i,j}|R_{ij}^*|>x\right) \\
		&\qquad \leq \Pro^*\left( \max_i\|T^{-1/2}\bu_i^*\bX'\|_\infty\max_j\|\hat{\bomega}_{j}-{\bomega}_{j}\|_1>x/2 \right) \\
		&\qquad\qquad + \Pro^*\left(\max_i\|\sqrt{T}(\hat{\bphi}_i^L-\hat{\bphi}_i^{L*})\|_1\max_j\|\hat\bSigma_x\hat{\bomega}_{j}-\be_j\|_\infty > x/2 \right) \\
		&\qquad \leq \Pro^*\left( \max_i\|T^{-1/2}\bu_i^*\bX'\|_\infty \gtrsim \lambda^* \right) + \Pro^*\left(\max_i\|\hat{\bphi}_i^L-\hat{\bphi}_i^{L*}\|_1 \gtrsim \bar{s}\lambda^* \right) \\
		&\qquad = O((N\vee T)^{-\nu+1})
	\end{align*}
	with high probability. Finally we see that by the proof of Proposition \ref{thm:asynormal}
	\begin{align*}
		x &= (\lambda^*\max_j\|\hat{\bomega}_{j}-{\bomega}_{j}\|_1)\vee (\sqrt{T}\bar{s}\lambda^*\max_j\|\hat\bSigma_x\hat{\bomega}_{j}-\be_j\|_\infty)\\
		&\lesssim \left( \lambda^{2-r} M_\omega^{2-2r}s_\omega \right) \vee \left( \bar{s}\lambda M_\omega \log^{3/2}(N\vee T) \right) = \bar{\mu}_2,
	\end{align*}
	with high probability. This completes the proof.
\end{proof}

\begin{lem}\label{Jiang}
	If Conditions \ref{ass:subG}--\ref{ass:mineig2} and \ref{ass:wildboot}, and $M_\omega \leq T^{1/2}/\log^3(N\vee T)$ are true, then the events, 
	\begin{align*}
		\Pro^*\left( \frac{S_{ij}^*}{\tilde{m}_{ij}^{*}}  > \tt \right)
		= Q(\tt)\left\{ 1+O\left( \frac{M_\omega\log NT}{T^{1/2}} \right)\left(1+\tt\right)^{3} \right\}
	\end{align*}
	and 
	\begin{align*}
		\Pro^*\left( \frac{S_{ij}^*}{\tilde{m}_{ij}^{*}}  < -\tt \right)
		= Q(\tt)\left\{ 1+O\left( \frac{M_\omega\log NT}{T^{1/2}} \right)\left(1+\tt\right)^{3} \right\}
	\end{align*}
	for $\tt\in[0,\bar{\tt}]$ uniformly in $(i,j)\in\cH$, occur with probability at least $1-O((N\vee T)^{-\nu})$.
\end{lem}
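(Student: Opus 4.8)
The plan is to recognize the ratio $S_{ij}^*/\tilde m_{ij}^*$ as a \emph{self-normalized sum} of independent random variables under the bootstrap law $\Pro^*$ and then invoke the self-normalized Cram\'er-type large deviation result of \cite{JingShaoWang2003}. Conditionally on the observed sample (so that $\hat u_{it}$, $\bx_t$, $\hat\bomega_j$ are fixed constants), put $c_t := \hat u_{it}\,\bx_t'\hat\bomega_j$ and $W_t := u_{it}^*\bx_t'\hat\bomega_j = c_t\zeta_t$. Under Condition \ref{ass:wildboot} the $\zeta_t$ are i.i.d.\ sub-Gaussian with $\E\zeta_t=0$ and $\E\zeta_t^2=1$, so $\{W_t\}_{t=1}^T$ are independent with $\E^*W_t=0$ and $\E^*|W_t|^3=|c_t|^3\E|\zeta_1|^3<\infty$, and since $S_{ij}^*=T^{-1/2}\sum_t W_t$ and $\tilde m_{ij}^{*2}=T^{-1}\sum_t W_t^2$,
\[
\frac{S_{ij}^*}{\tilde m_{ij}^*}=\frac{\sum_{t=1}^T W_t}{\bigl(\sum_{t=1}^T W_t^2\bigr)^{1/2}}.
\]
The self-normalized Cram\'er-type expansion, with moment exponent $2+\delta=3$, then gives, for a universal constant $c_0>0$,
\[
\frac{\Pro^*\bigl(S_{ij}^*/\tilde m_{ij}^*\ge \tt\bigr)}{Q(\tt)}=1+O(1)(1+\tt)^3\Delta_{ij}\quad\text{uniformly over }0\le\tt\le c_0\Delta_{ij}^{-1/3},
\]
where $\Delta_{ij}:=\E|\zeta_1|^3\sum_{t=1}^T|c_t|^3\big/\bigl(\sum_{t=1}^T c_t^2\bigr)^{3/2}$; the lower-tail statement follows identically after replacing $c_t$ by $-c_t$.

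It then suffices to control $\Delta_{ij}$ on a single high-probability event, uniformly in $(i,j)\in\cH$. Since $\sum_t|c_t|^3\le(\max_t|c_t|)\sum_t c_t^2$, we have $\Delta_{ij}\lesssim \max_t|c_t|\big/\bigl(\sum_t c_t^2\bigr)^{1/2}$. For the numerator, $|\hat u_{it}|\le|u_{it}|+\|\bx_t\|_\infty\|\hat\bphi_{i\cdot}^{\normalfont{\textsf{L}}}-\bphi_{i\cdot}\|_1\lesssim\sqrt{\log(N\vee T)}$ by Lemma \ref{lem:max_ym}, Proposition \ref{thm:errbound}(b) and $\bar s\lambda=o(1)$, while $|\bx_t'\hat\bomega_j|\le\|\bx_t\|_\infty\|\hat\bomega_j\|_1\lesssim M_\omega\sqrt{\log(N\vee T)}$, using $\|\hat\bomega_j\|_1\le M_\omega+\|\hat\bomega_j-\bomega_j\|_1\lesssim M_\omega$ from the proof of Proposition \ref{thm:asynormal}; hence $\max_t|c_t|\lesssim M_\omega\log(N\vee T)$ with probability $1-O((N\vee T)^{-\nu})$. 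For the denominator, writing $T^{-1}\sum_t c_t^2=\hat\sigma_i^2\hat\bomega_j'\hat\bSigma_x\hat\bomega_j+T^{-1}\sum_t(\hat u_{it}^2-\hat\sigma_i^2)\hat\bomega_j'\bx_t\bx_t'\hat\bomega_j$, Lemma \ref{lem:omegahat1}(a), Condition \ref{ass:mineig2} and Lemma \ref{lem:uhat2-sig2} yield $T^{-1}\sum_t c_t^2\ge\gamma^2-O(M_\omega^2\lambda)-O(\bar\tau_3)\gtrsim\gamma^2$ on the same event. Combining and taking a union bound over the $KN^2$ pairs (giving $O((N\vee T)^{-\nu+2})$, still of the claimed form), $\Delta_{ij}\lesssim M_\omega\log(N\vee T)/T^{1/2}$ simultaneously for all $(i,j)\in\cH$.

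Finally, under the hypothesis $M_\omega\le T^{1/2}/\log^3(N\vee T)$ one gets $\Delta_{ij}^{-1/3}\gtrsim\bigl(T^{1/2}/(M_\omega\log(N\vee T))\bigr)^{1/3}$, which diverges strictly faster than $\bar\tt=O(\sqrt{\log(N\vee T)})$ (recall $N$ is at most polynomial in $T$); hence the whole interval $[0,\bar\tt]$ lies inside the validity range of the expansion, and there the relative error is $O\bigl(M_\omega\log(N\vee T)/T^{1/2}\bigr)(1+\tt)^3$, uniformly in $(i,j)\in\cH$ and in $\tt\in[0,\bar\tt]$. Since $\log(NT)\asymp\log(N\vee T)$, this is the asserted statement, and the same argument applied with $-c_t$ gives the lower-tail version. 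The main obstacle is precisely this last uniformity: one must arrange that the moderate-deviation scale $\Delta_{ij}^{-1/3}$ exceeds $\bar\tt$ for \emph{every} $(i,j)$ at once, which forces the quantitative bounds on $\max_t|c_t|$ and the lower bound on $T^{-1}\sum_t c_t^2$ to hold on one event of probability $1-O((N\vee T)^{-\nu})$, and it is here that the bandwidth condition $M_\omega\le T^{1/2}/\log^3(N\vee T)$ is consumed; everything else is a deterministic (conditional-on-the-data) application of \cite{JingShaoWang2003} once those bounds are in force.
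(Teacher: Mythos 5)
Your proposal is correct and follows essentially the same route as the paper: condition on the data, view $S_{ij}^*/\tilde m_{ij}^*$ as a self-normalized sum of the independent variables $c_t\zeta_t$, apply the Cram\'er-type expansion of Jing, Shao and Wang (2003), and verify that the moderate-deviation range covers $[0,\bar{\tt}]$ by bounding $\max_t|\hat u_{it}||\bx_t'\hat\bomega_j|\lesssim M_\omega\log(NT)$ and lower-bounding $T^{-1}\sum_t c_t^2$ via Lemmas \ref{lem:omegahat1} and \ref{lem:uhat2-sig2}. Your quantity $\Delta_{ij}^{-1/3}$ is exactly the paper's $d_{ij}=B_{ij}/L_{ij}^{1/3}$, so the two arguments coincide step for step.
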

\begin{proof}[Proof of Lemma \ref{Jiang}]
	Apply \cite[Theorem 2.3]{JingShaoWang2003} to the self-normalized sum, $S_{ij}^*/\tilde{m}_{ij}^*$ under $\Pro^*$. 
	By Lemmas \ref{lem:uhat2-sig2} and \ref{lem:omegahat1} with Conditions \ref{ass:mineig2} and \ref{ass:wildboot}, we have
	\begin{align*}
		B_{ij}^2&:=T^{-1}\sum_{t=1}^T\E^*{u}_{it}^{*2}(\bx_t'\hat{\bomega}_{j})^{2}
		= T^{-1}\sum_{t=1}^T\hat{u}_{it}^{2}(\bx_t'\hat{\bomega}_{j})^{2} \\
		&\geq \hat{\sigma}_{i}^{2}\hat{\bomega}_{j}'\hat{\bSigma}_x\hat{\bomega}_{j} - O(\bar{\tau}_3) \geq \gamma^3(1-o(1))
	\end{align*}
	with probability at least $1-O((N\vee T)^{-\nu})$. Furthermore, since $\E^*|\zeta|^3$ is bounded by a finite constant under Condition \ref{ass:wildboot}, we have
	\begin{align*}
		L_{ij} &:= T^{-3/2}\sum_{t=1}^T\E^*|{u}_{it}^*|^{3}|\bx_t'\hat{\bomega}_{j}|^{3}
		\asymp T^{-3/2}\sum_{t=1}^T|\hat{u}_{it}|^{3}|\bx_t'\hat{\bomega}_{j}|^{3} \\
		&\leq T^{-3/2}\sum_{t=1}^T|\hat{u}_{it}|^{2}|\bx_t'\hat{\bomega}_{j}|^{2}\max_{t}|\hat{u}_{it}||\bx_t'\hat{\bomega}_{j}|
		=T^{-1/2}B_{ij}^2\max_{t}|\hat{u}_{it}||\bx_t'\hat{\bomega}_{j}|, 
	\end{align*}
	where Proposition \ref{thm:errbound} and (the proofs of) Lemmas \ref{lem:omegahat1} and \ref{lem:max_ym} give 
	\begin{align*}
		&\max_{t}|\hat{u}_{it}||\bx_t'\hat{\bomega}_{j}|
		\lesssim \max_{t}\left(|u_{it}|+\|\bx_t\|_\infty\|\bdelta_i\|_1\right)\|\bx_t\|_\infty\|\hat{\bomega}_{j}\|_1\\
		&\leq \left(\max_{i}\max_{t}|u_{it}|+\max_{t}\|\bx_t\|_\infty\|\bdelta_i\|_1\right)\max_{t}\|\bx_t\|_\infty\max_{j}\|\hat{\bomega}_{j}\|_1
		\lesssim M_\omega\log(NT) 
	\end{align*}
	with probability at least $1-O((N\vee T)^{-\nu})$.
	Thus 
	\begin{align*}
		d_{ij} := \frac{B_{ij}}{L_{ij}^{1/3}}
		\gtrsim \frac{T^{1/6}B_{ij}^{1/3}}{M_\omega^{1/3}\log^{1/3}(NT) }
		\geq \frac{T^{1/6}\gamma^{1/2}(1+o(1))}{M_\omega^{1/3}\log^{1/3}(NT) }
		\geq \log^{2/3}(N\vee T),
	\end{align*}
	where the lower bound holds uniformly in $(i,j)$ since $M_\omega \leq T^{1/2}/\log^3(N\vee T)$ is implied by the condition $\bar{\mu}=o(1)$. Therefore, by \cite[Theorem 2.3]{JingShaoWang2003}, we conclude that 
	\begin{align*}
		\Pro^*\left( \frac{S_{ij}^*}{\tilde{m}_{ij}^{*}}  > \tt \right)
		= Q(\tt)\left\{ 1+O\left( \frac{M_\omega\log NT}{T^{1/2}} \right)\left(1+\tt\right)^{3} \right\}
	\end{align*}
	holds with high probability for $\tt\in[0,\bar{\tt}]$, where $\bar{\tt}\asymp \log^{1/2}(N\vee T)=o(d_{ij})$, uniformly in $(i,j)\in\cH$. This completes the proof. 
\end{proof}

\begin{lem}\label{lem:screen}
Let $\nu>4$. If Conditions \ref{ass:subG}--\ref{ass:mineig2} and \ref{ass:wildboot} are true, then the event, $\cS\subset \what\cS_{\textsf{L}}$, occurs with probability at least $1-O((N\vee T)^{-\nu+1})$.
\end{lem}
\begin{proof}[Proof of Lemma \ref{lem:screen}]
	For each $i\in[N]$, the Karush-Kuhn-Tucker (KKT) condition of the minimization problem in \eqref{syslasso} is given by 
	\begin{align*}
		\bX\left(\by_{i\cdot}-\hat{\bphi}_{i\cdot}\bX\right)'/T = \bpsi_i\lambda,
	\end{align*}
	where $\bpsi_i=(\psi_{i1},\dots,\psi_{KN})'$ is defined by $\psi_{ij}=\sgn{\hat{\phi}_{ij}}$ for $\hat{\phi}_{ij}\not=0$ and $\psi_{ij}\in[-1,1]$ for $\hat{\phi}_{ij}=0$. By the definition of $\by_{i\cdot}$ and some algebra, the KKT condition is equivalently denoted as
	\begin{align*}
		\bOmega(T^{-1}\bX\bX'-\bSigma_x)(\bphi_{i\cdot}-\hat{\bphi}_{i\cdot})'+(\bphi_{i\cdot}-\hat{\bphi}_{i\cdot})'+T^{-1}\bOmega\bX\bu_{i\cdot}' = \bOmega\bpsi_i\lambda.
	\end{align*}
	We Take the $\ell_\infty$-norm. Then by the triangle and H\"older's inequality, we obtain
	\begin{align*}
		\|\bphi_{i\cdot}-\hat{\bphi}_{i\cdot}\|_\infty
		&\leq \|T^{-1}\bOmega\bX\bu_{i\cdot}'\|_\infty + \|\bOmega(T^{-1}\bX\bX'-\bSigma_x)(\bphi_{i\cdot}-\hat{\bphi}_{i\cdot})'\|_\infty + \|\bOmega\bpsi_i\|_\infty\lambda \\
		&\leq \|\bOmega\|_\infty \left(\|T^{-1}\bX\bu_{i\cdot}'\|_\infty + \|T^{-1}\bX\bX'-\bSigma_x\|_{\max}\|\bphi_{i\cdot}-\hat{\bphi}_{i\cdot}\|_1 + \lambda\right). 
	\end{align*}
	By Proposition \ref{thm:errbound}(b), Condition \ref{ass:invest}, and Lemmas \ref{lem:tail_yu} and \ref{lem:tail_yy}, we have 
	\begin{align*}
		\|\bphi_{i\cdot}-\hat{\bphi}_{i\cdot}\|_\infty
		\lesssim M_\omega \sqrt{T^{-1}\log^3(N\vee T)}=:u_n
	\end{align*}
	with probability at least $1- O((N\vee T)^{-\nu})$. 
	
	The event, $\{\cS_i\subset\what\cS_i^\textsf{L}\}$, is equivalent to the event that $\hat{\phi}_{ij}^\textsf{L}\not=0$ holds whenever $j\in\cS_i$. This together with the triangle inequality implies that
	\begin{align*}
		\Pro\left( \cS_i\subset\what\cS_i^\textsf{L} \right) 
		&= \Pro\left( \min_{j\in\cS_i}|\hat{\phi}_{ij}^\textsf{L}| >0 \right) \\
		&\geq \Pro\left( \min_{j\in\cS_i}\left\{|\phi_{ij}|-|\hat{\phi}_{ij}^\textsf{L}-\phi_{ij} |\right\} >0 \right) \\
		&\geq \Pro\left( \max_{j\in\cS_i}|\hat{\phi}_{ij}^\textsf{L}-\phi_{ij} | <\min_{j\in\cS_i}|\phi_{ij} | \right)\\
		&\geq \Pro\left( \|\what{\bphi}_{i\cdot}^\textsf{L}-\bphi_{i\cdot} \|_\infty < \min_{j\in\cS_i}|\phi_{ij}| \right).
	\end{align*}
	Therefore, if $u_n 
	\ll \min_{j\in\cS_i}|\phi_{ij}| $, then $\cS_i\subset\what\cS_i^\textsf{L}$ occurs with probability at least $1- O((N\vee T)^{-\nu})$. This result holds for all $i\in[N]$. Then we can conclude
	\begin{align*}
		\Pro\left(\cS\subset \what\cS_{\normalfont{\textsf{L}}}\right)
		\geq 1-\sum_{i\in[N]}\Pro\left(\cS_i\not\subset \what\cS_i^{\normalfont{\textsf{L}}}\right)
		\geq 1-NO((N\vee T)^{-\nu}). 
	\end{align*}
	the lower bound converges to one since $\nu>4$. This establishes the result. 
\end{proof}

\section{Precision Matrix Estimation}\label{ssec:precision}

The construction of the debiased lasso estimator requires a precision matrix estimator. In a low dimensional setting, it is obtained as the inverse of the sample covariance matrix of $\bX$, but it becomes less accurate or even infeasible as the dimensionality tends to large. To this end, we need some regularized estimator. In this paper we employ the CLIME method of \cite{CaiEtAl2011} for the estimation of $\bOmega$.

Define $\hat\bTheta=(\hat{\theta}_{ij}) \in \bbR^{KN\times KN}$ as the solution to the following minimization problem: 
\begin{align*}
	\min\|\bTheta\|_1~~~\text{subject to}~~~\|\hat{\bSigma}_{x,\ep}\bTheta-\bI_{KN}\|_{\max} \leq \lambda_1,
\end{align*}
where $\lambda_1$ is a positive regularization parameter and  $\hat{\bSigma}_{x,\ep}=\hat{\bSigma}_x+\ep\bI_{KN}$ for $\ep\geq 0$. Note that $\hat{\bSigma}_{x,0}=\hat{\bSigma}_{x}$. The CLIME estimator of $\bOmega$ is then obtained by symmetrization with
\begin{align*}
	\hat{\omega}_{ij} = \hat{\omega}_{ji}
	= \hat{\theta}_{ij}1\{|\hat{\theta}_{ij}|\leq |\hat{\theta}_{ji}|\}
	+ \hat{\theta}_{ji}1\{|\hat{\theta}_{ij}|> |\hat{\theta}_{ji}|\}.
\end{align*}
The non-asymptotic error bound is derived in Lemma \ref{lem:omegahat1}, which is of independent interest. For related theoretical results of stationary time series, see \cite{ShuNan2019}. In our numerical analysis, we use \texttt{fastclime} by \cite{PangEtAl2014}, which is available at \url{https://github.com/cran/fastclime}.

When the lasso is debiased (desparsified) in a linear regression context, the \textit{nodewise regression} \citep{van2014asymptotically} is frequently used as an alternative. However, it might not suitable for the VAR models since it repeats the regression of each variable in $\bX$ on the others, which includes regressions of a ``past'' variable on ``future'' ones. 

\section{Additional Results on Robustification}\label{ssec:eval}

In this section, we show that $|\tT_{ij}|^p$ and $\exp(c|\tT_{ij}|^\alpha)$ are UI for some fixed constants $p,c,\alpha>0$ for arbitrarily fixed $(i,j)\in\cS^c$. In pursuit of the goal, we prove 
\begin{align}\label{ui_tail}
\sup_{T,N,K}\Pro\left(|\tT_{ij}|>x\right) \lesssim \exp\left(-\delta x^\alpha + \log x \right)
\end{align}
for any (large) $x>0$ and some $\delta,\alpha>0$, which is sufficient for $\exp(c|\tT_{ij}|^\alpha)$ to be UI for any $c\in[0,\delta)$ by Lemma \ref{lem:ui_suffice_exp} below. Furthermore, this result implies the UI of $|\tT_{ij}|^p$ for any constant $p>0$ by Lemma \ref{lem:ui_exp_poly}.

Before starting the proof of \eqref{ui_tail}, we assume the following: 
\begin{enumerate}
\item[A.] Conditions 2--5 in Section \ref{sec:theory}.
\item[B.] The error term, $\{\bu_t\}$, is a sequence of i.i.d.\ Gaussian random vectors with mean zero and covariance matrix $\bSigma_u$. 
\item[C.] The parameter space of $m_{ij}$ is lower bounded by some positive constant. 
\item[D.] $T\geq \{(s_\omega^{4/(1-2r)}M_\omega^{8(1-r)/(1-2r)})\vee s_i^4\}\log^{10}(N\vee T)$ and $N$ is at most a polynomial order of $T$. 
\end{enumerate}

Condition B is used for establishing Lemma \ref{lem:ui_z/m}, where a concentration inequality for a Gaussian martingale is applied. Condition C prevents $\hat{m}_{ij}$ from being unboundedly close to zero in probability. These two conditions help streamline the proof.

\begin{proof}[Proof of \eqref{ui_tail}]
We have 
\begin{align}
\Pro\left(|\tT_{ij}|>x\right)
&=\Pro\left( \left|\frac{z_{ij}+r_{ij}}{m_{ij}}\right|\left|\frac{m_{ij}}{\hat{m}_{ij}}\right|>x\right) \notag \\ 
&\leq \Pro\left( \left|z_{ij}/m_{ij}\right|>x/2\right) 
+ \Pro\left( \left|r_{ij}\right|\gtrsim x/2\right). \label{two_P}
\end{align}
We bound each probability as an exponential function of $x$. By Lemmas \ref{lem:ui_z/m} and \ref{lem:ui_2ineq}(b), the first probability of \eqref{two_P} is evaluated as 
\begin{align}
\Pro\left( \left|z_{ij}/m_{ij}\right| > x\right) 
&\lesssim \exp\left( -\frac{x^2}{2(x+1)} \right) + 
\Pro\left( \left\|\hat{\bSigma}_x-\bSigma_x\right\|_{\max} \gtrsim x/M_\omega^2\right) \notag \\
&\lesssim \exp\left( -\frac{x}{2(1+1/x)} \right) + 
K^2N^2T^3x^2\exp\left(-\frac{x^{2/3}T^{1/3}}{M_\omega^{4/3}}\right). \label{ui_z/m}
\end{align}
The second probability of \eqref{two_P} is 
\begin{align*}
&\Pro\left( \left|r_{ij}\right| \gtrsim x\right) \\
&\leq \Pro\left( \left\|\frac{1}{T^{3/4}}\bu_{i\cdot}\bX^{\prime}\right\|_\infty \left\|T^{1/4}(\hat{\bomega}_{j}-{\bomega}_{j})\right\|_1
+ \left\|T^{1/4}(\hat\bphi_{i\cdot}^{\normalfont{\textsf{L}}} - \bphi_{i\cdot}) \right\|_1 \left\|T^{1/4}(\hat{\bSigma}_{x}\hat{\bomega}_{j}-\be_{j})\right\|_\infty \gtrsim x\right) \\
&\leq \Pro\left( \left\|\frac{1}{T}\bu_{i\cdot}\bX^{\prime}\right\|_\infty \gtrsim \frac{x^{1/2}}{T^{1/4}}\right) 
+ \Pro\left(\left\|\hat{\bomega}_{j}-{\bomega}_{j}\right\|_1\gtrsim \frac{x^{1/2}}{T^{1/4}}\right) \\
&\qquad +\Pro\left(\left\|\hat\bphi_{i\cdot}^{\normalfont{\textsf{L}}} - \bphi_{i\cdot} \right\|_1\gtrsim \frac{x^{1/2}}{T^{1/4}}\right) 
+\Pro\left(  \left\|\hat{\bSigma}_{x}\hat{\bomega}_{j}-\be_{j}\right\|_\infty \gtrsim \frac{x^{1/2}}{T^{1/4}}\right).
\end{align*}
By Lemma \ref{lem:ui_3terms}, the last three therms are evaluated as
\begin{align*}
&\Pro\left(\left\|\hat{\bomega}_{j}-{\bomega}_{j}\right\|_1\gtrsim \frac{x^{1/2}}{T^{1/4}}\right) 
+\Pro\left(\left\|\hat\bphi_{i\cdot}^{\normalfont{\textsf{L}}} - \bphi_{i\cdot} \right\|_1\gtrsim \frac{x^{1/2}}{T^{1/4}}\right) 
+\Pro\left(  \left\|\hat{\bSigma}_{x}\hat{\bomega}_{j}-\be_{j}\right\|_\infty \gtrsim \frac{x^{1/2}}{T^{1/4}} \right) \\
&\lesssim \Pro\left( \left\|\frac{1}{T}\bu_{i\cdot}\bX^{\prime}\right\|_\infty \gtrsim \frac{x^{1/(2-2r)}}{T^{1/(4-4r)}s_\omega^{1/(1-r)}M_\omega^2}  \right) 
+ \Pro\left( \left\|\hat{\bSigma}_x-\bSigma_x\right\|_{\max} > \frac{x^{1/(2-2r)}}{T^{1/(4-4r)}s_\omega^{1/(1-r)}M_\omega^2} \right) \\
&\qquad + \Pro\left( \left\|\frac{1}{T}\bu_{i\cdot}\bX^{\prime}\right\|_\infty \gtrsim \frac{x^{1/2}}{T^{1/4}(s_i\vee M_\omega)} \right) 
+ \Pro\left( \left\|\hat{\bSigma}_x-\bSigma_x\right\|_{\max} \gtrsim \frac{x^{1/2}}{T^{1/4}(s_i\vee M_\omega)} \right).
\end{align*}
Thus by Lemma \ref{lem:ui_2ineq}, we obtain
\begin{align}
&\Pro\left( \left|r_{ij}\right| \gtrsim x\right) \notag \\
&\lesssim K^2N^2T^3x \left\{ \exp\left(-\frac{x^{1/(3-3r)}T^{(1-2r)/(6-6r)}}{s_\omega^{2/(3-3r)} M_\omega^{4/3}}\right) + \exp\left(-\frac{x^{1/3}T^{1/6}}{(s_i\vee M_\omega )^{2/3}}\right) \right\}. \label{ui_r}
\end{align}

Combining \eqref{two_P} with \eqref{ui_z/m} and \eqref{ui_r} gives
\begin{align*}
&\Pro\left(|\tT_{ij}|\gtrsim x\right) \lesssim \exp\left( -\frac{x}{2(1+1/x)} \right) + K^2N^2T^3x^2 \exp\left(-\frac{x^{2/3}T^{1/3}}{M_\omega^{4/3}}\right)\\
&\qquad + K^2N^2T^3x \left\{ \exp\left(-\frac{x^{1/(3-3r)}T^{(1-2r)/(6-6r)}}{s_\omega^{2/(3-3r)} M_\omega^{4/3}}\right) + \exp\left(-\frac{x^{1/3}T^{1/6}}{(s_i\vee M_\omega )^{2/3}}\right) \right\} \\
&\lesssim K^2N^2T^3x \left\{ \exp\left(-\frac{x^{1/(3-3r)}T^{(1-2r)/(6-6r)}}{s_\omega^{2/(3-3r)} M_\omega^{4/3}}\right) + \exp\left(-\frac{x^{1/3}T^{1/6}}{(s_i\vee M_\omega )^{2/3}}\right) \right\}. 
\end{align*}
Under the condition, $T\geq \{(s_\omega^{4/(1-2r)}M_\omega^{8(1-r)/(1-2r)})\vee s_i^4\}\log^{10}(N\vee T)$, the upper bound reduces to 
\begin{align*}
\Pro\left(|\tT_{ij}|\gtrsim x\right) 
&\lesssim \exp\left(- x^{1/3}\log^{1+\ep}(N\vee T) +\log x + 7\log(N\vee T)\right) 
\end{align*}
for some $\ep>0$. Taking the supremum over $(N,T,K)$ with the assumed condition yields 
\begin{align*}
\sup_{N,T,K}\Pro\left(|\tT_{ij}|\gtrsim x\right) 
&\lesssim \exp\left(- x^{1/3} +\log x \right).
\end{align*}
This completes the proof. 
\end{proof}

\subsection{Lemmas and their proofs}

\begin{lem}\label{lem:ui}
Any positive random sequence $\{X_n\}$ that satisfies 
\begin{align*}
\sup_n\Pro(X_n>x) \lesssim x^{-r}\log^s x 
\end{align*}
for some $r>1$ and $s>0$ is UI. 
\end{lem}
\begin{proof}[Proof of Lemma \ref{lem:ui}]
For any $\epsilon\in(1,r)$, we have 
\begin{align*}
\sup_n \E [X_n^{\epsilon}] 
= \sup_n\int_0^\infty \Pro \left(X_n>x^{1/\epsilon} \right) \diff x
\lesssim 1 + \epsilon^{-s} \int_1^\infty  x^{-r/\epsilon}\log^s x \diff x.
\end{align*}
Since $r/\epsilon>1$, the integral is finite. 
This condition implies that $\{X_n\}$ is UI; see \citep[pp.\ 31--32]{Billingsley1999}. This completes the proof.  
\end{proof}

\begin{lem}\label{lem:ui_suffice_exp}
For any positive random sequence $\{X_n\}$, if 
\begin{align*}
\Pro(X_n>x)\lesssim \exp(-\delta x^\alpha + \log x)
\end{align*}
holds for any $x>0$ and some $\alpha,\delta>0$, then $\exp(c X_n^\alpha)$ is UI for any $c\in[0,\delta)$. 
\end{lem}
\begin{proof}[Proof of Lemma \ref{lem:ui_suffice_exp}]
We have
\begin{align*}
\Pro\left(\exp(c X_n^\alpha) >x \right) 
&= \Pro\left( X_n > (\log x^{1/c})^{1/\alpha} \right) \\
&\lesssim \exp\left(-\delta \log x^{1/c} +\log(\log x^{1/c})^{1/\alpha} \right) 
\asymp x^{-\delta/c}\log^{1/\alpha} x. 
\end{align*}
Since $c\in[0,\delta)$ is assumed, we have $\delta/c>1$. Thus Lemma \ref{lem:ui} establishes the UI of $\exp(c X_n^\alpha)$. This completes the proof. 
\end{proof}

\begin{lem}\label{lem:ui_exp_poly}
For any positive random sequence $\{X_n\}$, if $\exp(cX_n^\alpha)$ is UI for given constants $c,\alpha>0$, then $X_n^p$ is also UI for any constant $p>0$. 
\end{lem}
\begin{proof}[Proof of Lemma \ref{lem:ui_exp_poly}]
For any $x>0$, we have $x^p\leq (2p/c)^{p/\alpha}\exp(cx^\alpha)$. Thus, for any $M>0$, we obtain
\begin{align*}
&\sup_n \E \left[ X_n^p 1\{X_n^p\geq M\} \right] \\
&\leq \sup_n \E \left[ (2p/c)^{p/\alpha} \exp(cX_n^\alpha)1\{(2p/c)^{p/\alpha} \exp(cX_n^\alpha)\geq M\}\right].
\end{align*}
Since $(2p/c)^{p/\alpha}$ is a constant, and hence $(2p/c)^{p/\alpha} \exp(cX_n^\alpha)$ is UI by the assumption, the upper bound can be made arbitrarily small by taking large $M$. This establishes the UI of $X_n^p$.  
\end{proof}

\begin{lem}\label{lem:ui_z/m}
For any $x>0$, we have 
\begin{align*}
\Pro\left( \left|z_{ij}/m_{ij}\right| > x\right) 
\lesssim \exp\left( -\frac{x^2}{2(x+1)} \right) + 
\Pro\left( \left\|\hat{\bSigma}_x-\bSigma_x\right\|_{\max} > \omega_jx/M_\omega^2\right).
\end{align*}
\end{lem}
\begin{proof}[Proof of Lemma \ref{lem:ui_z/m}]
Define 
\begin{align*}
M_T &= z_{ij}/m_{ij}=\frac{1}{\sqrt{T}}\sum_{t=1}^T\frac{u_{it}\bx_t'\bomega_j}{m_{ij}}, \\
\langle M \rangle_T &= \frac{1}{T} \sum_{t=1}^T\E\left[\frac{u_{it}^2(\bx_t'\bomega_j)^2}{m_{ij}^2}\mid \mathcal{F}_{t-1}\right] 
= \frac{\bomega_j'\hat{\bSigma}_x\bomega_j}{\bomega_j'{\bSigma}_x\bomega_j}.
\end{align*}
By \citep[Theorem 3.25]{Bercu2015} under the assumed Gaussianity of $u_{it}$, we have 
\begin{align*}
\Pro\left( M_T > x\right) 
&\leq \Pro\left( M_T > x,~ \langle M \rangle_T\leq x+1\right) 
+ \Pro\left( \langle M \rangle_T > x+1 \right) \\
&\leq \exp\left( -\frac{x^2}{2(x+1)} \right) 
+ \Pro\left( \frac{\bomega_j'\hat{\bSigma}_x\bomega_j}{\bomega_j'{\bSigma}_x\bomega_j} > x+1\right).
\end{align*}
By H\"older's inequality and $\|\bomega_j\|_1\leq M_\omega$ in Condition \ref{ass:invest}, the second probability of this upper bound is
\begin{align*}
\Pro\left( \frac{\bomega_j'\hat{\bSigma}_x\bomega_j}{\bomega_j'{\bSigma}_x\bomega_j} > x+1\right)
&= \Pro\left( \bomega_j'\left(\hat{\bSigma}_x-\bSigma_x\right)\bomega_j > \omega_jx\right) \\
&\leq \Pro\left( \left\|\hat{\bSigma}_x-\bSigma_x\right\|_{\max} > \omega_jx/M_\omega^2\right)
\end{align*}
This completes the proof. 
\end{proof}

\begin{lem}\label{lem:ui_3terms}
For any $\lambda>0$, we have the following:
\begin{align*}
(a)~~&\Pro\left(\left\|\hat{\bomega}_{j}-{\bomega}_{j}\right\|_1 > \lambda \right) \\
&\lesssim \Pro\left( \left\|\frac{1}{T}\bu_{i\cdot}\bX^{\prime}\right\|_\infty \gtrsim \frac{\lambda^{1/(1-r)}}{s_\omega^{1/(1-r)}M_\omega^2}  \right) + \Pro\left( \left\|\hat{\bSigma}_x-\bSigma_x\right\|_{\max} > \frac{\lambda^{1/(1-r)}}{s_\omega^{1/(1-r)}M_\omega^2} \right), \\
(b)~~&\Pro\left(\left\|\hat\bphi_{i\cdot}^{\normalfont{\textsf{L}}} - \bphi_{i\cdot} \right\|_1 > \lambda \right) \\
&\lesssim \Pro\left( \left\|\frac{1}{T}\bu_{i\cdot}\bX^{\prime}\right\|_\infty \gtrsim \frac{\lambda}{s_i} \right) 
+ \Pro\left( \left\|\hat{\bSigma}_x-\bSigma_x\right\|_{\max} \gtrsim \frac{\lambda}{s_i} \right), \\
(c)~~&\Pro\left(  \left\|\hat{\bSigma}_{x}\hat{\bomega}_{j}-\be_{j}\right\|_\infty > \lambda \right) \\
&\lesssim \Pro\left( \left\|\frac{1}{T}\bu_{i\cdot}\bX^{\prime}\right\|_\infty \gtrsim \frac{\lambda}{M_\omega} \right) + \Pro\left( \left\|\hat{\bSigma}_x-\bSigma_x\right\|_{\max} \gtrsim \frac{\lambda}{M_\omega} \right).
\end{align*}
\end{lem}
\begin{proof}[Proof of Lemma \ref{lem:ui_3terms}]
The proofs of (a) and (c) complete by the proof of Proposition \ref{thm:asynormal}. Similarly, the proof of Proposition \ref{thm:errbound} gives the proof of (b). 
\end{proof}

\begin{lem}\label{lem:ui_2ineq}
For any large $y>0$, we have the following:
\begin{align*}
(a)~~&\Pro\left( \left\|\frac{1}{T}\bu_{i\cdot}\bX^{\prime}\right\|_\infty > y \right) \lesssim KN^2T^2y\exp\left(-y^{2/3}T^{1/3}\right), \\
(b)~~&\Pro\left( \left\|\hat{\bSigma}_x-\bSigma_x\right\|_{\max} > y \right) \lesssim K^2N^2T^3y^2\exp\left(-y^{2/3}T^{1/3}\right).
\end{align*}
\end{lem}
\begin{proof}[Proof of Lemma \ref{lem:ui_2ineq}]
In Lemma \ref{lem:tail_yu}, setting $\bar{c}_1=y^{2/3}T^{1/3}$ and $r_T=yT$ yields 
\begin{align*}
&\Pro\left( \left\|\frac{1}{T}\bu_{i\cdot}\bX^{\prime}\right\|_\infty > y \right) \\
&\lesssim KN^2\exp\left(-y^{2/3}T^{1/3}\right) + KN^2T^2y\exp\left(-y^{2/3}T^{1/3}\right) + KT^{2/3}\frac{\delta^{yT}\log N}{y^{2/3}} \\
&\lesssim KN^2T^2y\exp\left(-y^{2/3}T^{1/3}\right),
\end{align*}
where the last inequality holds for any $\delta\in(0,1)$ as long as $y>0$ is sufficiently large. 
Similarly in Lemma \ref{lem:tail_yy}, setting $\bar{c}_2=y^{2/3}T^{1/3}$ and $r_T=yT$ yields 
\begin{align*}
&\Pro\left( \left\|\hat{\bSigma}_x-\bSigma_x\right\|_{\max} > y \right) \\
&\lesssim K^2N^2T^2y^2\exp\left(-y^{2/3}T^{1/3}\right) + K^2N^2T^3y^2\exp\left(-y^{2/3}T^{1/3}\right) + K^2 \frac{\delta^{yT}\log N}{y^{2/3}T^{1/3}} \\
&\lesssim K^2N^2T^3y^2\exp\left(-y^{2/3}T^{1/3}\right).
\end{align*}
This completes the proof. 
\end{proof}

\setcounter{table}{0}
\renewcommand{\thetable}{\thesection\arabic{table}}
\setcounter{figure}{0}
\renewcommand{\thefigure}{\thesection\arabic{figure}}
\section{Additional Experimental Results}


First, we report the results of a multiple test at $q=0.1$ in Figure \ref{figure:prelim2} to the results at $q=0.2$ in Figure \ref{figure:prelim}(d). Using the larger value of $q$, the overall red and blue color in Figure \ref{figure:prelim2} becomes weaker than in \ref{figure:prelim}(d), suggesting that all the elements tend to be selected less frequently.

\graphicspath{ {./images/} }
\begin{figure}[h!]
	\centering
	\begin{minipage}{0.24\hsize}
		\centering
		\includegraphics[width=1.00\linewidth]{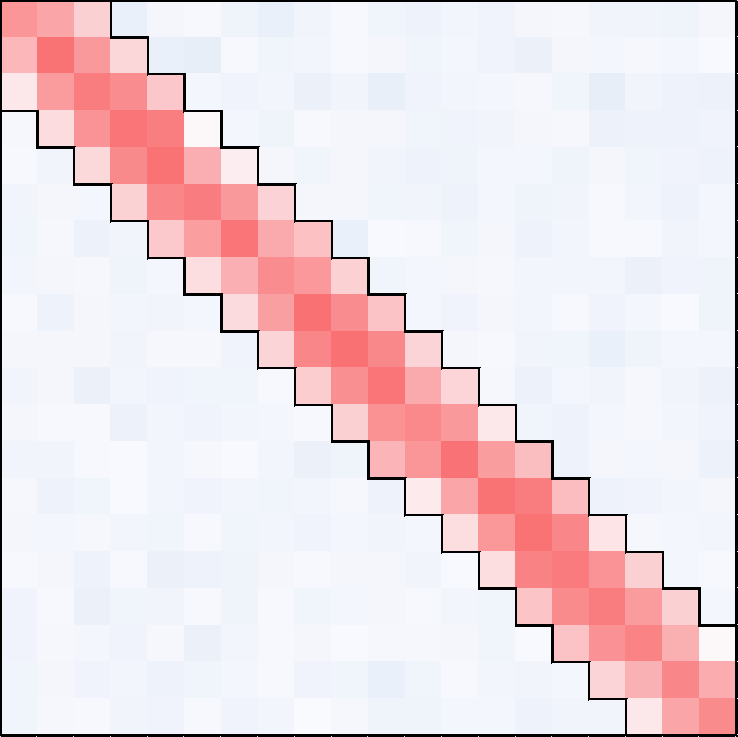}
		\subcaption{\footnotesize{multiple test,$q=0.1$}}
	\end{minipage} 
 \\
    \centering
 	\includegraphics[width=0.2\linewidth]{legend.red.blue.pdf}
	\caption{Heatmap of selection frequencies in  $\bPhi_1$}
	\label{figure:prelim2}
\end{figure}

Figure \ref{figure:phi1} shows a heatmap of $\bPhi$ used for Table \ref{table:fdr_het1} over three different degrees of sparsity (controlled by $d$), given $N$. Figure \ref{figure:phi2} shows the $\bPhi$ in the additional design. As can be seen from the cross shape, the new design differs from the first design in the following aspects: (i) more intensive clustering with fewer nodes; (ii) higher sparsity. We believe that considering these two different networks sheds light on the robustness properties of our inference method for different network structures. 

\graphicspath{ {./images/} }
\begin{figure}[h!]
	\centering
	\begin{minipage}{0.24\hsize}
		\centering
		\includegraphics[width=1.00\linewidth]{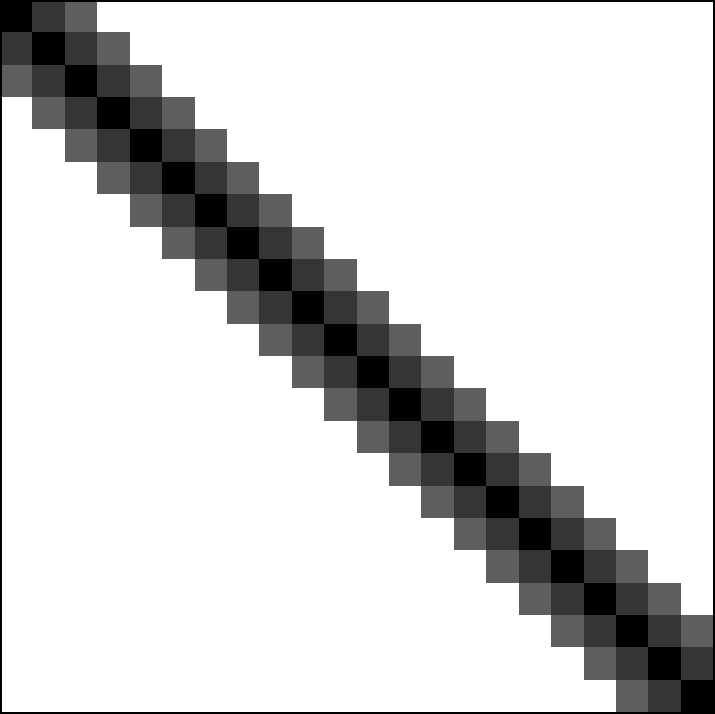}
		\subcaption{\footnotesize{$d=2$ $(m=5)$}}
	\end{minipage}
	\begin{minipage}{0.24\hsize}
		\centering
		\includegraphics[width=1.00\linewidth]{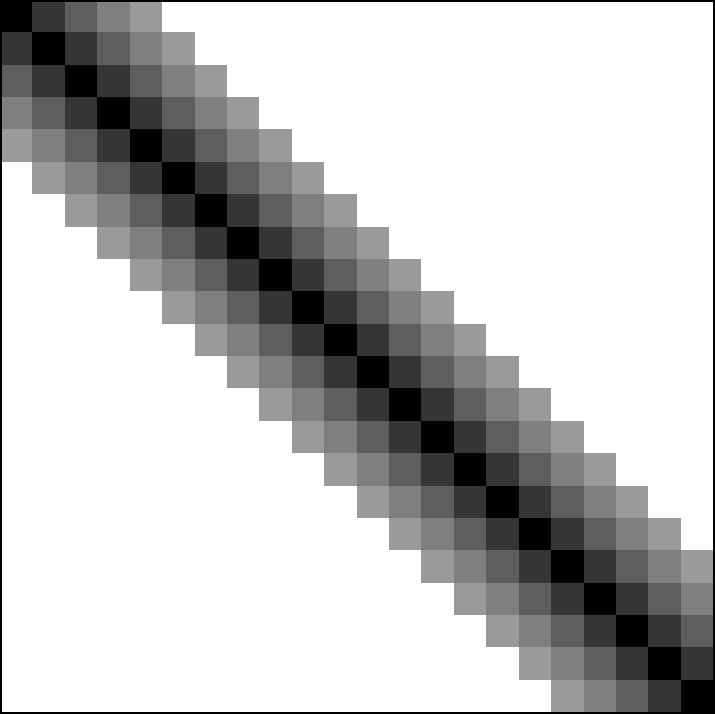}
		\subcaption{\footnotesize{$d=4$ $(m=9)$}}
	\end{minipage}
	\begin{minipage}{0.24\hsize}
		\centering
		\includegraphics[width=1.00\linewidth]{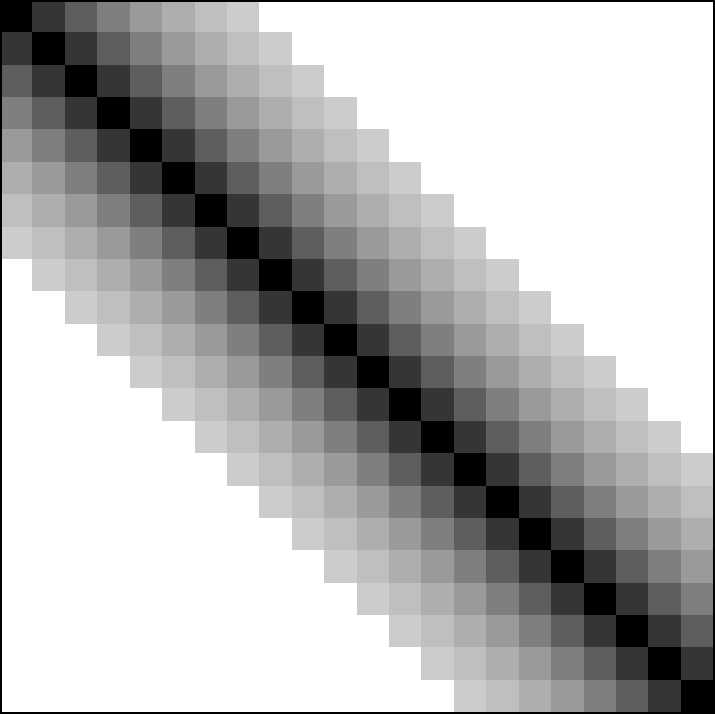}
		\subcaption{\footnotesize{$d=7$ $(m=15)$}}
	\end{minipage} 
	
	\centering
	\includegraphics[width=0.1\linewidth]{legend.black.pdf}
	\caption{Heatmap of absolute value of $\bPhi_1$, Design 1}
	\label{figure:phi1}
\end{figure}

\graphicspath{ {./images/} }
\begin{figure}[h!]
	\centering
	\begin{minipage}{0.24\hsize}
		\centering
		\includegraphics[width=1.00\linewidth]{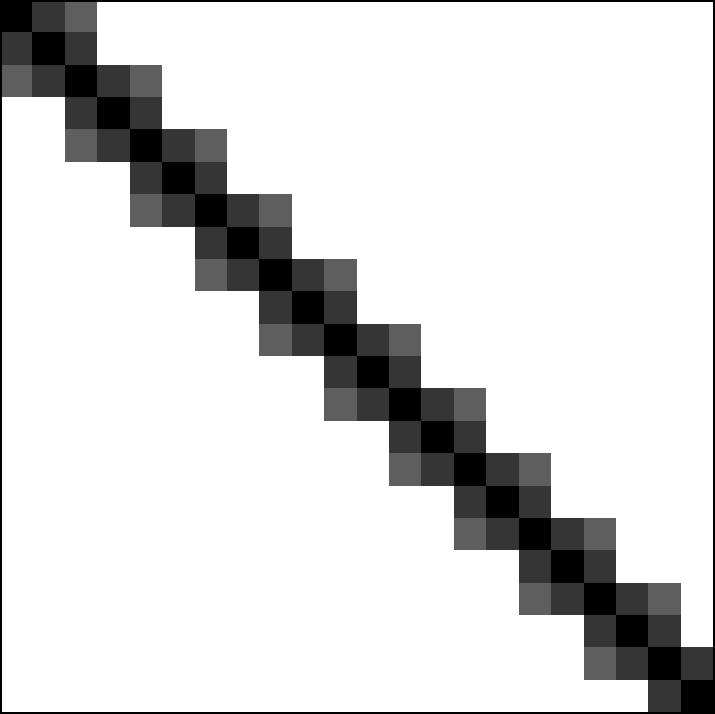}
		\subcaption{\footnotesize{$d=2$ $(m=5)$}}
	\end{minipage}
	\begin{minipage}{0.24\hsize}
		\centering
		\includegraphics[width=1.00\linewidth]{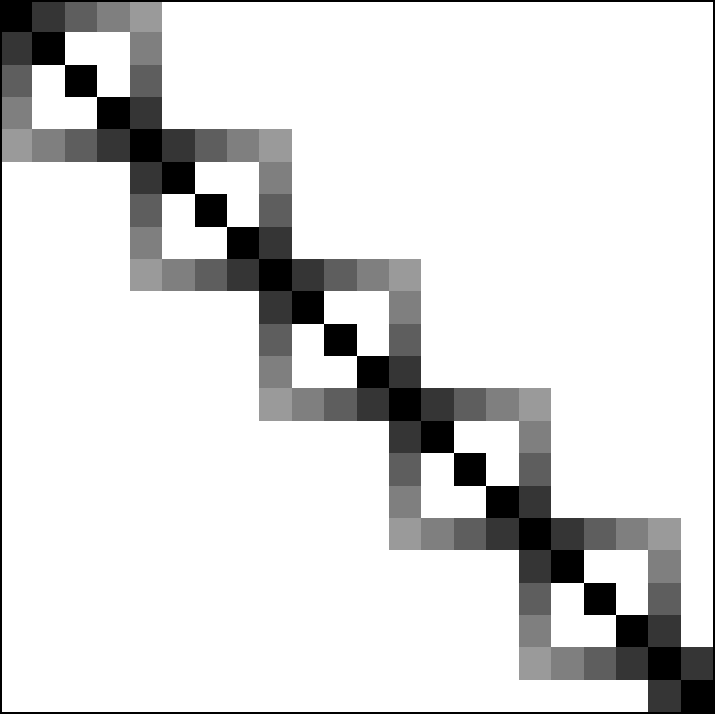}
		\subcaption{\footnotesize{$d=4$ $(m=9)$}}
	\end{minipage}
	\begin{minipage}{0.24\hsize}
		\centering
		\includegraphics[width=1.00\linewidth]{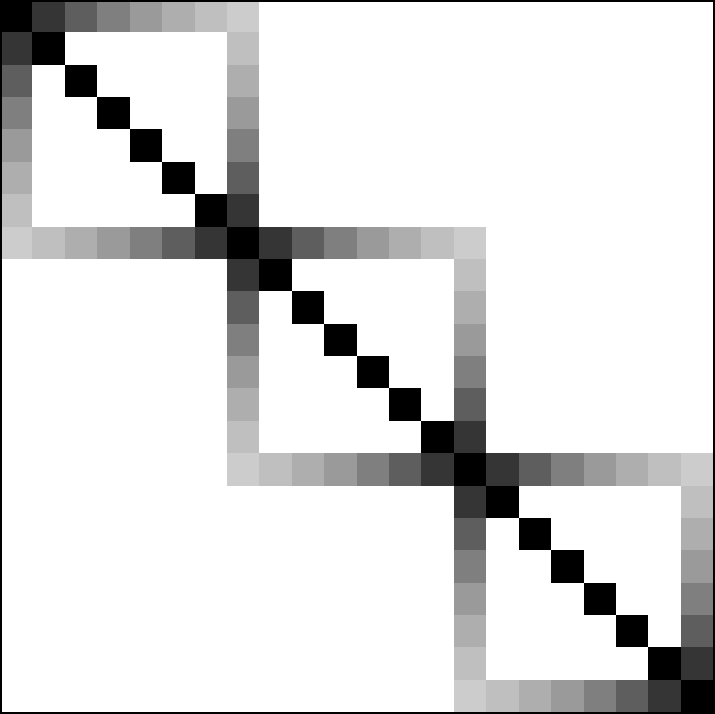}
		\subcaption{\footnotesize{$d=7$ $(m=15)$}}
	\end{minipage} 
	
	\centering
	\includegraphics[width=0.1\linewidth]{legend.black.pdf}
	\caption{Heatmap of absolute value of $\bPhi_1$, Design 2}
	\label{figure:phi2}
\end{figure}

Now we report the results with $\bPhi_1$ design 2 corresponding to Tables \ref{table:fdr_het1} and \ref{table:fdr_het1_ev}. The DGPs are identical to the ones used in Tables \ref{table:fdr_het1} and \ref{table:fdr_het1_ev}, except for $\bPhi_1$. The results are reported in Tables \ref{table:fdr_het1_F4} and \ref{table:fdr_het1_ev2} below. As can be seen, the results are qualitatively very similar to those in Tables \ref{table:fdr_het1} and \ref{table:fdr_het1_ev}, in the sense that the asymptotic threshold often fails to control the FDR for mixture normal errors, but the bootstrap and e-BH thresholds control much better. The e-BH has a slight loss of power compared to the bootstrap threshold due to the conservative type I error.
\renewcommand{\arraystretch}{0.55}
	\begin{table}[H]
		\caption{Directional FDR and power using asymptotic and bootstrap thresholds for $q=0.1$, with cross-sectionally uncorrelated errors, $\bPhi_1$ Design 2}	
		\label{table:fdr_het1_F4}
		\centering
\begin{tabular}
[c]{rlrrrrrrrrrrrr}\hline
\multicolumn{14}{c}{$m=2$ ($\max s_{i}=5$)}\\\hline
&  &  & \multicolumn{5}{c}{$T=200$} &  & \multicolumn{5}{c}{$T=300$%
}\\\cline{4-8}\cline{10-14}
&  &  & \multicolumn{2}{c}{asymptotic} &  & \multicolumn{2}{c}{bootstrap} &  &
\multicolumn{2}{c}{asymptotic} &  & \multicolumn{2}{c}{bootstrap}%
\\\cline{4-5}\cline{7-8}\cline{10-11}\cline{13-14}
&  &  & \multicolumn{1}{c}{dFDR} & \multicolumn{1}{c}{PWR} &  &
\multicolumn{1}{c}{dFDR} & \multicolumn{1}{c}{PWR} &  &
\multicolumn{1}{c}{dFDR} & \multicolumn{1}{c}{PWR} &  &
\multicolumn{1}{c}{dFDR} & \multicolumn{1}{c}{PWR}\\
\multicolumn{14}{l}{Standard Normal Error}\\
& $N=50$ &  & 10.3 & 98.3 &  & 7.6 & 97.8 &  & 9.5 & 99.9 &  & 7.7 & 99.8\\
& $N=100$ &  & 11.5 & 96.2 &  & 7.6 & 95.2 &  & 10.4 & 99.6 &  & 7.9 & 99.5\\
& $N=200$ &  & 11.8 & 93.9 &  & 7.3 & 92.2 &  & 11.2 & 99.3 &  & 8.1 & 99.1\\
& $N=300$ &  & 11.0 & 92.0 &  & 6.3 & 89.2 &  & 10.6 & 99.0 &  & 8.2 & 98.8\\
\multicolumn{14}{l}{Mixture Normal Error}\\
& $N=50$ &  & 13.5 & 95.5 &  & 7.4 & 93.5 &  & 11.7 & 99.4 &  & 7.7 & 99.1\\
& $N=100$ &  & 15.5 & 92.1 &  & 6.8 & 88.1 &  & 13.5 & 98.4 &  & 7.7 & 97.7\\
& $N=200$ &  & 14.3 & 89.4 &  & 6.4 & 84.8 &  & 13.7 & 97.7 &  & 7.5 & 96.7\\
& $N=300$ &  & 16.2 & 87.0 &  & 5.3 & 79.2 &  & 14.1 & 97.1 &  & 7.7 &
96.0\\\hline
\multicolumn{14}{c}{$m=4$ ($\max s_{i}=9$)}\\\hline
&  &  & \multicolumn{5}{c}{$T=200$} &  & \multicolumn{5}{c}{$T=300$%
}\\\cline{4-8}\cline{10-14}
&  &  & \multicolumn{2}{c}{asymptotic} &  & \multicolumn{2}{c}{bootstrap} &  &
\multicolumn{2}{c}{asymptotic} &  & \multicolumn{2}{c}{bootstrap}%
\\\cline{4-5}\cline{7-8}\cline{10-11}\cline{13-14}
&  &  & \multicolumn{1}{c}{dFDR} & \multicolumn{1}{c}{PWR} &  &
\multicolumn{1}{c}{dFDR} & \multicolumn{1}{c}{PWR} &  &
\multicolumn{1}{c}{dFDR} & \multicolumn{1}{c}{PWR} &  &
\multicolumn{1}{c}{dFDR} & \multicolumn{1}{c}{PWR}\\
\multicolumn{14}{l}{Standard Normal Error}\\
& $N=50$ &  & 10.5 & 88.2 &  & 7.2 & 86.0 &  & 9.7 & 96.2 &  & 7.6 & 95.6\\
& $N=100$ &  & 11.1 & 83.0 &  & 7.4 & 80.5 &  & 10.2 & 93.7 &  & 7.8 & 92.9\\
& $N=200$ &  & 10.8 & 77.6 &  & 7.2 & 74.9 &  & 10.3 & 91.1 &  & 8.2 & 90.2\\
& $N=300$ &  & 11.5 & 73.0 &  & 5.8 & 67.7 &  & 10.4 & 88.6 &  & 8.1 & 87.6\\
\multicolumn{14}{l}{Mixture Normal Error}\\
& $N=50$ &  & 14.7 & 83.6 &  & 7.4 & 77.7 &  & 12.2 & 93.9 &  & 7.8 & 92.1\\
& $N=100$ &  & 16.2 & 78.7 &  & 7.2 & 72.3 &  & 13.9 & 90.5 &  & 8.2 & 87.9\\
& $N=200$ &  & 15.8 & 73.8 &  & 6.5 & 66.1 &  & 13.7 & 87.8 &  & 7.9 & 84.8\\
& $N=300$ &  & 18.6 & 69.5 &  & 4.9 & 56.3 &  & 15.7 & 85.1 &  & 7.9 &
81.0\\\hline
\multicolumn{14}{c}{$m=7$ ($\max s_{i}=15$)}\\\hline
&  &  & \multicolumn{5}{c}{$T=200$} &  & \multicolumn{5}{c}{$T=300$%
}\\\cline{4-8}\cline{10-14}
&  &  & \multicolumn{2}{c}{asymptotic} &  & \multicolumn{2}{c}{bootstrap} &  &
\multicolumn{2}{c}{asymptotic} &  & \multicolumn{2}{c}{bootstrap}%
\\\cline{4-5}\cline{7-8}\cline{10-11}\cline{13-14}
&  &  & \multicolumn{1}{c}{dFDR} & \multicolumn{1}{c}{PWR} &  &
\multicolumn{1}{c}{dFDR} & \multicolumn{1}{c}{PWR} &  &
\multicolumn{1}{c}{dFDR} & \multicolumn{1}{c}{PWR} &  &
\multicolumn{1}{c}{dFDR} & \multicolumn{1}{c}{PWR}\\
\multicolumn{14}{l}{Standard Normal Error}\\
& $N=50$ &  & 11.5 & 64.6 &  & 7.0 & 61.0 &  & 10.4 & 75.5 &  & 7.5 & 73.3\\
& $N=100$ &  & 11.9 & 61.7 &  & 7.6 & 58.8 &  & 10.8 & 72.8 &  & 8.2 & 71.1\\
& $N=200$ &  & 12.2 & 54.6 &  & 6.2 & 50.6 &  & 11.7 & 67.5 &  & 7.5 & 65.1\\
& $N=300$ &  & 12.3 & 51.0 &  & 4.7 & 44.9 &  & 11.0 & 64.0 &  & 7.6 & 62.0\\
\multicolumn{14}{l}{Mixture Normal Error}\\
& $N=50$ &  & 19.5 & 62.3 &  & 2.0 & 44.2 &  & 15.3 & 72.9 &  & 8.3 & 68.0\\
& $N=100$ &  & 20.1 & 60.0 &  & 6.5 & 49.7 &  & 16.2 & 70.2 &  & 9.1 & 65.9\\
& $N=200$ &  & 18.7 & 51.9 &  & 4.2 & 39.0 &  & 16.3 & 64.1 &  & 7.2 & 58.6\\
& $N=300$ &  & 22.7 & 49.6 &  & 1.4 & 29.5 &  & 18.7 & 61.6 &  & 7.6 &
55.6\\\hline
\end{tabular}

	\end{table}

\setlength{\tabcolsep}{3.3pt}%
\renewcommand{\arraystretch}{0.95}
\begin{table}[htb!]
	\caption{FDR and power using e-BH thresholds for $q=0.1$, with cross-sectionally uncorrelated errors, $\bPhi_1$ Design 2}	
	\label{table:fdr_het1_ev2}
	\centering
\begin{tabular}
[c]{rlrrrrrrrrrrrr}\hline
\multicolumn{14}{c}{$m=2$ ($\max s_{i}=5$)}\\\hline
&  &  & \multicolumn{5}{c}{$T=200$} &  & \multicolumn{5}{c}{$T=300$%
}\\\cline{4-8}\cline{10-14}%
\multicolumn{1}{c}{} & \multicolumn{1}{c}{$f(x)$} & \multicolumn{1}{c}{} &
\multicolumn{2}{c}{$|x|^{p}$} &  & \multicolumn{2}{c}{$\exp(c|x|^{\alpha})$} &
\multicolumn{1}{c}{} & \multicolumn{2}{c}{$|x|^{p}$} &  &
\multicolumn{2}{c}{$\exp(c|x|^{\alpha})$}\\\cline{4-5}\cline{7-8}%
\cline{10-11}\cline{13-14}
&  &  & \multicolumn{1}{c}{FDR} & \multicolumn{1}{c}{PWR} &  &
\multicolumn{1}{c}{FDR} & \multicolumn{1}{c}{PWR} &  & \multicolumn{1}{c}{FDR}
& \multicolumn{1}{c}{PWR} &  & \multicolumn{1}{c}{FDR} &
\multicolumn{1}{c}{PWR}\\
\multicolumn{14}{l}{Standard Normal Error}\\
& $N=50$ &  & 1.9 & 95.2 &  & 1.3 & 94.1 &  & 1.4 & 99.4 &  & 0.9 & 99.1\\
& $N=100$ &  & 2.2 & 91.4 &  & 1.5 & 90.0 &  & 1.7 & 98.7 &  & 1.1 & 98.3\\
& $N=200$ &  & 2.0 & 86.6 &  & 1.4 & 85.0 &  & 1.6 & 97.8 &  & 1.1 & 97.4\\
& $N=300$ &  & 1.3 & 82.1 &  & 1.1 & 80.4 &  & 1.2 & 96.9 &  & 0.9 & 96.4\\
\multicolumn{14}{l}{Mixture Normal Error}\\
& $N=50$ &  & 3.1 & 90.1 &  & 2.3 & 88.5 &  & 2.3 & 97.9 &  & 1.6 & 97.4\\
& $N=100$ &  & 3.8 & 84.7 &  & 2.7 & 82.8 &  & 2.8 & 96.0 &  & 2.0 & 95.3\\
& $N=200$ &  & 2.6 & 78.8 &  & 2.0 & 76.8 &  & 2.4 & 94.2 &  & 1.8 & 93.4\\
& $N=300$ &  & 2.6 & 74.6 &  & 2.1 & 72.9 &  & 2.1 & 92.8 &  & 1.6 &
91.9\\\hline
\multicolumn{14}{c}{$m=4$ ($\max s_{i}=9$)}\\\hline
&  &  & \multicolumn{5}{c}{$T=200$} &  & \multicolumn{5}{c}{$T=300$%
}\\\cline{4-8}\cline{10-14}%
\multicolumn{1}{c}{} & \multicolumn{1}{c}{$f(x)$} & \multicolumn{1}{c}{} &
\multicolumn{2}{c}{$|x|^{p}$} &  & \multicolumn{2}{c}{$\exp(c|x|^{\alpha})$} &
\multicolumn{1}{c}{} & \multicolumn{2}{c}{$|x|^{p}$} &  &
\multicolumn{2}{c}{$\exp(c|x|^{\alpha})$}\\\cline{4-5}\cline{7-8}%
\cline{10-11}\cline{13-14}
&  &  & \multicolumn{1}{c}{FDR} & \multicolumn{1}{c}{PWR} &  &
\multicolumn{1}{c}{FDR} & \multicolumn{1}{c}{PWR} &  & \multicolumn{1}{c}{FDR}
& \multicolumn{1}{c}{PWR} &  & \multicolumn{1}{c}{FDR} &
\multicolumn{1}{c}{PWR}\\
\multicolumn{14}{l}{Standard Normal Error}\\
& $N=50$ &  & 2.1 & 78.3 &  & 1.4 & 76.0 &  & 1.6 & 91.2 &  & 1.1 & 89.9\\
& $N=100$ &  & 2.0 & 72.2 &  & 1.4 & 70.1 &  & 1.6 & 87.5 &  & 1.1 & 86.0\\
& $N=200$ &  & 1.4 & 64.4 &  & 1.0 & 62.4 &  & 1.3 & 83.3 &  & 0.9 & 81.6\\
& $N=300$ &  & 1.3 & 58.3 &  & 1.0 & 56.7 &  & 1.1 & 79.0 &  & 0.8 & 77.5\\
\multicolumn{14}{l}{Mixture Normal Error}\\
& $N=50$ &  & 3.8 & 72.6 &  & 2.8 & 70.0 &  & 2.8 & 87.4 &  & 2.0 & 85.6\\
& $N=100$ &  & 4.0 & 67.4 &  & 2.9 & 64.9 &  & 3.1 & 82.9 &  & 2.1 & 81.0\\
& $N=200$ &  & 2.9 & 59.5 &  & 2.2 & 57.3 &  & 2.4 & 78.3 &  & 1.7 & 76.4\\
& $N=300$ &  & 3.0 & 53.7 &  & 2.5 & 52.2 &  & 2.5 & 74.0 &  & 1.9 &
72.6\\\hline
\multicolumn{14}{c}{$m=7$ ($\max s_{i}=15$)}\\\hline
&  &  & \multicolumn{5}{c}{$T=200$} &  & \multicolumn{5}{c}{$T=300$%
}\\\cline{4-8}\cline{10-14}%
\multicolumn{1}{c}{} & \multicolumn{1}{c}{$f(x)$} & \multicolumn{1}{c}{} &
\multicolumn{2}{c}{$|x|^{p}$} &  & \multicolumn{2}{c}{$\exp(c|x|^{\alpha})$} &
\multicolumn{1}{c}{} & \multicolumn{2}{c}{$|x|^{p}$} &  &
\multicolumn{2}{c}{$\exp(c|x|^{\alpha})$}\\\cline{4-5}\cline{7-8}%
\cline{10-11}\cline{13-14}
&  &  & \multicolumn{1}{c}{FDR} & \multicolumn{1}{c}{PWR} &  &
\multicolumn{1}{c}{FDR} & \multicolumn{1}{c}{PWR} &  & \multicolumn{1}{c}{FDR}
& \multicolumn{1}{c}{PWR} &  & \multicolumn{1}{c}{FDR} &
\multicolumn{1}{c}{PWR}\\
\multicolumn{14}{l}{Standard Normal Error}\\
& $N=50$ &  & 2.1 & 54.0 &  & 1.5 & 51.9 &  & 1.7 & 65.7 &  & 1.1 & 63.4\\
& $N=100$ &  & 1.9 & 51.5 &  & 1.3 & 49.6 &  & 1.7 & 63.1 &  & 1.1 & 61.2\\
& $N=200$ &  & 1.5 & 43.4 &  & 1.1 & 42.2 &  & 1.5 & 57.6 &  & 1.1 & 56.2\\
& $N=300$ &  & 1.1 & 39.1 &  & 0.9 & 38.5 &  & 0.9 & 53.2 &  & 0.7 & 52.4\\
\multicolumn{14}{l}{Mixture Normal Error}\\
& $N=50$ &  & 5.4 & 51.4 &  & 4.0 & 49.4 &  & 3.7 & 62.6 &  & 2.6 & 60.5\\
& $N=100$ &  & 5.1 & 49.2 &  & 3.9 & 47.4 &  & 3.7 & 60.2 &  & 2.7 & 58.2\\
& $N=200$ &  & 3.3 & 39.5 &  & 2.7 & 38.3 &  & 2.8 & 53.2 &  & 2.1 & 51.7\\
& $N=300$ &  & 3.6 & 36.6 &  & 3.2 & 36.0 &  & 2.8 & 50.2 &  & 2.4 &
49.4\\\hline
\end{tabular}

\end{table}
 
	\subsection{Results with $\hat{m}_{ij}=\hat{\sigma}_i \hat{\omega}_{j}$}
 
The results based on $\hat{m}_{ij}=\hat{\sigma}_i\hat{\omega}_j$ are reported in Table \ref{table:fdr_het1_b}.
The DGP is identical to the one used in Table \ref{table:fdr_het1}. The results with this $\hat{m}_{ij}$ are shown in Table \ref{table:fdr_het1_b}. The e-BH result omitted. 
Bootstrap dFDR is virtually identical to bootstrap dFDR based on $\hat{m}_{ij}=\hat{\sigma}_i\sqrt{\hat{\bomega}'_{j} \hat{\bSigma}_x\hat{\bomega}_{j}}$, whilst asymptotic dFDR often becomes more conservative than bootstrap dFDR. As a result, the power is often lower than the bootstrap one. These results suggest that Procedure \ref{proc:boot} is expected to provide stable performance regardless of the choice of $\hat{m}_{ij}$.

	\setlength{\tabcolsep}{3.3pt}%
\renewcommand{\arraystretch}{0.55}
	\begin{table}[H]
		\caption{Directional FDR and power using asymptotic and bootstrap thresholds for $q=0.1$, with cross-sectionally uncorrelated errors, $\hat{m}_{ij}=\hat{\sigma}_i \hat{\omega}_{j}$}	
		\label{table:fdr_het1_b}
		\centering
		
		\begin{tabular}
			[c]{rlrrrrrrrrrrrr}\hline
			\multicolumn{14}{c}{$m=2$ ($\max s_{i}=5$)}\\\hline
			&  &  & \multicolumn{5}{c}{$T=200$} &  & \multicolumn{5}{c}{$T=300$}%
			\\\cline{4-8}\cline{10-14}
			&  &  & \multicolumn{2}{c}{asymptotic} &  & \multicolumn{2}{c}{bootstrap} &  &
			\multicolumn{2}{c}{asymptotic} &  & \multicolumn{2}{c}{bootstrap}%
			\\\cline{4-5}\cline{7-8}\cline{10-11}\cline{13-14}
			&  &  & \multicolumn{1}{c}{dFDR} & \multicolumn{1}{c}{PWR} &  &
			\multicolumn{1}{c}{dFDR} & \multicolumn{1}{c}{PWR} &  &
			\multicolumn{1}{c}{dFDR} & \multicolumn{1}{c}{PWR} &  &
			\multicolumn{1}{c}{dFDR} & \multicolumn{1}{c}{PWR}\\
			\multicolumn{14}{l}{Standard Normal Error}\\
			& $N=50$ &  & 6.0 & 96.6 &  & 6.7 & 96.9 &  & 6.0 & 99.7 &  & 6.9 & 99.7\\
			& $N=100$ &  & 6.3 & 92.8 &  & 6.9 & 93.2 &  & 6.1 & 99.1 &  & 7.3 & 99.2\\
			& $N=200$ &  & 6.0 & 89.6 &  & 7.5 & 90.7 &  & 6.0 & 98.5 &  & 8.2 & 98.8\\
			& $N=300$ &  & 5.7 & 86.7 &  & 6.7 & 87.4 &  & 5.5 & 98.0 &  & 8.5 & 98.4\\
			\multicolumn{14}{l}{Mixture Normal Error}\\
			& $N=50$ &  & 6.8 & 92.7 &  & 5.8 & 92.0 &  & 6.6 & 98.7 &  & 6.2 & 98.7\\
			& $N=100$ &  & 7.7 & 86.8 &  & 5.6 & 84.7 &  & 7.3 & 97.2 &  & 6.5 & 96.9\\
			& $N=200$ &  & 8.5 & 84.2 &  & 6.7 & 82.5 &  & 7.7 & 96.3 &  & 7.7 & 96.3\\
			& $N=300$ &  & 8.4 & 80.8 &  & 5.5 & 77.3 &  & 7.7 & 95.1 &  & 7.7 &
			95.1\\\hline
			\multicolumn{14}{c}{$m=4$ ($\max s_{i}=9$)}\\\hline
			&  &  & \multicolumn{5}{c}{$T=200$} &  & \multicolumn{5}{c}{$T=300$}%
			\\\cline{4-8}\cline{10-14}
			&  &  & \multicolumn{2}{c}{asymptotic} &  & \multicolumn{2}{c}{bootstrap} &  &
			\multicolumn{2}{c}{asymptotic} &  & \multicolumn{2}{c}{bootstrap}%
			\\\cline{4-5}\cline{7-8}\cline{10-11}\cline{13-14}
			&  &  & \multicolumn{1}{c}{dFDR} & \multicolumn{1}{c}{PWR} &  &
			\multicolumn{1}{c}{dFDR} & \multicolumn{1}{c}{PWR} &  &
			\multicolumn{1}{c}{dFDR} & \multicolumn{1}{c}{PWR} &  &
			\multicolumn{1}{c}{dFDR} & \multicolumn{1}{c}{PWR}\\
			\multicolumn{14}{l}{Standard Normal Error}\\
			& $N=50$ &  & 5.8 & 87.1 &  & 6.7 & 87.9 &  & 5.8 & 96.0 &  & 6.8 & 96.4\\
			& $N=100$ &  & 5.8 & 81.2 &  & 8.3 & 83.6 &  & 5.9 & 93.7 &  & 8.4 & 94.8\\
			& $N=200$ &  & 5.9 & 75.3 &  & 9.3 & 78.3 &  & 5.6 & 90.5 &  & 9.4 & 92.2\\
			& $N=300$ &  & 6.1 & 69.3 &  & 8.5 & 71.7 &  & 5.6 & 87.3 &  & 9.7 & 89.5\\
			\multicolumn{14}{l}{Mixture Normal Error}\\
			& $N=50$ &  & 6.0 & 82.9 &  & 5.8 & 82.6 &  & 5.9 & 93.8 &  & 6.1 & 94.0\\
			& $N=100$ &  & 6.4 & 77.3 &  & 7.1 & 78.1 &  & 6.3 & 91.1 &  & 7.6 & 91.8\\
			& $N=200$ &  & 7.2 & 71.6 &  & 8.2 & 72.6 &  & 6.8 & 87.2 &  & 9.0 & 88.5\\
			& $N=300$ &  & 7.4 & 65.8 &  & 6.8 & 65.1 &  & 6.8 & 83.7 &  & 8.9 &
			85.0\\\hline
			\multicolumn{14}{c}{$m=7$ ($\max s_{i}=15$)}\\\hline
			&  &  & \multicolumn{5}{c}{$T=200$} &  & \multicolumn{5}{c}{$T=300$}%
			\\\cline{4-8}\cline{10-14}
			&  &  & \multicolumn{2}{c}{asymptotic} &  & \multicolumn{2}{c}{bootstrap} &  &
			\multicolumn{2}{c}{asymptotic} &  & \multicolumn{2}{c}{bootstrap}%
			\\\cline{4-5}\cline{7-8}\cline{10-11}\cline{13-14}
			&  &  & \multicolumn{1}{c}{dFDR} & \multicolumn{1}{c}{PWR} &  &
			\multicolumn{1}{c}{dFDR} & \multicolumn{1}{c}{PWR} &  &
			\multicolumn{1}{c}{dFDR} & \multicolumn{1}{c}{PWR} &  &
			\multicolumn{1}{c}{dFDR} & \multicolumn{1}{c}{PWR}\\
			\multicolumn{14}{l}{Standard Normal Error}\\
			& $N=50$ &  & 5.1 & 66.4 &  & 6.0 & 67.8 &  & 5.2 & 78.8 &  & 6.1 & 79.9\\
			& $N=100$ &  & 5.6 & 58.1 &  & 8.0 & 61.0 &  & 5.6 & 72.3 &  & 8.0 & 74.7\\
			& $N=200$ &  & 5.7 & 50.2 &  & 8.8 & 53.3 &  & 5.4 & 65.4 &  & 9.1 & 68.7\\
			& $N=300$ &  & 5.8 & 47.1 &  & 8.8 & 49.8 &  & 5.2 & 62.6 &  & 9.8 & 66.2\\
			\multicolumn{14}{l}{Mixture Normal Error}\\
			& $N=50$ &  & 5.0 & 63.1 &  & 4.9 & 62.9 &  & 4.9 & 76.2 &  & 5.1 & 76.5\\
			& $N=100$ &  & 5.9 & 55.5 &  & 6.7 & 56.6 &  & 5.9 & 69.6 &  & 7.2 & 71.0\\
			& $N=200$ &  & 7.5 & 48.2 &  & 7.5 & 48.0 &  & 6.6 & 62.8 &  & 8.1 & 64.3\\
			& $N=300$ &  & 6.8 & 45.7 &  & 7.3 & 46.2 &  & 6.2 & 60.4 &  & 9.0 &
			62.8\\\hline
		\end{tabular}
	\end{table}
	
	\subsection{Results with cross-sectionally correlated errors}
	We have chosen the subset of the design in Table \ref{table:fdr_het1}, \{$m=2$, $T=200$, mixture normal errors\}, which gives the worst FDR control by the asymptotic threshold therein. The DGP is identical to that used in Tables \ref{table:fdr_het1} and \ref{table:fdr_het1_ev}, except that the errors are cross-sectionally correlated. 

 We consider three designs. In the first design, for a given cross section unit $i$, the maximum number of cross-correlated errors is an integer part of $2N^{0.6}-1$. 
Specifically, the cross-correlated errors are generated as follows: $\bSigma_{u}=(\sigma_{ij})$ with $\sigma_{ij}= k~1\{| i-j+1| \leq \varphi\}$ with $k=i/j$ for $i \leq j$ and $k=j/i$ for $i \geq j$, and $\varphi$ is an integer part of $N^{\eta}$. We set $\eta=0.6$. 
 The results are reported in Table \ref{table:fdr_het2}. The performance of the asymptotic and bootstrap thresholds, as well as that of the e-BH thresholds, is very similar to the case with cross-sectionally uncorrelated errors reported in Table \ref{table:fdr_het1}. For $q=0.05$, the comparative analysis of the results from different thresholds is very similar to that for $q=0.10$.

 The second and the third designs are more challenging with respect to Condition 6, in which all cross section units are correlated. In the second design, $\sigma_{ij}=\rho_u^{|i-j|}$ with $\rho_u=0.8$. In the third design, we consider the correlation implied by an error factor model, $
u_{ti}=(f_{t}b_{i}+v_{ti})/\sqrt{\sigma_{f}^{2}+1}$, where $f_{t}\sim IIDN(0,\sigma_{f}^{2})$ and $v_{ti}\sim IIDN(0,1)$ are drawn
for each replication but $b_{i}\sim IIDU(-\sqrt{3},\sqrt{3})$ is fixed over
the replications. This structure gives pervasive cross-correlations of
$u_{ti}$, as $\sigma_{ij}=E[u_{ti}u_{tj}]=\varphi_{\sigma_{f}^{2}}b_{i}b_{j}$
for $i\neq j$ and $\varphi_{\sigma_{f}^{2}}(b_{i}^{2}+1)$ for $i=j$, given
$\mathbf{b}=(b_{1},...,b_{N})^{\prime}$ and $\sigma_{f}^{2}$, where
$\varphi_{\sigma_{f}^{2}}=\sigma_{f}^{2}/(\sigma_{f}^{2}+1)$. Rather than
generating $u_{ti}$ by the factor model, we generate $\mathbf{u}_{t}%
=\boldsymbol{\Sigma}_{u}^{1/2}\boldsymbol{\varepsilon}_{t}$, where
$\boldsymbol{\varepsilon}_{t}\sim IIDN(\mathbf{0},\mathbf{I}_{N})$,
constructing $\boldsymbol{\Sigma}_{u}$ using $\mathbf{b}$ and $\sigma_{f}%
^{2}=1/4$. The corresponding results are summarized in Tables \ref{table:fdr_het7} and \ref{table:fdr_het6}, respectively. As can be seen, for $T=200$, both the asymptotic and the bootstrap dFDR exceed the level $q=0.1$ and deviate further as $N$ increases, while both versions of the e-BH thresholds control the FDR much better, keeping it well below 10\%.

 \setlength{\tabcolsep}{3.3pt}%
\renewcommand{\arraystretch}{1}
	\begin{table}[H]
		\caption{FDR and power using asymptotic, bootstrap and e-BH thresholds
			for $q=0.10,0.05$, for a given $i$ the maximum number of cross-correlated errors is $2N^{0.6}-1$}
		\label{table:fdr_het2}
		\centering
\begin{tabular}
[c]{rlrccrccrccrcc}\hline
\multicolumn{14}{c}{$m=2$ ($\max s_{i}=5$), $T=200$, Mixture normal,
cross-correlated errors}\\\hline
&  &  & \multicolumn{2}{c}{asymptotic} &  & \multicolumn{2}{c}{bootstrap} &  &
\multicolumn{2}{c}{$|x|^{p}$} &  & \multicolumn{2}{c}{$\exp(c|x|^{\alpha})$%
}\\\cline{4-5}\cline{7-8}\cline{10-11}\cline{13-14}
&  &  & dFDR & PWR &  & dFDR & PWR &  & FDR & PWR &  & FDR & PWR\\
\multicolumn{5}{l}{$q=0.10$} &  & \multicolumn{1}{r}{} & \multicolumn{1}{r}{}
&  & \multicolumn{1}{r}{} & \multicolumn{1}{r}{} &  & \multicolumn{1}{r}{} &
\multicolumn{1}{r}{}\\
& $N=50$ &  & \multicolumn{1}{r}{9.6} & \multicolumn{1}{r}{90.1} &  &
\multicolumn{1}{r}{6.5} & \multicolumn{1}{r}{88.0} &  &
\multicolumn{1}{r}{1.9} & \multicolumn{1}{r}{80.0} &  &
\multicolumn{1}{r}{1.4} & \multicolumn{1}{r}{77.3}\\
& $N=100$ &  & \multicolumn{1}{r}{10.2} & \multicolumn{1}{r}{83.9} &  &
\multicolumn{1}{r}{6.5} & \multicolumn{1}{r}{80.6} &  &
\multicolumn{1}{r}{1.9} & \multicolumn{1}{r}{70.6} &  &
\multicolumn{1}{r}{1.3} & \multicolumn{1}{r}{67.2}\\
& $N=200$ &  & \multicolumn{1}{r}{13.2} & \multicolumn{1}{r}{79.0} &  &
\multicolumn{1}{r}{7.9} & \multicolumn{1}{r}{75.0} &  &
\multicolumn{1}{r}{2.1} & \multicolumn{1}{r}{64.7} &  &
\multicolumn{1}{r}{1.6} & \multicolumn{1}{r}{62.1}\\
& $N=300$ &  & \multicolumn{1}{r}{14.5} & \multicolumn{1}{r}{77.0} &  &
\multicolumn{1}{r}{7.4} & \multicolumn{1}{r}{71.6} &  &
\multicolumn{1}{r}{2.1} & \multicolumn{1}{r}{62.1} &  &
\multicolumn{1}{r}{1.6} & \multicolumn{1}{r}{60.2}\\
\multicolumn{5}{l}{$q=0.05$} &  & \multicolumn{1}{r}{} & \multicolumn{1}{r}{}
&  & \multicolumn{1}{r}{} & \multicolumn{1}{r}{} &  & \multicolumn{1}{r}{} &
\multicolumn{1}{r}{}\\
& $N=50$ &  & \multicolumn{1}{r}{5.2} & \multicolumn{1}{r}{86.5} &  &
\multicolumn{1}{r}{2.1} & \multicolumn{1}{r}{76.3} &  &
\multicolumn{1}{r}{1.0} & \multicolumn{1}{r}{74.7} &  &
\multicolumn{1}{r}{0.7} & \multicolumn{1}{r}{71.6}\\
& $N=100$ &  & \multicolumn{1}{r}{5.5} & \multicolumn{1}{r}{79.4} &  &
\multicolumn{1}{r}{2.9} & \multicolumn{1}{r}{74.5} &  &
\multicolumn{1}{r}{0.8} & \multicolumn{1}{r}{63.5} &  &
\multicolumn{1}{r}{0.6} & \multicolumn{1}{r}{60.5}\\
& $N=200$ &  & \multicolumn{1}{r}{7.5} & \multicolumn{1}{r}{74.7} &  &
\multicolumn{1}{r}{3.6} & \multicolumn{1}{r}{68.8} &  &
\multicolumn{1}{r}{0.9} & \multicolumn{1}{r}{57.6} &  &
\multicolumn{1}{r}{0.8} & \multicolumn{1}{r}{56.2}\\
& $N=300$ &  & \multicolumn{1}{r}{8.3} & \multicolumn{1}{r}{72.8} &  &
\multicolumn{1}{r}{3.2} & \multicolumn{1}{r}{64.0} &  &
\multicolumn{1}{r}{0.8} & \multicolumn{1}{r}{54.9} &  &
\multicolumn{1}{r}{0.8} & \multicolumn{1}{r}{54.5}\\\hline
\end{tabular}

	\end{table}

 \setlength{\tabcolsep}{3.3pt}%
\renewcommand{\arraystretch}{1}
	\begin{table}[H]
		\caption{FDR and power using asymptotic, bootstrap and e-BH thresholds
			for $q=0.10$, where $\sigma_{ij}=\rho_u^{|i-j|}$ with $\rho_u=0.8$}
		\label{table:fdr_het7}
		\centering
\begin{tabular}{cccccccccccccc}
\hline
\multicolumn{14}{c}{$m=2$ ($\max s_{i}=5$), $T=200$, Mixture normal,
cross-correlated errors}\\\hline
&  &  & \multicolumn{2}{c}{asymptotic} &  & \multicolumn{2}{c}{bootstrap} &  &
\multicolumn{2}{c}{$|x|^{p}$} &  & \multicolumn{2}{c}{$\exp(c|x|^{\alpha})$%
}\\\cline{4-5}\cline{7-8}\cline{10-11}\cline{13-14}
&  &  & dFDR & PWR &  & dFDR & PWR &  & FDR & PWR &  & FDR & PWR\\
\multicolumn{5}{c}{$q=0.10$} &  &  &  &  &  &  &  &  &  \\ 
& $N=50$ &  & 10.9 & 71.8 &  & 10.3 & 71.2 &  & 2.1 & 53.5 &  & 1.4 & 49.0
\\ 
& $N=100$ &  & 13.0 & 66.3 &  & 11.6 & 65.2 &  & 2.4 & 48.5 &  & 1.7 & 44.7
\\ 
& $N=200$ &  & 16.3 & 61.7 &  & 15.1 & 60.9 &  & 2.8 & 45.7 &  & 2.3 & 43.6
\\ 
& $N=300$ &  & 17.6 & 63.1 &  & 15.4 & 61.8 &  & 3.0 & 47.7 &  & 2.5 & 46.4
\\ \hline
\end{tabular}

	\end{table}

  \setlength{\tabcolsep}{3.3pt}%
\renewcommand{\arraystretch}{1}
	\begin{table}[H]
		\caption{FDR and power using asymptotic, bootstrap and e-BH thresholds
			for $q=0.10$, with a factor model, $
u_{ti}=(f_{t}b_{i}+v_{ti})/\sqrt{\sigma_{f}^{2}+1}$}
		\label{table:fdr_het6}
		\centering
\begin{tabular}{cccccccccccccc}
\hline
\multicolumn{14}{c}{$m=2$ ($\max s_{i}=5$), $T=200$, Mixture normal,
cross-correlated errors}\\\hline
&  &  & \multicolumn{2}{c}{asymptotic} &  & \multicolumn{2}{c}{bootstrap} &  &
\multicolumn{2}{c}{$|x|^{p}$} &  & \multicolumn{2}{c}{$\exp(c|x|^{\alpha})$%
}\\\cline{4-5}\cline{7-8}\cline{10-11}\cline{13-14}
&  &  & dFDR & PWR &  & dFDR & PWR &  & FDR & PWR &  & FDR & PWR\\
\multicolumn{5}{c}{$q=0.10$} &  &  &  &  &  &  &  &  &  \\ 
& $N=50$ &  & 10.6 & 93.3 &  & 7.5 & 91.8 &  & 2.2 & 85.2 &  & 1.6 & 82.9 \\ 
& $N=100$ &  & 14.5 & 87.7 &  & 10.0 & 85.2 &  & 3.1 & 76.6 &  & 2.1 & 73.7
\\ 
& $N=200$ &  & 23.3 & 83.6 &  & 15.6 & 80.3 &  & 4.8 & 70.7 &  & 3.6 & 68.2
\\ 
& $N=300$ &  & 30.3 & 81.1 &  & 17.1 & 75.4 &  & 5.8 & 66.4 &  & 4.6 & 64.3
\\ \hline
\end{tabular}

	\end{table}

	\setcounter{table}{0}
	\renewcommand{\thetable}{\thesection\arabic{table}}
	\setcounter{figure}{0}
	\renewcommand{\thefigure}{\thesection\arabic{figure}}
	\section{List of Variable Names in Empirical Applications}
	\setlength{\tabcolsep}{3.3pt}%
		\renewcommand\arraystretch{0.55}
		\begin{table}[H]
			\caption{Full list of FRED-MD macroeconomic variables.}	
			\label{table:fredmdv}
			\centering
			\scriptsize
			%
			\begin{tabular}
				[c]{|r|l|rr}\hline
				\multicolumn{4}{|c|}{\textbf{FREDMD ID and Variable Descriptions}}\\\hline
				\multicolumn{2}{|c|}{Group 1: Output and Income} & \multicolumn{2}{c|}{Group
					5: Interest and Exchange Rate}\\\hline
				1 & Real Personal Income & \multicolumn{1}{r|}{68} &
				\multicolumn{1}{l|}{Federal Funds Rate}\\
				2 & Real personal income ex transfer receipts & \multicolumn{1}{r|}{69} &
				\multicolumn{1}{l|}{3-Month AA Financial Commercial Paper Rate}\\
				3 & IP Index & \multicolumn{1}{r|}{70} & \multicolumn{1}{l|}{3-Month Treasury
					Bill}\\
				4 & IP: Final Products and Nonindustrial Supplies & \multicolumn{1}{r|}{71} &
				\multicolumn{1}{l|}{6-Month Treasury Bill}\\
				5 & IP: Final Products (Market Group) & \multicolumn{1}{r|}{72} &
				\multicolumn{1}{l|}{1-Year Treasury Rate}\\
				6 & IP: Consumer Goods & \multicolumn{1}{r|}{73} & \multicolumn{1}{l|}{5-Year
					Treasury Rate}\\
				7 & IP: Durable Consumer Goods & \multicolumn{1}{r|}{74} &
				\multicolumn{1}{l|}{10-Year Treasury Rate}\\
				8 & IP: Nondurable Consumer Goods & \multicolumn{1}{r|}{75} &
				\multicolumn{1}{l|}{Moody's Seasoned Aaa Corporate Bond Yield Aaa bond}\\
				9 & IP: Business Equipment & \multicolumn{1}{r|}{76} &
				\multicolumn{1}{l|}{Moody's Seasoned Baa Corporate Bond Yield Baa bond}\\
				10 & IP: Materials & \multicolumn{1}{r|}{77} & \multicolumn{1}{l|}{3-Month
					Commercial Paper Minus FEDFUNDS CP-FF spread}\\
				11 & IP: Durable Materials & \multicolumn{1}{r|}{78} &
				\multicolumn{1}{l|}{3-Month Treasury C Minus FEDFUNDS 3 mo-FF spread}\\
				12 & IP: Nondurable Materials & \multicolumn{1}{r|}{79} &
				\multicolumn{1}{l|}{6-Month Treasury C Minus FEDFUNDS 6 mo-FF spread}\\
				13 & IP: Manufacturing (SIC) & \multicolumn{1}{r|}{80} &
				\multicolumn{1}{l|}{1-Year Treasury C Minus FEDFUNDS 1 yr-FF spread}\\
				14 & IP: Residential Utilities & \multicolumn{1}{r|}{81} &
				\multicolumn{1}{l|}{5-Year Treasury C Minus FEDFUNDS 5 yr-FF spread}\\
				15 & IP: Fuels & \multicolumn{1}{r|}{82} & \multicolumn{1}{l|}{10-Year
					Treasury C Minus FEDFUNDS 10 yr-FF spread}\\
				16 & Capacity Utilization: Manufacturing & \multicolumn{1}{r|}{83} &
				\multicolumn{1}{l|}{Moody's Aaa Corporate Bond Minus FEDFUNDS}\\\cline{1-2}%
				\multicolumn{2}{|c|}{Group 2: Labor Market} & \multicolumn{1}{r|}{84} &
				\multicolumn{1}{l|}{Moody's Baa Corporate Bond Minus FEDFUNDS}\\\cline{1-2}%
				17 & Help-Wanted Index for United States & \multicolumn{1}{r|}{85} &
				\multicolumn{1}{l|}{Trade Weighted U.S. Dollar Index: Major Currencies Ex
					rate}\\
				18 & Ratio of Help Wanted/No. Unemployed & \multicolumn{1}{r|}{86} &
				\multicolumn{1}{l|}{Switzerland / U.S. Foreign Exchange Rate}\\
				19 & Civilian Labor Force & \multicolumn{1}{r|}{87} &
				\multicolumn{1}{l|}{Japan / U.S. Foreign Exchange Rate}\\
				20 & Civilian Employment & \multicolumn{1}{r|}{88} & \multicolumn{1}{l|}{U.S.
					/ U.K. Foreign Exchange Rate}\\
				21 & Civilian Unemployment Rate & \multicolumn{1}{r|}{89} &
				\multicolumn{1}{l|}{Canada / U.S. Foreign Exchange Rate}\\\cline{3-4}%
				22 & Average Duration of Unemployment (Weeks) & \multicolumn{2}{c|}{Group 6:
					Prices}\\\cline{3-4}%
				23 & Civilians Unemployed -Less Than 5 Weeks & \multicolumn{1}{r|}{90} &
				\multicolumn{1}{l|}{PPI: Finished Goods}\\
				24 & Civilians Unemployed for 5-14 Weeks & \multicolumn{1}{r|}{91} &
				\multicolumn{1}{l|}{PPI: Finished Consumer Goods}\\
				25 & Civilians Unemployed -15 Weeks \& Over & \multicolumn{1}{r|}{92} &
				\multicolumn{1}{l|}{PPI: Intermediate Materials}\\
				26 & Civilians Unemployed for 15-26 Weeks & \multicolumn{1}{r|}{93} &
				\multicolumn{1}{l|}{PPI: Crude Materials}\\
				27 & Civilians Unemployed for 27 Weeks and Over & \multicolumn{1}{r|}{94} &
				\multicolumn{1}{l|}{Crude Oil}\\
				28 & Initial Claims & \multicolumn{1}{r|}{95} & \multicolumn{1}{l|}{PPI:
					Metals and metal products:}\\
				29 & All Employees: Total nonfarm & \multicolumn{1}{r|}{96} &
				\multicolumn{1}{l|}{CPI : All Items}\\
				30 & All Employees: Goods-Producing Industries & \multicolumn{1}{r|}{97} &
				\multicolumn{1}{l|}{CPI : Apparel}\\
				31 & All Employees: Mining and Logging: Mining & \multicolumn{1}{r|}{98} &
				\multicolumn{1}{l|}{CPI : Transportation}\\
				32 & All Employees: Construction & \multicolumn{1}{r|}{99} &
				\multicolumn{1}{l|}{CPI : Medical Care}\\
				33 & All Employees: Manufacturing & \multicolumn{1}{r|}{100} &
				\multicolumn{1}{l|}{CPI : Commodities}\\
				34 & All Employees: Durable goods & \multicolumn{1}{r|}{101} &
				\multicolumn{1}{l|}{CPI : Durables}\\
				35 & All Employees: Nondurable goods & \multicolumn{1}{r|}{102} &
				\multicolumn{1}{l|}{CPI : Services}\\
				36 & All Employees: Service-Providing Industries & \multicolumn{1}{r|}{103} &
				\multicolumn{1}{l|}{CPI : All Items Less Food}\\
				37 & All Employees: Trade & \multicolumn{1}{r|}{104} & \multicolumn{1}{l|}{CPI
					: All items less shelter}\\
				38 & All Employees: Wholesale Trade & \multicolumn{1}{r|}{105} &
				\multicolumn{1}{l|}{CPI : All items less medical care}\\
				39 & All Employees: Retail Trade & \multicolumn{1}{r|}{106} &
				\multicolumn{1}{l|}{Personal Cons. Expend.: Chain Index}\\
				40 & All Employees: Financial Activities & \multicolumn{1}{r|}{107} &
				\multicolumn{1}{l|}{Personal Cons. Exp: Durable goods}\\
				41 & All Employees: Government & \multicolumn{1}{r|}{108} &
				\multicolumn{1}{l|}{Personal Cons. Exp: Nondurable goods}\\
				42 & Avg Weekly Hours : Goods-Producing & \multicolumn{1}{r|}{109} &
				\multicolumn{1}{l|}{Personal Cons. Exp: Services}\\\cline{3-4}%
				43 & Avg Weekly Overtime Hours : Manufacturing & \multicolumn{2}{c|}{Group 7:
					Money and Credit}\\\cline{3-4}%
				44 & Avg Weekly Hours : Manufacturing & \multicolumn{1}{r|}{110} &
				\multicolumn{1}{l|}{M1 Money Stock}\\
				45 & Avg Hourly Earnings : Goods-Producing & \multicolumn{1}{r|}{111} &
				\multicolumn{1}{l|}{M2 Money Stock}\\
				46 & Avg Hourly Earnings : Construction & \multicolumn{1}{r|}{112} &
				\multicolumn{1}{l|}{Real M2 Money Stock}\\
				47 & Avg Hourly Earnings : Manufacturing & \multicolumn{1}{r|}{113} &
				\multicolumn{1}{l|}{St. Louis Adjusted Monetary Base}\\\cline{1-2}%
				\multicolumn{2}{|c|}{Group 3: Consumption, Orders and Inventories} &
				\multicolumn{1}{r|}{114} & \multicolumn{1}{l|}{Total Reserves of Depository
					Institutions}\\\cline{1-2}%
				48 & Real personal consumption expenditures & \multicolumn{1}{r|}{115} &
				\multicolumn{1}{l|}{Reserves Of Depository Institutions}\\
				49 & Real Manu. and Trade Industries Sales & \multicolumn{1}{r|}{116} &
				\multicolumn{1}{l|}{Commercial and Industrial Loans}\\
				50 & Retail and Food Services Sales & \multicolumn{1}{r|}{117} &
				\multicolumn{1}{l|}{Real Estate Loans at All Commercial Banks}\\
				51 & New Orders for Consumer Goods & \multicolumn{1}{r|}{118} &
				\multicolumn{1}{l|}{Total Nonrevolving Credit}\\
				52 & New Orders for Durable Goods & \multicolumn{1}{r|}{119} &
				\multicolumn{1}{l|}{Nonrevolving consumer credit to Personal Income}\\
				53 & New Orders for Nondefense Capital Goods & \multicolumn{1}{r|}{120} &
				\multicolumn{1}{l|}{MZM Money Stock}\\
				54 & Un.lled Orders for Durable Goods & \multicolumn{1}{r|}{121} &
				\multicolumn{1}{l|}{Consumer Motor Vehicle Loans Outstanding}\\
				55 & Total Business Inventories & \multicolumn{1}{r|}{122} &
				\multicolumn{1}{l|}{Total Consumer Loans and Leases Outstanding}\\
				56 & Total Business: Inventories to Sales Ratio & \multicolumn{1}{r|}{123} &
				\multicolumn{1}{l|}{Securities in Bank Credit at All Commercial Banks}%
				\\\cline{3-4}%
				57 & Consumer Sentiment Index & \multicolumn{2}{c|}{Group 8: Stock
					Market}\\\hline
				\multicolumn{2}{|c|}{Group 4: Housing} & \multicolumn{1}{r|}{124} &
				\multicolumn{1}{l|}{S\&P's Common Stock Price Index: Composite}\\\cline{1-2}%
				58 & Housing Starts: Total New Privately Owned & \multicolumn{1}{r|}{125} &
				\multicolumn{1}{l|}{S\&P's Common Stock Price Index: Industrials}\\
				59 & Housing Starts Northeast & \multicolumn{1}{r|}{126} &
				\multicolumn{1}{l|}{S\&P's Composite Common Stock: Dividend Yield}\\
				60 & Housing Starts Midwest & \multicolumn{1}{r|}{127} &
				\multicolumn{1}{l|}{S\&P's Composite Common Stock: Price-Earnings Ratio}\\
				61 & Housing Starts South & \multicolumn{1}{r|}{128} &
				\multicolumn{1}{l|}{VXO}\\\cline{3-4}%
				62 & Housing Starts West &  & \\
				63 & New Private Housing Permits (SAAR) &  & \\
				64 & New Private Housing Permits Northeast &  & \\
				65 & New Private Housing Permits Midwest &  & \\
				66 & New Private Housing Permits South &  & \\
				67 & New Private Housing Permits West &  & \\\cline{1-2}%
			\end{tabular}
		\end{table}
  
		\normalsize
		\renewcommand\arraystretch{.55}
		
		\setlength{\tabcolsep}{3.3pt}%
			\begin{table}[H]
				\caption{Full list of the Local Area Districts (LADs) for the UK regional house price}	
				\label{table:ukhpv}
				\centering
				\begin{tabular}
					[c]{rrrrlrrr}\hline
					\multicolumn{2}{c}{Scotland} &  & \multicolumn{2}{c}{London Area} &  &
					\multicolumn{2}{c}{Wales}\\\cline{1-2}\cline{4-5}\cline{7-8}%
					\multicolumn{1}{c}{ID} & \multicolumn{1}{c}{Region Names} &  &
					\multicolumn{1}{c}{ID} & \multicolumn{1}{c}{Region Names} &  &
					\multicolumn{1}{c}{ID} & \multicolumn{1}{c}{Region Names}\\\hline
					1 & \multicolumn{1}{l}{Clackmannanshire} &  & 32 & City.of.London &  & 65 &
					\multicolumn{1}{l}{Isle.of.Anglesey}\\
					2 & \multicolumn{1}{l}{Dumfries.and.Galloway} &  & 33 & Barking.and.Dagenham &
					& 66 & \multicolumn{1}{l}{Gwynedd}\\
					3 & \multicolumn{1}{l}{East.Ayrshire} &  & 34 & Barnet &  & 67 &
					\multicolumn{1}{l}{Conwy}\\
					4 & \multicolumn{1}{l}{East.Lothian} &  & 35 & Bexley &  & 68 &
					\multicolumn{1}{l}{Denbighshire}\\
					5 & \multicolumn{1}{l}{Na.h-Eileanan.Siar} &  & 36 & Brent &  & 69 &
					\multicolumn{1}{l}{Flintshire}\\
					6 & \multicolumn{1}{l}{Falkirk} &  & 37 & Bromley &  & 70 &
					\multicolumn{1}{l}{Wrexham}\\
					7 & \multicolumn{1}{l}{Highland} &  & 38 & Camden &  & 71 &
					\multicolumn{1}{l}{Ceredigion}\\
					8 & \multicolumn{1}{l}{Inverclyde} &  & 39 & Croydon &  & 72 &
					\multicolumn{1}{l}{Pembrokeshire}\\
					9 & \multicolumn{1}{l}{Midlothian} &  & 40 & Ealing &  & 73 &
					\multicolumn{1}{l}{Carmarthenshire}\\
					10 & \multicolumn{1}{l}{Moray} &  & 41 & Enfield &  & 74 &
					\multicolumn{1}{l}{Swansea}\\
					11 & \multicolumn{1}{l}{North.Ayrshire} &  & 42 & Greenwich &  & 75 &
					\multicolumn{1}{l}{Neath.Port.Talbot}\\
					12 & \multicolumn{1}{l}{Orkney.Islands} &  & 43 & Hackney &  & 76 &
					\multicolumn{1}{l}{Bridgend}\\
					13 & \multicolumn{1}{l}{Scottish.Borders} &  & 44 & Hammersmith.and.Fulham &
					& 77 & \multicolumn{1}{l}{Vale.of.Glamorgan}\\
					14 & \multicolumn{1}{l}{Shetland.Islands} &  & 45 & Haringey &  & 78 &
					\multicolumn{1}{l}{Cardiff}\\
					15 & \multicolumn{1}{l}{South.Ayrshire} &  & 46 & Harrow &  & 79 &
					\multicolumn{1}{l}{Rhondda.Cynon.Taf}\\
					16 & \multicolumn{1}{l}{South.Lanarkshire} &  & 47 & Havering &  & 80 &
					\multicolumn{1}{l}{Caerphilly}\\
					17 & \multicolumn{1}{l}{Stirling} &  & 48 & Hillingdon &  & 81 &
					\multicolumn{1}{l}{Blaenau.Gwent}\\
					18 & \multicolumn{1}{l}{Aberdeen.City} &  & 49 & Hounslow &  & 82 &
					\multicolumn{1}{l}{Torfaen}\\
					19 & \multicolumn{1}{l}{Aberdeenshire} &  & 50 & Islington &  & 83 &
					\multicolumn{1}{l}{Monmouthshire}\\
					20 & \multicolumn{1}{l}{Argyll.and.Bute} &  & 51 & Kensington.and.Chelsea &  &
					84 & \multicolumn{1}{l}{Newport}\\
					21 & \multicolumn{1}{l}{City.of.Edinburgh} &  & 52 & Kingston.upon.Thames &  &
					85 & \multicolumn{1}{l}{Powys}\\
					22 & \multicolumn{1}{l}{Renfrewshire} &  & 53 & Lambeth &  & 86 &
					\multicolumn{1}{l}{Merthyr.Tydfil}\\
					23 & \multicolumn{1}{l}{West.Dunbartonshire} &  & 54 & Lewisham &  &  & \\
					24 & \multicolumn{1}{l}{West.Lothian} &  & 55 & Merton &  &  & \\
					25 & \multicolumn{1}{l}{Angus} &  & 56 & Newham &  &  & \\
					26 & \multicolumn{1}{l}{Dundee.City} &  & 57 & Redbridge &  &  & \\
					27 & \multicolumn{1}{l}{East.Dunbartonshire} &  & 58 & Richmond.upon.Thames &
					&  & \\
					28 & \multicolumn{1}{l}{Fife} &  & 59 & Southwark &  &  & \\
					29 & \multicolumn{1}{l}{Perth.and.Kinross} &  & 60 & Sutton &  &  & \\
					30 & \multicolumn{1}{l}{Glasgow.City} &  & 61 & Tower.Hamlets &  &  & \\
					31 & \multicolumn{1}{l}{North.Lanarkshire} &  & 62 & Waltham.Forest &  &  & \\
					&  &  & 63 & Wandsworth &  &  & \\
					&  &  & 64 & Westminster &  &  & \\\hline
				\end{tabular}
			\end{table}

			\setcounter{table}{0}
			\renewcommand{\thetable}{H\arabic{table}}
			\setcounter{figure}{0}
			\renewcommand{\thefigure}{H\arabic{figure}}
			
			\newpage
			\section{Additional Results of Empirical Applications}
			\noindent
			\begin{figure}[H]
				\begin{minipage}{1\hsize}
				\centering
				\includegraphics[width=120mm]
				{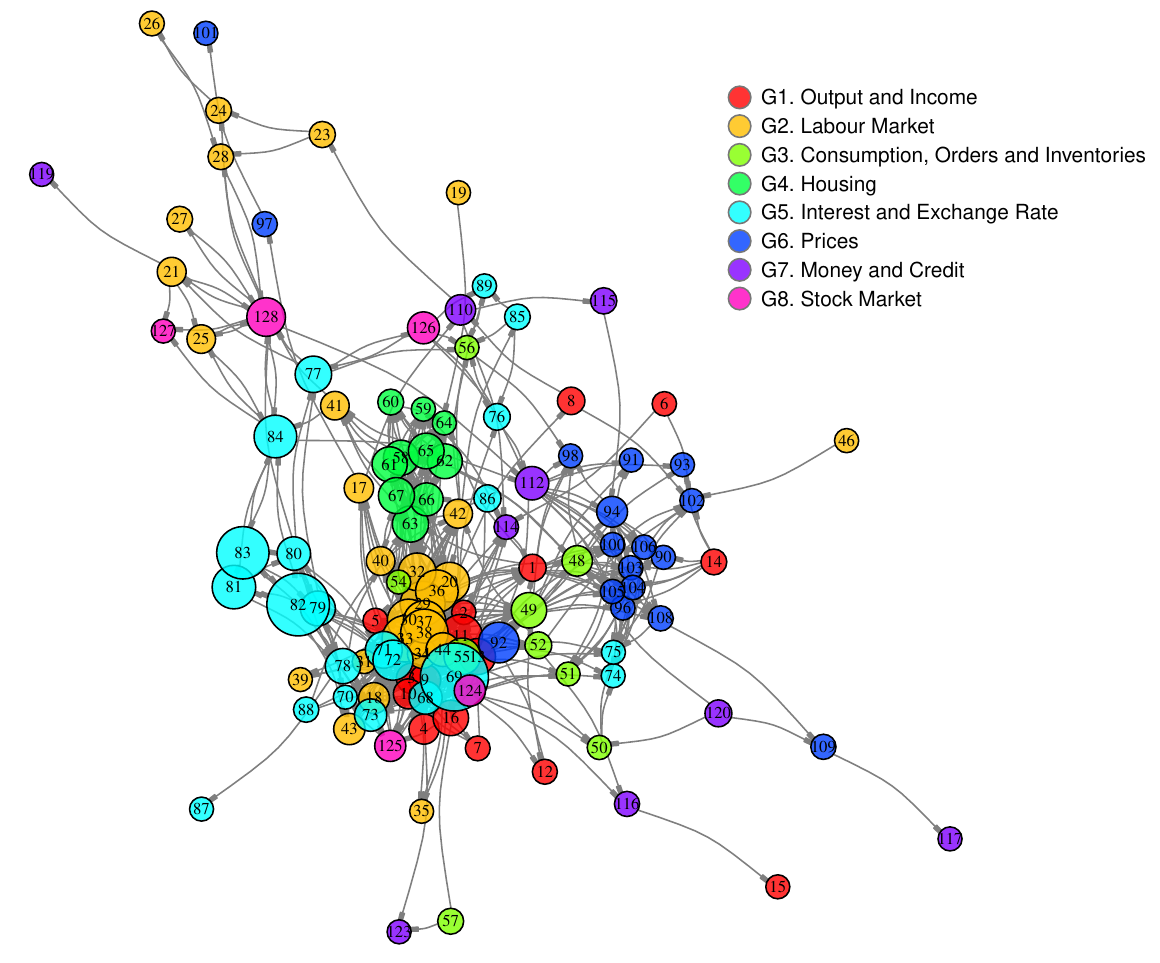}
				\vspace{0mm}
				\caption{\footnotesize{Dynamic network visualization: macroeconomic variables, asymptotic $t_0$, $q=0.05$}}
				\label{figure:nwFREDMDq05asy}
				\vspace{-3mm}
			\end{minipage}
		\end{figure}
		
		
		\noindent
		\begin{figure}[H]
			\begin{minipage}{1\hsize}
			\centering
			\includegraphics[width=120mm]
			{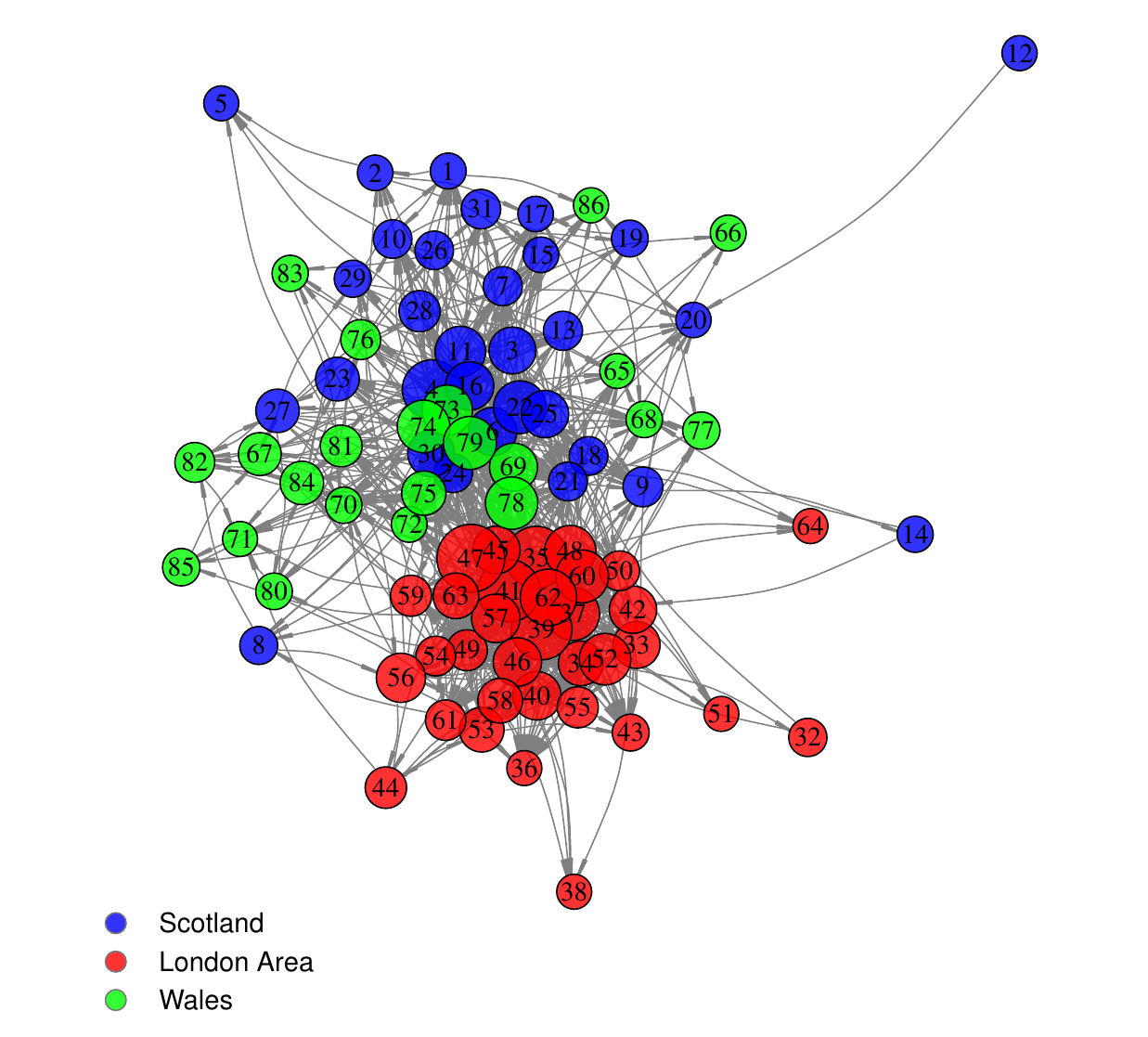}
			\vspace{0mm}
			\caption{\footnotesize{Dynamic network visualization: UK regional house prices, asymptotic $t_0$, $q=0.05$}}
			\label{figure:nwUKHPq05asy}
			\vspace{-3mm}
		\end{minipage}
	\end{figure}


\end{document}